\documentclass[transmag]{IEEEtran}

\usepackage{latexsym}
\usepackage{hyperref}
\usepackage[utf8]{inputenc}
\usepackage[english]{babel}
\usepackage{cite}
\usepackage{amsmath,amssymb,amsfonts}
\usepackage{subfig}    
\usepackage[demo]{graphicx}
\usepackage{textcomp}
\usepackage{xcolor}
\usepackage{mathrsfs}
\usepackage[utf8]{inputenc}
\usepackage{comment}
\usepackage[ruled,vlined]{algorithm2e}
\usepackage{mathtools}
\usepackage[font=small,labelfont=bf]{caption}
\usepackage{blkarray, bigstrut}
\usepackage{subfig}
\usepackage{cases}

\usepackage{enumerate}

\usepackage{lipsum}% for dummy text

\SetCommentSty{mycommfont}
\SetKwInput{KwInput}{Input}                % Set the Input
\SetKwInput{KwOutput}{Output}              % set the Output

\usepackage{enumitem}
\usepackage{float}

\usepackage{pgf, tikz}
\usetikzlibrary{arrows, automata}
\usepackage{tikz}
\usetikzlibrary{automata,positioning}

\newcommand*\circled[2][1.6]{\tikz[baseline=(char.base)]{
    \node[shape=circle, draw, inner sep=1pt, 
        minimum height={\f@size*#1},] (char) {\vphantom{WAH1g}#2};}}

\newlist{legal}{enumerate}{10}
\setlist[legal]{label*=\arabic*.}
\usepackage{ifpdf}

\usepackage{amsthm}

\usepackage{booktabs}

\newtheoremstyle{mystyle1}%                % Name
  {.5pt}%                                 % Space above
  {.5pt}%                                  % Space below
  {}%                                     % Body font
  {}%                                     % Indent amount
  {\bfseries}%                            % Theorem head font
  {.}%                                    % Punctuation after theorem head
  { }%                                    % Space after theorem head, ' ', or \newline
  {\thmname{#1}\thmnumber{ #2}\thmnote{ (#3)}}%                                     % Theorem head spec (can be left empty, meaning `normal')

\theoremstyle{mystyle1}

\newtheorem{cor}{Corollary}
\newtheorem{defn}{Definition}

\newtheorem{rem}{Remark}
\newtheorem{exmp}{Example}

\newtheoremstyle{mystyle}%                % Name
  {.5pt}%                                 % Space above
  {.5pt}%                                 % Space below
  {\itshape}%                             % Body font
  {}%                                     % Indent amount
  {\bfseries}%                            % Theorem head font
  {.}%                                    % Punctuation after theorem head
  { }%                                    % Space after theorem head, ' ', or \newline
  {\thmname{#1}\thmnumber{ #2}\thmnote{ (#3)}}%                                     % Theorem head spec (can be left empty, meaning `normal')

 \theoremstyle{mystyle}

\newtheorem{thm}{Theorem}
\newtheorem{lem}{Lemma}
\newtheorem{prop}{Proposition}

% *** CITATION PACKAGES ***

\usepackage[utf8]{inputenc}
\usepackage[english]{babel}
\usepackage{cite}
\usepackage{amsmath,amssymb,amsfonts}
\usepackage{algorithmic}
\usepackage{graphicx}
\usepackage{textcomp}
\usepackage{xcolor}
\usepackage{mathrsfs} 
\usepackage{enumitem}
\usepackage{amsthm}
\usepackage{cite}
\usepackage{array}
\usepackage{amsmath}

\setlist[legal]{label*=\arabic*.}

\theoremstyle{mystyle1}

%\usepackage[a4paper, total={210mm, 300mm}]{geometry}
%\usepackage[bottom=4.2cm]{geometry}
% *** MISC UTILITY PACKAGES ***
\usepackage{enumitem}
\usepackage{geometry}
 \geometry{
 a4paper,
 total={185mm,297mm}, 
 left=12mm,
 top=19mm,
 bottom=24mm
 }

\usepackage{hyperref}
\hypersetup{
    colorlinks=true,
    linkcolor=blue,
    filecolor=blue,      
    urlcolor=blue,
    citecolor=blue
}

\allowdisplaybreaks

% *** ALIGNMENT PACKAGES ***
%\usepackage[T1]{fontenc}
%\usepackage{tgbonum}

%\font\myfont=qpl at 20pt

\begin{document}
  \title{On the Optimality of Linear Index Coding over the Fields with Characteristic Three}

% author names and affiliations
\author{Arman Sharififar, Parastoo Sadeghi, Neda Aboutorab
\\
\textit{University of New South Wales, Australia} \\
Email:\{a.sharififar, p.sadeghi, n.aboutorab\}@unsw.edu.au}
%\IEEEmembership{Member, IEEE}
\maketitle

\begin{abstract}
    It has been known that the insufficiency of linear coding in achieving the optimal rate of the general index coding problem is rooted in its rate's dependency on the field size. However, this dependency has been described only through the two well-known matroid instances, namely the Fano and non-Fano matroids, which, in turn, limits its scope only to the fields with characteristic two. In this paper, we extend this scope to demonstrate the reliance of linear index coding rate on fields with characteristic three. By constructing two index coding instances of size 29, we prove that for the first instance, linear coding is optimal only over the fields with characteristic three, and for the second instance, linear coding over any field with characteristic three can never be optimal. Then, a variation of the second instance is designed as the third index coding instance of size 58. For this instance, it is proved that while linear coding over any field with characteristic three cannot be optimal, there exists a nonlinear code over the fields with characteristic three, which achieves its optimal rate. Connecting the first and third index coding instances in two specific ways, called no-way and two-way connections, will lead to two new index coding instances of size 87 and 91, for which linear coding is outperformed by nonlinear codes.
    Another main contribution of this paper is the reduction of the key constraints on the space of the linear coding for the first and second index coding instances, each of size 29, into a matroid instance with the ground set of size 9, whose linear representability is dependent on the fields with characteristic three. The proofs and discussions provided in this paper through using these two relatively small matroid instances will shed light on the underlying reason causing the linear coding to become insufficient for the general index coding problem.
    \end{abstract}
\begin{IEEEkeywords}
    Index coding, insufficiency of linear coding, nonlinear code, matroid theory, broadcast with side information.
    \end{IEEEkeywords}

\section{Introduction}
\IEEEPARstart{I}{ndex} coding problem was first introduced by Birk and Kol \cite{Birk1998} in the context of satellite communication where through a noiseless shared channel, a single server is assigned the task of communicating $m$ messages to multiple users. While each user requests one distinct message from the server, it may have prior knowledge about a subset of the messages requested by other users, which is referred to as its side information. While sending uncoded messages leads to the total $m$ transmissions, by taking advantage of users' side information, the server might be able to satisfy all the users with a smaller number of transmissions.
The canonical model of index coding problem can be useful in studying other research areas, including network coding \cite{Rouayheb2010,Effros2015}, distributed storage \cite{Li2018}, coded caching \cite{Maddah-ali2014, Wan2020}, and topological interference management \cite{Jafar2014, Maleki2014}.

Different settings have been defined for an index coding instance. An index coding instance is said to be a unicast instance if each of its messages is requested by a single user \cite{Nujoom2019}. However, when at least one of its messages is requested by multiple users, it is referred to as a groupcast index coding instance \cite{Tehrani2012},\cite{Shanmugam2014},\cite{Sharififar-groupcast}. An index coding instance is referred to as a symmetric-rate instance if the rates of its messages are all equal. Otherwise,  it is said to be an asymmetric-rate index coding instance \cite{Maleki2014}.

Index coding schemes are broadly categorized into linear and nonlinear codes. Although
linear index coding has been the center of attention due to their straightforward encoding
and decoding processes \cite{Birk1998 ,SharififarISIT2021, Arbabjolfaeicomposite, Maleki2014,liu2018three, Thapa2017, Thomas2018, Asadi2014}, for the general index coding problem, they can be outperformed by
nonlinear codes.
The insufficiency of linear coding was proved in the context of network coding \cite{Dougherty2005}, where two network coding instances were provided to illustrate the reliance of linear coding rate on the fields with characteristic two. In fact, it was shown that for the first network coding instance, linear coding is optimal only over the fields with characteristic two, while for the second instance, linear coding over any field with characteristic two cannot be optimal. This implies that the insufficiency of linear coding is due to the dependency of its rate on the characteristic of the field on which it is operating. 
In \cite{Dougherty2007}, the authors illustrated how the constraints on the linear space of the aforementioned network coding instances can be equivalently modeled as the well-known matroid instances, namely the Fano and non-Fano matroids.
While the Fano matroid is linearly representable only over the fields with characteristic two, the non-Fano matroid has no linear representation over the fields with characteristic two. 

The connection of network coding and matroid theory with index coding was established in \cite{Rouayheb2010} and \cite{Effros2015} by presenting a reduction method to convert any network coding or matroid instance into a groupcast index coding instance. In fact, it was shown that the Fano and non-Fano matroids can be equivalently mapped into two index coding instances.
In \cite{Maleki2014}, a systematic technique of turning any groupcast index coding instance into an asymmetric-rate unicast index coding instance was proposed, implying the insufficiency of linear coding for the unicast index coding. This will convert the Fano and non-Fano matroids into two asymmetric-rate unicast index coding instances. In \cite{sharififar2021broadcast}, two symmetric-rate unicast index coding instances were directly built for which linear coding can be optimal only over the fields with characteristic two for one and odd characteristic for the other.
In terms of the scalar linear coding problem, the authors in \cite{Lubetzky2009} provided an explicit way of constructing index coding instances to show that the gap between the linear coding rate over different field sizes can be significant, highlighting the strong dependence of scalar linear coding rate on the field's characteristic. 
However, for the vector index coding problem, the scope of linear coding rate's dependency on the field size has been limited to only the fields with characteristic two.

In this paper, this scope is extended to demonstrate the reliance of linear coding rate on the fields with characteristic three. 
\\
First, by directly constructing two symmetric-rate unicast index coding instances of size 29, we prove that for the first instance, linear coding is optimal only over the fields with characteristic three, while for the second instance, linear coding over any field with characteristic three cannot be optimal. It is shown that for each index coding instance, the main constraints on the column space of its encoding matrix can be captured by a matroid instance with the ground set of size 9. Presenting the proofs using these two relatively small matroids is useful to point out the key constraints causing the linear coding rate to become dependent on the field size. In addition, applying the mapping methods in \cite{Rouayheb2010} and \cite{Maleki2014} to these matroids will lead to asymmetric-rate unicast index coding instances, each consisting of more than 1000 users, while the corresponding symmetric-rate unicast index coding instances constructed in this paper are significantly simpler as each instance is of size 29. 
\\
Second, we design the third symmetric-rate unicast index coding instance of size 58, which is a variation of the second index coding instance. It is proved that while linear coding over the fields with characteristic three cannot achieve its optimal rate, there exists an optimal nonlinear code over the fields with characteristic three. It is shown that the main constraints on the linear space of its encoding matrix can be captured by a matroid instance with the ground set of size 18, which is linearly representable over fields with any characteristic other than characteristic three.
\\
Finally, connecting the first and third index coding instances in two specific ways, namely no-way and two-way connections, will result in two new index coding instances of size 87, 91 for which linear coding is outperformed by nonlinear codes.

\subsection{Summary of Contributions}
%The main contributions of this paper can be summarized as follows.
\begin{enumerate} 
     \item In Section \ref{sec:Insufficiency of Linear Index Coding-main-results}, we design two symmetric-rate unicast index coding instances of size 87, 91 for which linear coding is outperformed by the nonlinear codes. Each instance consists of two distinct subinstances (the first and third index coding instances), which are connected in two different specific ways, called no-way and two-way connections.
    \item We design the first index coding instance of size 29 and prove that linear coding achieves its optimal rate if and only if the chosen field has characteristic three. In more details,
    \begin{enumerate} 
        \item first, we define a matroid instance with the ground set of size 9, which is linearly representable only over the fields with characteristic three (Subsection \ref{sub:matroid-instance-1-1}).
        \item Then, the first index coding instance of size 29 is characterized such that the main constraints on the column space of its encoding matrix can be captured by this matroid instance (Subsection \ref{sub:index-coding-1-1}).
    \end{enumerate}
    \item We build the second index coding instance of size 29 and prove that linear coding can never achieve its optimal rate over any field with characteristic three. In more details,
    \begin{enumerate}
        \item first, we define a matroid instance with the ground set of size 9, which does not have linear representation over any field with characteristic three (Subsection \ref{sub:matroid-instance-1-2}).
        \item Then, the second index coding instance of size 29 is characterized such that the main constraints on the column space of its encoding matrix can be captured by this matroid instance (Subsection \ref{sub:index-coding-1-2}).
    \end{enumerate}
    \item In Section \ref{sub:third-instance-3}, we design the third unicast index coding instance of size 58 (which is, in fact, a variation of the second index coding instance) and prove that while linear coding over the fields with characteristic three cannot achieve its optimal rate, there exists a scalar nonlinear code over the fields with characteristic three, which is optimal. In more details,
    \begin{enumerate}
        \item first, we define a matroid instance with the ground set of size 18, which is not linearly representable over any field with characteristic three (Subsection \ref{sub:matroid-instance-1-3}).
        \item Then, it is shown that for the third index coding instance, the main constraints on the column space of its encoding matrix can be captured by this matroid instance (Subsection \ref{sub:index-coding-1-3}).
        \item Finally, we provide a scalar nonlinear code over the fields with characteristic three, which achieves the optimal rate for this index coding instance (Subsection \ref{sub:nonlinear-index-coding-1-3}).
    \end{enumerate}
\end{enumerate}

The organization of this paper is presented in Table \ref{Table1}.

%%%%%%%%%%%%%%%%%%%%%%%%%
%%%%%%%%%%%%%%%%%%%%%%%%%
\begin{table*}
\centering
\captionsetup{format=hang}
\caption{\textsc{Organization of the Paper}}
\label{Table1}
\begin{tabular}{| >{\centering\arraybackslash}m{8mm} |  >{\centering\arraybackslash}m{12mm}  |  m{14cm}  |}
\hline
%%%%%%%%%%%%%%%%%%%%%%%%%%%%%%%%%%%%%%%
Section           
&  
Subsection
&
Content  
\\ [5pt]
\hline 
%%%%%%%%%%%%%%%%%%%%%%%%%%%%%%%%%%%%%%%
\ref{sec:system-model}   
&
\ref{sub:background-system-model}
&
The system model for index coding problem is established.
\\[5pt] \cline{2-3}
&    
\ref{sub:background-general-index-code}
&
For any index coding instance, the definitions related to its index code and broadcast rate are provided.
\\[5pt] \cline{2-3}
&    
\ref{sub:background-linear-index-code}
&
For any index coding instance, the definitions related to its linear index code, encoding matrix, linear broadcast rate,
vector and scalar linear index code are presented.
\\ \cline{2-3}
&    
\ref{sub:background-graph}
&
For any index coding instance, the definitions related to its independent sets, minimal cyclic sets, acyclic sets, are provided based on its interfering message sets.
\\ \cline{2-3}
&    
\ref{sub:background-matroid}
&
A brief overview of matroid theory and the definitions related to a matroid's basic and circuit sets, vector and scalar linear representation, and also the basic and circuit sets of its linear representation matrix are provided.
\\
\hline
%%%%%%%%%%%%%%%%%%%%%%%%%%%%%%%%%%%%%%

\ref{sec:Insufficiency of Linear Index Coding-main-results}             
&
&
\begin{itemize}[leftmargin=*]
    \item First, for any two index coding instances, based on their interfering message sets, we characterize two specific connections, namely no-way and two-way connections.
    
    \item  Then, in Theorem \ref{thm:main-4-nonlinear-coding-instance}, it is proved that the no-way and two-way connections between the first and third index coding instances in this paper will lead to two new index coding instances for which linear coding is outperformed by the nonlinear codes.
\end{itemize}
\\
\hline

%%%%%%%%%%%%%%%%%%%%%%%%%%%%%%%%%%%%%%%%
\ref{sec:Dependency-on-Char-3}                           
& 
\ref{sub:matroid-instance-1--2}
&
\begin{itemize}[leftmargin=*]
    \item Definitions \ref{def:matroid-N1} and \ref{def:matroid-N2}, respectively, characterize the first and second matroid instances $\mathcal{N}_{1}$ and $\mathcal{N}_{2}$, each with a ground set of size 9.
    
    \item In Proposition \ref{prop-matroid-N_1}, it is proved that matroid instance $\mathcal{N}_{1}$ is linearly representable only over fields with characteristic three. 
    
    \item Proposition \ref{prop-matroid-N_2} proves that matroid instance $\mathcal{N}_{2}$ is linearly representable over the fields with any characteristic other than characteristic three. 
\end{itemize}
\\ \cline{2-3}
& 
\ref{sub:index-coding-reduction-matroid}
&
Lemmas \ref{lem:MAIS1}-\ref{lem:col(9)} establish reduction techniques to map specific constraints on the encoding matrix of an index coding instance to the constraints on the matrix which linearly represents a matroid instance (their proof are provided in Appendix \ref{app:proof- Lemma1-5}).         
\\ \cline{2-3}
& 
\ref{sub:index-coding-1-1,2}
&
\begin{itemize}[leftmargin=*]
    \item Definition \ref{def:index-I1} characterizes the first index coding instance $\mathcal{I}_{1}$, comprising 29 users.
    
    \item  Theorem \ref{thm:Fano-Index-Instance-characteristic-3} states that the necessary and sufficient condition for a linear index code to be optimal for $\mathcal{I}_{1}$ is that the chosen field does have characteristic three.
    The sufficient and necessary conditions are separately proved in Propositions \ref{prop-thm1-1} and \ref{prop-thm1-2}, respectively. 
    \begin{itemize}[leftmargin=*]
        \item In Proposition \ref{prop-thm1-1}, it is shown that there exists a linear code over the fields with characteristic three, which is optimal for $\mathcal{I}_{1}$ (its proof is provided in Appendix \ref{app:proof-prop-H-I1-I2}).
        
        \item In Proposition \ref{prop-thm1-2}, using Lemmas \ref{lem:MAIS1}-\ref{lem:col(9)},  it is proved that the main constraints on the column space of the encoding matrix of index coding instance $\mathcal{I}_{1}$ are equivalent to the constraints on the column space of the matrix, which is a linear representation of matroid instance $\mathcal{N}_{1}$. This combined with Proposition \ref{prop-matroid-N_1} implies that linear coding is optimal for $\mathcal{I}_{1}$ only over the fields with characteristic three.
    \end{itemize}
\end{itemize}

\begin{itemize}[leftmargin=*]
    \item Definition \ref{def:index-I2} characterizes the second index coding instance $\mathcal{I}_{2}$, comprising 29 users.
    
    \item  Theorem \ref{thm:non-Fano-Index-Instance-characteristic-3} states that the necessary and sufficient condition for a linear index code to be optimal for $\mathcal{I}_{2}$ is that the chosen field does have any characteristic other than characteristic three.
    The sufficient and necessary conditions are separately proved in Propositions \ref{prop-thm2-1} and \ref{prop-thm2-2}, respectively. 
    \begin{itemize}[leftmargin=*]
        \item In Proposition \ref{prop-thm2-1}, it is shown that there exists a linear code over the fields with any characteristic other than characteristic three, which is optimal for $\mathcal{I}_{2}$ (its proof is provided in Appendix \ref{app:proof-prop-H-I1-I2}).
        
        \item In Proposition \ref{prop-thm2-2}, using Lemmas \ref{lem:MAIS1}-\ref{lem:col(9)}, it is proved that the main constraints on the column space of the encoding matrix of index coding instance $\mathcal{I}_{2}$ are equivalent to the constraints on the column space of the matrix, which is a linear representation of matroid instance $\mathcal{N}_{2}$. This combined with Proposition \ref{prop-matroid-N_2} implies that linear coding is optimal for $\mathcal{I}_{2}$ only over the fields with any characteristic other than characteristic three.
    \end{itemize}
\end{itemize}
\\
\hline

%%%%%%%%%%%%%%%%%%%%%%%%%%%%%%%%%%%%%%%
\ref{sub:third-instance-3}
&       
\ref{sub:matroid-instance-1-3}
&
\begin{itemize}[leftmargin=*]
    \item The concept of quasi-circuit set is defined in Definition
    \ref{def:quasi-circuit-set}.
    \item Using the concept of quasi-circuit set, Definition \ref{def:matroid-N3} characterizes matroid instance $\mathcal{N}_{3}$, with the ground set of size 18.
    
    \item In Proposition \ref{prop-matroid-N_3}, it is proved that matroid instance $\mathcal{N}_{3}$ is linearly representable over fields with any characteristic other than characteristic three.
\end{itemize}
\\ \cline{2-3}
&       
\ref{sub:index-coding-reduction-matroid-quasi}
&
Lemmas \ref{lem:quasi-independent-cycle}-\ref{lem:two-interference-dimention} establish reduction techniques to map the constraints on the encoding matrix of an index coding instance to the constraints on the representation matrix of a matroid instance (their proof are provided in Appendix \ref{app:proof- Lemma 7-9}).
\\ \cline{2-3}
&       
\ref{sub:index-coding-1-3}
&
\begin{itemize}[leftmargin=*]
    \item Definition \ref{def:index-I3} characterizes the third index coding instance $\mathcal{I}_{3}$, comprising 58 users.
    
    \item  Theorem \ref{thm:Quasi-non-Fano-Index-Instance-characteristic-3} states that the necessary and sufficient condition for a linear index code to be optimal for $\mathcal{I}_{3}$ is that the chosen field does have any characteristic other than characteristic three. However, there exists a scalar nonlinear code over the fields with characteristic three, which is optimal for $\mathcal{I}_{3}$.
    The sufficient and necessary conditions, and the existence of that nonlinear code are separately proved in Propositions \ref{prop-thm3-1}, \ref{prop-thm3-2}, and \ref{prop-thm3-3}, respectively. 
    \begin{itemize}[leftmargin=*]
        \item In Proposition \ref{prop-thm3-1}, it is shown that there exists a linear code over the fields with any characteristic other than characteristic three, which is optimal for $\mathcal{I}_{3}$.
        
        \item In Proposition \ref{prop-thm3-2}, using Lemmas \ref{lem:MAIS1}-\ref{lem:MAIS2} and Lemmas \ref{lem:quasi-independent-cycle}-\ref{lem:two-interference-dimention}, it is proved that the main constraints on the column space of the encoding matrix of index coding instance $\mathcal{I}_{3}$ are equivalent to the constraints on the column space of the matrix, which is a linear representation of matroid instance $\mathcal{N}_{3}$ (its proof is provided in Appendix \ref{app:proof-Prop-thm3-2}). This combined with Proposition \ref{prop-matroid-N_3} implies that linear coding is optimal for $\mathcal{I}_{3}$ only over the fields with any characteristic other than characteristic three.
        
        \item In Proposition \ref{prop-thm3-3}, it is shown that there exists a scalar nonlinear code over the fields with characteristic three, which is optimal for index coding instance $\mathcal{I}_{3}$.
    \end{itemize}
\end{itemize}
\\
\hline
\end{tabular}
\end{table*}

\section{System Model and Background} \label{sec:system-model}

\subsection{Notation}
Small letters such as $n$  denote an integer where  $[n]\triangleq\{1,...,n\}$ and $[n:m]\triangleq\{n,n+1,\dots m\}$ for $n<m$. Capital letters such as $L$ denote a set, with $|L|$ denoting its cardinality. Symbols in bold face such as $\boldsymbol{l}$ and $\boldsymbol{L}$, respectively, denote a vector and a matrix, with $\mathrm{rank}(\boldsymbol{L})$ and $\mathrm{col}(\boldsymbol{L})$ denoting the rank and column space of matrix $\boldsymbol{L}$, respectively. A calligraphic symbol such as $\mathcal{L}$ denotes a set whose elements are sets.\\
We use $\mathbb{F}_q$ to denote a finite field of size $q$ and write $\mathbb{F}_{q}^{n\times m}$ to denote the vector space of all $n\times m$ matrices over the field $\mathbb{F}_{q}$.
$\boldsymbol{I}_{n}$ denotes the identity matrix of size $n\times n$, and $\boldsymbol{0}_{n}$ represents an $n\times n$ matrix whose elements are all zero.

\subsection{System Model} \label{sub:background-system-model}
Consider a broadcast communication system in which a server transmits a set of $mt$ messages $X=\{x_{i}^{j},\ i\in[m],\ j\in [t]\},\ x_{i}^{j}\in \mathcal{X}$, to a number of users $U=\{u_i,\ i\in[m]\}$ through a noiseless broadcast channel. Each user $u_i$ wishes to receive a message of length $t$, $X_i=\{x_{i}^{j},\  j\in[t]\}$ and may have a priori knowledge of a subset of the messages $S_i:=\{x_{l}^{j},\ l\in A_{i},\ j\in[t]\},\ A_{i}\subseteq[m]\backslash \{i\}$, which is referred to as its side information set. The main objective is to minimize the number of coded messages which is required to be broadcast so as to enable each user to decode its requested message.
An instance of index coding problem $\mathcal{I}$ can be either characterized by the side information set of its users as $\mathcal{I}=\{A_{i}, i\in[m]\}$, or by their interfering message set $B_{i}=[m]\backslash (A_i \cup \{i\})$ as $\mathcal{I}=\{B_{i}, i\in[m]\}$.

\subsection{General Index Code}\label{sub:background-general-index-code}
\begin{defn}[$\mathcal{C}_{\mathcal{I}}$: Index Code for $\mathcal{I}$]
Given an instance of index coding problem $\mathcal{I}=\{A_{i}, i\in[m]\}$, a $(t,r)$ index code is defined as $\mathcal{C}_{\mathcal{I}}=(\phi_{\mathcal{I}},\{\psi_{\mathcal{I}}^{i}\})$, where
 \begin{itemize}
     \item $\phi_{\mathcal{I}}: \mathcal{X}^{mt}\rightarrow \mathcal{X}^{r}$ is the encoding function which maps the $mt$ message symbol $x_{i}^{j}\in \mathcal{X}$ to the $r$ coded messages as $Y=\{y_{1},\dots,y_{r}\}$, where $y_k\in \mathcal{X}, \forall k\in [r]$.
     \item $\psi_{\mathcal{I}}^{i}:$ represents the decoder function, where for each user $u_i, i\in[m]$, the decoder $\psi_{\mathcal{I}}^{i}: \mathcal{X}^{r}\times \mathcal{X}^{|A_i|t}\rightarrow \mathcal{X}^{t}$ maps the received $r$ coded messages $y_k\in Y, k\in[r]$ and the $|A_i|t$ messages $x_{l}^{j}\in S_i$ in the side information to the $t$ messages $\psi_{\mathcal{I}}^{i}(Y,S_i)=\{\hat{x}_{i}^{j}, j\in [t]\}$, where $\hat{x}_{i}^{j}$ is an estimate of $x_{i}^{j}$.
 \end{itemize}
\end{defn}

\begin{defn}[$\beta(\mathcal{C}_{\mathcal{I}})$: Broadcast Rate of $\mathcal{C}_{\mathcal{I}}$]
Given an instance of the index coding problem $\mathcal{I}$, the broadcast rate of a $(t,r)$ index code $\mathcal{C}_{\mathcal{I}}$ is defined as $\beta(\mathcal{C}_{\mathcal{I}})=\frac{r}{t}$.
\end{defn}

\begin{defn}[$\beta(\mathcal{I})$: Broadcast Rate of $\mathcal{I}$]
Given an instance of the index coding problem $\mathcal{I}$, the broadcast rate $\beta(\mathcal{I})$ is defined as
\begin{equation} \label{eq:opt-rate}
    \beta(\mathcal{I})=\inf_{t} \inf_{\mathcal{C}_{\mathcal{I}}} \beta(\mathcal{C}_{\mathcal{I}}).
\end{equation}
Thus, the broadcast rate of any index code $\mathcal{C}_{\mathcal{I}}$ provides an upper bound on the broadcast rate of $\mathcal{I}$, i.e., $\beta(\mathcal{I}) \leq \beta(\mathcal{C}_{\mathcal{I}})$.
\end{defn}

% \begin{defn}[Scalar and Vector Linear Index Code \cite{Arbabjolfaei2018}]
% The linear index code $\mathcal{C}$ is said to be scalar if $t=1$. Otherwise, it is called a vector (or fractional) code. For scalar codes, we use $x_{i}=x_{i}^{1}, i\in [m]$, for simplicity.
% \end{defn}

\subsection{Linear Index Code} \label{sub:background-linear-index-code}
Let $\boldsymbol{x}=[\boldsymbol{x}_{1},\dots,\boldsymbol{x}_{m}]^{T}\in \mathbb{F}_{q}^{mt\times 1}$ denote the vector message.

\begin{defn}[Linear Index Code]
Given an instance of the index coding problem $\mathcal{I}=\{B_{i}, i\in[m]\}$, a $(t,r)$ linear index code is defined as $\mathcal{L}_{\mathcal{I}}=(\boldsymbol{H},\{\psi_{\mathcal{I}}^{i}\})$, where
  \begin{itemize}
      \item $\boldsymbol{H}: \mathbb{F}_{q}^{mt\times 1}\rightarrow \mathbb{F}_{q}^{r\times 1}$ is the $r\times mt$ encoding matrix which maps the message vector $\boldsymbol{x}\in \mathbb{F}_{q}^{mt\times 1}$  to a coded message vector $\boldsymbol{\Bar{y}}=[y_{1},\dots,y_{r}]^{T}\in \mathbb{F}_{q}^{{r}\times 1}$ as follows
      \begin{equation} \nonumber
      \boldsymbol{y}=\boldsymbol{H}\boldsymbol{x}=\sum_{i\in [m]} \boldsymbol{H}^{\{i\}}\boldsymbol{x}_i.
      \end{equation}
      Here $\boldsymbol{H}^{\{i\}}\in \mathbb{F}_{q}^{r\times t}$ is the local encoding matrix of the $i$-th message $\boldsymbol{x}_i$ such that
      $\boldsymbol{H}=
      \left [\begin{array}{c|c|c}
        \boldsymbol{H}^{\{1\}} & \dots & \boldsymbol{H}^{\{m\}}
      \end{array}
     \right ]\in \mathbb{F}_{q}^{r\times mt}$.
     \item $\psi_{\mathcal{I}}^{i}$ represents the linear decoder function for user $u_{i}, i\in[m]$, where $\psi_{\mathcal{I}}^{i}(\boldsymbol{y}, S_i)$ maps the received coded message $\boldsymbol{y}$ and its side information messages $S_i$ to $\hat{\boldsymbol{x}}_{i}$, which is an estimate of the requested message vector $\boldsymbol{x}_i$.
  \end{itemize}
\end{defn}

\begin{prop} [\cite{Sharififar-UMCD-J}]\label{prop-lineardecoding-condition}
The necessary and sufficient condition for linear decoder $\psi_{\mathcal{I}}^{i}, \forall i\in[m]$ to correctly decode the requested message vector $\boldsymbol{x}_i$ is
   \begin{equation} \label{eq:dec-cond}
       \mathrm{rank} (\boldsymbol{H}^{\{i\}\cup B_{i}})= \mathrm{rank} (\boldsymbol{H}^{B_{i}}) + t,
   \end{equation}
where $\boldsymbol{H}^L$ denotes the matrix $\left [\begin{array}{c|c|c}
       \boldsymbol{H}^{\{l_1\}} & \dots & \boldsymbol{H}^{\{l_{|L|}\}}
     \end{array}
    \right ]$ for the given set $L=\{l_1,\dots,l_{|L|}\}$. %such that $l_1<\dots<l_{|L|}$
\end{prop}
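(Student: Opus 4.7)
The plan is to reduce the decoding problem at user $u_i$, after it has subtracted the contributions of its side information, to a question of injectivity for a linear map, and then to translate that injectivity into the rank identity \eqref{eq:dec-cond} via a standard dimension-sum argument.

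First I would observe that $u_i$ knows $\boldsymbol{y}=\boldsymbol{H}\boldsymbol{x}$ and every $\boldsymbol{x}_l$ with $l\in A_i=[m]\setminus(B_i\cup\{i\})$. Subtracting the known contributions yields the effective observation
\[
\boldsymbol{y}-\sum_{l\in A_i}\boldsymbol{H}^{\{l\}}\boldsymbol{x}_l \;=\; \boldsymbol{H}^{\{i\}}\boldsymbol{x}_i + \boldsymbol{H}^{B_i}\boldsymbol{x}_{B_i},
\]
where $\boldsymbol{x}_{B_i}$ stacks the remaining unknown messages indexed by $B_i$. Since $\boldsymbol{x}_{B_i}$ can be arbitrary, a linear decoder succeeds for every message assignment iff the map $(\boldsymbol{x}_i,\boldsymbol{x}_{B_i})\mapsto \boldsymbol{H}^{\{i\}}\boldsymbol{x}_i+\boldsymbol{H}^{B_i}\boldsymbol{x}_{B_i}$ is injective in its first argument uniformly in the second. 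Taking the difference of two candidate pairs producing the same observation and writing $\boldsymbol{z}$ and $\boldsymbol{w}$ for the resulting differences, this condition is equivalent to $\boldsymbol{H}^{\{i\}}\boldsymbol{z}+\boldsymbol{H}^{B_i}\boldsymbol{w}=\boldsymbol{0}\Rightarrow \boldsymbol{z}=\boldsymbol{0}$.

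Next I would interpret this implication geometrically: it holds iff the $t$ columns of $\boldsymbol{H}^{\{i\}}$ are linearly independent \emph{and} $\mathrm{col}(\boldsymbol{H}^{\{i\}})\cap \mathrm{col}(\boldsymbol{H}^{B_i})=\{\boldsymbol{0}\}$. A nonzero $\boldsymbol{z}$ with $\boldsymbol{H}^{\{i\}}\boldsymbol{z}=\boldsymbol{0}$ would violate the implication by taking $\boldsymbol{w}=\boldsymbol{0}$, while a nonzero element of the intersection supplies $\boldsymbol{z},\boldsymbol{w}$ violating it in the other direction; conversely the two conditions together clearly force $\boldsymbol{z}=\boldsymbol{0}$. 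Applying the standard dimension-sum identity
\[
\mathrm{rank}(\boldsymbol{H}^{\{i\}\cup B_i}) = \mathrm{rank}(\boldsymbol{H}^{\{i\}}) + \mathrm{rank}(\boldsymbol{H}^{B_i}) - \dim\!\bigl(\mathrm{col}(\boldsymbol{H}^{\{i\}})\cap \mathrm{col}(\boldsymbol{H}^{B_i})\bigr),
\]
these two geometric conditions are then equivalent to $\mathrm{rank}(\boldsymbol{H}^{\{i\}\cup B_i})=\mathrm{rank}(\boldsymbol{H}^{B_i})+t$, which is exactly \eqref{eq:dec-cond}.

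The argument is essentially routine linear algebra, so I do not anticipate a serious obstacle. The only subtlety worth flagging is that correct decoding has to hold \emph{uniformly} in $\boldsymbol{x}_{B_i}$; it is this uniformity that upgrades pointwise unique decoding to a statement about the intersection of two entire column spaces, rather than a weaker condition on a single coset of $\mathrm{col}(\boldsymbol{H}^{B_i})$.
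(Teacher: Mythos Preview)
Your argument is correct and is the standard proof of this result. Note, however, that the paper does not actually supply a proof of this proposition: it is stated with a citation to \cite{Sharififar-UMCD-J} and used as a known fact, so there is no ``paper's own proof'' to compare against here. Your reduction to the injectivity condition $\boldsymbol{H}^{\{i\}}\boldsymbol{z}+\boldsymbol{H}^{B_i}\boldsymbol{w}=\boldsymbol{0}\Rightarrow\boldsymbol{z}=\boldsymbol{0}$ followed by the dimension-sum identity is exactly the routine linear-algebra argument one expects for this statement.
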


\begin{defn}[$\lambda_{q}(\mathcal{L}_{\mathcal{I}})$: Linear Broadcast Rate of $\mathcal{L}_{\mathcal{I}}$ over $\mathbb{F}_q$]
Given an instance of index coding problem $\mathcal{I}$, the linear broadcast rate of a $(t,r)$ linear index code $\mathcal{L}_{\mathcal{I}}$ over field $\mathbb{F}_q$ is defined as $\lambda_{q}(\mathcal{L}_{\mathcal{I}})=\frac{r}{t}$.
\end{defn}

\begin{defn}[$\lambda_{q}(\mathcal{I})$: Linear Broadcast Rate of $\mathcal{I}$ over $\mathbb{F}_q$]
Given an instance of index coding problem $\mathcal{I}$, the linear broadcast rate $\lambda_{q}(\mathcal{I})$ over field $\mathbb{F}_q$ is defined as
    \begin{equation} \nonumber
        \lambda_{q}(\mathcal{I})=\inf_{t} \inf_{\mathcal{L}_{\mathcal{I}}} \lambda_{q}(\mathcal{L}_{\mathcal{I}}).
    \end{equation}
\end{defn}

\begin{defn}[$\lambda(\mathcal{I})$: Linear Broadcast Rate for $\mathcal{I}$]
Given an instance of index coding problem $\mathcal{I}$, the linear broadcast rate is defined as 
\begin{equation} \label{eq:opt-lin-rate}
    \lambda(\mathcal{I})=\min_{q} \lambda_{q}(\mathcal{I}).
\end{equation}
\end{defn}

\begin{defn}[Scalar and Vector Linear Index Code]
The linear index code $\mathcal{C}_{\mathcal{I}}$ is said to be scalar if $t=1$. Otherwise, it is called a vector (or fractional) code. For scalar codes, we use $x_{i}=x_{i}^{1}, i\in [m]$, for simplicity.
\end{defn}

\subsection{Graph Definitions} \label{sub:background-graph}
Given an index coding instance $\mathcal{I}$, the following concepts are defined based on its interfering message sets, which are, in fact, related to its graph representation \cite{sharififar2021broadcast}.

\begin{defn}[Independent Set of $\mathcal{I}$]
We say that set $M\subseteq[m]$ is an independent set of $\mathcal{I}$ if $B_{i}\cap M=M\backslash \{i\}$ for all $i\in M$.
\end{defn}

\begin{defn}[Minimal Cyclic Set of $\mathcal{I}$] \label{def:minimal-cyclic-set}
Let $M=\{i_{j}, j\in [|M|]\}\subseteq[m]$. Now, $M$ is referred to as a minimal cyclic set of $\mathcal{I}$ if
%if $B_{i_{j}}=M\backslash \{i_{j}, i_{j+1}\}$, for all $j\in [|M|-1]$, and $B_{i_{|M|}}=M\backslash \{i_{|M|}, i_{1}\}$.
 \begin{equation} \label{eq:def-minimal-cyclic-set}
 B_{i_{j}}\cap M=
    \left\{
      \begin{array}{cc}
         M\backslash \{i_{j}, i_{j+1}\},\ \ \ \ \ \ \ \ j\in [|M|-1],
         \\ \\
        M\backslash \{i_{|M|}, i_{1}\},\ \ \ \ \ \ \ \ \ \ \ \  j=i_{|M|}.\ \ \
     \end{array}
     \right.
\end{equation}
\end{defn}

\begin{defn}[Acyclic Set of $\mathcal{I}$] \label{def:acyclic-set}
We say that $M\subseteq[m]$ is an acyclic set of $\mathcal{I}$, if none of its subsets $M^{\prime}\subseteq M$ forms a minimal cyclic set of $\mathcal{I}$. We note that each independent set is an acyclic set as well.
\end{defn}

\begin{prop}[\cite{Arbabjolfaei2018}] \label{prop:rate-acyclic-minimal}
Let $\mathcal{I}=\{B_{i}, i\in [m]\}$. It can be shown that
\begin{itemize}[leftmargin=*]%[label=(\roman*)]
    \item if set $[m]$ is an acyclic set of $\mathcal{I}$, then $\lambda_{q}(\mathcal{I})=\beta(\mathcal{I})=m$.
    \item if set $[m]$ is a minimal cyclic set of $\mathcal{I}$, then $\lambda_{q}(\mathcal{I})=\beta(\mathcal{I})=m-1$.
\end{itemize}

\end{prop}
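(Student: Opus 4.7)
The plan is to establish matching upper and lower bounds on $\beta(\mathcal{I})$ and $\lambda_q(\mathcal{I})$ in each part, with the chain $\beta(\mathcal{I}) \leq \lambda_q(\mathcal{I})$ as the connector. In both cases the upper bound comes from an explicit scalar linear scheme, and the lower bound from the well-known Maximum Acyclic Induced Subgraph (MAIS) converse, which asserts $\beta(\mathcal{I}) \geq |M|$ for any $M \subseteq [m]$ that is acyclic in the sense of Definition~\ref{def:acyclic-set}.

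Part (i) is essentially immediate: the uncoded scheme $y_i = x_i$ for $i \in [m]$ is a valid $(1,m)$ scalar linear index code over every $\mathbb{F}_q$, giving $\lambda_q(\mathcal{I}) \leq m$, and MAIS applied with $M = [m]$ (acyclic by hypothesis) yields the matching $\beta(\mathcal{I}) \geq m$. For part (ii), I would first order $[m] = \{i_1, \ldots, i_m\}$ as in Definition~\ref{def:minimal-cyclic-set}; a direct calculation from $B_{i_j} = [m] \setminus \{i_j, i_{j+1}\}$ shows that within $[m]$ user $i_j$'s side information is the singleton $\{x_{i_{j+1}}\}$ (indices modulo $m$). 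For the upper bound I would transmit $y_j = x_{i_j} + x_{i_{j+1}}$ for $j \in [m-1]$: each user $i_j$ with $j \leq m-1$ cancels its side-information symbol $x_{i_{j+1}}$ from $y_j$, while user $i_m$, who holds $x_{i_1}$, propagates along the chain $y_1, y_2, \ldots, y_{m-1}$ to recover $x_{i_2}, x_{i_3}, \ldots, x_{i_m}$ iteratively. This gives $\lambda_q(\mathcal{I}) \leq m-1$ over every field. For the matching converse, I take $M^* = [m] \setminus \{i_k\}$ for any $k$: deleting a single vertex from the length-$m$ Hamiltonian cycle leaves a directed path on the remaining vertices, so $M^*$ contains no minimal cyclic subset and is therefore acyclic, yielding $\beta(\mathcal{I}) \geq m-1$ via MAIS.

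The main subtlety I expect will be the acyclicity verification of $M^*$ under Definition~\ref{def:acyclic-set}: one must rule out every subset $M^\prime \subseteq M^*$ satisfying the minimal-cyclic-set condition of Definition~\ref{def:minimal-cyclic-set}. I expect this to reduce to a short structural argument, since removing a vertex from the Hamiltonian cycle supported on $[m]$ destroys the only chordless directed cycle among those vertices, so no ordering of any surviving subset can realize the required interference pattern. After that checkpoint, both parts amount to a direct construction paired with an invocation of the MAIS converse.
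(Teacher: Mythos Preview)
The paper does not actually supply a proof of this proposition; it is stated as a known fact (``It can be shown \ldots'') and then used freely later on. Your plan is the standard argument and is correct: the uncoded scheme plus the MAIS converse settles part~(i), and for part~(ii) the cyclic ``save-one'' code $y_j=x_{i_j}+x_{i_{j+1}}$ together with the MAIS bound on $M^\ast=[m]\setminus\{i_k\}$ gives the matching $m-1$. Your reading of ``minimal acyclic set'' as ``minimal cyclic set'' (Definition~\ref{def:minimal-cyclic-set}) is the intended one; the phrasing in the proposition is a typo.

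One cosmetic remark on the acyclicity of $M^\ast$: under the paper's literal Definition~\ref{def:minimal-cyclic-set}, a subset $M'\subseteq M^\ast$ being minimal cyclic would force $|B_{i}|=|M'|-2\le m-3$ for every $i\in M'$, whereas the hypothesis $B_{i_j}=[m]\setminus\{i_j,i_{j+1}\}$ gives $|B_{i_j}|=m-2$; so no such $M'$ exists and the ``path after vertex deletion'' picture is not even needed. Either way your converse goes through.
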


\begin{defn}[Maximum Acyclic Induced Subgraph (MAIS) of $\mathcal{I}$]
Let $\mathcal{M}$ be the set of all sets $M\subseteq[m]$ which are acyclic sets of $\mathcal{I}$. Then, set $M\in \mathcal{M}$ with the maximum size $|M|$ is referred to as the MAIS set of $\mathcal{I}$, and $\beta_{\text{MAIS}}(\mathcal{I})=|M|$ is called the MAIS bound for $\lambda_{q}(\mathcal{I})$, as we always have \cite{Bar-Yossef2011}
\begin{align}
    \lambda_{q}(\mathcal{I})\geq \beta_{\text{MAIS}}(\mathcal{I}).
    \label{eq:MAIS-linear-coding}
\end{align}
\end{defn}

\begin{rem}%[Sufficient Condition for Optimality of Linear Coding]
Equation \eqref{eq:MAIS-linear-coding} establishes a sufficient condition for optimality of linear coding rate as follows.
Given an index coding instance $\mathcal{I}$, if $\lambda_{q}(\mathcal{I})= \beta_{\text{MAIS}}(\mathcal{I})$, then linear coding rate is optimal for $\mathcal{I}$. In this paper, the encoding matrix which achieves this optimal rate is denoted by $\boldsymbol{H}_{\ast}$.
\end{rem}

\begin{exmp}
Consider the index coding instance $\mathcal{I}=\{B_{i}, i\in [4]\}$ where
\begin{align}
    B_{1}=\{3\}, 
    B_{2}=\{1\},
    B_{3}=\{2\},
    B_{4}=\{1,2,3\}.
\end{align}
Now, it can be seen that set \{1,2,3\} is a minimal cyclic set of $\mathcal{I}$, and each set $\{1,2,4\}, \{1,3,4\}$ and $\{2,3,4\}$ is an acyclic and also a MAIS set of $\mathcal{I}$. Thus, $\beta_{\text{MAIS}}(\mathcal{I})=3$. Now, it can be easily verified that the following encoding matrix $\boldsymbol{H}_{\ast}$ achieves the MAIS bound, and so, it is optimal for $\mathcal{I}$
\begin{align}
    \boldsymbol{H}_{\ast}=
    \begin{bmatrix}
    1 & 0 & 1 & 0\\
    0 & 1 & 1 & 0\\
    0 & 0 & 0 & 1
    \end{bmatrix}.
    \nonumber
\end{align}
\end{exmp}

\subsection{Overview of Matroid Theory} \label{sub:background-matroid}

\begin{defn}[$\mathcal{N}$: Matroid Instance \cite{Rouayheb2010,Thomas}]
A matroid instance $\mathcal{N}=\{f(N), N\subseteq [n]\}$ is a set of functions $f: 2^{[n]}\rightarrow \{0,1,2,\dots \}$ that satisfy the following three conditions:
\begin{align}
    &f(N) \leq |N|,\ \ \ \ \ \ \ \ \ \ \ \ \ \ \ \ \ \ \ \ \ \ \ \ \ \ \ \ \ \ \ \ \ \ \ \ \forall N\subseteq[n],
    \nonumber
    \\
    &f(N_{1}) \leq f(N_{2}),\ \ \ \ \ \ \ \ \ \ \ \ \ \ \ \ \ \ \ \ \ \ \ \ \ \ \ \ \ \ \ \ \forall N_{1}\subseteq N_{2}\subseteq[n],
    \nonumber
    \\
    &f(N_{1}\cup N_{2})+f(N_{1}\cap N_{2}) \leq f(N_{1})+f(N_{2}),\forall N_{1}, N_{2}\subseteq[n].
    \nonumber
\end{align}
Here, set $[n]$ and function $f(\cdot)$, respectively, are called the ground set and the rank function of $\mathcal{N}$. 
The rank of matroid $\mathcal{N}$ is defined as $f(\mathcal{N})=f([n])$.
\end{defn}

\begin{defn}[Basis and Circuit Sets of $\mathcal{N}$] \label{def:Basis-Circuit-Matroid}
Consider a matroid $\mathcal{N}$ of rank $f(\mathcal{N})$. We say that $N\subseteq[n]$ is an independent set of $\mathcal{N}$ if $f(N)=|N|$. Otherwise, $N$ is said to be a dependent set. A maximal independent set $N$ is referred to as a basis set. A minimal dependent set $N$ is referred to as a circuit set. 
Let sets $\mathcal{B}$ and $\mathcal{C}$, respectively, denote the set of all basis and circuit sets of $\mathcal{N}$. It can be shown that
\begin{align}
    &f(\mathcal{N})=f(N)=|N|,\ \ \ \ \ \ \ \ \ \ \ \ \forall N\in \mathcal{B},
    \nonumber
    \\
    &f(N\backslash \{i\})=|N|-1, \ \ \ \ \ \ \ \ \ \ \ \ \  \forall i\in N,\ \forall N\in \mathcal{C}. \label{eq:cicuitset}
\end{align}
\end{defn}

\begin{defn}[$(t)$-linear Representation of $\mathcal{N}$ over $\mathbb{F}_{q}$] \label{def:linear-rep}
We say that matroid $\mathcal{N}=\{f(N), N\subseteq [n]\}$ of rank $f(\mathcal{N})$ has a $(t)$-linear representation over $\mathbb{F}_{q}$ if there exists a matrix $\boldsymbol{H}=
      \left [\begin{array}{c|c|c}
        \boldsymbol{H}^{\{1\}} & \dots & \boldsymbol{H}^{\{n\}}
      \end{array}
     \right ]\in \mathbb{F}_{q}^{f(\mathcal{N})t \times nt}$
such that
%\vspace{-1ex}
\begin{equation} \label{matroid-linear-representation}
    \mathrm{rank}(\boldsymbol{H}^{N})=f(N)t,\ \ \  \ \forall N\subseteq [n].
\end{equation}
\end{defn}

Now, based on Definitions \ref{def:Basis-Circuit-Matroid} and \ref{def:linear-rep}, the concepts of basis and circuit sets can also be defined for matrix $\boldsymbol{H}$.

\begin{defn}[Basis and Circuit Sets of $\boldsymbol{H}$] \label{def:Basis-Circuit-Matrix}
Let $N\subseteq [n]$. We say that $N$ is an independent set of $\boldsymbol{H}$, if $\mathrm{rank}(\boldsymbol{H}^{N})=|N|t$, otherwise $N$ is a dependent set of $\boldsymbol{H}$. The independent set $N$ is a basis set of $\boldsymbol{H}$ if $\mathrm{rank}(\boldsymbol{H})=\mathrm{rank}(\boldsymbol{H}^{N})=|N|t$. The dependent set $N$ is a circuit set of $\boldsymbol{H}$ if
\begin{align}
    \mathrm{rank}(\boldsymbol{H}^{N\backslash \{j\}})=\mathrm{rank}(\boldsymbol{H}^{N})=(|N|-1)t, \ \ \forall j\in N,
    \label{eq: def:circuit-set-H}
\end{align}
which requires that
\begin{align} 
\boldsymbol{H}^{\{j\}}=\sum_{i\in N\backslash \{j\}} \boldsymbol{H}^{\{i\}}\boldsymbol{M}_{j,i}    
\label{eq:circuit-set-M-invertible}
\end{align}
where each $\boldsymbol{M}_{j,i}$ is invertible.
\end{defn}

\begin{defn}[Scalar and Vector Linear Representation]
If matroid $\mathcal{N}$ has linear representation with $t=1$, it is said that $\mathcal{N}$ has a scalar linear representation. Otherwise, the linear representation is called a vector representation.
\end{defn}

\begin{exmp}
Consider matroid instance $\mathcal{N}$ with the ground set of size $n=3$ and rank $f(\mathcal{N})=2$ such that sets $\{1,2\}, \{1,3\}, \{2,3\}$ are basis sets, and set $\{1,2,3\}$ is a circuit set. Then, the following matrix $\boldsymbol{H}$ is a scalar linear representation of $\mathcal{N}$
\begin{align}
    \boldsymbol{H}=
    \begin{bmatrix}
    1 & 0 & 1 \\
    0 & 1 & 1
    \end{bmatrix}.
    \nonumber
\end{align}
\end{exmp}

\begin{rem}
Note that the condition \eqref{eq:dec-cond} requires that
\begin{align}
    \mathrm{rank}(\boldsymbol{H}^{\{i\}})=t,\ \forall i\in [m].
    \nonumber
\end{align}
Thus, for matrix $\boldsymbol{H}$, which is a linear representation of matroid $\mathcal{N}$ with the ground set $[n]$, we also assume that
\begin{align}
    \mathrm{rank}(\boldsymbol{H}^{\{i\}})=t, \ \forall i\in [n].
    \label{eq:rank-Hi-assumption}
\end{align}
\end{rem}

\section{Main Results} \label{sec:Insufficiency of Linear Index Coding-main-results}
This section presents two new index coding instances of size 87 and 91 for which linear coding is outperformed by nonlinear codes. Each instance is composed of two index coding subinstances, which are connected using two specific ways, referred to as no-way and two-way connections. In the following sections of this paper, it will be proved that for one of these subinstances, linear coding is optimal only over the fields with characteristic three, and for the other instance, while linear coding cannot be optimal over the fields with characteristic three, there exists a nonlinear code over the fields with characteristic three, which achieves its optimal rate. This implies that although linear coding over any field cannot simultaneously be optimal for these two subinstances, there exists a nonlinear code over the fields with characteristic three, which can achieve their optimal rate at the same time.

\begin{defn}[$\mathcal{I}_{1} \nleftrightarrow \mathcal{I}_{2}$: No-way Connection of $\mathcal{I}_{1}$ and $\mathcal{I}_{2}$]
Given two index coding instances $\mathcal{I}_{1}=\{B_{i}^{1}, i\in [m_{1}]\}$ and $\mathcal{I}_{2}=\{B_{i}^{2}, i\in [m_{2}]\}$, no-way connection of $\mathcal{I}_{1}$ and $\mathcal{I}_{2}$, denoted by $\mathcal{I}_{1} \nleftrightarrow \mathcal{I}_{2}$, is defined as a new index coding instance $ \mathcal{I}=\{B_{i}, i\in[m]\}$, where $m=m_{1}+m_{2}$ and
\begin{equation} \nonumber
    \left\{
      \begin{array}{cc}
        B_{i}&=B_{i}^{1} \cup ([m]\backslash[m_{1}]),\ \ \ \ \ \ \ \ \ \forall i\in [m_{1}], \\ \\
        B_{i+m_{1}}&=B_{i}^{2} \cup [m_{1}], \ \ \ \ \ \quad \quad \quad \quad \forall i\in [m_{2}],
     \end{array}
     \right.
\end{equation}
which means that the new instance $\mathcal{I}$ is a concatenation of the two subinstances $\mathcal{I}_{1}$ and $\mathcal{I}_{2}$ such that each user in $\mathcal{I}_{1}$ has all the messages requested by the users in $\mathcal{I}_{2}$ in its interfering message set and vice versa.
\end{defn}

\begin{defn}[$\mathcal{I}_{1} \leftrightarrow \mathcal{I}_{2}$: Two-way Connection of $\mathcal{I}_{1}$ and $\mathcal{I}_{2}$]
Given two index coding instances $\mathcal{I}_{1}=\{B_{i}^{1}, i\in [m_{1}]\}$ and $\mathcal{I}_{2}=\{B_{i}^{2}, i\in [m_{2}]\}$, two-way connection of $\mathcal{I}_{1}$ and $\mathcal{I}_{2}$, denoted by $\mathcal{I}_{1} \leftrightarrow \mathcal{I}_{2}$, is defined as a new index coding instance $ \mathcal{I}^{\prime}=\{B_{i}^{\prime}, i\in[m^{\prime}]\}$, where $m^{\prime}=m_{1}+m_{2}$ and
\begin{equation} \nonumber
    \left\{
      \begin{array}{cc}
        B_{i}^{\prime}&=B_{i}^{1} , \ \ \ \forall i\in [m_{1}], \\ \\
        B_{i+m_{1}}^{\prime}&=B_{i}^{2} , \ \ \ \forall i\in [m_{2}],
     \end{array}
     \right.
\end{equation}
which means that the new instance $\mathcal{I}^{\prime}$ is a concatenation of the two subinstances $\mathcal{I}_{1}$ and $\mathcal{I}_{2}$ such that each user in $\mathcal{I}_{1}$ has all the messages requested by the users in $\mathcal{I}_{2}$ in its side information set and vice versa.
\end{defn}

\begin{prop}[\mdseries Blasiak \textit{et all}. \cite{blasiak2011lexicographic}] \label{prop:no-two}
Let $\lambda_{q}(\mathcal{I}_{1})$ and $\lambda_{q}(\mathcal{I}_{2})$, respectively, denote the linear broadcast rate of $\mathcal{I}_{1}$ and $\mathcal{I}_{2}$ over $\mathbb{F}_q$. Then, for the linear broadcast rate of $\mathcal{I}=\mathcal{I}_{1} \nleftrightarrow \mathcal{I}_{2}$ and $\mathcal{I}^{\prime}=\mathcal{I}_{1} \leftrightarrow \mathcal{I}_{2}$ over $\mathbb{F}_q$, we have
\begin{equation} \nonumber
    \left\{
      \begin{array}{cc}
         \lambda_{q}(\mathcal{I})&=\lambda_{q}(\mathcal{I}_{1})+\lambda_{q}(\mathcal{I}_{2}), \ \ \ \ \ \ \ \ \\ \\
        \lambda_{q}(\mathcal{I}^{\prime})&=\max \{\lambda_{q}(\mathcal{I}_{1}), \lambda_{q}(\mathcal{I}_{2})\}.
     \end{array}
     \right.
\end{equation}
\end{prop}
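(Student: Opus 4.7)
The plan is to establish each identity by matching upper and lower bounds.

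\textbf{Upper bounds.} For $\mathcal{I}=\mathcal{I}_{1}\nleftrightarrow\mathcal{I}_{2}$ I would take near-optimal linear codes $\boldsymbol{H}_{1}^{\ast}\in\mathbb{F}_{q}^{r_{1}\times m_{1}t}$ and $\boldsymbol{H}_{2}^{\ast}\in\mathbb{F}_{q}^{r_{2}\times m_{2}t}$ for $\mathcal{I}_{1}$ and $\mathcal{I}_{2}$ (passing to a common blocklength $t$ by block-replication of each) and form the block-diagonal encoder $\boldsymbol{H}=\mathrm{diag}(\boldsymbol{H}_{1}^{\ast},\boldsymbol{H}_{2}^{\ast})$. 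Because the $\mathcal{I}_{2}$-columns of $\boldsymbol{H}$ live in a subspace disjoint from that of the $\mathcal{I}_{1}$-columns, the rank identity \eqref{eq:dec-cond} for each user $i\in[m_{1}]$ in $\mathcal{I}$ reduces to the decoding condition of $\boldsymbol{H}_{1}^{\ast}$ in $\mathcal{I}_{1}$, and symmetrically for $\mathcal{I}_{2}$, giving $\lambda_{q}(\mathcal{I})\leq \lambda_{q}(\mathcal{I}_{1})+\lambda_{q}(\mathcal{I}_{2})$. For $\mathcal{I}^{\prime}=\mathcal{I}_{1}\leftrightarrow\mathcal{I}_{2}$, assuming WLOG $r_{1}\geq r_{2}$, I would zero-pad $\boldsymbol{H}_{2}^{\ast}$ to $r_{1}$ rows and form $\boldsymbol{H}=[\boldsymbol{H}_{1}^{\ast}\mid\boldsymbol{H}_{2}^{\ast\prime}]$; each user in $\mathcal{I}_{1}$ has all of $[m]\setminus[m_{1}]$ in its side information, so it subtracts the (known) contribution $\boldsymbol{H}_{2}^{\ast\prime}\boldsymbol{x}_{[m]\setminus[m_{1}]}$ and then decodes via $\boldsymbol{H}_{1}^{\ast}$, and symmetrically for $\mathcal{I}_{2}$, giving $\lambda_{q}(\mathcal{I}^{\prime})\leq\max\{\lambda_{q}(\mathcal{I}_{1}),\lambda_{q}(\mathcal{I}_{2})\}$.

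\textbf{Two-way lower bound.} Since $B_{i}^{\prime}=B_{i}^{1}\subseteq[m_{1}]$ for every $i\in[m_{1}]$, condition \eqref{eq:dec-cond} applied to a valid $(t,r)$ code $\boldsymbol{H}$ for $\mathcal{I}^{\prime}$ only involves the columns indexed by $[m_{1}]$. Hence $\boldsymbol{H}^{[m_{1}]}$ is itself a valid $(t,r)$ linear code for $\mathcal{I}_{1}$, yielding $r/t\geq \lambda_{q}(\mathcal{I}_{1})$; the symmetric statement gives $r/t\geq \lambda_{q}(\mathcal{I}_{2})$, whence $\lambda_{q}(\mathcal{I}^{\prime})\geq\max\{\lambda_{q}(\mathcal{I}_{1}),\lambda_{q}(\mathcal{I}_{2})\}$.

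\textbf{No-way lower bound.} This is the main technical step. Let $\boldsymbol{H}$ be any valid $(t,r)$ code for $\mathcal{I}$, split it as $\boldsymbol{H}=[\boldsymbol{H}_{1}\mid\boldsymbol{H}_{2}]$ with $\boldsymbol{H}_{j}$ collecting the columns of the $\mathcal{I}_{j}$-messages, set $r_{j}=\mathrm{rank}(\boldsymbol{H}_{j})$, and assume without loss of generality that $\mathrm{col}(\boldsymbol{H})=\mathbb{F}_{q}^{r}$ so that $r_{1}+r_{2}\geq r$. Consider the projection $\pi:\mathbb{F}_{q}^{r}\to\mathbb{F}_{q}^{r}/\mathrm{col}(\boldsymbol{H}_{2})$, a space of dimension $r-r_{2}$, and let $\tilde{\boldsymbol{H}}_{1}=\pi(\boldsymbol{H}_{1})$. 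Since every user $i\in[m_{1}]$ in $\mathcal{I}$ has $[m]\setminus[m_{1}]\subseteq B_{i}$, the decoding condition \eqref{eq:dec-cond} forces $\mathrm{col}(\boldsymbol{H}^{\{i\}})\cap\mathrm{col}(\boldsymbol{H}_{2})=\{0\}$, so $\mathrm{rank}(\tilde{\boldsymbol{H}}_{1}^{\{i\}})=t$; writing $V=\mathrm{col}(\boldsymbol{H}^{B_{i}^{1}})+\mathrm{col}(\boldsymbol{H}_{2})$ and $W=\mathrm{col}(\boldsymbol{H}^{\{i\}})+V$, the identity $\dim W=\dim V+t$ descends to $\mathrm{rank}(\tilde{\boldsymbol{H}}_{1}^{\{i\}\cup B_{i}^{1}})=\mathrm{rank}(\tilde{\boldsymbol{H}}_{1}^{B_{i}^{1}})+t$. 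Thus $\tilde{\boldsymbol{H}}_{1}$ is a valid $(t,r-r_{2})$ code for $\mathcal{I}_{1}$, giving $r-r_{2}\geq t\lambda_{q}(\mathcal{I}_{1})$; symmetrically $r-r_{1}\geq t\lambda_{q}(\mathcal{I}_{2})$. Summing these inequalities and applying $r_{1}+r_{2}\geq r$ yields $r/t\geq \lambda_{q}(\mathcal{I}_{1})+\lambda_{q}(\mathcal{I}_{2})$, and taking the infimum over $(t,r)$ completes the proof.

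\textbf{Main obstacle.} The delicate point is the quotient construction in the no-way lower bound: one must simultaneously verify that projection by $\mathrm{col}(\boldsymbol{H}_{2})$ preserves $\mathrm{rank}(\boldsymbol{H}^{\{i\}})=t$ and preserves the joint rank gap in \eqref{eq:dec-cond} restricted to $\mathcal{I}_{1}$. Both facts hinge on the no-way containment $[m]\setminus[m_{1}]\subseteq B_{i}$, but the dimension bookkeeping in $\mathbb{F}_{q}^{r}/\mathrm{col}(\boldsymbol{H}_{2})$ must be tracked carefully; everything else is essentially a direct consequence of Proposition~\ref{prop-lineardecoding-condition} and an infimum argument.
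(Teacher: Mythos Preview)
The paper does not supply its own proof of this proposition; it is quoted as a known result from Blasiak \textit{et al.}\ \cite{blasiak2011lexicographic} and simply invoked where needed. So there is nothing in the paper to compare against line by line.

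That said, your argument is correct and is essentially the standard one. The upper bounds via block-diagonal and side-by-side encoders are routine, and your two-way lower bound by column restriction is immediate from Proposition~\ref{prop-lineardecoding-condition}. The no-way lower bound via the quotient $\pi:\mathbb{F}_{q}^{r}\to\mathbb{F}_{q}^{r}/\mathrm{col}(\boldsymbol{H}_{2})$ is the right idea, and your dimension bookkeeping is sound: since $[m]\setminus[m_{1}]\subseteq B_{i}$, the decoding condition \eqref{eq:dec-cond} forces $\mathrm{col}(\boldsymbol{H}^{\{i\}})$ to be independent of $\mathrm{col}(\boldsymbol{H}^{B_{i}})\supseteq\mathrm{col}(\boldsymbol{H}_{2})$, so $\pi$ is injective on each $\mathrm{col}(\boldsymbol{H}^{\{i\}})$ and the rank gap of $t$ survives the quotient. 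The final inequality chain $2r-(r_{1}+r_{2})\geq t(\lambda_{q}(\mathcal{I}_{1})+\lambda_{q}(\mathcal{I}_{2}))$ together with $r_{1}+r_{2}\geq r$ (from the harmless assumption $\mathrm{col}(\boldsymbol{H})=\mathbb{F}_{q}^{r}$) closes the argument. One cosmetic point: you could shorten the no-way lower bound by observing that $\tilde{\boldsymbol{H}}_{1}$ alone already gives $(r-r_{2})/t\geq\lambda_{q}(\mathcal{I}_{1})$ and the untouched $\boldsymbol{H}_{2}$ gives $r_{2}/t\geq\lambda_{q}(\mathcal{I}_{2})$ directly, which adds to $r/t\geq\lambda_{q}(\mathcal{I}_{1})+\lambda_{q}(\mathcal{I}_{2})$ without the symmetric pass or the $r_{1}+r_{2}\geq r$ step.
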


\begin{thm} \label{thm:main-4-nonlinear-coding-instance}
Other than the index coding instances in \cite{Rouayheb2010} and \cite{sharififar2021broadcast}, two new index coding instances of size 87 and 91 are designed in this paper, for which linear coding is insufficient for achieving their broadcast rate.
\end{thm}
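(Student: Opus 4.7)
The plan is to build the two claimed instances as specific connections of two smaller building blocks whose properties will be established in the subsequent sections: the first instance $\mathcal{I}_{1}$ (size 29), for which $\lambda_{q}(\mathcal{I}_{1})=\beta(\mathcal{I}_{1})$ holds if and only if $\mathrm{char}(\mathbb{F}_{q})=3$; and the third instance $\mathcal{I}_{3}$ (size 58), for which $\lambda_{q}(\mathcal{I}_{3})>\beta(\mathcal{I}_{3})$ for every $\mathbb{F}_{q}$ of characteristic three, yet $\beta(\mathcal{I}_{3})$ is attained by a (scalar) nonlinear code over such fields. I would then define $\mathcal{I}_{A}=\mathcal{I}_{1}\nleftrightarrow \mathcal{I}_{3}$ as the no-way connection (giving $29+58=87$ users) and construct $\mathcal{I}_{B}$ as the two-way connection of $\mathcal{I}_{1}$ with a suitable variant/augmentation of $\mathcal{I}_{3}$, using a small gadget (four additional users) to reach the size $91$ stated in the theorem. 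For both $\mathcal{I}_{A}$ and $\mathcal{I}_{B}$ I would prove a strict gap between the linear broadcast rate over every finite field and the broadcast rate attained by a nonlinear code.

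For the no-way instance, Proposition~\ref{prop:no-two} yields $\lambda_{q}(\mathcal{I}_{A})=\lambda_{q}(\mathcal{I}_{1})+\lambda_{q}(\mathcal{I}_{3})$ for every $\mathbb{F}_{q}$. I would split by characteristic. If $\mathrm{char}(\mathbb{F}_{q})\neq 3$, then $\lambda_{q}(\mathcal{I}_{1})>\beta(\mathcal{I}_{1})$ by the result on $\mathcal{I}_{1}$, while $\lambda_{q}(\mathcal{I}_{3})\geq \beta(\mathcal{I}_{3})$ trivially, so the sum strictly exceeds $\beta(\mathcal{I}_{1})+\beta(\mathcal{I}_{3})$. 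If $\mathrm{char}(\mathbb{F}_{q})=3$, the strict inequality instead comes from $\lambda_{q}(\mathcal{I}_{3})>\beta(\mathcal{I}_{3})$. For the matching nonlinear upper bound, I would exhibit a code over a field of characteristic three by running in parallel an optimal linear code for $\mathcal{I}_{1}$ (which exists over characteristic three) and the scalar nonlinear optimal code for $\mathcal{I}_{3}$ (provided in Subsection~\ref{sub:nonlinear-index-coding-1-3}); since in the no-way connection each subinstance's users already have the other subinstance's messages as interference only, the two codes can be concatenated, giving $\beta(\mathcal{I}_{A})\leq \beta(\mathcal{I}_{1})+\beta(\mathcal{I}_{3})$. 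Combining with the linear lower bound produces $\lambda(\mathcal{I}_{A})>\beta(\mathcal{I}_{A})$.

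For the two-way instance, Proposition~\ref{prop:no-two} gives $\lambda_{q}(\mathcal{I}_{B})=\max\{\lambda_{q}(\mathcal{I}_{1}),\lambda_{q}(\mathcal{I}_{3})\}$. I would choose the gadget augmenting $\mathcal{I}_{3}$ so that $\beta(\mathcal{I}_{1})$ and $\beta(\mathcal{I}_{3})$ are calibrated to make the natural nonlinear concatenation (feasible simultaneously only over characteristic three) achieve $\max\{\beta(\mathcal{I}_{1}),\beta(\mathcal{I}_{3})\}$, while the linear side is forced to jump: over $\mathrm{char}\neq 3$ by the $\mathcal{I}_{1}$ term, and over $\mathrm{char}=3$ by the $\mathcal{I}_{3}$ term. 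The four extra users in the gadget are exactly the slack needed to make both maxima dominated by the same subinstance under the matching characteristic.

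The main obstacle I anticipate is not the connection-level algebra, which reduces immediately to Proposition~\ref{prop:no-two}, but rather the coexistence step: showing that the linear-optimal code for $\mathcal{I}_{1}$ and the nonlinear-optimal code for $\mathcal{I}_{3}$ can be run on a common alphabet $\mathbb{F}_{3^{k}}$ with compatible message lengths $t$, so that their concatenation is a single valid index code over the combined instance. This requires matching the length parameters of the two codes (e.g., by taking a common multiple of the two $t$-values) and verifying that the decoding condition \eqref{eq:dec-cond} for the linear portion and the decoding function of the nonlinear portion remain independent after the connection; the no-way structure makes this automatic, while for the two-way structure the gadget has to be designed so that the extra users do not introduce new decoding constraints that couple the two codes.
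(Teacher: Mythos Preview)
Your treatment of the size-87 instance $\mathcal{I}_{A}=\mathcal{I}_{1}\nleftrightarrow\mathcal{I}_{3}$ is correct and matches the paper's construction exactly, including the case split on characteristic and the parallel concatenation of the optimal linear code for $\mathcal{I}_{1}$ with the nonlinear code for $\mathcal{I}_{3}$.

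For the size-91 instance, however, your proposal has a genuine gap: you place the four-user gadget on the wrong side. You write that you would augment $\mathcal{I}_{3}$, but this cannot work. Recall $\beta(\mathcal{I}_{1})=4$ and $\beta(\mathcal{I}_{3})=8$. In the plain two-way connection $\mathcal{I}_{1}\leftrightarrow\mathcal{I}_{3}$, over any field of characteristic $\neq 3$ one has $\lambda_{q}(\mathcal{I}_{3})=8$, and since $\lambda_{q}(\mathcal{I}_{1})$ need only exceed $4$ (not $8$), the maximum is exactly $8=\beta(\mathcal{I}_{3})$, giving no gap. Padding $\mathcal{I}_{3}$ only raises both sides of that comparison equally and still leaves no gap over $\mathrm{char}\neq 3$. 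The paper instead pads $\mathcal{I}_{1}$: it forms $(\mathcal{I}_{1}\nleftrightarrow\mathcal{I}_{a})\leftrightarrow\mathcal{I}_{3}$ with $\mathcal{I}_{a}$ an acyclic instance of size $4$, so that $\lambda_{q}(\mathcal{I}_{1}\nleftrightarrow\mathcal{I}_{a})=\lambda_{q}(\mathcal{I}_{1})+4$. Then over $\mathrm{char}\neq 3$ this term strictly exceeds $8$, and over $\mathrm{char}=3$ the $\mathcal{I}_{3}$ term strictly exceeds $8$, giving $\lambda_{q}(\mathcal{I}')>8$ for every $q$. The nonlinear code achieves rate $8$ by sending $\{y_{k}+z_{k}:k\in[8]\}$, where $y_{1},\dots,y_{8}$ are the outputs of the optimal linear code for $\mathcal{I}_{1}\nleftrightarrow\mathcal{I}_{a}$ and $z_{1},\dots,z_{8}$ those of the nonlinear code for $\mathcal{I}_{3}$.

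Your ``main obstacle'' concern about alphabet and length compatibility is not actually an obstacle here: both building-block codes are \emph{scalar} over $GF(3)$ (Propositions~\ref{prop-thm1-1} and~\ref{prop-thm3-3}), so no common-multiple argument is needed, and in the two-way connection each user knows all messages of the other block as side information, making the additive superposition $y_{k}+z_{k}$ immediately decodable on each side.
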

\begin{proof}
We prove that for the following two index coding instances $\mathcal{I}=\{B_{i}, i\in [m=87]\}$, $\mathcal{I}^{\prime}=\{B_{i}^{\prime}, i\in [m^{\prime}=91]\}$, linear coding is outperformed by the nonlinear codes:
\begin{equation} \nonumber
    \left\{
      \begin{array}{cccc}
        \mathcal{I}&=\mathcal{I}_{1} \nleftrightarrow \mathcal{I}_{3}, \ \ \ \ \ \ \ \ \ \ \ \
        \\
        \\
        \mathcal{I}^{\prime}&=(\mathcal{I}_{1}\nleftrightarrow \mathcal{I}_{a}) \leftrightarrow \mathcal{I}_{3},
     \end{array}
     \right.
\end{equation}
where subinstance $\mathcal{I}_{a}$ is an acyclic index coding instance of size 4, subinstances $\mathcal{I}_{1}$ and $\mathcal{I}_{3}$ are of size 29 and 58, respectively, and will be characterized, respectively, in Sections \ref{sub:index-coding-1-1,2} and \ref{sub:third-instance-3}, with the following properties:
\begin{itemize}
    \item In Theorem \ref{thm:Fano-Index-Instance-characteristic-3}, it is proved that $\lambda_{q}(\mathcal{I}_{1})=\beta(\mathcal{I}_{1})=4$ if and only if field $\mathbb{F}_{q}$ does have characteristic three.
    \item In Theorem \ref{thm:Quasi-non-Fano-Index-Instance-characteristic-3}, it is proved that $\lambda_{q}(\mathcal{I}_{3})=\beta(\mathcal{I}_{3})=8$ if and only if field $\mathbb{F}_{q}$ does have any characteristic other than characteristic three.
\end{itemize}
From Theorems \ref{thm:Fano-Index-Instance-characteristic-3} and \ref{thm:Quasi-non-Fano-Index-Instance-characteristic-3}, it is concluded that linear coding over any field cannot simultaneously be optimal for both subinstances $\mathcal{I}_{1}$ and $\mathcal{I}_{3}$. This is because if the characteristic of $\mathbb{F}_{q}$ is three, then $\lambda_{q}(\mathcal{I}_{3})>8$, and if it is not three, then $\lambda_{q}(\mathcal{I}_{1})>4$.
\\
From Proposition \ref{prop:rate-acyclic-minimal}, $\lambda_{q}(\mathcal{I}_{a})=4$ over $\mathbb{F}_{q}$ with any characteristic. 
\\
Thus, according to Proposition \ref{prop:no-two}, the linear broadcast rate of $\mathcal{I}$ and $\mathcal{I}^{\prime}$ will be
\begin{equation}
     \left\{
      \begin{array}{cc}
        \lambda(\mathcal{I})&=\min_{q}  (\lambda_{q}(\mathcal{I}_{1}) + \lambda_{q}(\mathcal{I}_{3}))>12, \ \ \ \ \ \ \ \ \
        \\ \\
        \lambda(\mathcal{I}^{\prime})&=\min_{q}  \max \{\lambda_{q}(\mathcal{I}_{1})+3, \lambda_{q}(\mathcal{I}_{3})\}>8. \
     \end{array}
     \right.
\end{equation}
Then,
\begin{itemize}
    \item In Proposition \ref{prop-thm1-1}, we show that for subinstance $\mathcal{I}_{1}$, there is an optimal scalar linear code with the encoding matrix $\boldsymbol{H}_{\ast}\in \mathbb{F}_{q}^{4\times 29}$ and four output coded messages $\{y_{1}, y_{2}, y_{3}, y_{4}\}$.
    \item In Proposition \ref{prop-thm3-3}, for subinstance $\mathcal{I}_{3}$, we design an optimal nonlinear code with the encoder $\phi_{\mathcal{I}_{3}}$ and eight output coded messages $\{z_{1}, \dots, z_{8}\}$.
    \item According to Proposition \ref{prop:rate-acyclic-minimal}, $\lambda_{q}(\mathcal{I}_{a})=4$. Assume that the coded messages $\{y_{5},y_{6}, y_{7}, y_{8}\}$ are the optimal linear code for $\mathcal{I}_{a}$.
\end{itemize}
Now, it can be easily checked that the following coded messages are the optimal code for $\mathcal{I}$ and $\mathcal{I}^{\prime}$:

\begin{equation}
     \left\{
      \begin{array}{cc}
        \mathcal{I}: \{y_{1}, \dots, y_{4}\}\cup\{z_{1}, \dots, z_{8}\},
        \\
        \\
        \mathcal{I}^{\prime}: \{y_{1}+ z_{1},\dots, y_{8}+ z_{8}\}, \ \ \ \ \ \
     \end{array}
     \right.
\end{equation}
which completes the proof.
\end{proof}

\section{The Dependency of Linear Coding Rate on the Fields with Characteristic Three} \label{sec:Dependency-on-Char-3}

This section presents two index coding instances $\mathcal{I}_{1}$ and $\mathcal{I}_{2}$. We prove that while linear coding is optimal for $\mathcal{I}_{1}$ only over the fields with characteristic three, it can never be optimal for $\mathcal{I}_{2}$ over any field with characteristic three. To prove this, we first define two matroid instances $\mathcal{N}_{1}$ and $\mathcal{N}_{2}$ and show that their linear representation is dependent on the fields with characteristic three. Then, we show that the main constraints on the column space of the encoding matrices of  $\mathcal{I}_{1}$ and $\mathcal{I}_{2}$ can be reduced to the constraints on the column space of the matrices, which are the linear representation of $\mathcal{N}_{1}$ and $\mathcal{N}_{2}$, respectively.

\subsection{Matroid Instances $\mathcal{N}_{1}$ and $\mathcal{N}_{2}$} \label{sub:matroid-instance-1--2}

\subsubsection{Matroid Instance $\mathcal{N}_{1}$} \label{sub:matroid-instance-1-1}
\begin{defn}[Matroid Instance $\mathcal{N}_{1}$]\label{def:matroid-N1}
Consider matroid instance $\mathcal{N}_{1}=\{f(N), N\subseteq [n]\}$, where $n=9$, $f(\mathcal{N}_{1})=4$, set $N_{0}=[4]$ is a basis, and the following $N_{i}$'s, $i\in [9]$, are circuit sets:
\begin{align}
    N_{1}&=\{1,2,3,5\},\ \ 
    \nonumber
    \\
    N_{2}&=\{1,2,4,6\},\ \ 
    \nonumber
    \\
    N_{3}&=\{1,3,4,7\},
    \nonumber
    \\
    N_{4}&=\{2,3,4,8\},\ \ 
    \nonumber
    \\
    N_{5}&=\{1,8,9\},\ \ \ \ \ 
    \nonumber
    \\
    N_{6}&=\{2,7,9\},
    \nonumber
    \\
    N_{7}&=\{3,6,9\},\ \ \ \ \ 
    \nonumber
    \\
    N_{8}&=\{4,5,9\},\ \ \ \ \ 
    \nonumber
    \\
    N_{9}&=\{5,6,7,8\}.
    \label{eq:-N1-matroid}
\end{align}
\end{defn}

\begin{prop} \label{prop-matroid-N_1}
Matroid instance $\mathcal{N}_{1}$ is linearly representable only over fields with characteristic three.
\end{prop}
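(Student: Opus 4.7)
The plan is to show both directions: if $\mathcal{N}_1$ admits a linear representation over $\mathbb{F}_q$, then $\mathrm{char}(\mathbb{F}_q) = 3$, and conversely an explicit matrix over $\mathbb{F}_3$ realizes $\mathcal{N}_1$. I would carry out the forward direction by normalizing and chasing the circuit constraints down to a single determinant condition. Focusing first on the scalar case $t=1$: since $N_0 = [4]$ is a basis, change coordinates so that $\boldsymbol{H}^{\{i\}} = \boldsymbol{e}_i$ for $i \in [4]$. Each of the four 4-circuits $N_1,\dots,N_4$ contains three elements of $[4]$ together with one of $\{5,6,7,8\}$, so by Definition \ref{def:Basis-Circuit-Matrix} column $j+4$ is a linear combination of those three basis columns with every coefficient nonzero. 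After rescaling columns $5,6,7,8$, write $\boldsymbol{H}^{\{5\}} = (a,b,c,0)^{T}$, $\boldsymbol{H}^{\{6\}} = (d,e,0,f)^{T}$, $\boldsymbol{H}^{\{7\}} = (g,0,h,i)^{T}$, $\boldsymbol{H}^{\{8\}} = (0,j,k,l)^{T}$ with all twelve scalars nonzero.

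Next I would use the four 3-circuits $N_5,\dots,N_8$ to pin down $\boldsymbol{H}^{\{9\}}$ and couple the previous parameters. Each $N_i$ ($i=5,\dots,8$) says that column $9$ is a nontrivial combination of one basis column and one of columns $5,\dots,8$. Writing these four representations of $\boldsymbol{H}^{\{9\}}$ and equating coordinates gives a system whose consistency, after rescaling $\boldsymbol{H}^{\{9\}}$ to have its fourth coordinate equal to $1$, collapses the free parameters to three: one verifies that columns $5,6,7,8,9$ must take the form
\begin{equation}\nonumber
\boldsymbol{H}^{\{5\}}=\begin{pmatrix}a\\ b\\ 1\\ 0\end{pmatrix},\
\boldsymbol{H}^{\{6\}}=\begin{pmatrix}ha\\ hb\\ 0\\ 1\end{pmatrix},\
\boldsymbol{H}^{\{7\}}=\begin{pmatrix}ha\\ 0\\ h\\ 1\end{pmatrix},\
\boldsymbol{H}^{\{8\}}=\begin{pmatrix}0\\ hb\\ h\\ 1\end{pmatrix},\
\boldsymbol{H}^{\{9\}}=\begin{pmatrix}ha\\ hb\\ h\\ 1\end{pmatrix},
\end{equation}
for scalars $a,b,h \in \mathbb{F}_{q}^{\times}$.

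The critical step is then circuit $N_9=\{5,6,7,8\}$, which requires $\det[\boldsymbol{H}^{\{5\}}\,|\,\boldsymbol{H}^{\{6\}}\,|\,\boldsymbol{H}^{\{7\}}\,|\,\boldsymbol{H}^{\{8\}}] = 0$. A direct cofactor expansion along the last row of the $4\times 4$ matrix above yields $-3\,abh^{2}$. Since $a,b,h$ are nonzero, this forces $3 \equiv 0$ in $\mathbb{F}_q$, i.e., $\mathrm{char}(\mathbb{F}_q)=3$. For the converse, taking $a=b=h=1$ over $\mathbb{F}_3$ gives an explicit $4\times 9$ matrix which is easily checked to realize exactly the circuits in \eqref{eq:-N1-matroid} (each listed circuit is dependent by inspection, and any three columns among $N_0,\dots,N_4$ that the matroid calls independent are verified to span a $3$-dimensional subspace).

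The main obstacle I anticipate is Step 3 — the bookkeeping needed to solve the four constraints on $\boldsymbol{H}^{\{9\}}$ simultaneously and thereby force the symmetric coupling by the single parameter $h$; it is here that the structure of $\mathcal{N}_1$ becomes visible and the final determinant acquires its clean form $-3abh^{2}$. The extension to the vector case $t>1$ is technical but follows the same pattern: the entries $a,b,h$ become invertible $t\times t$ matrices (encoding how the circuits rewrite one column in terms of the others), and the $N_9$ constraint reduces to a matrix identity of the form $\boldsymbol{A}+\boldsymbol{A}+\boldsymbol{A}=\boldsymbol{0}$ with $\boldsymbol{A}$ invertible, again forcing $3=0$ in the ground field. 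Verifying the exact matroid structure in the converse (no spurious circuits) is routine rank-checking and constitutes no conceptual difficulty.
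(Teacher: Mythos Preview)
Your argument is correct. Both your proof and the paper's reach the same endpoint---an identity of the form $3\cdot(\text{invertible})=0$ forced by the circuit $N_9$---but the routes differ. The paper never normalizes coordinates or computes a determinant; instead it works abstractly with the circuit constraints as matrix equations $\boldsymbol{H}^{\{j\}}=\sum \boldsymbol{H}^{\{i\}}\boldsymbol{M}_{j,i}$, substitutes the four expressions for $\boldsymbol{H}^{\{9\}}$ into one another, and through a chain of coefficient comparisons derives $\boldsymbol{M}_{5,1}\boldsymbol{M}_{8,5}=\boldsymbol{M}_{6,1}\boldsymbol{M}_{8,6}=\boldsymbol{M}_{7,1}\boldsymbol{M}_{8,7}$, whence the $N_9$ constraint yields $(\boldsymbol{I}_t+\boldsymbol{I}_t+\boldsymbol{I}_t)\boldsymbol{M}_{7,1}\boldsymbol{M}_{8,7}=\boldsymbol{0}$. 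Your coordinate-normalized determinant computation is cleaner and more transparent for $t=1$, and the explicit parametrization by $(a,b,h)$ makes the structure of $\mathcal{N}_1$ visible in a way the paper's argument does not. On the other hand, the paper's coordinate-free manipulation handles the vector case $t>1$ directly, whereas your sketch for $t>1$ (``$a,b,h$ become invertible matrices'') hides exactly the non-commutativity bookkeeping that the paper's chain of identities \eqref{eq:matroid-N_1-coefficient-1}--\eqref{eq:matroid-N_1-coefficient-11-repeat} is designed to handle; that is where the real work in the general case lies. Note also that the paper's proposition and proof establish only the forward direction (representable $\Rightarrow$ characteristic three); your converse with the explicit $\mathbb{F}_3$ matrix is a welcome addition but not strictly required by the statement as the paper phrases it.
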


\begin{proof}
First, since set $[4]$ is a basis set, we get
\begin{align}
    \mathrm{rank}(\boldsymbol{H}^{[4]})=4t.
    \label{eq:matroid-N_1-N_0}
\end{align}
Since each $N_{i}, i\in [9]$ forms a circuit set, we have
\begin{align}
    N_{1}&\rightarrow \boldsymbol{H}^{\{5\}}=\boldsymbol{H}^{\{1\}}\boldsymbol{M}_{5,1}+\boldsymbol{H}^{\{2\}}\boldsymbol{M}_{5,2}+\boldsymbol{H}^{\{3\}}\boldsymbol{M}_{5,3},
    \label{eq:matroid-N_1-5}
    \\
    N_{2}&\rightarrow \boldsymbol{H}^{\{6\}}=\boldsymbol{H}^{\{1\}}\boldsymbol{M}_{6,1}+\boldsymbol{H}^{\{2\}}\boldsymbol{M}_{6,2}+\boldsymbol{H}^{\{4\}}\boldsymbol{M}_{6,4},
    \label{eq:matroid-N_1-6}
    \\
    N_{3}&\rightarrow \boldsymbol{H}^{\{7\}}=\boldsymbol{H}^{\{1\}}\boldsymbol{M}_{7,1}+\boldsymbol{H}^{\{3\}}\boldsymbol{M}_{7,3}+\boldsymbol{H}^{\{4\}}\boldsymbol{M}_{7,4},
    \label{eq:matroid-N_1-7}
    \\
    N_{4}&\rightarrow \boldsymbol{H}^{\{8\}}=\boldsymbol{H}^{\{2\}}\boldsymbol{M}_{8,2}+\boldsymbol{H}^{\{3\}}\boldsymbol{M}_{8,3}+\boldsymbol{H}^{\{4\}}\boldsymbol{M}_{8,4},
    \label{eq:matroid-N_1-8}
    \\
    N_{5}&\rightarrow \boldsymbol{H}^{\{9\}}=\boldsymbol{H}^{\{1\}}\boldsymbol{M}_{9,1}+\boldsymbol{H}^{\{8\}}\boldsymbol{M}_{9,8},
    \label{eq:matroid-N_1-9-1}
    \\
    N_{6}&\rightarrow \boldsymbol{H}^{\{9\}}=\boldsymbol{H}^{\{2\}}\boldsymbol{M}_{9,2}+\boldsymbol{H}^{\{7\}}\boldsymbol{M}_{9,7},
    \label{eq:matroid-N_1-9-2}
    \\
    N_{7}&\rightarrow \boldsymbol{H}^{\{9\}}=\boldsymbol{H}^{\{3\}}\boldsymbol{M}_{9,3}+\boldsymbol{H}^{\{6\}}\boldsymbol{M}_{9,6},
    \label{eq:matroid-N_1-9-3}
    \\
    N_{8}&\rightarrow \boldsymbol{H}^{\{9\}}=\boldsymbol{H}^{\{4\}}\boldsymbol{M}_{9,4}+\boldsymbol{H}^{\{5\}}\boldsymbol{M}_{9,5},
    \label{eq:matroid-N_1-9-4}
    \\
    N_{9}&\rightarrow \boldsymbol{H}^{\{8\}}=\boldsymbol{H}^{\{5\}}\boldsymbol{M}_{8,5}+\boldsymbol{H}^{\{6\}}\boldsymbol{M}_{8,6}+\boldsymbol{H}^{\{7\}}\boldsymbol{M}_{8,7},
    \label{eq:matroid-N_1-5:8}
\end{align}
where all matrices $\boldsymbol{M}_{j,i}$ are invertible. Now, in \eqref{eq:matroid-N_1-9-1}-\eqref{eq:matroid-N_1-9-4}, we replace $\boldsymbol{H}^{\{5\}},\boldsymbol{H}^{\{6\}},\boldsymbol{H}^{\{7\}}$ and $\boldsymbol{H}^{\{8\}}$, respectively, with their equal terms in \eqref{eq:matroid-N_1-5}-\eqref{eq:matroid-N_1-8}. Thus, $\boldsymbol{H}^{\{9\}}$ will be equal to
\begin{align}
    &\boldsymbol{H}^{\{9\}}=
    \nonumber
    \\
    &\boldsymbol{H}^{\{1\}}\boldsymbol{M}_{9,1}+(\boldsymbol{H}^{\{2\}}\boldsymbol{M}_{8,2}+\boldsymbol{H}^{\{3\}}\boldsymbol{M}_{8,3}+\boldsymbol{H}^{\{4\}}\boldsymbol{M}_{8,4})\boldsymbol{M}_{9,8},
    \nonumber
    %\label{eq:matroid-N_1-9-1-4-1}
    \\
    &\boldsymbol{H}^{\{9\}}=
    \nonumber
    \\
    &\boldsymbol{H}^{\{2\}}\boldsymbol{M}_{9,2}+(\boldsymbol{H}^{\{1\}}\boldsymbol{M}_{7,1}+\boldsymbol{H}^{\{3\}}\boldsymbol{M}_{7,3}+\boldsymbol{H}^{\{4\}}\boldsymbol{M}_{7,4})\boldsymbol{M}_{9,7},
    \nonumber
    %\label{eq:matroid-N_1-9-4-2}
    \\
    &\boldsymbol{H}^{\{9\}}=
    \nonumber
    \\
    &\boldsymbol{H}^{\{3\}}\boldsymbol{M}_{9,3}+(\boldsymbol{H}^{\{1\}}\boldsymbol{M}_{6,1}+\boldsymbol{H}^{\{2\}}\boldsymbol{M}_{6,2}+\boldsymbol{H}^{\{4\}}\boldsymbol{M}_{6,4})\boldsymbol{M}_{9,6},
    \nonumber
    %\label{eq:matroid-N_1-9-1-4-3}
    \\
    &\boldsymbol{H}^{\{9\}}=
    \nonumber
    \\
    &\boldsymbol{H}^{\{4\}}\boldsymbol{M}_{9,4}+(\boldsymbol{H}^{\{1\}}\boldsymbol{M}_{5,1}+\boldsymbol{H}^{\{2\}}\boldsymbol{M}_{5,2}+\boldsymbol{H}^{\{3\}}\boldsymbol{M}_{5,3})\boldsymbol{M}_{9,5}.
    \nonumber
    %\label{eq:matroid-N_1-9-1-4-4}
\end{align}
Now, due to \eqref{eq:matroid-N_1-N_0}, the above four equations, representing $\boldsymbol{H}^{\{9\}}$, are all equal only if their coefficients of $\boldsymbol{H}^{\{1\}},\boldsymbol{H}^{\{2\}},\boldsymbol{H}^{\{3\}}$ and $\boldsymbol{H}^{\{4\}}$, are equal. Thus, by equating the coefficients of $\boldsymbol{H}^{\{1\}},\boldsymbol{H}^{\{2\}},\boldsymbol{H}^{\{3\}}$ and $\boldsymbol{H}^{\{4\}}$, respectively, we have
\begin{align}
    \boldsymbol{M}_{9,1}=\boldsymbol{M}_{5,1}\boldsymbol{M}_{9,5}&=\boldsymbol{M}_{6,1}\boldsymbol{M}_{9,6}=\boldsymbol{M}_{7,1}\boldsymbol{M}_{9,7},
    \label{eq:matroid-N_1-coefficient-1}
    \\
    \boldsymbol{M}_{9,2}=\boldsymbol{M}_{5,2}\boldsymbol{M}_{9,5}&=\boldsymbol{M}_{6,2}\boldsymbol{M}_{9,6}=\boldsymbol{M}_{8,2}\boldsymbol{M}_{9,8},
    \label{eq:matroid-N_1-coefficient-2}
    \\
    \boldsymbol{M}_{9,3}=\boldsymbol{M}_{5,3}\boldsymbol{M}_{9,5}&=\boldsymbol{M}_{7,3}\boldsymbol{M}_{9,7}=\boldsymbol{M}_{8,3}\boldsymbol{M}_{9,8},
    \label{eq:matroid-N_1-coefficient-3}
    \\
    \boldsymbol{M}_{9,4}=\boldsymbol{M}_{6,4}\boldsymbol{M}_{9,6}&=\boldsymbol{M}_{7,4}\boldsymbol{M}_{9,7}=\boldsymbol{M}_{8,4}\boldsymbol{M}_{9,8}.
    \label{eq:matroid-N_1-coefficient-4}
\end{align}
Now, we have
\begin{align}
    \eqref{eq:matroid-N_1-coefficient-1}\ &\rightarrow \boldsymbol{M}_{9,5}=\boldsymbol{M}_{5,1}^{-1}\boldsymbol{M}_{6,1}\boldsymbol{M}_{9,6}=\boldsymbol{M}_{5,1}^{-1}\boldsymbol{M}_{7,1}\boldsymbol{M}_{9,7},
    \label{eq:matroid-N_1-coefficient-01}
    \\
    \eqref{eq:matroid-N_1-coefficient-2}\ &\rightarrow \boldsymbol{M}_{9,5}=\boldsymbol{M}_{5,2}^{-1}\boldsymbol{M}_{6,2}\boldsymbol{M}_{9,6}=\boldsymbol{M}_{5,2}^{-1}\boldsymbol{M}_{8,2}\boldsymbol{M}_{9,8},
    \label{eq:matroid-N_1-coefficient-02}
    \\
    \eqref{eq:matroid-N_1-coefficient-3}\ &\rightarrow \boldsymbol{M}_{9,5}=\boldsymbol{M}_{5,3}^{-1}\boldsymbol{M}_{7,3}\boldsymbol{M}_{9,7}=\boldsymbol{M}_{5,3}^{-1}\boldsymbol{M}_{8,3}\boldsymbol{M}_{9,8}.
    \label{eq:matroid-N_1-coefficient-03}
\end{align}
Thus,
\begin{align}
    \eqref{eq:matroid-N_1-coefficient-01},\eqref{eq:matroid-N_1-coefficient-02}\ &\rightarrow \boldsymbol{M}_{5,1}^{-1}\boldsymbol{M}_{6,1}=\boldsymbol{M}_{5,2}^{-1}\boldsymbol{M}_{6,2},
    \label{eq:matroid-N_1-coefficient-result-1}
    \\
    \eqref{eq:matroid-N_1-coefficient-02},\eqref{eq:matroid-N_1-coefficient-03}\ &\rightarrow \boldsymbol{M}_{5,2}^{-1}\boldsymbol{M}_{8,2}=\boldsymbol{M}_{5,3}^{-1}\boldsymbol{M}_{8,3},
    \label{eq:matroid-N_1-coefficient-result-2}
    \\
    \eqref{eq:matroid-N_1-coefficient-01},\eqref{eq:matroid-N_1-coefficient-03}\ &\rightarrow \boldsymbol{M}_{5,1}^{-1}\boldsymbol{M}_{7,1}=\boldsymbol{M}_{5,3}^{-1}\boldsymbol{M}_{7,3}.
    \label{eq:matroid-N_1-coefficient-result-3}
\end{align}
On the other hand, in \eqref{eq:matroid-N_1-5:8}, we replace $\boldsymbol{H}^{\{5\}},\boldsymbol{H}^{\{6\}},\boldsymbol{H}^{\{7\}}$ and $\boldsymbol{H}^{\{8\}}$, with their equal terms in \eqref{eq:matroid-N_1-5}-\eqref{eq:matroid-N_1-8}. By equating the coefficients of $\boldsymbol{H}^{\{1\}},\boldsymbol{H}^{\{2\}},\boldsymbol{H}^{\{3\}}$ and $\boldsymbol{H}^{\{4\}}$, we get
\begin{align}
    \boldsymbol{0}_{t}&=\boldsymbol{M}_{5,1}\boldsymbol{M}_{8,5}+\boldsymbol{M}_{6,1}\boldsymbol{M}_{8,6}+\boldsymbol{M}_{7,1}\boldsymbol{M}_{8,7},
    \label{eq:matroid-N_1-coefficient-5}
    \\
    \boldsymbol{M}_{8,2}&=\boldsymbol{M}_{5,2}\boldsymbol{M}_{8,5}+\boldsymbol{M}_{6,2}\boldsymbol{M}_{8,6},
    \label{eq:matroid-N_1-coefficient-6}
    \\
    \boldsymbol{M}_{8,3}&=\boldsymbol{M}_{5,3}\boldsymbol{M}_{8,5}+\boldsymbol{M}_{7,3}\boldsymbol{M}_{8,7},
    \label{eq:matroid-N_1-coefficient-7}
    \\
    \boldsymbol{M}_{8,4}&=\boldsymbol{M}_{6,4}\boldsymbol{M}_{8,6}+\boldsymbol{M}_{7,4}\boldsymbol{M}_{8,7}.
    \label{eq:matroid-N_1-coefficient-8}
\end{align}
Now, if \eqref{eq:matroid-N_1-coefficient-6} and \eqref{eq:matroid-N_1-coefficient-7} are multiplied by $\boldsymbol{M}_{5,2}^{-1}$ and $\boldsymbol{M}_{5,3}^{-1}$, respectively, we have 
\begin{align}
    % \boldsymbol{0}_{t}=\boldsymbol{M}_{8,5}&+\boldsymbol{M}_{5,1}^{-1}\boldsymbol{M}_{6,1}\boldsymbol{M}_{8,6}+\boldsymbol{M}_{5,1}^{-1}\boldsymbol{M}_{7,1}\boldsymbol{M}_{8,7},
    % \label{eq:matroid-N_1-coefficient-8}
    % \\
    \boldsymbol{M}_{5,2}^{-1}\boldsymbol{M}_{8,2}&=\boldsymbol{M}_{8,5}+\boldsymbol{M}_{5,2}^{-1}\boldsymbol{M}_{6,2}\boldsymbol{M}_{8,6},
    \label{eq:matroid-N_1-coefficient-9}
    \\
    \boldsymbol{M}_{5,3}^{-1}\boldsymbol{M}_{8,3}&=\boldsymbol{M}_{8,5}+\boldsymbol{M}_{5,3}^{-1}\boldsymbol{M}_{7,3}\boldsymbol{M}_{8,7}.
    \label{eq:matroid-N_1-coefficient-10}
\end{align}
Now, combining \eqref{eq:matroid-N_1-coefficient-result-2}, \eqref{eq:matroid-N_1-coefficient-9} and \eqref{eq:matroid-N_1-coefficient-10} results in
\begin{align}
    &\boldsymbol{M}_{5,2}^{-1}\boldsymbol{M}_{6,2}\boldsymbol{M}_{8,6}=\boldsymbol{M}_{5,3}^{-1}\boldsymbol{M}_{7,3}\boldsymbol{M}_{8,7}
    \\
    \rightarrow & \boldsymbol{M}_{5,1}^{-1}\boldsymbol{M}_{6,1}\boldsymbol{M}_{8,6}=\boldsymbol{M}_{5,1}^{-1}\boldsymbol{M}_{7,1}\boldsymbol{M}_{8,7}
    \label{eq:matroid-N_1-coefficient-11}
    \\
    \rightarrow & \boldsymbol{M}_{6,1}\boldsymbol{M}_{8,6}=\boldsymbol{M}_{7,1}\boldsymbol{M}_{8,7},
    \label{eq:matroid-N_1-coefficient-11-b}
\end{align}
where \eqref{eq:matroid-N_1-coefficient-11} is due to \eqref{eq:matroid-N_1-coefficient-result-1} and \eqref{eq:matroid-N_1-coefficient-result-3}. 
\\
Now, we prove $\boldsymbol{M}_{5,1}\boldsymbol{M}_{8,5}=\boldsymbol{M}_{7,1}\boldsymbol{M}_{8,7}$. From \eqref{eq:matroid-N_1-coefficient-2}-\eqref{eq:matroid-N_1-coefficient-4}, we have
\begin{align}
    \eqref{eq:matroid-N_1-coefficient-1}\ &\rightarrow \boldsymbol{M}_{9,6}=\boldsymbol{M}_{6,1}^{-1}\boldsymbol{M}_{5,1}\boldsymbol{M}_{9,5}=\boldsymbol{M}_{6,1}^{-1}\boldsymbol{M}_{7,1}\boldsymbol{M}_{9,7},
    \label{eq:matroid-N_1-coefficient-01-repeat}
    \\
    \eqref{eq:matroid-N_1-coefficient-2}\ &\rightarrow \boldsymbol{M}_{9,6}=\boldsymbol{M}_{6,2}^{-1}\boldsymbol{M}_{5,2}\boldsymbol{M}_{9,5}=\boldsymbol{M}_{6,2}^{-1}\boldsymbol{M}_{8,2}\boldsymbol{M}_{9,8},
    \label{eq:matroid-N_1-coefficient-02-repeat}
    \\
    \eqref{eq:matroid-N_1-coefficient-4}\ &\rightarrow \boldsymbol{M}_{9,6}=\boldsymbol{M}_{6,4}^{-1}\boldsymbol{M}_{7,4}\boldsymbol{M}_{9,7}=\boldsymbol{M}_{6,4}^{-1}\boldsymbol{M}_{8,4}\boldsymbol{M}_{9,8}.
    \label{eq:matroid-N_1-coefficient-03-repeat}
\end{align}
Thus,
\begin{align}
    \eqref{eq:matroid-N_1-coefficient-01-repeat},\eqref{eq:matroid-N_1-coefficient-02-repeat}\ &\rightarrow \boldsymbol{M}_{6,1}^{-1}\boldsymbol{M}_{5,1}=\boldsymbol{M}_{6,2}^{-1}\boldsymbol{M}_{5,2},
    \label{eq:matroid-N_1-coefficient-result-1-repeat}
    \\
    \eqref{eq:matroid-N_1-coefficient-02-repeat},\eqref{eq:matroid-N_1-coefficient-03-repeat}\ &\rightarrow \boldsymbol{M}_{6,2}^{-1}\boldsymbol{M}_{8,2}=\boldsymbol{M}_{6,4}^{-1}\boldsymbol{M}_{8,4},
    \label{eq:matroid-N_1-coefficient-result-2-repeat}
    \\
    \eqref{eq:matroid-N_1-coefficient-01-repeat},\eqref{eq:matroid-N_1-coefficient-03-repeat}\ &\rightarrow \boldsymbol{M}_{6,1}^{-1}\boldsymbol{M}_{7,1}=\boldsymbol{M}_{6,4}^{-1}\boldsymbol{M}_{7,4}.
    \label{eq:matroid-N_1-coefficient-result-3-repeat}
\end{align}
If \eqref{eq:matroid-N_1-coefficient-6} and \eqref{eq:matroid-N_1-coefficient-8} are multiplied by $\boldsymbol{M}_{6,2}^{-1}$ and $\boldsymbol{M}_{6,4}^{-1}$, respectively, we have 
\begin{align}
    \boldsymbol{M}_{6,2}^{-1}\boldsymbol{M}_{8,2}&=\boldsymbol{M}_{6,2}^{-1}\boldsymbol{M}_{5,2}\boldsymbol{M}_{8,5}+\boldsymbol{M}_{8,6},
    \label{eq:matroid-N_1-coefficient-9-repeat}
    \\
    \boldsymbol{M}_{6,4}^{-1}\boldsymbol{M}_{8,4}&=\boldsymbol{M}_{6,4}^{-1}\boldsymbol{M}_{7,4}\boldsymbol{M}_{8,7}+\boldsymbol{M}_{8,6}.
    \label{eq:matroid-N_1-coefficient-10-repeat}
\end{align}
Since, based on \eqref{eq:matroid-N_1-coefficient-result-2-repeat}, $\boldsymbol{M}_{6,2}^{-1}\boldsymbol{M}_{8,2}=\boldsymbol{M}_{6,4}^{-1}\boldsymbol{M}_{8,4}$, \eqref{eq:matroid-N_1-coefficient-9-repeat} and \eqref{eq:matroid-N_1-coefficient-10-repeat} will lead to
\begin{align}
    &\boldsymbol{M}_{6,2}^{-1}\boldsymbol{M}_{5,2}\boldsymbol{M}_{8,5}=\boldsymbol{M}_{6,4}^{-1}\boldsymbol{M}_{7,4}\boldsymbol{M}_{8,7}
    \\
    \rightarrow & \boldsymbol{M}_{6,1}^{-1}\boldsymbol{M}_{5,1}\boldsymbol{M}_{8,5}=\boldsymbol{M}_{6,1}^{-1}\boldsymbol{M}_{7,1}\boldsymbol{M}_{8,7},
    \label{eq:matroid-N_1-coefficient-11-1-repeat}
    \\
    \rightarrow & \boldsymbol{M}_{5,1}\boldsymbol{M}_{8,5}=\boldsymbol{M}_{7,1}\boldsymbol{M}_{8,7},
    \label{eq:matroid-N_1-coefficient-11-repeat}
\end{align}
where \eqref{eq:matroid-N_1-coefficient-11-1-repeat} is due to \eqref{eq:matroid-N_1-coefficient-result-1-repeat} and \eqref{eq:matroid-N_1-coefficient-result-3-repeat}. 
\\
Now, \eqref{eq:matroid-N_1-coefficient-5}, \eqref{eq:matroid-N_1-coefficient-11-b} and \eqref{eq:matroid-N_1-coefficient-11-repeat} will lead to
\begin{align}
    \boldsymbol{0}_{t}&=\boldsymbol{M}_{7,1}\boldsymbol{M}_{8,7}+\boldsymbol{M}_{7,1}\boldsymbol{M}_{8,7}+\boldsymbol{M}_{7,1}\boldsymbol{M}_{8,7}
    \nonumber
    \\
    &=(\boldsymbol{I}_{t}+\boldsymbol{I}_{t}+\boldsymbol{I}_{t}) \boldsymbol{M}_{7,1}\boldsymbol{M}_{8,7},
\end{align}
which is possible only over the fields with characteristic three as both $\boldsymbol{M}_{7,1}$ and $\boldsymbol{M}_{8,7}$ are invertible. This completes the proof.
\end{proof}
%\vspace{-2ex}
\subsubsection{Matroid Instance $\mathcal{N}_{2}$}\label{sub:matroid-instance-1-2}
\begin{defn}[Matroid Instance $\mathcal{N}_{2}$]\label{def:matroid-N2}
Consider matroid instance $\mathcal{N}_{2}=\{f(N), N\subseteq [n]\}$, where $n=9$, $f(\mathcal{N}_{2})=4$, each set $N_{0}=[4]$ and $N_{9}=\{5,6,7,8\}$ forms a basis, and the following $N_{i}$'s, $i\in [8]$, are circuit sets:
\begin{align}
    N_{1}&=\{1,2,3,5\},\ \ 
    \nonumber
    \\
    N_{2}&=\{1,2,4,6\},\ \ 
    \nonumber
    \\
    N_{3}&=\{1,3,4,7\},
    \nonumber
    \\
    N_{4}&=\{2,3,4,8\},\ \ 
    \nonumber
    \\
    N_{5}&=\{1,8,9\},\ \ \ \ \ 
    \nonumber
    \\
    N_{6}&=\{2,7,9\},
    \nonumber
    \\
    N_{7}&=\{3,6,9\},\ \ \ \ \ 
    \nonumber
    \\
    N_{8}&=\{4,5,9\}.
    \nonumber
\end{align}
\end{defn}

\begin{prop} \label{prop-matroid-N_2}
Matroid instance $\mathcal{N}_{2}$ is not linearly representable over any field with characteristic three.
\end{prop}

\begin{proof}
Since sets $N_{0}, \hdots, N_{8}$ in matroid $\mathcal{N}_{2}$ are exactly the same as the sets in matroid $\mathcal{N}_{1}$, equations \eqref{eq:matroid-N_1-coefficient-1}-\eqref{eq:matroid-N_1-coefficient-03} can also be derived for matroid $\mathcal{N}_{2}$. Now, since matrices $\boldsymbol{M}_{5,j},\boldsymbol{M}_{6,j},\boldsymbol{M}_{7,j}$, $\boldsymbol{M}_{8,j}$ are invertible for all $j\in [3]$, then according to equations \eqref{eq:matroid-N_1-coefficient-01}, \eqref{eq:matroid-N_1-coefficient-02} and \eqref{eq:matroid-N_1-coefficient-03}, we have 
\begin{align}
    \mathrm{col}(\boldsymbol{M}_{9,5})=\mathrm{col}(\boldsymbol{M}_{9,6})=\mathrm{col}(\boldsymbol{M}_{9,7})=\mathrm{col}(\boldsymbol{M}_{9,8}).
    \label{eq:matroid-N_2-1}
\end{align}
Now, equations \eqref{eq:matroid-N_2-1} and \eqref{eq:matroid-N_1-coefficient-1}-\eqref{eq:matroid-N_1-coefficient-4} will lead to
\begin{align}
    \mathrm{col}(\boldsymbol{M}_{9,1})=\mathrm{col}(\boldsymbol{M}_{9,2})=\mathrm{col}(\boldsymbol{M}_{9,3})=\mathrm{col}(\boldsymbol{M}_{9,4}). 
\end{align}
Thus, each $\boldsymbol{M}_{9,j}, j\in [4]$ must be invertible, since otherwise, it leads to $\mathrm{rank}(\boldsymbol{H}^{\{9\}})< t$, which contradicts \eqref{eq:rank-Hi-assumption} for $i=9$.
\\
Now, assuming that the field has characteristic three, \eqref{eq:matroid-N_1-coefficient-1}-\eqref{eq:matroid-N_1-coefficient-4}, respectively, will result in
\begin{align}
    &\boldsymbol{0}_{t}=\boldsymbol{M}_{5,1}\boldsymbol{M}_{9,5}+\boldsymbol{M}_{6,1}\boldsymbol{M}_{9,6}+\boldsymbol{M}_{7,1}\boldsymbol{M}_{9,7},
    \label{eq:matroid-N_2-coefficient-1}
    \\
    &2\boldsymbol{M}_{8,2}\boldsymbol{M}_{9,8}=\boldsymbol{M}_{5,2}\boldsymbol{M}_{9,5}+\boldsymbol{M}_{6,2}\boldsymbol{M}_{9,6},
    \label{eq:matroid-N_2-coefficient-2}
    \\
    &2\boldsymbol{M}_{8,3}\boldsymbol{M}_{9,8}=\boldsymbol{M}_{5,3}\boldsymbol{M}_{9,5}+\boldsymbol{M}_{7,3}\boldsymbol{M}_{9,7},
    \label{eq:matroid-N_2-coefficient-3}
    \\
    &2\boldsymbol{M}_{8,4}\boldsymbol{M}_{9,8}=\boldsymbol{M}_{6,4}\boldsymbol{M}_{9,6}+\boldsymbol{M}_{7,4}\boldsymbol{M}_{9,7},
    \label{eq:matroid-N_2-coefficient-4}
\end{align}
which can be rewritten as 
\begin{align}
     &2
    \begin{bmatrix}
    \boldsymbol{0}_{t}\\
    \boldsymbol{M}_{8,2}\\
    \boldsymbol{M}_{8,3}\\
    \boldsymbol{M}_{8,4}
    \end{bmatrix}
    \boldsymbol{M}_{9,8}
    =
    \nonumber
    \\
    &
    \begin{bmatrix}
    \boldsymbol{M}_{5,1}\\
    \boldsymbol{M}_{5,2}\\
    \boldsymbol{M}_{5,3}\\
    \boldsymbol{0}_{t}
    \end{bmatrix}
    \boldsymbol{M}_{9,5}
    +
    \begin{bmatrix}
    \boldsymbol{M}_{6,1}\\
    \boldsymbol{M}_{6,2}\\
    \boldsymbol{0}_{t}\\
    \boldsymbol{M}_{6,4}
    \end{bmatrix}
    \boldsymbol{M}_{9,6}
    +
    \begin{bmatrix}
    \boldsymbol{M}_{7,1}\\
    \boldsymbol{0}_{t}\\
    \boldsymbol{M}_{7,3}\\
    \boldsymbol{M}_{7,4}
    \end{bmatrix}
    \boldsymbol{M}_{9,7},
    \nonumber
\end{align}
which means that
\begin{align}
    2\boldsymbol{H}^{\{8\}}\boldsymbol{M}_{9,8}=\boldsymbol{H}^{\{5\}}\boldsymbol{M}_{9,5}+\boldsymbol{H}^{\{6\}}\boldsymbol{M}_{9,6}+\boldsymbol{H}^{\{7\}}\boldsymbol{M}_{9,7}.
    \label{N_2-final}
\end{align}
Now, since each $\boldsymbol{M}_{9,5},\boldsymbol{M}_{9,6},\boldsymbol{M}_{9,7}$ and $\boldsymbol{M}_{9,8}$ is invertible, from \eqref{N_2-final}, it is concluded that set $\{5,6,7,8\}$ forms a circuit set, which contradicts the assumption that set $N_{9}=\{5,6,7,8\}$ is a basis set of matroid $\mathcal{N}_{2}$. This completes the proof. 
\end{proof}

\subsection{On the Reduction Process from Index Coding to Matroid} \label{sub:index-coding-reduction-matroid}
In this subsection, through Lemmas \ref{lem:MAIS1}-\ref{lem:col(9)}, we establish some reduction techniques to map specific constraints on the column space of the encoder matrix of an index coding instance to the constraints on the column space of the matrix, which is a linear representation of a matroid instance.
Proofs of Lemmas \ref{lem:MAIS1}-\ref{lem:col(9)} are provided in Appendix \ref{app:proof- Lemma1-5}.

\begin{rem}
Note that the reduction technique from matroid to index coding, proposed in \cite{Rouayheb2010}, requires all the basis sets $\mathcal{B}$ and circuit sets $\mathcal{C}$ of a matroid to map the constraints on its linear representation matrix to the constraints on the encoding matrix of an index coding instance. This results in a groupcast index coding instance, with significantly high number of users. For example, applying this method to matroid instances $\mathcal{N}_{1}$ and $\mathcal{N}_{2}$ results in two groupcast index coding instances, each with more than 300 users. Moreover, applying the reduction method in \cite{Maleki2014} (from groupcast to unicast index coding instance) will lead to two asymmetric-rate unicast index coding instances, each comprising more than 1000 users. However, in the reduction techniques in this paper (Lemmas \ref{lem:MAIS1}-\ref{lem:col(9)}), we efficiently use some specific constraints to build the two symmetric-rate unicast index coding instances $\mathcal{I}_{1}$ and $\mathcal{I}_{2}$, containing only 29 users. 
\end{rem}

In this subsection, we assume that $M\subseteq [m]$, $i,l\in M$, and $j\in [m]\backslash M$.

\begin{lem}  \label{lem:MAIS1}
Assume $M$ is an acyclic set of $\mathcal{I}$. Then, the condition in \eqref{eq:dec-cond} for all $i\in M$ requires $\mathrm{rank} (\boldsymbol{H}^M)=|M|t$, implying that $M$ must be an independent set of $\boldsymbol{H}$.
\end{lem}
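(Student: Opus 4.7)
The plan is to prove the statement by induction on $|M|$. The base case $|M|=1$ is immediate from \eqref{eq:rem:dec-con-2}, which gives $\mathrm{rank}(\boldsymbol{H}^{\{i\}})=t=|M|t$ for any single index.

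For the inductive step with $|M|\geq 2$, I first observe that every proper subset of an acyclic set is itself acyclic (any minimal cyclic subset of a subset $M'\subseteq M$ would also be a subset of $M$), so the induction hypothesis applies to each $M\setminus\{i\}$. The proof then reduces to producing a single element $i^{\ast}\in M$ satisfying $M\setminus\{i^{\ast}\}\subseteq B_{i^{\ast}}$: once such an $i^{\ast}$ is in hand, property \eqref{eq:rem:dec-con-1} of Remark~\ref{rem:dec-con} applied with $M_{i^{\ast}}=M\setminus\{i^{\ast}\}$, combined with the induction hypothesis $\mathrm{rank}(\boldsymbol{H}^{M\setminus\{i^{\ast}\}})=(|M|-1)t$, gives $\mathrm{rank}(\boldsymbol{H}^{M})=\mathrm{rank}(\boldsymbol{H}^{\{i^{\ast}\}\cup M_{i^{\ast}}})=(|M|-1)t+t=|M|t$ at once. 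The matching upper bound $\mathrm{rank}(\boldsymbol{H}^{M})\leq|M|t$ is trivial from subadditivity of rank together with \eqref{eq:rem:dec-con-2}.

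The main obstacle, and the combinatorial heart of the argument, is exhibiting such an $i^{\ast}$. I would do this by contradiction on the directed graph $G$ with vertex set $M$ and an arc $i\to j$ whenever $j$ lies in user $i$'s side information inside $M$, i.e., whenever $j\in M\setminus(B_{i}\cup\{i\})$. If no valid $i^{\ast}$ existed, every vertex of $G$ would have out-degree at least one, so $G$ would contain a directed cycle. Picking a cycle $i_{1}\to i_{2}\to\cdots\to i_{k}\to i_{1}$ of minimum length, any chord inside this cycle would splice into a strictly shorter directed cycle in $G$, so minimality forces the cycle to be chordless. Chordlessness is precisely the statement that, inside $M'=\{i_{1},\dots,i_{k}\}\subseteq M$, the interference pattern satisfies $B_{i_{j}}\cap M'=M'\setminus\{i_{j},i_{j+1}\}$, which places $M'$ in the minimal cyclic configuration of Definition~\ref{def:minimal-cyclic-set} and contradicts the acyclicity of $M$. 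With this claim secured, the induction closes and the desired rank equality follows, so $M$ is an independent set of $\boldsymbol{H}$ in the sense of Definition~\ref{def:Basis-Circuit-Matrix}.
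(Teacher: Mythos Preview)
Your proof is correct and follows essentially the same route as the paper: both arguments peel off one element at a time via the decoding condition \eqref{eq:rem:dec-con-1}, the paper by asserting a full topological ordering $i_1,\dots,i_{|M|}$ with $\{i_{j+1},\dots,i_{|M|}\}\subseteq B_{i_j}$, you by induction together with the existence of a single source $i^\ast$. Your minimum-cycle argument for producing $i^\ast$ is in fact a self-contained proof of the fact the paper simply states (and later records separately as Corollary~\ref{cor:acyclic-l}, citing \cite{Arbabjolfaei2018}), so your write-up is slightly more complete but not materially different.
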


\begin{lem} \label{lem:min-cyc}
Let $M$ be a minimal cyclic set of $\mathcal{I}$. To have $\mathrm{rank}(\boldsymbol{H}^{M})=(|M|-1)t$, $M$ must be a circuit set of $\boldsymbol{H}$.
\end{lem}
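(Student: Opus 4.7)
The plan is to verify the two rank conditions in Definition 14: that $M$ is dependent and that $\mathrm{rank}(\boldsymbol{H}^{M\setminus\{j\}}) = \mathrm{rank}(\boldsymbol{H}^M) = (|M|-1)t$ for every $j\in M$. Dependence is immediate from the hypothesis, since $(|M|-1)t < |M|t$. For the deletion condition, I would reduce to Lemma \ref{lem:MAIS1} by showing that $M\setminus\{i_k\}$ is an acyclic set of $\mathcal{I}$ for every $i_k\in M$, and then apply that lemma to conclude $\mathrm{rank}(\boldsymbol{H}^{M\setminus\{i_k\}}) = (|M|-1)t$.

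Write $M=\{i_1,\dots,i_{|M|}\}$ with the cyclic ordering guaranteed by Definition \ref{def:minimal-cyclic-set}, so that $B_{i_j}\cap M = M\setminus\{i_j,i_{j+1}\}$ (indices taken mod $|M|$). The only element of $M$ that $i_j$ fails to have as side information is its cyclic successor $i_{j+1}$. To exhibit acyclicity of $M\setminus\{i_k\}$, I would traverse the cycle \emph{backwards} starting at $i_{k-1}$, i.e., order the elements as
\[
j_1=i_{k-1},\ j_2=i_{k-2},\ \dots,\ j_{|M|-1}=i_{k+1}.
\]
For $j_l=i_j$ in this list, the later elements form a subset of $\{i_{j-1},i_{j-2},\dots,i_{k+1}\}$ in the cyclic order, none of which coincides with $i_{j+1}$ (or with $i_j$ or $i_k$). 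Hence these successors all lie in $B_{i_j}$, which is precisely the hypothesis invoked inside the proof of Lemma \ref{lem:MAIS1}.

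Applying Lemma \ref{lem:MAIS1} to the acyclic set $M\setminus\{i_k\}$ gives $\mathrm{rank}(\boldsymbol{H}^{M\setminus\{i_k\}}) = (|M|-1)t$. Since $i_k$ was arbitrary, this equality holds for every single-element deletion, which together with the hypothesis $\mathrm{rank}(\boldsymbol{H}^{M})=(|M|-1)t$ matches the circuit criterion in Definition \ref{def:Basis-Circuit-Matrix}, completing the argument.

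The only real obstacle is the cyclic bookkeeping: one must confirm that in the reverse ordering, the forbidden successor $i_{j+1}$ of each $i_j$ never appears after $i_j$ in the list. Once that small combinatorial check is done, Lemma \ref{lem:MAIS1} and Definition \ref{def:Basis-Circuit-Matrix} deliver the conclusion immediately; no further matrix manipulation is required.
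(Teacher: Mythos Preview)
Your proposal is correct and follows essentially the same route as the paper: both arguments observe that deleting any single element from a minimal cyclic set leaves an acyclic set, invoke Lemma~\ref{lem:MAIS1} to get $\mathrm{rank}(\boldsymbol{H}^{M\setminus\{l\}})=(|M|-1)t$ for every $l\in M$, and then conclude that $M$ is a circuit set of $\boldsymbol{H}$. The only cosmetic difference is that you spell out the acyclicity of $M\setminus\{i_k\}$ via the explicit backward ordering (the paper simply asserts it), while the paper additionally unpacks the invertibility of the coefficient matrices $\boldsymbol{M}_{l,i}$---a step you rightly skip since it is already subsumed in Definition~\ref{def:Basis-Circuit-Matrix}.
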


\begin{exmp}
Consider the index coding instance $\mathcal{I}=\{B_{i}, i\in [4]\}$, where
\begin{align}
    B_{1}=\{2,3\}, \ \ B_{2}=\{3,4\},
    \nonumber
    \\
    B_{3}=\{1,4\}, \ \ B_{4}=\{1,2\}.
\end{align}
First, since set $[3]$ is an acyclic set of $\mathcal{I}$, according to Lemma \ref{lem:MAIS1}, we must have $\mathrm{rank}(\boldsymbol{H}^{[3]})=3t$. Besides, set $[4]$ is a minimal cyclic set of $\mathcal{I}$. To have $\mathrm{rank}(\boldsymbol{H}^{[4]})=3t$, according to Lemma \ref{lem:min-cyc}, set $[4]$ must be a circuit set of $\boldsymbol{H}$. It can be easily seen that the users can be all satisfied by the following encoder matrix
\begin{align}
\boldsymbol{H}=
    \begin{bmatrix}
    1 & 0 & 0 & 1 \\
    0 & 1 & 0 & 1 \\
    0 & 0 & 1 & 1
    \end{bmatrix}.
\end{align}
\end{exmp}

\begin{lem}  \label{lem:MAIS2}
Assume $M$ is an independent set of $\mathcal{I}$, and $j\in B_i,\forall i\in M\backslash \{l\}$ for some $l\in M$. Then, if $\mathrm{col}(\boldsymbol{H}^{\{j\}})\subseteq \mathrm{col}(\boldsymbol{H}^{M})$, we must have $\mathrm{col}(\boldsymbol{H}^{\{j\}})=\mathrm{col}(\boldsymbol{H}^{\{l\}})$.
\end{lem}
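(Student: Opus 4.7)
My plan is to combine the decoding condition (via Remark~\ref{rem:dec-con}), Lemma~\ref{lem:MAIS1}, and a uniqueness-of-representation argument inside $\mathrm{col}(\boldsymbol{H}^M)$.

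First, since $M$ is independent (hence acyclic), Lemma~\ref{lem:MAIS1} gives $\mathrm{rank}(\boldsymbol{H}^M)=|M|t$, and by the same lemma applied to each subset, $\mathrm{rank}(\boldsymbol{H}^{M\setminus\{i\}})=(|M|-1)t$ for every $i\in M$. In particular the blocks $\boldsymbol{H}^{\{k\}}$, $k\in M$, form a direct sum inside $\mathrm{col}(\boldsymbol{H}^M)$.

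Next, fix any $i\in M\setminus\{l\}$. Independence of $M$ gives $M\setminus\{i\}\subseteq B_i$, and by assumption $j\in B_i$, so $(M\setminus\{i\})\cup\{j\}\subseteq B_i$. Applying \eqref{eq:rem:dec-con-1} yields
\begin{equation} \nonumber
\mathrm{rank}(\boldsymbol{H}^{M\cup\{j\}})=\mathrm{rank}(\boldsymbol{H}^{(M\setminus\{i\})\cup\{j\}})+t.
\end{equation}
The hypothesis $\mathrm{col}(\boldsymbol{H}^{\{j\}})\subseteq\mathrm{col}(\boldsymbol{H}^M)$ forces the left-hand side to equal $|M|t$, so the right-hand side gives $\mathrm{rank}(\boldsymbol{H}^{(M\setminus\{i\})\cup\{j\}})=(|M|-1)t=\mathrm{rank}(\boldsymbol{H}^{M\setminus\{i\}})$. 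Therefore $\mathrm{col}(\boldsymbol{H}^{\{j\}})\subseteq\mathrm{col}(\boldsymbol{H}^{M\setminus\{i\}})$ for every $i\in M\setminus\{l\}$.

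Finally, because $\mathrm{rank}(\boldsymbol{H}^M)$ is the sum of the $\mathrm{rank}(\boldsymbol{H}^{\{k\}})$, every vector in $\mathrm{col}(\boldsymbol{H}^M)$ admits a \emph{unique} decomposition $\sum_{k\in M}\boldsymbol{H}^{\{k\}}\boldsymbol{a}_k$, and lying in $\mathrm{col}(\boldsymbol{H}^{M\setminus\{i\}})$ is equivalent to $\boldsymbol{a}_i=\boldsymbol{0}$. Intersecting over $i\in M\setminus\{l\}$ kills every coefficient except the $l$-th, hence
\begin{equation} \nonumber
\bigcap_{i\in M\setminus\{l\}}\mathrm{col}(\boldsymbol{H}^{M\setminus\{i\}})=\mathrm{col}(\boldsymbol{H}^{\{l\}}).
\end{equation}
Combined with the previous step, $\mathrm{col}(\boldsymbol{H}^{\{j\}})\subseteq\mathrm{col}(\boldsymbol{H}^{\{l\}})$, and since both subspaces have dimension $t$ by \eqref{eq:rem:dec-con-2}, equality follows. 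The only substantive step is the intersection identity, which is pure linear algebra once the independence of $M$ is translated, via Lemma~\ref{lem:MAIS1}, into the direct-sum structure of $\mathrm{col}(\boldsymbol{H}^M)$; everything else is bookkeeping with \eqref{eq:rem:dec-con-1}.
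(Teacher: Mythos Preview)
Your proof is correct and follows essentially the same approach as the paper: both use Lemma~\ref{lem:MAIS1} to obtain the direct-sum structure of $\mathrm{col}(\boldsymbol{H}^{M})$ and then invoke the decoding condition \eqref{eq:rem:dec-con-1} for each $i\in M\setminus\{l\}$ to kill the $i$-th component of $\boldsymbol{H}^{\{j\}}$. The only difference is packaging: the paper expands $\boldsymbol{H}^{\{j\}}=\sum_{k\in M}\boldsymbol{H}^{\{k\}}\boldsymbol{M}_{j,k}$ and argues by contradiction that each coefficient $\boldsymbol{M}_{j,i}$ must vanish, whereas you phrase the same step as $\mathrm{col}(\boldsymbol{H}^{\{j\}})\subseteq\mathrm{col}(\boldsymbol{H}^{M\setminus\{i\}})$ and then intersect---a slightly cleaner formulation, but not a different idea.
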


\begin{exmp}
Consider the index coding instance $\mathcal{I}=\{B_{i}, i\in [4]\}$, where
\begin{align}
    B_{1}=\{2,3\},\
    B_{2}=\{1,3,4\},\ 
    B_{3}=\{1,2,4\},\ 
    B_{4}=\emptyset.
    \nonumber
\end{align}
Since set $[3]$ is an independent set of $\mathcal{I}$, Lemma \ref{lem:MAIS1} requires that $\mathrm{rank}(\boldsymbol{H}^{[3]})=3t$. Now, if we desire $\mathrm{rank}(\boldsymbol{H}^{[4]})=3t$, then we must have $\mathrm{col}(\boldsymbol{H}^{\{4\}})\subseteq \mathrm{col}(\boldsymbol{H}^{[3]})$. Since $4\in B_{i}, i\in [3]\backslash \{1\}$, according to Lemma \ref{lem:MAIS2}, we must have $\mathrm{col}(\boldsymbol{H}^{\{4\}})= \mathrm{col}(\boldsymbol{H}^{\{1\}})$. It can be easily checked that the following encoder matrix can satisfy all the four users
\begin{align}
    \begin{bmatrix}
    1 & 0 & 0 & 1 \\
    0 & 1 & 0 & 0 \\
    0 & 0 & 1 & 0
    \end{bmatrix}.
\end{align}
\end{exmp}

\begin{lem}  \label{lem:independent-cycle}
Let $M\subseteq[m]$ and $j\in [m]\backslash M$. Assume that
\begin{enumerate}[label=(\roman*)]
  \item $M$ is an independent set of $\boldsymbol{H}$,
  \item $\mathrm{col}(\boldsymbol{H}^{\{j\}})\subseteq \mathrm{col}(\boldsymbol{H}^{M})$,
  \item $M$ forms a minimal cyclic set of $\mathcal{I}$,
  \item $j\in B_{i}, \forall i\in M$.
\end{enumerate}
Now, the condition in \eqref{eq:dec-cond} for all $i\in [m]$ requires set $\{j\}\cup M$ to be a circuit set of $\boldsymbol{H}$.
\end{lem}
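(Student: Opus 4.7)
The plan is a proof by contradiction that propagates a "missing-from-$M^{*}$" index one step at a time backward around the cycle in $M$, using the decoding condition at the appropriate predecessor user.

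First, (i) and (ii) together with Corollary \ref{cor:j-subspace-M-circuit} supply a set $M^{*}\subseteq M$ for which $\{j\}\cup M^{*}$ is a circuit set of $\boldsymbol{H}$; because (i) makes the representation $\boldsymbol{H}^{\{j\}}=\sum_{i\in M}\boldsymbol{H}^{\{i\}}\boldsymbol{M}_{j,i}$ unique, $M^{*}$ is exactly the index set on which $\boldsymbol{M}_{j,i}$ is invertible. The goal reduces to showing $M^{*}=M$, since then $\{j\}\cup M$ itself is the desired circuit set of $\boldsymbol{H}$.

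Suppose for contradiction that $M^{*}\subsetneq M$, and pick $i_{p}\in M\setminus M^{*}$. Under the cyclic labeling from Definition \ref{def:minimal-cyclic-set}, the predecessor user $i_{p-1}$ satisfies $B_{i_{p-1}}\supseteq M\setminus\{i_{p-1},i_{p}\}$, and by (iv) also $j\in B_{i_{p-1}}$, so $\{j\}\cup(M\setminus\{i_{p-1},i_{p}\})\subseteq B_{i_{p-1}}$. Applying \eqref{eq:rem:dec-con-1} at this user yields
\[
\mathrm{rank}\bigl(\boldsymbol{H}^{\{j\}\cup M\setminus\{i_{p}\}}\bigr)=\mathrm{rank}\bigl(\boldsymbol{H}^{\{j\}\cup M\setminus\{i_{p-1},i_{p}\}}\bigr)+t.
\]
From $i_{p}\notin M^{*}$ one has $\mathrm{col}(\boldsymbol{H}^{\{j\}})\subseteq\mathrm{col}(\boldsymbol{H}^{M^{*}})\subseteq\mathrm{col}(\boldsymbol{H}^{M\setminus\{i_{p}\}})$, so by (i) the left-hand side equals $(|M|-1)t$. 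The displayed equation then forces $\mathrm{rank}(\boldsymbol{H}^{\{j\}\cup M\setminus\{i_{p-1},i_{p}\}})=(|M|-2)t=\mathrm{rank}(\boldsymbol{H}^{M\setminus\{i_{p-1},i_{p}\}})$, whence $\mathrm{col}(\boldsymbol{H}^{\{j\}})\subseteq\mathrm{col}(\boldsymbol{H}^{M\setminus\{i_{p-1},i_{p}\}})$. Uniqueness of the representation in the $\boldsymbol{H}$-independent set $M$ now gives $\boldsymbol{M}_{j,i_{p-1}}=\boldsymbol{0}_{t}$, i.e.\ $i_{p-1}\notin M^{*}$.

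Iterating this step — each time stepping from a missing index to its predecessor, which is always a valid move because (iv) supplies $j\in B_{i}$ uniformly for $i\in M$ — walks once around the length-$|M|$ cycle and shows $M^{*}=\emptyset$. But $M^{*}=\emptyset$ would force $\boldsymbol{H}^{\{j\}}=\boldsymbol{0}$, contradicting $\mathrm{rank}(\boldsymbol{H}^{\{j\}})=t$ from \eqref{eq:rem:dec-con-2}. Hence $M^{*}=M$ and $\{j\}\cup M$ is a circuit set of $\boldsymbol{H}$. The main obstacle is identifying this backward-propagation step: the decoding constraint at the predecessor $i_{p-1}$ — whose interference set contains both $j$ and every element of $M$ except $\{i_{p-1},i_{p}\}$ — is exactly what converts "$i_{p}\notin M^{*}$" into "$i_{p-1}\notin M^{*}$", and the cyclic structure supplied by (iii) is precisely what closes the argument into a contradiction.
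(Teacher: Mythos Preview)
Your proof is correct but takes a genuinely different route from the paper's. The paper argues in one shot: assuming $M^{*}\subsetneq M$, it observes (via Corollary~\ref{cor:minimal-cyclic-acylic}) that $M^{*}$ is acyclic in $\mathcal{I}$, invokes Corollary~\ref{cor:acyclic-l} to find an $l\in M^{*}$ with $M^{*}\setminus\{l\}\subseteq B_{l}$, and then uses the circuit property (Corollary~\ref{rem:acyclic-j-acyclic}) to swap $j$ in for $l$, obtaining $\mathrm{col}(\boldsymbol{H}^{\{l\}})\subseteq\mathrm{col}(\boldsymbol{H}^{M^{*}})=\mathrm{col}(\boldsymbol{H}^{\{j\}\cup M^{*}\setminus\{l\}})\subseteq\mathrm{col}(\boldsymbol{H}^{B_{l}})$, which directly contradicts the decoding condition for user $u_{l}$. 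Your approach instead exploits the explicit cyclic order of $M$: starting from any hole $i_{p}\notin M^{*}$, you apply the decoding condition at the \emph{predecessor} $i_{p-1}$ to force $\boldsymbol{M}_{j,i_{p-1}}=\boldsymbol{0}_{t}$, and then iterate around the cycle until $M^{*}=\emptyset$. The paper's argument is shorter and more structural (it never needs the cyclic order beyond ``proper subsets are acyclic''), while yours is more elementary in that it avoids the abstract sink-existence fact of Corollary~\ref{cor:acyclic-l} and the column-swap of Corollary~\ref{rem:acyclic-j-acyclic}, working directly with the rank equation \eqref{eq:rem:dec-con-1} and uniqueness of coordinates in the independent set $M$. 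Both arrive at the same conclusion; the paper's contradiction lands on a single user inside $M^{*}$, whereas yours lands on $\mathrm{rank}(\boldsymbol{H}^{\{j\}})=t$ after emptying $M^{*}$.
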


\begin{exmp}
Consider the index coding instance $\mathcal{I}=\{B_{i}, i\in [4]\}$, where
\begin{align}
    B_{1}=\{2,4\}, B_{2}=\{3,4\}, B_{3}=\{1,4\}, B_{4}=\emptyset.
\end{align}
Assume that for the encoder matrix $\boldsymbol{H}$, we have $\mathrm{rank}(\boldsymbol{H}^{[3]})=3t$, as follows
\begin{align}
    \begin{bmatrix}
    1 & 0 & 0 \\
    0 & 1 & 0 \\
    0 & 0 & 1
    \end{bmatrix}.
    \nonumber
\end{align}
It can be seen that set $[3]$ is a minimal cyclic set of $\mathcal{I}$. Now, if we desire $\mathrm{col}(\boldsymbol{H}^{\{4\}})\subseteq \mathrm{col}(\boldsymbol{H}^{[3]})$, due to $4\in B_{i}, i\in [3]$, set $[4]$ must be a circuit set of $\boldsymbol{H}$, as follows
\begin{align}
    \begin{bmatrix}
    1 & 0 & 0 & 1\\
    0 & 1 & 0 & 1 \\
    0 & 0 & 1 & 1
    \end{bmatrix}.
    \nonumber
\end{align}
\end{exmp}

\begin{lem}  \label{lem:col(9)}
Assume for matrix $\boldsymbol{H}\in \mathbb{F}_{q}^{4t\times 9t}$, 
\begin{enumerate}[label=(\roman*)]
  \item set $[4]$ is a basis set,
  \item each set $\{1,2,3,5\}, \{1,2,4,6\}, \{1,3,4,7\}$, $\{2,3,4,8\}$ is a circuit set,
  \item \begin{align}
         \mathrm{col}(\boldsymbol{H}^{\{9\}})\subseteq \mathrm{col}(\boldsymbol{H}^{\{4,5\}}),
         \nonumber
         \\
         \mathrm{col}(\boldsymbol{H}^{\{9\}})\subseteq \mathrm{col}(\boldsymbol{H}^{\{3,6\}}),
         \nonumber
         \\
         \mathrm{col}(\boldsymbol{H}^{\{9\}})\subseteq \mathrm{col}(\boldsymbol{H}^{\{2,7\}}),
         \nonumber
         \\
         \mathrm{col}(\boldsymbol{H}^{\{9\}})\subseteq \mathrm{col}(\boldsymbol{H}^{\{1,8\}}).
         \nonumber
         %\label{eq:fifth}
     \end{align}
\end{enumerate}
% set $[4]$ is a basis set and each set $\{1,2,3,5\}, \{1,2,4,6\}, \{1,3,4,7\}$, $\{2,3,4,8\}$ is a circuit set. Now, if $\mathrm{col}(\boldsymbol{H}^{\{9\}})$ is a subspace of each $ \mathrm{col}(\boldsymbol{H}^{\{1,8\}}), \mathrm{col}(\boldsymbol{H}^{\{2,7\}}), \mathrm{col}(\boldsymbol{H}^{\{3,6\}})$, $\mathrm{col}(\boldsymbol{H}^{\{4,5\}})$, 
Then, each set $\{1,8,9\}, \{2,7,9\}, \{3,6,9\}$, $\{4,5,9\}$ is also a circuit set.
\end{lem}

\subsection{Index Coding Instances $\mathcal{I}_{1}$ and $\mathcal{I}_{2}$} \label{sub:index-coding-1-1,2}

This subsection characterizes the index coding instances $\mathcal{I}_{1}$ and $\mathcal{I}_{2}$, each of size 29, and each with the broadcast rate $\beta(\mathcal{I}_{1})=\beta(\mathcal{I}_{2})=4$. The interfering message set of all the users in $\mathcal{I}_{1}$ and $\mathcal{I}_{2}$ are exactly the same, except for users $u_{i}, i\in [5:9]$. Theorems \ref{thm:Fano-Index-Instance-characteristic-3} and \ref{thm:non-Fano-Index-Instance-characteristic-3} establish the sufficient and necessary conditions for linear coding to be optimal for $\mathcal{I}_{1}$ and $\mathcal{I}_{2}$, respectively.
\begin{itemize}
    \item Sufficient condition: It is shown that scalar linear coding with the encoding matrix $\boldsymbol{H}_{\ast}\in \mathbb{F}_{q}^{4\times 29}$, shown in Figure \ref{fig:1}, achieves the optimal broadcast rate of $\mathcal{I}_{1}$ if its field $\mathbb{F}_{q}$ does have characteristic three (such as $GF(3)$), and it is optimal for $\mathcal{I}_{2}$ if its field $\mathbb{F}_{q}$ does have any characteristic other than characteristic three (such as $GF(2)$).
    
    \item Necessary condition:
    Using Lemmas \ref{lem:MAIS1}-\ref{lem:col(9)}, it is proved that the constraints on the column space of the local encoding matrix of the first 9 users $\boldsymbol{H}^{[9]}$ in $\mathcal{I}_{1}$ and $\mathcal{I}_{2}$, respectively, are equivalent to the constraints on the column space of the matrices, which linearly represent matroid instances $\mathcal{N}_{1}$ and $\mathcal{N}_{2}$. This implies that an encoding matrix $\boldsymbol{H}$ is optimal for $\mathcal{I}_{1}$ only if field does have characteristic three, and it is optimal for $\mathcal{I}_{2}$ only if field does have any characteristic other than characteristic three.
\end{itemize}

\subsubsection{Index Coding Instance $\mathcal{I}_{1}$} \label{sub:index-coding-1-1}
% This subsection presents the index coding instance $\mathcal{I}_{1}$ with the MAIS bound $\beta_{\text{MAIS}}(\mathcal{I}_{1})=4$. First, we provide a linear code over a field with characteristic three, which achieves the MAIS bound. Then, we prove that linear coding only over the fields with characteristic three can achieve the MAIS bound. This will be proved using the matroid instance $\mathcal{N}_{1}$ which is linearly representable only over the fields with characteristic three.
\begin{defn}[Index Coding Instance $\mathcal{I}_{1}$] \label{def:index-I1}
The index coding instance $\mathcal{I}_{1}=\{B_{i}, i\in [29]\}$ is characterized as follows
\begin{align}
    B_{1}\ &=([4]\backslash\{1\})\cup \{8\}\cup ([10:25]\backslash \{10,14,18,22\}),
    \nonumber
    \\
    B_{2}\ &=([4]\backslash\{2\})\cup \{7\}\cup ([10:25]\backslash \{11,15,19,23\}),
    \nonumber
    \\
    B_{3}\ &=([4]\backslash\{3\})\cup \{6\}\cup ([10:25]\backslash \{12,16,20,24\}),
    \nonumber
    \\
    B_{4}\ &=([4]\backslash\{4\})\cup \{5\}\cup ([10:25]\backslash \{13,17,21,25\}),
    \nonumber
    \\
    B_{5}\ &=\{7,8\},
    \nonumber
    \\
    B_{6}\ &=\{5,8\}, 
    \nonumber
    \\
    B_{7}\ &=\{5,6\},
    \nonumber
    \\
    B_{8}\ &=\{6,7\},
    \nonumber
    \\
    B_{9}\ &=\{5,6,7,8\},
    \nonumber
    \\
    B_{10}&=\{5,11\},
    \nonumber
    \\
    B_{11}&=\{5,12\},
    \nonumber
    \\
    B_{12}&=\{5,10\},
    \nonumber
    \\
    B_{13}&=\{1,8,9\},
    \nonumber
    \\
    B_{14}&=\{6,15\},
    \nonumber
    \\
    B_{15}&=\{6,17\},
    \nonumber
    \\
    B_{16}&=\{4,5,9\},
    \nonumber
    \\
    B_{17}&=\{6,14\},
    \nonumber
    \\
    B_{18}&=\{7,20\},
    \nonumber
    \\
    B_{19}&=\{3,6,9\},
    \nonumber
    \\
    B_{20}&=\{7,21\},
    \nonumber
    \\
    B_{21}&=\{7,18\},
    \nonumber
    \\
    B_{22}&=\{2,7,9\},
    \nonumber
    \\
    B_{23}&=\{8,24\},
    \nonumber
    \\
    B_{24}&=\{8,25\},
    \nonumber
    \\
    B_{25}&=\{8,23\},
    \nonumber
    \\
    B_{26}&=\{4,5,9,16\},
    \nonumber
    \\
    B_{27}&=\{3,6,9,19\},
    \nonumber
    \\
    B_{28}&=\{2,7,9,22\},
    \nonumber
    \\
    B_{29}&=\{1,8,9,13\}.
    \label{I_1-B_i}
\end{align}
\end{defn}

\begin{thm} \label{thm:Fano-Index-Instance-characteristic-3}
$\lambda_{q}(\mathcal{I}_{1})=\beta_{\text{MAIS}}(\mathcal{I}_{1})=4$ if and only if $\mathbb{F}_{q}$ does have characteristic three. In other words, linear coding is optimal for $\mathcal{I}_{1}$ only over the fields with characteristic three.
\end{thm}
The proof can be concluded from Propositions \ref{prop-thm1-1} and \ref{prop-thm1-2}.

\begin{prop} \label{prop-thm1-1}
There exists a scalar linear code ($t=1$) over a field with characteristic three, which is optimal for $\mathcal{I}_{1}$.
\end{prop}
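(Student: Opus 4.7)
The plan is to exhibit an explicit scalar ($t=1$) encoding matrix $\boldsymbol{H}_{\ast}\in\mathbb{F}_{3}^{4\times 29}$ and verify that it satisfies the decoding condition \eqref{eq:dec-cond} for every user in $\mathcal{I}_{1}$. Since $\boldsymbol{H}_{\ast}$ has only $4$ rows, this immediately yields $\lambda_{3}(\mathcal{I}_{1})\le 4$, which combined with the universal lower bound $\lambda_{3}(\mathcal{I}_{1})\ge\beta_{\text{MAIS}}(\mathcal{I}_{1})=4$ gives the desired equality.

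The backbone of the construction is a scalar linear representation of $\mathcal{N}_{1}$ over $\mathbb{F}_{3}$, which is the ``hard'' part in Proposition \ref{prop-matroid-N_1} but is available in characteristic three by direct construction. Letting $e_{1},\dots,e_{4}$ denote the standard basis of $\mathbb{F}_{3}^{4}$, I would set $\boldsymbol{H}^{\{i\}}=e_{i}$ for $i\in[4]$, $\boldsymbol{H}^{\{5\}}=e_{1}+e_{2}+e_{3}$, $\boldsymbol{H}^{\{6\}}=e_{1}+e_{2}+e_{4}$, $\boldsymbol{H}^{\{7\}}=e_{1}+e_{3}+e_{4}$, $\boldsymbol{H}^{\{8\}}=e_{2}+e_{3}+e_{4}$, and $\boldsymbol{H}^{\{9\}}=e_{1}+e_{2}+e_{3}+e_{4}$. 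The circuits $N_{1},\dots,N_{4}$ are realized by construction, the circuits $N_{5},\dots,N_{8}$ follow from identities such as $\boldsymbol{H}^{\{9\}}=\boldsymbol{H}^{\{1\}}+\boldsymbol{H}^{\{8\}}$, and the circuit $N_{9}=\{5,6,7,8\}$ holds precisely because $\boldsymbol{H}^{\{5\}}+\boldsymbol{H}^{\{6\}}+\boldsymbol{H}^{\{7\}}+\boldsymbol{H}^{\{8\}}=3(e_{1}+e_{2}+e_{3}+e_{4})=\boldsymbol{0}$ in characteristic three.

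For the remaining columns $10$--$29$ I would choose vectors tailored to the interference sets in \eqref{I_1-B_i}. The triples $\{10,11,12\}$, $\{14,15,17\}$, $\{18,20,21\}$, $\{23,24,25\}$ each form minimal cyclic sets of $\mathcal{I}_{1}$ anchored by $\boldsymbol{H}^{\{5\}}$, $\boldsymbol{H}^{\{6\}}$, $\boldsymbol{H}^{\{7\}}$, $\boldsymbol{H}^{\{8\}}$ respectively, so I would realize each such triple as a circuit of $\boldsymbol{H}_{\ast}$ lying in a two-dimensional subspace that contains the anchor and one fresh direction outside $\mathrm{col}(\boldsymbol{H}^{[9]})$. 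The users $u_{13},u_{16},u_{19},u_{22}$ have $B_{i}$ equal to one of the circuits $\{1,8,9\}$, $\{4,5,9\}$, $\{3,6,9\}$, $\{2,7,9\}$ of rank two, so each of $\boldsymbol{H}^{\{13\}},\boldsymbol{H}^{\{16\}},\boldsymbol{H}^{\{19\}},\boldsymbol{H}^{\{22\}}$ is chosen outside the two-dimensional span of the corresponding triple. Finally, the four users $u_{26},\dots,u_{29}$ with four-element interference sets are handled by insisting that the already placed columns $\boldsymbol{H}^{\{16\}},\boldsymbol{H}^{\{19\}},\boldsymbol{H}^{\{22\}},\boldsymbol{H}^{\{13\}}$ sit outside their respective rank-three ambient subspaces.

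The substantive content of the proposition is the scalar representability of $\mathcal{N}_{1}$ in characteristic three; the rest is combinatorial bookkeeping. The main obstacle I would expect is simply organizing the $29$ rank checks $\mathrm{rank}(\boldsymbol{H}^{\{i\}\cup B_{i}})=\mathrm{rank}(\boldsymbol{H}^{B_{i}})+1$ efficiently. This is manageable because of the layered structure of $\mathcal{I}_{1}$: the conditions for $i\in[9]$ are exactly the matroid relations already verified; the conditions for $i\in[10{:}25]$ reduce, via Lemmas \ref{lem:MAIS1} and \ref{lem:min-cyc}, to confirming that each local triple together with its anchor in $\{5,6,7,8\}$ is a circuit of $\boldsymbol{H}_{\ast}$; and the conditions for $i\in[26{:}29]$ follow from the deliberate general-position choices of columns $13,16,19,22$.
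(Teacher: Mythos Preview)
Your construction for columns $1$--$9$ is correct and matches the paper's. The gap is in columns $10$--$25$. First, there is no ``fresh direction outside $\mathrm{col}(\boldsymbol{H}^{[9]})$'': since $[4]\subset[9]$ is already a basis of $\mathbb{F}_{3}^{4}$, one has $\mathrm{col}(\boldsymbol{H}^{[9]})=\mathbb{F}_{3}^{4}$. More seriously, the plan to make each triple $\{10,11,12\}$, $\{14,15,17\}$, $\{18,20,21\}$, $\{23,24,25\}$ a rank-$2$ circuit is incompatible with the decoding conditions for $u_{1},\dots,u_{4}$. Indeed, $[4]$ is an independent set of $\mathcal{I}_{1}$ and a basis of $\boldsymbol{H}_{\ast}$; from \eqref{I_1-B_i} one has $10\in B_{i}$ for every $i\in[4]\setminus\{1\}$, so Lemma~\ref{lem:MAIS2} \emph{forces} $\mathrm{col}(\boldsymbol{H}^{\{10\}})=\mathrm{col}(\boldsymbol{H}^{\{1\}})$. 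The same argument pins $\boldsymbol{H}^{\{11\}}$ to $\langle e_{2}\rangle$ and $\boldsymbol{H}^{\{12\}}$ to $\langle e_{3}\rangle$, whence $\{10,11,12\}$ necessarily has rank $3$ and cannot itself be a circuit of $\boldsymbol{H}_{\ast}$. (Your final paragraph correctly names $\{5,10,11,12\}$ as the relevant circuit, but that is a four-element, rank-$3$ circuit, inconsistent with the earlier two-dimensional description.)

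The paper's construction simply takes these forced values: columns $10$--$25$ are four repeated copies of $[e_{1}\mid e_{2}\mid e_{3}\mid e_{4}]$, and columns $26$--$29$ are $e_{2},e_{1},e_{4},e_{3}$. With columns $10$--$25$ pinned by Lemma~\ref{lem:MAIS2}, the checks for $u_{13},u_{16},u_{19},u_{22}$ and $u_{26},\dots,u_{29}$ become routine, since the forced standard-basis columns automatically lie outside the required rank-$2$ and rank-$3$ spans. The ``bookkeeping'' you defer is therefore not free-choice bookkeeping; it is dictated by the same Lemma~\ref{lem:MAIS2} that drives the converse in Proposition~\ref{prop-thm1-2}.
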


\begin{proof}
In Appendix \ref{app:proof-prop-H-I1-I2}, it is shown that the encoding matrix $\boldsymbol{H}_{\ast}\in \mathbb{F}_{q}^{4\times 29}$, shown in Figure \ref{fig:1}, will satisfy all users in $\mathcal{I}_{1}$, where the field $\mathbb{F}_{q}$ has characteristic three. The key part of $\boldsymbol{H}_{\ast}$ is its submatrix $\boldsymbol{H}_{\ast}^{\{5,6,7,8\}}$, which is as follows
\begin{align}
    \boldsymbol{H}_{\ast}^{\{5,6,7,8\}}=
    \begin{bmatrix}
    1 & 1 & 1 & 0 \\
    1 & 1 & 0 & 1 \\
    1 & 0 & 1 & 1 \\
    0 & 1 & 1 & 1 
    \end{bmatrix}.
    \nonumber
\end{align}
It can be seen that for this submatrix, $\mathrm{rank} (\boldsymbol{H}^{\{5,6,7,8\}})=3$ is achievable only over the fields with characteristic three, as
\begin{align}
    \boldsymbol{H}_{\ast}^{\{8\}}=\boldsymbol{H}_{\ast}^{\{5\}}+\boldsymbol{H}_{\ast}^{\{6\}}+\boldsymbol{H}_{\ast}^{\{7\}}.
\end{align}
This satisfies condition \eqref{eq:dec-cond} for user $u_{9}$ with $B_{9}=\{5,6,7,8\}$.
\end{proof}
%\vspace{-2ex}

\begin{figure*}
\centering
\begin{blockarray}{cccccccccccccccccccccccccccccc}
              & 
              \textcolor{blue}{{\tiny 1}}\hspace{0ex} & 
              \textcolor{blue}{{\tiny 2}}\hspace{0ex} & 
              \textcolor{blue}{{\tiny 3}}\hspace{0ex} & 
              \textcolor{blue}{{\tiny 4}}\hspace{0ex} & 
              \textcolor{blue}{{\tiny 5}}\hspace{0ex} & 
              \textcolor{blue}{{\tiny 6}}\hspace{0ex} &
              \textcolor{blue}{{\tiny 7}}\hspace{0ex} &
              \textcolor{blue}{{\tiny 8}}\hspace{0ex} & 
              \textcolor{blue}{{\tiny 9}}\hspace{0ex} & 
              \textcolor{blue}{{\tiny 10}}\hspace{0ex} &
              \textcolor{blue}{{\tiny 11}}\hspace{0ex} &
              \textcolor{blue}{{\tiny 12}}\hspace{0ex} &
              \textcolor{blue}{{\tiny 13}}\hspace{0ex} & 
              \textcolor{blue}{{\tiny 14}}\hspace{0ex} &
              \textcolor{blue}{{\tiny 15}}\hspace{0ex} &
              \textcolor{blue}{{\tiny 16}}\hspace{0ex} &
              \textcolor{blue}{{\tiny 17}}\hspace{0ex} & 
              \textcolor{blue}{{\tiny 18}}\hspace{0ex} &
              \textcolor{blue}{{\tiny 19}}\hspace{0ex} &
              \textcolor{blue}{{\tiny 20}}\hspace{0ex} &
              \textcolor{blue}{{\tiny 21}}\hspace{0ex} & 
              \textcolor{blue}{{\tiny 22}}\hspace{0ex} &
              \textcolor{blue}{{\tiny 23}}\hspace{0ex} &
              \textcolor{blue}{{\tiny 24}}\hspace{0ex} &
              \textcolor{blue}{{\tiny 25}}\hspace{0ex} & 
              \textcolor{blue}{{\tiny 26}}\hspace{0ex} &
              \textcolor{blue}{{\tiny 27}}\hspace{0ex} &
              \textcolor{blue}{{\tiny 28}}\hspace{0ex} &
              \textcolor{blue}{{\tiny 29}}\hspace{0ex}   
            %   \textcolor{blue}{{\tiny 30}}\hspace{0ex} &
            %   \textcolor{blue}{{\tiny 31}}\hspace{0ex} &
            %   \textcolor{blue}{{\tiny 32}}\hspace{0ex} &
            %   \textcolor{blue}{{\tiny 33}} 
            \\
\begin{block}{c[cccc|cccc|c|cccc|cccc|cccc|cccc|cccc]}
               \textcolor{blue}{{\tiny 1}} & 
               1\hspace{0ex} & 0\hspace{0ex} & 0\hspace{0ex} & 0 & 
               1\hspace{0ex} & 1\hspace{0ex} & 1\hspace{0ex} & 0 & 
               1 & 
               1\hspace{0ex} & 0\hspace{0ex} & 0\hspace{0ex} & 0 & 
               1\hspace{0ex} & 0\hspace{0ex} & 0\hspace{0ex} & 0 &
               1\hspace{0ex} & 0\hspace{0ex} & 0\hspace{0ex} & 0 &
               1\hspace{0ex} & 0\hspace{0ex} & 0\hspace{0ex} & 0 &
               %1\hspace{0ex} & 0\hspace{0ex} & 0\hspace{0ex} & 0 &
               0\hspace{0ex} & 1\hspace{0ex} & 0\hspace{0ex} & 0 
               \\
               \textcolor{blue}{{\tiny 2}} & 
               0\hspace{0ex} & 1\hspace{0ex} & 0\hspace{0ex} & 0 & 
               1\hspace{0ex} & 1\hspace{0ex} & 0\hspace{0ex} & 1 &
               1 &
               0\hspace{0ex} & 1\hspace{0ex} & 0\hspace{0ex} & 0 &
               0\hspace{0ex} & 1\hspace{0ex} & 0\hspace{0ex} & 0 &
               0\hspace{0ex} & 1\hspace{0ex} & 0\hspace{0ex} & 0 &
               0\hspace{0ex} & 1\hspace{0ex} & 0\hspace{0ex} & 0 &
               %0\hspace{0ex} & 1\hspace{0ex} & 0\hspace{0ex} & 0 &
               1\hspace{0ex} & 0\hspace{0ex} & 0\hspace{0ex} & 0 
               \\
               \textcolor{blue}{{\tiny 3}} & 
               0\hspace{0ex} & 0\hspace{0ex} & 1\hspace{0ex} & 0 & 
               1\hspace{0ex} & 0\hspace{0ex} & 1\hspace{0ex} & 1 &
               1 &
               0\hspace{0ex} & 0\hspace{0ex} & 1\hspace{0ex} & 0  &
               0\hspace{0ex} & 0\hspace{0ex} & 1\hspace{0ex} & 0  &
               0\hspace{0ex} & 0\hspace{0ex} & 1\hspace{0ex} & 0 &
               0\hspace{0ex} & 0\hspace{0ex} & 1\hspace{0ex} & 0 &
               %0\hspace{0ex} & 0\hspace{0ex} & 1\hspace{0ex} & 0 &
               0\hspace{0ex} & 0\hspace{0ex} & 0\hspace{0ex} & 1 
               \\
               \textcolor{blue}{{\tiny 4}} & 
               0\hspace{0ex} & 0\hspace{0ex} & 0\hspace{0ex} & 1 & 
               0\hspace{0ex} & 1\hspace{0ex} & 1\hspace{0ex} & 1 & 
               1 &
               0\hspace{0ex} & 0\hspace{0ex} & 0\hspace{0ex} & 1 &
               0\hspace{0ex} & 0\hspace{0ex} & 0\hspace{0ex} & 1 &
               0\hspace{0ex} & 0\hspace{0ex} & 0\hspace{0ex} & 1 &
               0\hspace{0ex} & 0\hspace{0ex} & 0\hspace{0ex} & 1 &
               %0\hspace{0ex} & 0\hspace{0ex} & 0\hspace{0ex} & 1 &
               0\hspace{0ex} & 0\hspace{0ex} & 1\hspace{0ex} & 0 
               \\
             \end{block}
\end{blockarray}
\caption{$\boldsymbol{H}_{\ast}\in \mathbb{F}_{q}^{4\times 29}$: If $\mathbb{F}_{q}$ does have characteristic three (such as $GF(3)$), then $\boldsymbol{H}_{\ast}$ is an encoding matrix for the index coding instance $\mathcal{I}_{1}$, and if $\mathbb{F}_{q}$ does have any characteristic other than characteristic three (such as $GF(2)$), then $\boldsymbol{H}_{\ast}$ is an encoding matrix for the index coding instance $\mathcal{I}_{2}$.}
\label{fig:1}
\end{figure*}

\begin{prop} \label{prop-thm1-2}
Matrix $\boldsymbol{H}\in \mathbb{F}_{q}^{4t\times 29t}$ is an encoding matrix for index coding instance $\mathcal{I}_{1}$ only if its submatrix $\boldsymbol{H}^{[9]}$ is a linear representation of matroid instance $\mathcal{N}_{1}$.
\end{prop}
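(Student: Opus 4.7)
The plan is to verify each matroid axiom of $\mathcal{N}_1$ for $\boldsymbol{H}^{[9]}$ by translating the 29 users' decoding conditions into rank and invertibility statements via the lemmas just developed. I would organize the proof into four successive stages: (i) $[4]$ is a basis of $\boldsymbol{H}$; (ii) every auxiliary column $j\in\{10,\dots,25\}$ shares its column space with one of columns $1,2,3,4$; (iii) each of $N_9, N_1,\dots,N_4$ is a circuit of $\boldsymbol{H}$; (iv) each of $N_5,\dots,N_8$ is a circuit of $\boldsymbol{H}$.

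Stages (i) and (ii) are quick. Because $B_i\supseteq[4]\setminus\{i\}$ for each $i\in[4]$, the set $[4]$ is independent in $\mathcal{I}_1$, so Lemma \ref{lem:MAIS1} gives $\mathrm{rank}(\boldsymbol{H}^{[4]})=4t$, making it a basis of $\boldsymbol{H}$. For each $j\in\{10,\dots,25\}$, the construction places $j\in B_i$ for exactly three of the four $i\in[4]$, so Lemma \ref{lem:MAIS2} applied with $M=[4]$ forces $\mathrm{col}(\boldsymbol{H}^{\{j\}})=\mathrm{col}(\boldsymbol{H}^{\{i^*\}})$ for the unique $i^*$ with $j\notin B_{i^*}$; thus the groups $\{10,14,18,22\}$, $\{11,15,19,23\}$, $\{12,16,20,24\}$, $\{13,17,21,25\}$ replicate the column spaces of $\boldsymbol{H}^{\{1\}},\boldsymbol{H}^{\{2\}},\boldsymbol{H}^{\{3\}},\boldsymbol{H}^{\{4\}}$ respectively. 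In stage (iii), $N_9=\{5,6,7,8\}$ is a minimal cyclic set of $\mathcal{I}_1$ under the ordering $5,6,7,8$; user 9 (with $B_9=\{5,6,7,8\}$) rules out $\mathrm{rank}(\boldsymbol{H}^{\{5,6,7,8\}})=4t$, and since the acyclic subset $\{5,6,7\}$ already has rank $3t$ by Lemma \ref{lem:MAIS1}, the rank must be exactly $3t$; Lemma \ref{lem:min-cyc} then promotes $N_9$ to a circuit. For $N_1=\{1,2,3,5\}$, user 4 (with $B_4\supseteq\{1,2,3,5\}$) together with the reduction from stage (ii) gives $\mathrm{rank}(\boldsymbol{H}^{\{1,2,3,5\}})=3t$, so $\boldsymbol{H}^{\{5\}}=\sum_{k=1}^{3}\boldsymbol{H}^{\{k\}}\boldsymbol{M}_{5,k}$. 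To upgrade this to a circuit I would invoke the auxiliary triple $(10,11,12)$ with $B_{10}=\{5,11\}, B_{11}=\{5,12\}, B_{12}=\{5,10\}$, which via stage (ii) become the three non-inclusions $\mathrm{col}(\boldsymbol{H}^{\{1\}})\not\subseteq\mathrm{col}(\boldsymbol{H}^{\{2,5\}})$, $\mathrm{col}(\boldsymbol{H}^{\{2\}})\not\subseteq\mathrm{col}(\boldsymbol{H}^{\{3,5\}})$, $\mathrm{col}(\boldsymbol{H}^{\{3\}})\not\subseteq\mathrm{col}(\boldsymbol{H}^{\{1,5\}})$; combining these with the algebraic identities obtained by expanding $\boldsymbol{H}^{\{5\}}=\boldsymbol{H}^{\{6\}}\boldsymbol{M}_{5,6}+\boldsymbol{H}^{\{7\}}\boldsymbol{M}_{5,7}+\boldsymbol{H}^{\{8\}}\boldsymbol{M}_{5,8}$ (from the $\{5,6,7,8\}$ circuit) in the basis $[4]$ forces each $\boldsymbol{M}_{5,k}$ to be invertible. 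Analogous arguments with the triples $(14,15,17), (18,20,21), (23,24,25)$ show that $N_2,N_3,N_4$ are also circuits.

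Stage (iv) is an application of Lemma \ref{lem:col(9)}, whose first two hypotheses coincide with stages (i) and (iii). Its third hypothesis asks for $\mathrm{col}(\boldsymbol{H}^{\{9\}})\subseteq\mathrm{col}(\boldsymbol{H}^{L})$ for $L\in\{\{4,5\},\{3,6\},\{2,7\},\{1,8\}\}$. Each such inclusion follows from a paired user argument: for instance, user 16 (with $B_{16}=\{4,5,9\}$) chained with user 26 (with $B_{26}=\{4,5,9,16\}$) forces $\mathrm{rank}(\boldsymbol{H}^{\{4,5,9\}})\leq 2t$, and since $\{4,5\}$ is an acyclic pair of rank $2t$ by Lemma \ref{lem:MAIS1}, this yields $\mathrm{col}(\boldsymbol{H}^{\{9\}})\subseteq\mathrm{col}(\boldsymbol{H}^{\{4,5\}})$; the pairs $(19,27), (22,28), (13,29)$ deliver the remaining three inclusions. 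Lemma \ref{lem:col(9)} then concludes that $N_5,\dots,N_8$ are circuits of $\boldsymbol{H}$, completing the verification that $\boldsymbol{H}^{[9]}$ is a linear representation of $\mathcal{N}_1$. The main technical hurdle is the invertibility upgrade in stage (iii): each single non-inclusion from an auxiliary triple rules out only one particular singular configuration, so it is the simultaneous use of all three non-inclusions within a triple together with the algebraic identities from the $\{5,6,7,8\}$ circuit that forces every coefficient in the dependencies of $N_1,\dots,N_4$ to be invertible.
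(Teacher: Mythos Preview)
Your outline matches the paper's proof almost exactly: stages (i), (ii), and (iv) are carried out in the paper just as you describe, via Lemmas~\ref{lem:MAIS1}, \ref{lem:MAIS2}, and \ref{lem:col(9)} respectively, and stage (iii) for $N_9$ is handled identically through Lemma~\ref{lem:min-cyc}. The only real divergence is the invertibility upgrade for $N_1,\dots,N_4$. The paper does not argue directly from the three individual decoding conditions of each auxiliary triple; instead it observes that each of $M_1=\{10,11,12\}$, $M_2=\{14,15,17\}$, $M_3=\{18,20,21\}$, $M_4=\{23,24,25\}$ is a \emph{minimal cyclic set} of $\mathcal{I}_1$ with the corresponding $j\in\{5,6,7,8\}$ in every $B_i$, so Lemma~\ref{lem:independent-cycle} applies verbatim and gives that $\{j\}\cup M_k$ is a circuit of $\boldsymbol{H}$; translating through the column-space identifications of stage (ii) then yields the circuits $N_1,\dots,N_4$ directly.

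Your alternative route has two issues worth flagging. First, the decoding condition of, say, user $10$ is not merely the non-inclusion $\mathrm{col}(\boldsymbol{H}^{\{1\}})\not\subseteq\mathrm{col}(\boldsymbol{H}^{\{2,5\}})$ but the full rank identity $\mathrm{rank}(\boldsymbol{H}^{\{1,2,5\}})=\mathrm{rank}(\boldsymbol{H}^{\{2,5\}})+t$; for $t>1$ the bare non-inclusion is strictly weaker and does not by itself yield the kernel containment $\ker\boldsymbol{M}_{5,3}\subseteq\ker\boldsymbol{M}_{5,1}$ that you need. Second, if you do retain the full rank identities, the three conditions from a triple chain the kernels cyclically and, together with $\mathrm{rank}(\boldsymbol{H}^{\{5\}})=t$, already force all three coefficient matrices to be invertible---so the detour through the $\{5,6,7,8\}$ circuit is unnecessary (and in fact unhelpful, since the expansions of $\boldsymbol{H}^{\{6\}},\boldsymbol{H}^{\{7\}},\boldsymbol{H}^{\{8\}}$ in the basis $[4]$ have coefficients whose invertibility is not yet established at that point). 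In short, your stage (iii) is recoverable, but what you are really doing is the content of Lemma~\ref{lem:independent-cycle}; invoking that lemma makes the argument clean and removes both the imprecision and the spurious dependence on $N_9$.
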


\begin{proof}
We prove that set $N_{0}=[4]$ is a basis set of $\boldsymbol{H}$, and each set $N_{i}, i\in [9]$ in \eqref{eq:-N1-matroid} is a circuit set of $\boldsymbol{H}$. The proof is described as follows.
\begin{itemize}
    \item First, since $\beta_{\text{MAIS}}(\mathcal{I}_{1})=4$, we must have $\mathrm{rank}(\boldsymbol{H})=4t$. Now, from $B_{i}, i\in [4]$ in \eqref{I_1-B_i}, it can be seen that set $[4]$ is an independent set of $\mathcal{I}_{1}$, so based on Lemma \ref{lem:MAIS1}, set $[4]$ is an independent set of $\boldsymbol{H}$. Since $\mathrm{rank}(\boldsymbol{H})=4t$, set $N_{0}=[4]$ will be a basis set of $\boldsymbol{H}$. Now, in order to have $\mathrm{rank} (\boldsymbol{H})=4t$, for all $j\in [29]\backslash [4]$, we must have $\mathrm{col}(\boldsymbol{H}^{\{j\}})\subseteq \mathrm{col}(\boldsymbol{H}^{[4]})$.
  
    \item According to Lemma \ref{lem:MAIS2}, from $B_{i}, i\in [4]$, it can be seen that:
    \begin{itemize}
        \item for each $j\in \{10,14,18,22\}$,
         \begin{align}
         j\in B_{i}, i\in [4]\backslash \{1\} \rightarrow \mathrm{col}(\boldsymbol{H}^{\{j\}})=\mathrm{col}(\boldsymbol{H}^{\{1\}}),
         \label{eq:prop4-third-1}
        \end{align}
       \item for each $j\in \{11,15,19,23\}$,
        \begin{align}
        j\in B_{i}, i\in [4]\backslash \{2\}\rightarrow \mathrm{col}(\boldsymbol{H}^{\{j\}})=\mathrm{col}(\boldsymbol{H}^{\{2\}}),
        \label{eq:prop4-third-2}
       \end{align}
      \item for each $j\in \{12,16,20,24\}$,
       \begin{align}
       j\in B_{i}, i\in [4]\backslash \{3\}\rightarrow \mathrm{col}(\boldsymbol{H}^{\{j\}})=\mathrm{col}(\boldsymbol{H}^{\{3\}}),
       \label{eq:prop4-third-3}
       \end{align}
      \item for each $j\in \{13,17,21,25\}$,
       \begin{align}
       j\in B_{i}, i\in [4]\backslash \{4\}\rightarrow \mathrm{col}(\boldsymbol{H}^{\{j\}})=\mathrm{col}(\boldsymbol{H}^{\{4\}}).
       \label{eq:prop4-third-4}
       \end{align}
    \end{itemize}
    Let $M_{1}=\{10,11,12\}, M_{2}=\{14,15,17\}, M_{3}=\{18,20,21\}$ and $M_{4}=\{23,24,25\}$. 
    Now, \eqref{eq:prop4-third-1}-\eqref{eq:prop4-third-4} lead to
    \begin{align}
        \mathrm{col}(\boldsymbol{H}^{M_{1}})= \mathrm{col}(\boldsymbol{H}^{[4]\backslash \{4\}}), 
        \label{eq:prop4-third-1-new}
        \\
        \mathrm{col}(\boldsymbol{H}^{M_{2}})= \mathrm{col}(\boldsymbol{H}^{[4]\backslash \{3\}}), 
        \label{eq:prop4-third-2-new}
        \\
        \mathrm{col}(\boldsymbol{H}^{M_{3}})= \mathrm{col}(\boldsymbol{H}^{[4]\backslash \{2\}}), 
        \label{eq:prop4-third-3-new}
        \\
        \mathrm{col}(\boldsymbol{H}^{M_{4}})= \mathrm{col}(\boldsymbol{H}^{[4]\backslash \{1\}}).
        \label{eq:prop4-third-4-new}
    \end{align}
    Thus, each $M_{1}, M_{2}, M_{3}$ and $M_{4}$ is an independent set of $\boldsymbol{H}$.
    \item To have $\mathrm{rank}(\boldsymbol{H})=4t$, one must have $\mathrm{rank}(\boldsymbol{H}^{B_{i}})=3t, i\in [29]$. Since $[4]$ is a basis set, from $B_{i}, i\in [4]$, we must have
    \begin{align}
    B_{4}&\rightarrow \mathrm{col}(\boldsymbol{H}^{\{5\}})\subseteq \mathrm{col}(\boldsymbol{H}^{[4]\backslash \{4\}})\stackrel{\eqref{eq:prop4-third-1-new}}{=}\mathrm{col}(\boldsymbol{H}^{M_{1}}),
    \label{eq:prop4-second-1}
    \\
    B_{3}&\rightarrow \mathrm{col}(\boldsymbol{H}^{\{6\}})\subseteq \mathrm{col}(\boldsymbol{H}^{[4]\backslash \{3\}})\stackrel{\eqref{eq:prop4-third-2-new}}{=}\mathrm{col}(\boldsymbol{H}^{M_{2}}),
    \label{eq:prop4-second-2}
    \\
    B_{2}&\rightarrow \mathrm{col}(\boldsymbol{H}^{\{7\}})\subseteq \mathrm{col}(\boldsymbol{H}^{[4]\backslash \{2\}})\stackrel{\eqref{eq:prop4-third-3-new}}{=}\mathrm{col}(\boldsymbol{H}^{M_{3}}),
    \label{eq:prop4-second-3}
    \\
    B_{1}&\rightarrow \mathrm{col}(\boldsymbol{H}^{\{8\}})\subseteq \mathrm{col}(\boldsymbol{H}^{[4]\backslash \{1\}})\stackrel{\eqref{eq:prop4-third-4-new}}{=}\mathrm{col}(\boldsymbol{H}^{M_{4}}).
    \label{eq:prop4-second-4}
  \end{align}
    \item From $B_{i}, i\in M_{1}, M_{2}, M_{3}$ and $M_{4}$, it can be verified that
    \begin{align}
        M_{1}\ \text{is a minimal cyclic set of $\mathcal{I}_{1}$}\ \& \ 5\in B_{i}, i\in M_{1},
        \label{eq:prop4-forth-1}
        \\
        M_{2}\ \text{is a minimal cyclic set of $\mathcal{I}_{1}$}\ \& \ 6\in B_{i}, i\in M_{2},
        \label{eq:prop4-forth-2}
        \\
        M_{3}\ \text{is a minimal cyclic set of $\mathcal{I}_{1}$}\ \& \ 7\in B_{i}, i\in M_{3},
        \label{eq:prop4-forth-3}
        \\
        M_{4}\ \text{is a minimal cyclic set of $\mathcal{I}_{1}$}\ \& \ 8\in B_{i}, i\in M_{4}.
        \label{eq:prop4-forth-4}
    \end{align}
     \item 
     Now, all the four conditions in Lemma \ref{lem:independent-cycle} are satisfied for set $M_{1}$ with $j=5$, set $M_{2}$ with $j=6$, set $M_{3}$ with $j=7$, and set $M_{4}$ with $j=8$. So, based on Lemma \ref{lem:independent-cycle}, each set $\{5\}\cup M_{1}, \{6\}\cup M_{2}, \{7\}\cup M_{3}$ and $\{8\}\cup M_{4}$ is a circuit set of $\boldsymbol{H}$. Now, based on \eqref{eq:prop4-third-1-new}-\eqref{eq:prop4-third-4-new}, each set $N_{1}=\{1,2,3,5\}, N_{2}=\{1,2,4,6\}, N_{3}=\{1,3,4,7\}$ and $N_{4}=\{2,3,4,8\}$ is also a circuit set. 
     \item Due to $\mathrm{rank} ( \boldsymbol{H}^{B_{i}})=3t, i\in \{26,27,28,29\}$, we must have 
     \begin{align}
         \mathrm{rank} ( \boldsymbol{H}^{\{4,5,9,16\}})=3t,
         \label{eq:prop4-new-1-1}
         \\
         \mathrm{rank} ( \boldsymbol{H}^{\{3,6,9,19\}})=3t,
         \label{eq:prop4-new-1-2}
         \\
         \mathrm{rank} ( \boldsymbol{H}^{\{2,7,9,22\}})=3t,
         \label{eq:prop4-new-1-3}
         \\
         \mathrm{rank} ( \boldsymbol{H}^{\{1,8,9,13\}})=3t.
         \label{eq:prop4-new-1-4}
     \end{align}
     Now, since $B_{16}=\{4,5,9\}, B_{19}=\{3,6,9\}, B_{22}=\{2,7,9\}$ and $B_{13}=\{1,8,9\}$, we must have
     \begin{align}
         \eqref{eq:prop4-new-1-1}\rightarrow \mathrm{rank} (\boldsymbol{H}^{\{4,5,9\}})=2t,
          \label{eq:prop4-new-1-5}
         \\
         \eqref{eq:prop4-new-1-2}\rightarrow \mathrm{rank} ( \boldsymbol{H}^{\{3,6,9\}})=2t,
          \label{eq:prop4-new-1-6}
         \\
         \eqref{eq:prop4-new-1-3}\rightarrow \mathrm{rank} ( \boldsymbol{H}^{\{2,7,9\}})=2t,
          \label{eq:prop4-new-1-7}
         \\
         \eqref{eq:prop4-new-1-4}\rightarrow \mathrm{rank} ( \boldsymbol{H}^{\{1,8,9\}})=2t.
          \label{eq:prop4-new-1-8}
     \end{align}
     Thus,
     \begin{align}
         \eqref{eq:prop4-new-1-5}\rightarrow \mathrm{col}(\boldsymbol{H}^{\{9\}})\subseteq \mathrm{col}(\boldsymbol{H}^{\{4,5\}}),
         \nonumber
         \\
         \eqref{eq:prop4-new-1-6}\rightarrow \mathrm{col}(\boldsymbol{H}^{\{9\}})\subseteq \mathrm{col}(\boldsymbol{H}^{\{3,6\}}),
         \nonumber
         \\
         \eqref{eq:prop4-new-1-7}\rightarrow \mathrm{col}(\boldsymbol{H}^{\{9\}})\subseteq \mathrm{col}(\boldsymbol{H}^{\{2,7\}}),
         \nonumber
         \\
         \eqref{eq:prop4-new-1-8}\rightarrow \mathrm{col}(\boldsymbol{H}^{\{9\}})\subseteq \mathrm{col}(\boldsymbol{H}^{\{1,8\}}).
         \nonumber
         %\label{eq:fifth}
     \end{align}
     Hence, based on Lemma \ref{lem:col(9)}, each set $N_{5}=\{1,8,9\}, N_{6}=\{2,7,9\}, N_{7}=\{3,6,9\}, N_{8}=\{4,5,9\}$ is a circuit set.
     \item Finally, from $B_{5}, B_{6}, B_{7}$ and $B_{8}$, it can be seen that set $\{5,6,7,8\}$ is a minimal cyclic set of $\mathcal{I}_{1}$. Moreover,
     from $B_{9}$, we must have $\mathrm{rank}( \boldsymbol{H}^{B_{9}})=\mathrm{rank}( \boldsymbol{H}^{\{5,6,7,8\}})=3t$. Thus, based on Lemma \ref{lem:min-cyc}, set $N_{9}=\{5,6,7,8\}$ must be a circuit set of $\boldsymbol{H}$. This completes the proof.
\end{itemize}
%\vspace{-3ex}
\end{proof}

\subsubsection{Index Coding Instance $\mathcal{I}_{2}$} \label{sub:index-coding-1-2}
% This subsection presents the index coding instance $\mathcal{I}_{2}$ with the MAIS bound $\beta_{\text{MAIS}}(\mathcal{I}_{2})=4$. First, a linear code over a field with any characteristic other than three will be provided, which achieves the MAIS bound. Then, we prove that linear coding over any fields with characteristic three cannot achieve the MAIS bound. This will be proved using the matroid instance $\mathcal{N}_{2}$ which is not linearly representable over any field with characteristic three.

\begin{defn}[Index Coding Instance $\mathcal{I}_{2}$] \label{def:index-I2}
For the index coding instance $\mathcal{I}_{2}=\{B_{i}, i\in [29]\}$, the interfering message sets are all the same as the ones in \eqref{I_1-B_i}, except sets $B_{i}, i\in \{5,6,7,8,9\}$, which are as follows
\begin{align}
    B_{i}&=\{5,6,7,8\}\backslash \{i\}, \ \ \ i\in \{5,6,7,8\},
    \nonumber
    \\
    B_{9}&=\emptyset.
    \label{B-[5:9]}
\end{align}
\end{defn}

\begin{thm}\label{thm:non-Fano-Index-Instance-characteristic-3}
$\lambda_{q}(\mathcal{I}_{2})=\beta_{\text{MAIS}}(\mathcal{I}_{2})=4$ if and only if $\mathbb{F}_{q}$ does have any characteristic other than characteristic three. In other words,
linear coding is optimal for $\mathcal{I}_{1}$ only over the fields with any characteristic other than characteristic three.
\end{thm}
The proof can be concluded from Propositions \ref{prop-thm2-1} and \ref{prop-thm2-2}.

\begin{prop} \label{prop-thm2-1}
There exists a scalar linear code ($t=1$) over a field of any characteristic other than characteristic three, which is optimal for $\mathcal{I}_{2}$.
\end{prop}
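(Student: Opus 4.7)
The plan is to reuse the very same encoding matrix $\boldsymbol{H}_{\ast}\in\mathbb{F}_q^{4\times 29}$ displayed in Figure~\ref{fig:1}, but now over a field $\mathbb{F}_q$ whose characteristic differs from three. Since $[4]$ is an independent (hence acyclic) set of $\mathcal{I}_2$, the MAIS bound is still $\beta_{\text{MAIS}}(\mathcal{I}_2)=4$, so $\lambda_q(\mathcal{I}_2)\geq 4$. It therefore suffices to check that $\boldsymbol{H}_{\ast}$ satisfies the decoding condition \eqref{eq:dec-cond} for every user of $\mathcal{I}_2$ under the hypothesis $\mathrm{char}(\mathbb{F}_q)\neq 3$.

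The central observation is that a direct cofactor expansion yields
$$\det\bigl(\boldsymbol{H}_{\ast}^{\{5,6,7,8\}}\bigr)=\det\begin{bmatrix}1&1&1&0\\1&1&0&1\\1&0&1&1\\0&1&1&1\end{bmatrix}=-3.$$
Hence this $4\times 4$ block is non-singular precisely when $\mathrm{char}(\mathbb{F}_q)\neq 3$, which is equivalent to $\{5,6,7,8\}$ being an independent set of $\boldsymbol{H}_{\ast}$. This is exactly what the modified interfering sets in \eqref{B-[5:9]} demand via \eqref{eq:dec-cond}: for each $i\in\{5,6,7,8\}$, the requirement $\mathrm{rank}(\boldsymbol{H}_{\ast}^{\{i\}\cup B_i})=\mathrm{rank}(\boldsymbol{H}_{\ast}^{B_i})+1$ with $B_i=\{5,6,7,8\}\setminus\{i\}$ collapses to the full-rank condition on this block. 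For user $u_9$, the empty set $B_9=\emptyset$ reduces \eqref{eq:dec-cond} to $\mathrm{rank}(\boldsymbol{H}_{\ast}^{\{9\}})=1$, which is immediate by inspection of Figure~\ref{fig:1}.

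For the remaining users $i\in [4]\cup[10:29]$ the interfering sets of $\mathcal{I}_2$ coincide with those of $\mathcal{I}_1$, so the decoding verifications transfer verbatim from the appendix proof of Proposition~\ref{prop-thm1-1}. Those checks rely exclusively on the block-replication pattern visible in $\boldsymbol{H}_{\ast}$, namely $\boldsymbol{H}_{\ast}^{\{j\}}=\boldsymbol{H}_{\ast}^{\{1\}}$ for $j\in\{10,14,18,22\}$ and the analogous equalities for columns $11,\dots,25$, together with $\boldsymbol{H}_{\ast}^{\{26\}}=\boldsymbol{H}_{\ast}^{\{2\}}$, $\boldsymbol{H}_{\ast}^{\{27\}}=\boldsymbol{H}_{\ast}^{\{1\}}$, $\boldsymbol{H}_{\ast}^{\{28\}}=\boldsymbol{H}_{\ast}^{\{4\}}$, $\boldsymbol{H}_{\ast}^{\{29\}}=\boldsymbol{H}_{\ast}^{\{3\}}$. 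All of these are pure column equalities with $0/1$ entries and are therefore insensitive to the characteristic of $\mathbb{F}_q$.

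The main obstacle is purely bookkeeping: one must enumerate \eqref{eq:dec-cond} for each of the $29$ users. The only decoding constraint in the whole instance whose validity depends on $\mathrm{char}(\mathbb{F}_q)$ is the one on $\{5,6,7,8\}$, and its role is exactly reversed with respect to $\mathcal{I}_1$. In $\mathcal{I}_1$, user $u_9$ with $B_9=\{5,6,7,8\}$ forces $\{5,6,7,8\}$ to be a \emph{dependent} set, which succeeds only when $\mathrm{char}(\mathbb{F}_q)=3$; in $\mathcal{I}_2$, users $u_5,\dots,u_8$ force $\{5,6,7,8\}$ to be \emph{independent}, which fails exactly when $\mathrm{char}(\mathbb{F}_q)=3$. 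This sign flip, traced back to $\det=-3$, is the entire characteristic-sensitive content of the argument.
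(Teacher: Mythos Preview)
Your proposal is correct and follows essentially the same approach as the paper: reuse $\boldsymbol{H}_{\ast}$ from Figure~\ref{fig:1}, observe that the decoding conditions for users $u_i$, $i\in[29]\setminus\{5,6,7,8,9\}$, are identical to those in $\mathcal{I}_1$ and are characteristic-insensitive, handle $u_9$ trivially via $B_9=\emptyset$, and reduce the constraints for $u_5,\dots,u_8$ to the invertibility of $\boldsymbol{H}_{\ast}^{\{5,6,7,8\}}$ over $\mathbb{F}_q$ with $\mathrm{char}(\mathbb{F}_q)\neq 3$. The only cosmetic difference is that the paper exhibits explicit decoding combinations such as $y_1+y_2+y_3-2y_4=3x_5$ (whence $3$ must be a unit), whereas you argue via $\det\boldsymbol{H}_{\ast}^{\{5,6,7,8\}}=-3$; these are equivalent.
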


\begin{proof}
In Appendix \ref{app:proof-prop-H-I1-I2}, it is shown that the encoding matrix $\boldsymbol{H}_{\ast}\in \mathbb{F}_{q}^{4\times 29}$, shown in Figure \ref{fig:1}, will satisfy all users in $\mathcal{I}_{2}$, where the field $\mathbb{F}_{q}$ has any characteristic other than characteristic three. The key part of $\boldsymbol{H}_{\ast}$ is its submatrix $\boldsymbol{H}_{\ast}^{\{5,6,7,8\}}$, which is as follows
\begin{align}
\boldsymbol{H}_{\ast}^{\{5,6,7,8\}}=
    \begin{bmatrix}
    1 & 1 & 1 & 0 \\
    1 & 1 & 0 & 1 \\
    1 & 0 & 1 & 1 \\
    0 & 1 & 1 & 1 
    \end{bmatrix}.
    \nonumber
\end{align}
It can be seen that $\mathrm{rank}(\boldsymbol{H}_{\ast}^{\{5,6,7,8\}})=4$ is achievable over fields with any characteristic other than three. This satisfies the condition in \eqref{eq:dec-cond} for users $u_{i}, i\in \{5,6,7,8\}$.
\end{proof}

\begin{prop} \label{prop-thm2-2}
% Linear coding over any field with characteristic three cannot be optimal for $\mathcal{I}_{2}$.
Matrix $\boldsymbol{H}\in \mathbb{F}_{q}^{4t\times 29t}$ is an encoding matrix for index coding instance $\mathcal{I}_{2}$ only if its submatrix $\boldsymbol{H}^{[9]}$ is a linear representation of matroid instance $\mathcal{N}_{2}$.
\end{prop}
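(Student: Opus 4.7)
The plan is to mirror the proof of Proposition \ref{prop-thm1-2} almost verbatim, altering only the final step. The crucial observation is that $\mathcal{I}_2$ differs from $\mathcal{I}_1$ only in $B_5,B_6,B_7,B_8,B_9$, while every other interfering set is identical; similarly, $\mathcal{N}_2$ differs from $\mathcal{N}_1$ only in that $N_9=\{5,6,7,8\}$ is now declared a basis instead of a circuit. Consequently every inference in the earlier proof that is driven by an unchanged $B_i$ still applies here.

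I would first walk through the proof of Proposition \ref{prop-thm1-2} and certify that each of the following conclusions carries over to $\mathcal{I}_2$: that $N_0=[4]$ is a basis of $\boldsymbol{H}$ (from $B_1,\dots,B_4$ only); the column-space equalities \eqref{eq:prop4-third-1}--\eqref{eq:prop4-third-4} obtained via Lemma \ref{lem:MAIS2} (again from $B_1,\dots,B_4$ only); the cyclic structure on $M_1,\dots,M_4$ and the resulting application of Lemma \ref{lem:independent-cycle} promoting $N_1,\dots,N_4$ to circuit sets of $\boldsymbol{H}$ (from $B_{10},B_{11},B_{12},B_{14},B_{15},B_{17},B_{18},B_{20},B_{21},B_{23},B_{24},B_{25}$); and the four column containments of $\mathrm{col}(\boldsymbol{H}^{\{9\}})$ needed to invoke Lemma \ref{lem:col(9)}, which yield circuits $N_5,\dots,N_8$ (from $B_{13},B_{16},B_{19},B_{22},B_{26},B_{27},B_{28},B_{29}$). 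Every $B_i$ appearing in this list is unchanged in $\mathcal{I}_2$, so all eight sets $N_1,\dots,N_8$ remain circuit sets of $\boldsymbol{H}$.

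The only genuinely new step is to show that $N_9=\{5,6,7,8\}$ is a basis of $\boldsymbol{H}$ rather than a circuit. For this I would use the altered data: in $\mathcal{I}_2$ we have $B_i=\{5,6,7,8\}\setminus\{i\}$ for each $i\in\{5,6,7,8\}$, which is precisely the condition for $\{5,6,7,8\}$ to be an independent set of $\mathcal{I}_2$. Lemma \ref{lem:MAIS1} then forces $\mathrm{rank}(\boldsymbol{H}^{\{5,6,7,8\}})=4t$, and combined with $\mathrm{rank}(\boldsymbol{H})=4t$ this is exactly the statement that $N_9$ is a basis set of $\boldsymbol{H}$, matching its role in $\mathcal{N}_2$.

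There is no conceptual obstacle; the only care needed is the bookkeeping check that the derivations of the circuits $N_1,\dots,N_8$ in Proposition \ref{prop-thm1-2} never covertly used $B_5,\dots,B_9$ or the cyclic configuration on $\{5,6,7,8\}$ present in $\mathcal{I}_1$. Inspection confirms that those circuits are produced entirely through the cyclic chains on $\{10,\dots,25\}$ and the decoding constraints at users $u_{13},u_{16},u_{19},u_{22},u_{26},u_{27},u_{28},u_{29}$, none of which are perturbed in passing to $\mathcal{I}_2$. After this verification, the proof of Proposition \ref{prop-thm2-2} amounts to quoting Proposition \ref{prop-thm1-2} through step (vi) and appending the single short argument above for $N_9$.
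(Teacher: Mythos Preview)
Your proposal is correct and mirrors the paper's own proof almost exactly: the paper likewise borrows all conclusions about $N_0,\dots,N_8$ from Proposition~\ref{prop-thm1-2} (noting that only $B_5,\dots,B_9$ differ between the two instances) and then argues from the new \eqref{B-[5:9]} that $\{5,6,7,8\}$ is a basis set. Your explicit bookkeeping check that the derivations of $N_1,\dots,N_8$ never invoke $B_5,\dots,B_9$ is a welcome elaboration of something the paper leaves implicit.
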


\begin{proof}
Since the interfering message sets $B_{i}, \in [29]\backslash \{5,6,7,8,9\}$ of $\mathcal{I}_{2}$ are the same as the sets in \eqref{I_1-B_i}, we can borrow the results from Proposition \ref{prop-thm1-2}, where set $[4]$ is a basis set, and sets $\{1,2,3,5\}, \{1,2,4,6\}, \{1,3,4,7\}$, $\{2,3,4,8\}, \{1,8,9\}$, $\{2,7,9\}$, $\{3,6,9\}$ and $\{4,5,9\}$ are circuit sets. Now, due to \eqref{B-[5:9]}, the set $\{5,6,7,8\}$ must also be a basis set, which completes the proof. 
\end{proof}

\section{Index Coding Instance $\mathcal{I}_{3}$} \label{sub:third-instance-3}
This subsection provides index coding instance $\mathcal{I}_{3}$ with the MAIS bound $\beta_{\text{MAIS}}(\mathcal{I}_{3})=8$. First, we prove that linear coding cannot achieve the optimal rate over any field with characteristic three. To prove this, we first define a matroid instances $\mathcal{N}_{3}$ and show that it is not linearly representable over the fields with characteristic three. Then, we show that the main constraints on the column space of the encoding matrix of  $\mathcal{I}_{3}$ can be reduced to the constraints on the column space of the matrix, which is the linear representation of $\mathcal{N}_{3}$. Finally, we provide a scalar nonlinear code over the fields with characteristic three, which is optimal for $\mathcal{I}_{3}$.

In this subsection, we assume that $M\subseteq[m], N\subseteq[n]$, and the value of each $m,n, |M|$ and $|N|$ is an even integer.

\subsection{Matroid Instance $\mathcal{N}_{3}$} \label{sub:matroid-instance-1-3}
In this subsection, we first define the concept of quasi-circuit set of a matrix which is similar to the concept of circuit set (where this similarity can be seen by comparing Equations \eqref{eq: def:quasi-circuit-set-H} and \eqref{eq:quasi-circuit-set-N-invertible}, respectively, with Equations \eqref{eq: def:circuit-set-H} and \eqref{eq:circuit-set-M-invertible}).

\begin{defn}[Qausi-circuit Set of Matrix $\boldsymbol{H}$] \label{def:quasi-circuit-set}
Let $L\subseteq[\frac{n}{2}]$. We say that set $N=\{2j-1,2j, j\in L\}\subseteq [n]$ is a quasi-circuit set of $\boldsymbol{H}$, if for all $j\in L$, we have
\begin{align}
    &\mathrm{rank}(\boldsymbol{H}^{\{2j-1, 2j\}})\ \  =2t,
    \nonumber
    \\
    &\mathrm{rank}(\boldsymbol{H}^{N\backslash \{2j-1, 2j\}})=\mathrm{rank}(\boldsymbol{H}^{N})=(|N|-2)t.
    \label{eq: def:quasi-circuit-set-H}
\end{align}
\end{defn}

\begin{lem} \label{lem:quasi-circuit-invertible}
Let $L\subseteq[\frac{n}{2}]$. Assume $N=\{2j-1,2j, j\in L\}$ is a quasi-circuit set of $\boldsymbol{H}$ and
\begin{equation}
    \boldsymbol{N}_{j,i}\triangleq
    \begin{bmatrix}
    \boldsymbol{M}_{2j-1,2i-1} & \boldsymbol{M}_{2j,2i-1}
    \\
    \boldsymbol{M}_{2j-1,2i}   & \boldsymbol{M}_{2j,2i}
    \end{bmatrix}.
\end{equation}
Now, for any $j\in L$, we have
\begin{align}
    \boldsymbol{H}^{\{2j-1,2j\}}=\sum_{i\in L\backslash \{j\}} \boldsymbol{H}^{\{2i-1,2i\}}\boldsymbol{N}_{j,i},
    \label{eq:quasi-circuit-set-N-invertible}
\end{align}
such that each $\boldsymbol{N}_{j,i}$ is invertible.
\end{lem}

\begin{proof}
Equation \eqref{eq: def:quasi-circuit-set-H} requires that
\begin{align}
    \mathrm{col}(\boldsymbol{H}^{\{2j-1, 2j\}})\subseteq \mathrm{col}(\boldsymbol{H}^{N\backslash \{2j-1, 2j\}})).
    \nonumber
\end{align}
Thus, we must have $\boldsymbol{H}^{\{2j-1, 2j\}}=\sum_{i\in L\backslash\{j\}} \boldsymbol{H}^{\{2i-1, 2i\}} \boldsymbol{N}_{j,i}$. Now,
if one of the $\boldsymbol{N}_{j,i}, i=l\in L\backslash \{j\}$ is not invertible, then $\mathrm{rank}( \boldsymbol{H}^{N\backslash\{2l-1, 2l\}})<(|N|-2)t$, which contradicts \eqref{eq: def:quasi-circuit-set-H}. Thus, all $\boldsymbol{N}_{j,i}, i\in L\backslash \{j\}$ must be invertible.
\end{proof}

\begin{exmp}
It can be seen that for the following matrix
\begin{align}
\boldsymbol{H}=
    \begin{bmatrix}
    1 & 0 & 0 & 0 & 1 & 1\\
    0 & 1 & 0 & 0 & 0 & 1\\
    0 & 0 & 1 & 0 & 1 & 0\\
    0 & 0 & 0 & 1 & 1 & 1
    \end{bmatrix}, 
    \label{eq:exmp-quasi-circuit-set}
\end{align}
set $[6]$ is a quasi-circuit set as we have
\begin{align}
    &\mathrm{rank}(\boldsymbol{H}^{\{1,2\}})=\mathrm{rank}(\boldsymbol{H}^{\{3,4\}})=\mathrm{rank}(\boldsymbol{H}^{\{5,6\}})=2,
    \nonumber
    \\
    &\mathrm{rank}(\boldsymbol{H}^{\{1,2,3,4\}})=\mathrm{rank}(\boldsymbol{H}^{\{1,2,5,6\}})=\mathrm{rank}(\boldsymbol{H}^{\{3,4,5,6\}})=4,
    \nonumber
    \\
    &\mathrm{rank}(\boldsymbol{H}^{[6]})=4.
\end{align}
\end{exmp}

\begin{defn}[Matroid Instance $\mathcal{N}_{3}$]\label{def:matroid-N3}
Consider the matroid instance $\mathcal{N}_{3}=\{f(N), N\subseteq [n]\}$, where $n=18$, $f(\mathcal{N}_{3})=8$, set $N_{0}=[8]$ is a basis set, the sets $N_{i}$'s, $i\in [8]$ are quasi-circuit sets, which are as follows
\begin{align}
    N_{1}&=\{1,2,3,4,5,6,9,10\},\ \
    \nonumber
    \\
    N_{2}&=\{1,2,3,4,7,8,11,12\},
    \nonumber
    \\
    N_{3}&=\{1,2,5,6,7,8,13,14\},\
    \nonumber
    \\
    N_{4}&=\{3,4,5,6,7,8,15,16\},
    \nonumber
    \\
    N_{5}&=\{1,2,15,16,17,18\},\ \ \ \
    \nonumber
    \\
    N_{6}&=\{3,4,13,14,17,18\},
    \nonumber
    \\
    N_{7}&=\{5,6,11,12,17,18\},\ \ \ \ 
    \nonumber
    \\
    N_{8}&=\{7,8,9,10,17,18\},
    \label{eq:-N3-matroid}
\end{align}
and 
\begin{align}
    f(N_{9}=\{9:16\})\geq 7.
    \label{eq:-N3-matroid-N9}
\end{align}
\end{defn}

\begin{prop} \label{prop-matroid-N_3}
Matroid instance $\mathcal{N}_{3}$ is not linearly representable over any field with characteristic three.
\end{prop}

\begin{proof}
Since set $[8]$ is a basis set, and each $N_{i}, i\in [8]$ in \eqref{eq:-N3-matroid} is a quasi-circuit set, \eqref{eq:quasi-circuit-set-N-invertible} results in
\begin{align}
     \boldsymbol{H}^{\{9,10\}}\ &=\boldsymbol{H}^{\{1,2\}}\boldsymbol{N}_{5,1}+\boldsymbol{H}^{\{3,4\}}\boldsymbol{N}_{5,2}+\boldsymbol{H}^{\{5,6\}}\boldsymbol{N}_{5,3},
    \label{eq:matroid-N_3-5}
    \\
     \boldsymbol{H}^{\{11,12\}}&=\boldsymbol{H}^{\{1,2\}}\boldsymbol{N}_{6,1}+\boldsymbol{H}^{\{3,4\}}\boldsymbol{N}_{6,2}+\boldsymbol{H}^{\{7,8\}}\boldsymbol{N}_{6,4},
    \label{eq:matroid-N_3-6}
    \\
     \boldsymbol{H}^{\{13,14\}}&=\boldsymbol{H}^{\{1,2\}}\boldsymbol{N}_{7,1}+\boldsymbol{H}^{\{5,6\}}\boldsymbol{N}_{7,3}+\boldsymbol{H}^{\{7,8\}}\boldsymbol{N}_{7,4},
    \label{eq:matroid-N_3-7}
    \\
     \boldsymbol{H}^{\{15,16\}}&=\boldsymbol{H}^{\{3,4\}}\boldsymbol{N}_{8,2}+\boldsymbol{H}^{\{5,6\}}\boldsymbol{N}_{8,3}+\boldsymbol{H}^{\{7,8\}}\boldsymbol{N}_{8,4},
    \label{eq:matroid-N_3-8}
    \\
     \boldsymbol{H}^{\{17,18\}}&=\boldsymbol{H}^{\{1,2\}}\boldsymbol{N}_{9,1}+\boldsymbol{H}^{\{15,16\}}\boldsymbol{N}_{9,8},
    \label{eq:matroid-N_3-9-1}
    \\
     \boldsymbol{H}^{\{17,18\}}&=\boldsymbol{H}^{\{3,4\}}\boldsymbol{N}_{9,2}+\boldsymbol{H}^{\{13,14\}}\boldsymbol{N}_{9,7},
    \label{eq:matroid-N_3-9-2}
    \\
     \boldsymbol{H}^{\{17,18\}}&=\boldsymbol{H}^{\{5,6\}}\boldsymbol{N}_{9,3}+\boldsymbol{H}^{\{11,12\}}\boldsymbol{N}_{9,6},
    \label{eq:matroid-N_3-9-3}
    \\
     \boldsymbol{H}^{\{17,18\}}&=\boldsymbol{H}^{\{7,8\}}\boldsymbol{N}_{9,4}+\boldsymbol{H}^{\{9,10\}}\boldsymbol{N}_{9,5}.
    \label{eq:matroid-N_3-9-4}
\end{align}
Now, since each $\boldsymbol{N}_{j,i}$ is invertible (according to Lemma \ref{lem:quasi-circuit-invertible}), equations \eqref{eq:matroid-N_3-5}-\eqref{eq:matroid-N_3-9-4} are similar to equations \eqref{eq:matroid-N_1-5}-\eqref{eq:matroid-N_1-9-4}. Therefore, over the fields with characteristic three, we can achieve the similar result in \eqref{N_2-final}, as follows
\begin{align}
    &2\boldsymbol{H}^{\{15,16\}}\boldsymbol{N}_{9,8}=
    \nonumber
    \\
    &\boldsymbol{H}^{\{9,10\}}\boldsymbol{N}_{9,5}+\boldsymbol{H}^{\{11,12\}}\boldsymbol{N}_{9,6}+\boldsymbol{H}^{\{13,14\}}\boldsymbol{N}_{9,7},
    \label{N_3-final}
\end{align}
where each $\boldsymbol{N}_{9,5},\boldsymbol{N}_{9,6}$, $\boldsymbol{N}_{9,7}$, end $\boldsymbol{N}_{9,8}$ is invertible. Thus,
\begin{align}
    \mathrm{rank}(\boldsymbol{H}^{\{9:16\}})=\mathrm{rank}(\boldsymbol{H}^{\{9:14\}})\leq 6,
\end{align}
which contradicts \eqref{eq:-N3-matroid-N9}. This completes the proof.
\end{proof}

\subsection{On the Reduction Process from Index Coding to Matroid} \label{sub:index-coding-reduction-matroid-quasi}

In this subsection, first we define the concept of quasi-minimal cyclic set of an index coding instance, which is similar to the concept of minimal cyclic set (where this similarity can be seen by comparing Equation \eqref{eq:def-minimal-cyclic-set} with Equations \eqref{eq:def-quasi-minimal-cyclic-set} and \eqref{eq:def-quasi-minimal-cyclic-set-1}). Then, through Lemmas \ref{lem:quasi-independent-cycle}-\ref{lem:two-interference-dimention}, we establish some reduction techniques to map specific constraints on the column space of the encoder matrix of an index coding instance to the constraints on the column space of the matrix, linearly representing a matroid instance. Lemmas \ref{lem:quasi-independent-cycle} and \ref{lem:quasi-col(9)}, respectively, are variations of Lemmas \ref{lem:independent-cycle} and \ref{lem:col(9)}, where the concept of quasi-minimal cyclic set is used instead of the concept of minimal cyclic set.
Proof of Lemmas \ref{lem:quasi-independent-cycle}-\ref{lem:two-interference-dimention} are provided in Appendix \ref{app:proof- Lemma 7-9}.

Here, we assume that $m$ is an even integer,  $L^{\prime}\subseteq L=[\frac{m}{2}]$,  $M^{\prime}=\{2i-1, 2i, i\in L^{\prime}\}\subseteq M=\{2i-1,2i, i\in L\}\subseteq[m]$. 
%and $\{2j-1, 2j\}\subseteq [m]\backslash M$.

\begin{defn}[Qausi-minimal Cyclic Set of $\mathcal{I}$] \label{def:quasi-minimal-cyclic-set}
Let $M=\{i_{2j-1}, i_{2j}, j\in [\frac{|M|}{2}]\}$. Now, $M$ is referred to as a quasi-minimal cyclic set of $\mathcal{I}$ if
 \begin{equation} \label{eq:def-quasi-minimal-cyclic-set}
 B_{i_{2j-1}}\cap M=
    \left\{
      \begin{array}{cc}
         M\backslash \{i_{2j-1}, i_{2j+1},i_{2j+2}\}, j\in [\frac{|M|}{2}-1],
         \\ \\
        M\backslash \{i_{2j-1}, i_{1},i_{2}\},\ \ \ \ \ \ \ \ \ \ \ \ \ \ j=\frac{|M|}{2}.
     \end{array}
     \right.
     \nonumber
\end{equation}

 \begin{equation} \label{eq:def-quasi-minimal-cyclic-set-1}
 B_{i_{2j}}\cap M=
    \left\{
      \begin{array}{cc}
         M\backslash \{i_{2j}, i_{2j+1},i_{2j+2}\}, j\in [\frac{|M|}{2}-1],
         \\ \\
        M\backslash \{i_{2j}, i_{1},i_{2}\},\ \ \ \ \ \ \ \ \ \ \ \ \ \ j=\frac{|M|}{2}.
     \end{array}
     \right.
\end{equation}

\end{defn}

\begin{exmp}
Consider the index coding instance $\mathcal{I}=\{B_{i}, i\in [6]\}$, where
\begin{align}
    B_{1}=\{2,3,4\},\ B_{2}=\{1,3,4\},
    \nonumber
    \\
    B_{3}=\{4,5,6\},\ B_{4}=\{3,5,6\},
    \nonumber
    \\
    B_{5}=\{1,2,6\},\ B_{6}=\{1,2,5\}.
    \nonumber
\end{align}
It can be seen that set $[6]$ is a quasi-minimal cyclic set of $\mathcal{I}$. It can be also checked that the matrix in \eqref{eq:exmp-quasi-circuit-set} is an encoding matrix for $\mathcal{I}$ and can satisfy all the users $u_{i}, \in [6]$.
\end{exmp}

\begin{lem} \label{lem:quasi-independent-cycle}
Assume 
\begin{enumerate}[label=(\roman*)]
    \item $M$ is an independent set of $\boldsymbol{H}$,
    \item $\mathrm{col}(\boldsymbol{H}^{\{2j-1,2j\}})\subseteq \mathrm{col}(\boldsymbol{H}^{M})$, 
    \item $M$ forms a quasi-minimal cyclic set of $\mathcal{I}$,
    \item $\{2j-1,2j\}\subseteq B_{i}, \forall i\in M$.
\end{enumerate}
Now, the condition in \eqref{eq:dec-cond} for all $i\in [m]$ requires set $\{2j-1,2j\}\cup M$ to be a quasi-circuit set of $\boldsymbol{H}$.
\end{lem}

\begin{lem} \label{lem:quasi-col(9)}
Suppose for matrix $\boldsymbol{H}\in \mathbb{F}_{q}^{8t\times 18}$, 
\begin{enumerate}[label=(\roman*)]
    \item set $[8]$ is a basis set,
    \item each set $\{1,2,3,4,5,6,9,10\}$, $\{1,2,3,4,7,8,11,12\}$, $\{1,2,5,6,7,8,13,14\}$ and $\{3,4,5,6,7,8,15,16\}$ is a quasi-circuit set,
    \item 
         \begin{align}
          \mathrm{col}(\boldsymbol{H}^{\{17,18\}})&\subseteq \mathrm{col}(\boldsymbol{H}^{\{7,8,9,10\}}).
         \nonumber
         \\
          \mathrm{col}(\boldsymbol{H}^{\{17,18\}})&\subseteq \mathrm{col}(\boldsymbol{H}^{\{5,6,11,12\}}),
         \nonumber
         \\
          \mathrm{col}(\boldsymbol{H}^{\{17,18\}})&\subseteq \mathrm{col}(\boldsymbol{H}^{\{3,4,13,14\}}),
         \nonumber
         \\
          \mathrm{col}(\boldsymbol{H}^{\{17,18\}})&\subseteq \mathrm{col}(\boldsymbol{H}^{\{1,2,15,16\}}),
         \nonumber
         %\label{eq:fifth}
     \end{align}
\end{enumerate}
% set $[8]$ is a basis set and each set $\{1,2,3,4,5,6,9,10\}$, $\{1,2,3,4,7,8,11,12\}$, $\{1,2,5,6,7,8,13,14\}$ and $\{3,4,5,6,7,8,15,16\}$ is a quasi-circuit set. Now, if $\mathrm{col}(\boldsymbol{H}^{\{17,18\}})$ is a subspace of each $\mathrm{col}(\boldsymbol{H}^{\{1,2,15,16\}})$, $\mathrm{col}(\boldsymbol{H}^{\{3,4,13,14\}})$, $\mathrm{col}(\boldsymbol{H}^{\{5,6,11,12\}})$ and $\mathrm{col}(\boldsymbol{H}^{\{7,8,9,10\}})$, 
Then, each set $\{1,2,15,16,17,18\}$, $\{3,4,13,14,17,18\}$, $\{5,6,11,12,17,18\}$ and $\{7,8,9,10,17,18\}$ will also be a quasi-circuit set.
\end{lem}

\begin{lem} \label{lem:two-interference-dimention}
Let matrix $\boldsymbol{H}$ be an encoding matrix for index coding instance $\mathcal{I}=\{B_{i}, i\in [m]\}$.
Assume $M^{\prime}\subseteq M\subseteq[m]$. Now, if $M\backslash \{i\}\subseteq B_{i}$ for all $i\in M^{\prime}$, then we must have
\begin{align}
    \mathrm{rank}(\boldsymbol{H}^{M})=\mathrm{rank}(\boldsymbol{H}^{M\backslash M^{\prime}})+|M^{\prime}|t.
    \nonumber
\end{align}
\end{lem}

\subsection{Index Coding Instance $\mathcal{I}_{3}$} \label{sub:index-coding-1-3}
\begin{defn}[Index Coding Instance $\mathcal{I}_{3}$] \label{def:index-I3}
The index coding instance $\mathcal{I}_{3}=\{B_{i}, i\in [58]\}$ is characterized as follows
\begin{align}
    B_{1}\  &=([8]\backslash\{1\})\cup \{15,16\}\cup ([19:50]\backslash \{19,27,35,43\}),
    \nonumber
    \\
    B_{2}\  &=([8]\backslash\{2\})\cup \{15,16\}\cup ([19:50]\backslash \{20,28,36,44\}),
    \nonumber
    \\
    B_{3}\  &=([8]\backslash\{3\})\cup \{13,14\}\cup ([19:50]\backslash \{21,29,37,45\}),
    \nonumber
    \\
    B_{4}\  &=([8]\backslash\{4\})\cup \{13,14\}\cup ([19:50]\backslash \{22,30,38,46\}),
    \nonumber
    \\
    B_{5}\  &=([8]\backslash\{5\})\cup \{11,12\}\cup ([19:50]\backslash \{23,31,39,47\}),
    \nonumber
    \\
    B_{6}\  &=([8]\backslash\{6\})\cup \{11,12\}\cup ([19:50]\backslash \{24,32,40,48\}),
    \nonumber
    \\
    B_{7}\  &=([8]\backslash\{7\})\cup \{9,10\}\  \cup ([19:50]\backslash \{25,33,41,49\}),
    \nonumber
    \\
    B_{8}\  &=([8]\backslash\{8\})\cup \{9,10\}\ \cup ([19:50]\backslash \{26,34,42,50\}),
    \nonumber
    \\
    B_{9}\  &=[9:16]\backslash \{9\},
    \nonumber
    \\
    B_{10}&=\{9,12,14\},
    \nonumber
    \\
    B_{11}&=[9:16]\backslash \{11\},
    \nonumber
    \\
    B_{12}&=\{11,14,16\},
    \nonumber
    \\
    B_{13}&=[9:16]\backslash \{13\},
    \nonumber
    \\
    B_{14}&=\{13,10,16\},
    \nonumber
    \\
    B_{15}&=[9:16]\backslash \{15\}, 
    \nonumber
    \\
    B_{16}&=\{15,10,12\},
    \nonumber
    \\
    B_{17}&=\{18\},
    \nonumber
    \\
    B_{18}&=\{17\},
    \nonumber
    \\
    B_{19}&=\{9,10,20,21,22\},
    \nonumber
    \\
    B_{20}&=\{9,10,19,21,22\},
    \nonumber
    \\
    B_{21}&=\{9,10,22,23,24\},
    \nonumber
    \\
    B_{22}&=\{9,10,21,23,24\},
    \nonumber
    \\
    B_{23}&=\{9,10,19,20,24\},
    \nonumber
    \\
    B_{24}&=\{9,10,19,20,23\},
    \nonumber
    \\
    B_{25}&=\{1,2,15,16,17,18,26\},
    \nonumber
    \\
    B_{26}&=\{1,2,15,16,17,18,25\},
    \nonumber
    \\
    B_{27}&=\{11,12,28,29,30\},
    \nonumber
    \\
    B_{28}&=\{11,12,27,29,30\},
    \nonumber
    \\
    B_{29}&=\{11,12,30,33,34\},
    \nonumber
    \\
    B_{30}&=\{11,12,29,33,34\},
    \nonumber
    \\
    B_{31}&=\{7,8,9,10,17,18,32\},
    \nonumber
    \\
    B_{32}&=\{7,8,9,10,17,18,31\},
    \nonumber
    \\
    B_{33}&=\{11,12,27,28,34\},
    \nonumber
    \\
    B_{34}&=\{11,12,27,28,33\},
    \nonumber
    \\
    B_{35}&=\{13,14,36,39,40\},
    \nonumber
    \\
    B_{36}&=\{13,14,35,39,40\},
    \nonumber
    \\
    B_{37}&=\{5,6,11,12,17,18,38\},
    \nonumber
    \\
    B_{38}&=\{5,6,11,12,17,18,37\},
    \nonumber
    \\
    B_{39}&=\{13,14,40,41,42\},
    \nonumber
    \\
    B_{40}&=\{13,14,39,41,42\},
    \nonumber
    \\
    B_{41}&=\{13,14,35,36,42\},
    \nonumber
    \\
    B_{42}&=\{13,14,35,36,41\},
    \nonumber
    \\
    B_{43}&=\{3,4,13,14,17,18,44\},
    \nonumber
    \\
    B_{44}&=\{3,4,13,14,17,18,43\},
    \nonumber
    \\
    B_{45}&=\{15,16,46,47,48\},
    \nonumber
    \\
    B_{46}&=\{15,16,45,47,48\},
    \nonumber
    \\
    B_{47}&=\{15,16,48,49,50\},
    \nonumber
    \\
    B_{48}&=\{15,16,47,49,50\},
    \nonumber
    \\
    B_{49}&=\{15,16,45,46,50\},
    \nonumber
    \\
    B_{50}&=\{15,16,45,46,49\},
    \nonumber
    \\
    B_{51}&=\{7,8,9,10,17,18,31,32,52\},
    \nonumber
    \\
    B_{52}&=\{7,8,9,10,17,18,31,32,51\},
    \nonumber
    \\
    B_{53}&=\{5,6,11,12,17,18,37,38,54\},
    \nonumber
    \\
    B_{54}&=\{5,6,11,12,17,18,37,38,53\},
    \nonumber
    \\
    B_{55}&=\{3,4,13,14,17,18,43,44,56\},
    \nonumber
    \\
    B_{56}&=\{3,4,13,14,17,18,43,44,55\},
    \nonumber
    \\
    B_{57}&=\{1,2,15,16,17,18,25,26,58\},
    \nonumber
    \\
    B_{58}&=\{1,2,15,16,17,18,25,26,57\}.
    \label{I_3-B_i}
\end{align}

\end{defn}

\begin{thm} \label{thm:Quasi-non-Fano-Index-Instance-characteristic-3}
$\lambda_{q}(\mathcal{I}_{3})=\beta_{\text{MAIS}}(\mathcal{I}_{3})=8$ if and only if $\mathbb{F}_{q}$ does have any characteristic other than characteristic three. In other words,
linear coding is optimal for $\mathcal{I}_{3}$ only over the fields with any characteristic other than characteristic three. However, there exists a scalar nonlinear code over the fields with characteristic three, which is optimal for $\mathcal{I}_{3}$.
\end{thm}
\begin{proof}
The proof can be concluded from Propositions \ref{prop-thm3-1}, \ref{prop-thm3-2}, and \ref{prop-thm3-3}.
\end{proof}

\begin{prop} \label{prop-thm3-1}
There exists a scalar linear coding over a field of any characteristic other than characteristic three, which is optimal for $\mathcal{I}_{3}$.
\end{prop}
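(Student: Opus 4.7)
The plan is to construct the encoding matrix $\boldsymbol{H}_{\ast}^{\mathcal{I}_3}\in\mathbb{F}_q^{8\times 58}$ by ``pair-doubling'' the matrix $\boldsymbol{H}_{\ast}\in\mathbb{F}_q^{4\times 29}$ of Figure~\ref{fig:1} via a Kronecker product with the $2\times 2$ identity, namely $\boldsymbol{H}_{\ast}^{\mathcal{I}_3} := \boldsymbol{H}_{\ast}\otimes\boldsymbol{I}_2$. The motivation is that $\mathcal{I}_3$ is a pair-doubled version of $\mathcal{I}_2$: the messages $\{2i-1,2i\}$ of $\mathcal{I}_3$ play the role of message $i$ of $\mathcal{I}_2$, and the matroid $\mathcal{N}_3$ is the quasi-circuit analog of $\mathcal{N}_2$, as seen by matching \eqref{eq:-N3-matroid} with the circuit list of $\mathcal{N}_2$. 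Writing $x^{(1)}_i=x_{2i-1}$ and $x^{(2)}_i=x_{2i}$ for $i\in[29]$, the transmission splits as $\boldsymbol{y}=(\boldsymbol{H}_{\ast}\boldsymbol{x}^{(1)},\boldsymbol{H}_{\ast}\boldsymbol{x}^{(2)})$, so the encoder applies the $\mathcal{I}_2$ code independently to the odd-indexed and even-indexed halves of the message vector, and in particular $\mathrm{rank}(\boldsymbol{H}_{\ast}^{\mathcal{I}_3})=2\,\mathrm{rank}(\boldsymbol{H}_{\ast})=8=\beta_{\text{MAIS}}(\mathcal{I}_3)$.

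Next, I would verify the decoding condition \eqref{eq:dec-cond} by mapping every user of $\mathcal{I}_3$ to the corresponding user of $\mathcal{I}_2$. For an odd-indexed user $u_{2i-1}$ only the first block of $\boldsymbol{y}$ carries information about odd messages, so the effective interfering set in that block is $B^{\mathrm{eff}}_{2i-1}=\{\lceil k/2\rceil : k\in B_{2i-1},\,k\text{ odd}\}$; a direct inspection of \eqref{I_3-B_i} against \eqref{I_1-B_i}--\eqref{B-[5:9]} shows $B^{\mathrm{eff}}_{2i-1}\subseteq B^{\mathcal{I}_2}_i$ in every case (and symmetrically for each $u_{2i}$ in the second block). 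By Proposition~\ref{prop-thm2-1} together with the monotonicity of \eqref{eq:dec-cond} under shrinking the interfering set, each such decoding succeeds over any $\mathbb{F}_q$ with $\mathrm{char}(\mathbb{F}_q)\neq 3$.

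The critical point in this reduction is the rank condition invoked by users $u_9,u_{11},u_{13},u_{15}$, whose effective odd-copy interfering sets are $\{6,7,8\},\{5,7,8\},\{5,6,8\},\{5,6,7\}$, respectively; these are precisely the conditions forcing $\{5,6,7,8\}$ to be an independent set of $\boldsymbol{H}_{\ast}$, equivalently $\mathrm{rank}(\boldsymbol{H}_{\ast}^{\mathcal{I}_3,\{9:16\}})=8$, i.e.\ that $\{9:16\}$ be a basis of $\mathcal{N}_3$, which by Proposition~\ref{prop-thm2-1} holds iff $\mathrm{char}(\mathbb{F}_q)\neq 3$. The decoding conditions for the remaining users $u_{25},u_{26},u_{51},\ldots,u_{58}$, which correspond to the quasi-circuit sets $N_5,\ldots,N_8$ of $\mathcal{N}_3$, reduce in the same way to containment relations such as $\mathrm{col}(\boldsymbol{H}_{\ast}^{\{9\}})\subseteq\mathrm{col}(\boldsymbol{H}_{\ast}^{\{4,5\}})$ that are already built into $\boldsymbol{H}_{\ast}$ by the proof of Proposition~\ref{prop-thm2-2}.

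The main obstacle is the volume of bookkeeping required to check all $58$ users individually; however, the Kronecker product structure collapses every verification to a rank identity that is either already established for $\boldsymbol{H}_{\ast}$ in Proposition~\ref{prop-thm2-1} or an immediate consequence of it via the containment $B^{\mathrm{eff}}\subseteq B^{\mathcal{I}_2}$. I would therefore present the construction and the odd/even splitting in the main text and relegate the exhaustive user-by-user verification to an appendix analogous to the one used for $\boldsymbol{H}_{\ast}$ in Propositions~\ref{prop-thm1-1} and \ref{prop-thm2-1}.
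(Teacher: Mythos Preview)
Your proposal is correct and coincides with the paper's approach: the matrix in Figure~\ref{fig:2} is precisely $\boldsymbol{H}_{\ast}\otimes\boldsymbol{I}_2$, and the paper likewise singles out the submatrix $\boldsymbol{H}_{\ast}^{\{9:16\}}$ as the only place where $\mathrm{char}(\mathbb{F}_q)\neq 3$ matters. Your odd/even splitting and the reduction $B^{\mathrm{eff}}_{2i-1}\subseteq B^{\mathcal{I}_2}_i$ give a cleaner, more structural verification than the paper's bare ``it can be verified,'' but the underlying construction and the critical rank condition are identical.
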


\begin{proof}
It can be verified that the encoding matrix $\boldsymbol{H}_{\ast}\in \mathbb{F}_{q}^{8\times 58}$, shown in Figure \ref{fig:2}, will satisfy all users in $\mathcal{I}_{3}$, where the field $\mathbb{F}_{q}$ has any characteristic other than characteristic three. The key part of $\boldsymbol{H}_{\ast}$ is its submatrix $\boldsymbol{H}_{\ast}^{\{9:16\}}$, which is as follows
\begin{align}
    \boldsymbol{H}_{\ast}^{\{9:16\}}=
    \begin{bmatrix}
    \boldsymbol{I}_{2} & \boldsymbol{I}_{2} & \boldsymbol{I}_{2} & \boldsymbol{0}_{2} \\
    \boldsymbol{I}_{2} & \boldsymbol{I}_{2} & \boldsymbol{0}_{2} & \boldsymbol{I}_{2} \\
    \boldsymbol{I}_{2} & \boldsymbol{0}_{2} & \boldsymbol{I}_{2} & \boldsymbol{I}_{2} \\
    \boldsymbol{0}_{2} & \boldsymbol{I}_{2} & \boldsymbol{I}_{2} & \boldsymbol{I}_{2} 
    \end{bmatrix},
    \nonumber
\end{align}
It can be seen that $\mathrm{rank}(\boldsymbol{H}_{\ast}^{\{9:16\}})=8>7$ is achievable over fields with any characteristic other than characteristic three. This satisfies the condition in \eqref{eq:dec-cond} for users $u_{i}, i\in \{9,11,13,15\}$.
\end{proof}

\begin{figure*}
\centering
\begin{blockarray}{cccccccccccccccccccccccccccccc}
              & 
              \textcolor{blue}{{\tiny 1,2}}\hspace{-1.7ex}   & 
              \textcolor{blue}{{\tiny 3,4}}\hspace{-1.7ex}   & 
              \textcolor{blue}{{\tiny 5,6}}\hspace{-1.7ex}   & 
              \textcolor{blue}{{\tiny 7,8}}\hspace{   0ex}   & 
              \textcolor{blue}{{\tiny 9,10}}\hspace{-1.7ex}  & 
              \textcolor{blue}{{\tiny 11,12}}\hspace{-1.7ex} &
              \textcolor{blue}{{\tiny 13,14}}\hspace{-1.7ex} &
              \textcolor{blue}{{\tiny 15,16}}\hspace{   0ex} & 
              \textcolor{blue}{{\tiny 17,18}}\hspace{   0ex} & 
              \textcolor{blue}{{\tiny 19,20}}\hspace{-1.7ex} &
              \textcolor{blue}{{\tiny 21,22}}\hspace{-1.7ex} &
              \textcolor{blue}{{\tiny 23,24}}\hspace{-1.7ex} &
              \textcolor{blue}{{\tiny 25,26}}\hspace{   0ex} & 
              \textcolor{blue}{{\tiny 27,28}}\hspace{-1.7ex} &
              \textcolor{blue}{{\tiny 29,30}}\hspace{-1.7ex} &
              \textcolor{blue}{{\tiny 31,32}}\hspace{-1.7ex} &
              \textcolor{blue}{{\tiny 33,34}}\hspace{   0ex} & 
              \textcolor{blue}{{\tiny 35,36}}\hspace{-1.7ex} &
              \textcolor{blue}{{\tiny 37,38}}\hspace{-1.7ex} &
              \textcolor{blue}{{\tiny 39,40}}\hspace{-1.7ex} &
              \textcolor{blue}{{\tiny 41,42}}\hspace{   0ex} & 
              \textcolor{blue}{{\tiny 43,44}}\hspace{-1.7ex} &
              \textcolor{blue}{{\tiny 45,46}}\hspace{-1.7ex} &
              \textcolor{blue}{{\tiny 47,48}}\hspace{-1.7ex} &
              \textcolor{blue}{{\tiny 49,50}}\hspace{   0ex} & 
              \textcolor{blue}{{\tiny 51,52}}\hspace{-1.7ex} &
              \textcolor{blue}{{\tiny 53,54}}\hspace{-1.7ex} &
              \textcolor{blue}{{\tiny 55,56}}\hspace{-1.7ex} &
              \textcolor{blue}{{\tiny 57,58}}\hspace{   0ex}   
            \\
\begin{block}{c[cccc|cccc|c|cccc|cccc|cccc|cccc|cccc]}
               \textcolor{blue}{{\tiny 1,2}} & 
               $\boldsymbol{I}_{2}$\hspace{-1.7ex} & $\boldsymbol{0}_{2}$\hspace{-1.7ex} & $\boldsymbol{0}_{2}$\hspace{-1.7ex} & $\boldsymbol{0}_{2}$ & 
               $\boldsymbol{I}_{2}$\hspace{-1.7ex} & $\boldsymbol{I}_{2}$\hspace{-1.7ex} & $\boldsymbol{I}_{2}$\hspace{-1.7ex} & $\boldsymbol{0}_{2}$ & 
               $\boldsymbol{I}_{2}$ & 
               $\boldsymbol{I}_{2}$\hspace{-1.7ex} & $\boldsymbol{0}_{2}$\hspace{-1.7ex} & $\boldsymbol{0}_{2}$\hspace{-1.7ex} & $\boldsymbol{0}_{2}$ & 
               $\boldsymbol{I}_{2}$\hspace{-1.7ex} & $\boldsymbol{0}_{2}$\hspace{-1.7ex} & $\boldsymbol{0}_{2}$\hspace{-1.7ex} & $\boldsymbol{0}_{2}$ &
               $\boldsymbol{I}_{2}$\hspace{-1.7ex} & $\boldsymbol{0}_{2}$\hspace{-1.7ex} & $\boldsymbol{0}_{2}$\hspace{-1.7ex} & $\boldsymbol{0}_{2}$ &
               $\boldsymbol{I}_{2}$\hspace{-1.7ex} & $\boldsymbol{0}_{2}$\hspace{-1.7ex} & $\boldsymbol{0}_{2}$\hspace{-1.7ex} & $\boldsymbol{0}_{2}$ &
               $\boldsymbol{0}_{2}$\hspace{-1.7ex} & $\boldsymbol{I}_{2}$\hspace{-1.7ex} & $\boldsymbol{0}_{2}$\hspace{-1.7ex} & $\boldsymbol{0}_{2}$ 
               \\
               \textcolor{blue}{{\tiny 3,4}} & 
               $\boldsymbol{0}_{2}$\hspace{-1.7ex} & $\boldsymbol{I}_{2}$\hspace{-1.7ex} & $\boldsymbol{0}_{2}$\hspace{-1.7ex} & $\boldsymbol{0}_{2}$ & 
               $\boldsymbol{I}_{2}$\hspace{-1.7ex} & $\boldsymbol{I}_{2}$\hspace{-1.7ex} & $\boldsymbol{0}_{2}$\hspace{-1.7ex} & $\boldsymbol{I}_{2}$ &
               $\boldsymbol{I}_{2}$ &
               $\boldsymbol{0}_{2}$\hspace{-1.7ex} & $\boldsymbol{I}_{2}$\hspace{-1.7ex} & $\boldsymbol{0}_{2}$\hspace{-1.7ex} & $\boldsymbol{0}_{2}$ &
               $\boldsymbol{0}_{2}$\hspace{-1.7ex} & $\boldsymbol{I}_{2}$\hspace{-1.7ex} & $\boldsymbol{0}_{2}$\hspace{-1.7ex} & $\boldsymbol{0}_{2}$ &
               $\boldsymbol{0}_{2}$\hspace{-1.7ex} & $\boldsymbol{I}_{2}$\hspace{-1.7ex} & $\boldsymbol{0}_{2}$\hspace{-1.7ex} & $\boldsymbol{0}_{2}$ &
               $\boldsymbol{0}_{2}$\hspace{-1.7ex} & $\boldsymbol{I}_{2}$\hspace{-1.7ex} & $\boldsymbol{0}_{2}$\hspace{-1.7ex} & $\boldsymbol{0}_{2}$ &
               $\boldsymbol{I}_{2}$\hspace{-1.7ex} & $\boldsymbol{0}_{2}$\hspace{-1.7ex} & $\boldsymbol{0}_{2}$\hspace{-1.7ex} & $\boldsymbol{0}_{2}$ 
               \\
               \textcolor{blue}{{\tiny 5,6}} & 
               $\boldsymbol{0}_{2}$\hspace{-1.7ex} & $\boldsymbol{0}_{2}$\hspace{-1.7ex} & $\boldsymbol{I}_{2}$\hspace{-1.7ex} & $\boldsymbol{0}_{2}$ & 
               $\boldsymbol{I}_{2}$\hspace{-1.7ex} & $\boldsymbol{0}_{2}$\hspace{-1.7ex} & $\boldsymbol{I}_{2}$\hspace{-1.7ex} & $\boldsymbol{I}_{2}$ &
               $\boldsymbol{I}_{2}$ &
               $\boldsymbol{0}_{2}$\hspace{-1.7ex} & $\boldsymbol{0}_{2}$\hspace{-1.7ex} & $\boldsymbol{I}_{2}$\hspace{-1.7ex} & $\boldsymbol{0}_{2}$  &
               $\boldsymbol{0}_{2}$\hspace{-1.7ex} & $\boldsymbol{0}_{2}$\hspace{-1.7ex} & $\boldsymbol{I}_{2}$\hspace{-1.7ex} & $\boldsymbol{0}_{2}$  &
               $\boldsymbol{0}_{2}$\hspace{-1.7ex} & $\boldsymbol{0}_{2}$\hspace{-1.7ex} & $\boldsymbol{I}_{2}$\hspace{-1.7ex} & $\boldsymbol{0}_{2}$ &
               $\boldsymbol{0}_{2}$\hspace{-1.7ex} & $\boldsymbol{0}_{2}$\hspace{-1.7ex} & $\boldsymbol{I}_{2}$\hspace{-1.7ex} & $\boldsymbol{0}_{2}$ &
               $\boldsymbol{0}_{2}$\hspace{-1.7ex} & $\boldsymbol{0}_{2}$\hspace{-1.7ex} & $\boldsymbol{0}_{2}$\hspace{-1.7ex} & $\boldsymbol{I}_{2}$ 
               \\
               \textcolor{blue}{{\tiny 7,8}} & 
               $\boldsymbol{0}_{2}$\hspace{-1.7ex} & $\boldsymbol{0}_{2}$\hspace{-1.7ex} & $\boldsymbol{0}_{2}$\hspace{-1.7ex} & $\boldsymbol{I}_{2}$ & 
               $\boldsymbol{0}_{2}$\hspace{-1.7ex} & $\boldsymbol{I}_{2}$\hspace{-1.7ex} & $\boldsymbol{I}_{2}$\hspace{-1.7ex} & $\boldsymbol{I}_{2}$ & 
               $\boldsymbol{I}_{2}$ &
               $\boldsymbol{0}_{2}$\hspace{-1.7ex} & $\boldsymbol{0}_{2}$\hspace{-1.7ex} & $\boldsymbol{0}_{2}$\hspace{-1.7ex} & $\boldsymbol{I}_{2}$ &
               $\boldsymbol{0}_{2}$\hspace{-1.7ex} & $\boldsymbol{0}_{2}$\hspace{-1.7ex} & $\boldsymbol{0}_{2}$\hspace{-1.7ex} & $\boldsymbol{I}_{2}$ &
               $\boldsymbol{0}_{2}$\hspace{-1.7ex} & $\boldsymbol{0}_{2}$\hspace{-1.7ex} & $\boldsymbol{0}_{2}$\hspace{-1.7ex} & $\boldsymbol{I}_{2}$ &
               $\boldsymbol{0}_{2}$\hspace{-1.7ex} & $\boldsymbol{0}_{2}$\hspace{-1.7ex} & $\boldsymbol{0}_{2}$\hspace{-1.7ex} & $\boldsymbol{I}_{2}$ &
               $\boldsymbol{0}_{2}$\hspace{-1.7ex} & $\boldsymbol{0}_{2}$\hspace{-1.7ex} & $\boldsymbol{I}_{2}$\hspace{-1.7ex} & $\boldsymbol{0}_{2}$ 
               \\
             \end{block}
\end{blockarray}
\caption{$\boldsymbol{H}_{\ast}\in \mathbb{F}_{q}^{8\times 58}$: If $\mathbb{F}_{q}$ does have any characteristic other than characteristic three (such as $GF(2))$, then $\boldsymbol{H}_{\ast}$ is an encoding matrix for index coding instance $\mathcal{I}_{3}$.}
\label{fig:2}
\end{figure*}

\begin{prop} \label{prop-thm3-2}
Matrix $\boldsymbol{H}\in \mathbb{F}_{q}^{8t\times 58t}$ is an encoding matrix for index coding instance $\mathcal{I}_{3}$ only if submatrix $\boldsymbol{H}^{[18]}$ is a linear representation of matroid instance $\mathcal{N}_{3}$.
% Linear coding achieves the MAIS bound of $\mathcal{I}_{3}$ over the fields with any characteristic other than characteristic three.
\end{prop}
\begin{proof}
Refer to Appendix \ref{app:proof-Prop-thm3-2}
\end{proof}

\subsection{An Optimal Nonlinear Code for $\mathcal{I}_{3}$ over Fields with Characteristic Three} \label{sub:nonlinear-index-coding-1-3}
\begin{defn}[Nonlinear Function $g(\cdot)$]
Let $x_{i},x_{j}, x_{l}, x_{v}\in \mathbb{F}_{q}=GF(3)$. Now, the nonlinear function $g(\cdot): \mathbb{F}_{q}^{4}\rightarrow \mathbb{F}_{q}$ is defined as follows:
\begin{align}
    g(x_{i},x_{j},x_{l},x_{v})=\ & 2x_{i}x_{i}(x_{j}+ x_{l} + x_{v})\ + 
    \nonumber
    \\
    & 2x_{j}x_{j}(x_{i}+ x_{l}+ x_{v})\ + 
    \nonumber
    \\
    & 2x_{l}x_{l}(x_{i}+ x_{j}+ x_{v})\ +
    \nonumber
    \\
    & 2x_{v}x_{v}(x_{i}+ x_{j}+ x_{l})\ +
    \nonumber
    \\
    & 2(x_{i}x_{j} + x_{i}x_{l} + x_{i}x_{v} + 
    \nonumber
    \\
    & x_{j}x_{l} + x_{j}x_{v} + x_{l}x_{v}) +
    \nonumber
    \\
    & x_{i}x_{j}x_{l} + x_{i}x_{j}x_{v} +x_{i}x_{l}x_{v} + x_{j}x_{l}x_{v}.
    \label{g-fucntion}
\end{align}
\end{defn}

% \begin{lem}
% Using function $g(x_{i},x_{j},x_{l})$, and the three terms $x_{i}+ x_{j}+ x_{l}$, $x_{i}+ x_{j}+ x_{v}$, $x_{i}+ x_{l}+ x_{v}$ and $x_{j}+ x_{l}+ x_{v}$, we can achieve the value of each $x_{i}, x_{j}, x_{l}$ and $x_{v}$.
% \end{lem}

% \begin{proof}
% It can be verified that
% \begin{align}
%     &(x_{i}+ x_{j}+ x_{l})(x_{i}+ x_{j}+ x_{v})(x_{i}+ x_{l}+ x_{v}) +
%     \nonumber
%     \\
%     &(x_{i}+ x_{j}+ x_{l})(x_{i}+ x_{j}+ x_{v})(x_{j}+ x_{l}+ x_{v}) +
%     \nonumber
%     \\
%     &(x_{i}+ x_{j}+ x_{l})(x_{i}+ x_{l}+ x_{v})(x_{j}+ x_{l}+ x_{v}) +
%     \nonumber
%     \\
%     &(x_{i}+ x_{j}+ x_{v})(x_{i}+ x_{l}+ x_{v})(x_{j}+ x_{l}+ x_{v}) =
%     \nonumber
%     \\
%     &(x_{i}x_{i}x_{i}+ x_{j}x_{j}x_{j}+ x_{l}x_{l}x_{l} + x_{v}x_{v}x_{v})+ 2g(x_{i},x_{j},x_{l},x_{v})=
%     \nonumber
%     \\
%     &(x_{i}+ x_{j}+ x_{l}+ x_{v}) + 2g(x_{i},x_{j},x_{l},x_{v}),
%     \label{eq:g-fuction-lem-1}
% \end{align}
% where, \eqref{eq:g-fuction-lem-1} follows from the fact that over for any variable $x\in GF(3)$, we have $x\times x\times x= x$. Now, since number two is invertible in $GF(3)$ ($2^{-1}=2$), using equation \eqref{eq:g-fuction-lem-1} and each terms $x_{i}+ x_{j}+ x_{l}$, $x_{i}+ x_{j}+ x_{v}$, $x_{i}+ x_{l}+ x_{v}$, and $x_{j}+ x_{l}+ x_{v}$, we can achieve $x_{v}$, $x_{l}$, $x_{j}$, and $x_{i}$, respectively.
% \end{proof}

\begin{lem} \label{lem:g-function}
Let $x_{i},x_{j}, x_{l}, x_{v}, x_{w}\in GF(3)$. Then, using the value of $x_{w}$ and the following five combinations:
\begin{align}
    &g(x_{i},x_{j},x_{l},x_{w}) + g(x_{i},x_{j},x_{v},x_{w}) +
    \nonumber
    \\
    &g(x_{i},x_{l},x_{v},x_{w}) + g(x_{j},x_{l},x_{v},x_{w}),
    \nonumber
    \\
    &x_{i}+ x_{j}+ x_{l},
    \nonumber
    \\
    &x_{i}+ x_{j}+ x_{v},
    \nonumber
    \\
    &x_{i}+ x_{l}+ x_{v},
    \nonumber
    \\
    &x_{j}+ x_{l}+ x_{v},
    \nonumber
\end{align}
we can find the value of each $x_{i}, x_{j}, x_{l}$, and $x_{v}$.
\end{lem}

\begin{proof}
Refer to Appendix \ref{app:proof-lem-g-function}.
\end{proof}

\begin{lem} \label{lem2:g-function}
Using the value of $x_{i}, x_{j}, x_{l}$ and $x_{v}+x_{w}$, we can find the value of $g(x_{i},x_{j},x_{v},x_{w})+2g(x_{i},x_{l},x_{v},x_{w})$.
\end{lem}
\begin{proof}
Refer to Appendix \ref{app:proof-lem-g-function}.
\end{proof}

\begin{prop} \label{prop-thm3-3}
There exists a scalar nonlinear code over the fields with characteristic three, which can achieve the broadcast rate of $\mathcal{I}_{3}$.
\end{prop}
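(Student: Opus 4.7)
The plan is to construct an explicit scalar code $\mathcal{C}_{\mathcal{I}_3}=(\phi_{\mathcal{I}_3},\{\psi_{\mathcal{I}_3}^i\})$ over $\mathbb{F}_q=GF(3)$ that produces $r=8$ coded symbols $z_1,\dots,z_8$, and then verify decodability for each of the $58$ users. The construction starts from the linear encoder $\boldsymbol{H}_{\ast}$ of Figure~\ref{fig:2}, which already achieves the MAIS bound $\beta_{\mathrm{MAIS}}(\mathcal{I}_3)=8$ over every field whose characteristic is not three; the sole obstruction over $GF(3)$ is that the block $\boldsymbol{H}_{\ast}^{\{9:16\}}$ has rank only $7$, since its four ``odd'' block rows sum to zero and similarly the four ``even'' block rows. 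The idea is to repair this deficit by adding one nonlinear correction term built from the function $g$ of~\eqref{g-fucntion} to each $z_k$.

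Concretely, I would set $z_k=[\boldsymbol{H}_{\ast}\boldsymbol{x}]_k+\gamma_k$, where for $k$ odd (respectively even) the summand $\gamma_k$ is $g$ evaluated at a suitable triple drawn from $\{x_9,x_{11},x_{13},x_{15}\}$ (resp.\ $\{x_{10},x_{12},x_{14},x_{16}\}$) together with $x_{17}$ (resp.\ $x_{18}$) as the fourth argument. The triples should be chosen so that as $k$ ranges over $\{1,3,5,7\}$ the ``missing'' element ranges over all of $\{9,11,13,15\}$, and symmetrically for $k\in\{2,4,6,8\}$ over $\{10,12,14,16\}$. Under this choice, after removing every contribution already known from the side information, each of the critical users $u_9,u_{11},u_{13},u_{15}$ is presented with exactly the four $g$-values of the form $g(\cdot,\cdot,\cdot,x_{17})$ and four three-way sums required to invoke Lemma~\ref{lem:g-function} with $x_w=x_{17}$; the lemma then recovers all four of $x_9,x_{11},x_{13},x_{15}$, in particular the requested message. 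An identical argument, based on $x_{18}$, settles $u_{10},u_{12},u_{14},u_{16}$.

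Verification of the remaining $50$ users splits into two groups. The first comprises users $u_i$ for which $\{17,18\}\cap B_i=\emptyset$; for these, both $x_{17}$ and $x_{18}$ lie in $S_i$, so every $\gamma_k$ can be computed and subtracted and decoding reduces to the ordinary linear decoder of $\boldsymbol{H}_{\ast}$, which succeeds for these users because their decodability never depended on invertibility of the $[9{:}16]$ block. This group includes $u_1,\dots,u_8$, most of $u_{19},\dots,u_{50}$, and (by direct inspection) $u_{17},u_{18}$. The second group consists of the ``bridging'' users $u_i$ with $i\in\{25,26,31,32,37,38,43,44\}\cup[51{:}58]$, whose interfering sets contain the pair $\{17,18\}$, so that $x_{17}$ and $x_{18}$ are not individually known. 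The design must ensure that the $g$-residues surviving the linear combination that recovers $x_i$ involve only the sum $x_{17}+x_{18}$, which is itself recoverable from $z_1+z_2$ (or an analogous pair) after linear subtraction; Lemma~\ref{lem2:g-function} then computes each such residue from quantities already known.

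The hard part will be pinning down the eight specific triples inside $\gamma_1,\dots,\gamma_8$ so that three constraints hold simultaneously: (i) the triples matched to $\{z_1,z_3,z_5,z_7\}$ cover all four ``missing-element'' patterns needed to apply Lemma~\ref{lem:g-function} to $u_9,u_{11},u_{13},u_{15}$; (ii) for each user in $u_1,\dots,u_8$, whose side information contains only six of the eight variables $x_9,\dots,x_{16}$, the particular linear combination of $z_k$'s that extracts $x_i$ produces a nonlinear residue whose arguments all lie in $S_i$; and (iii) for each bridging user in $\{25,\dots,58\}$ the residue collapses to a combination handled by Lemma~\ref{lem2:g-function}. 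Once these triples are fixed, checking decodability for each of the $58$ users becomes routine substitution, and the resulting code achieves $\beta(\mathcal{C}_{\mathcal{I}_3})=8=\beta_{\mathrm{MAIS}}(\mathcal{I}_3)$ over $GF(3)$, proving the proposition.
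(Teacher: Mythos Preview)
Your high-level strategy---perturb the linear encoder $\boldsymbol{H}_{\ast}$ of Figure~\ref{fig:2} by nonlinear corrections built from the function $g$, then invoke Lemmas~\ref{lem:g-function} and~\ref{lem2:g-function}---is exactly the paper's approach. But the specific placement of the corrections you propose does not work, and the gap is not merely a matter of ``pinning down the eight triples.''

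The paper places $g$-terms \emph{only} in $z_2,z_4,z_6,z_8$, and all four take their arguments from $\{x_9,x_{11},x_{13},x_{15}\}$ together with $x_{17}$; the odd-indexed $z_1,z_3,z_5,z_7$ remain purely linear. This separation is essential. For the critical user $u_9$ (with $B_9=[9{:}16]\setminus\{9\}$), the linear odd $z_k$'s supply the four clean three-way sums $x_9{+}x_{11}{+}x_{13}$, etc., while the even $z_k$'s supply the four $g$-values (bundled with linear terms in $x_{10},x_{12},x_{14},x_{16}$ whose total vanishes in characteristic three), so that Lemma~\ref{lem:g-function} applies. In your construction, adding a $g$-term to each odd $z_k$ entangles the three-way sum with a $g$-value in a single observed quantity; $u_9$ then receives only four combined equations in $x_9,x_{11},x_{13},x_{15}$ rather than the eight separate terms that Lemma~\ref{lem:g-function} requires, and the lemma cannot be invoked as you claim.

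Two further points. First, your ``symmetric'' treatment of $u_{10},u_{12},u_{14},u_{16}$ misreads the instance: these users have tiny interfering sets (e.g.\ $B_{10}=\{9,12,14\}$), not the seven-element sets that $u_9,u_{11},u_{13},u_{15}$ have; in the paper they decode by simple linear substitution and never need the nonlinear machinery. Second, your plan for the bridging users is off: the relevant unknown pair is not $x_{17}+x_{18}$ but a sum like $x_{15}+x_{17}$ (for $u_{25},u_{26}$) or $x_9+x_{17}$ (for $u_{31},u_{32}$), which in the paper is extracted from a \emph{linear} odd $z_k$ and then fed into Lemma~\ref{lem2:g-function}. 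With $g$-terms polluting the odd $z_k$'s, these sums are no longer recoverable, and the bridging users' decoders break as well.
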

\begin{proof}
First, it can be seen that set $[8]$ is a MAIS set of $\mathcal{I}_{3}$. So, $\beta_{\text{MAIS}}(\mathcal{I}_{3})=8$. Now, we prove that $\beta(\mathcal{C}_{\mathcal{I}_{3}})=8$ for a scalar nonlinear index code $\mathcal{C}_{\mathcal{I}_{3}}=(\phi_{\mathcal{I}_{3}}, \{\psi_{\mathcal{I}_{3}}^i\})$, where the encoder and decoder do as below. 
\\
First, function $\phi_{\mathcal{I}_{3}}$ encodes messages $x_{i}, i\in [58]$ into eight coded messages $z_{k}, k\in [8]$, as follows
\begin{align}
    \{z_{j}, j\in [8]\}=\phi_{\mathcal{I}_{3}}(\{x_{i}, i\in [58]\}),
\end{align}
where
%\begin{numcases}{}
\begin{align}
   z_{1}&= x_{1}+x_{9}+x_{11}+x_{13}+x_{17}+x_{19}+x_{27}+x_{35}+x_{43}+x_{53},
   \nonumber %\label{eq:1proof:01}
    \\
    \nonumber
    \\
   z_{2}&= x_{2}+x_{10}+x_{12}+x_{14}+x_{18}+x_{20}+x_{28}+x_{36}+x_{44}+x_{54}
   \nonumber
    \\
   & \ \ +\ g(x_{9} ,x_{11}, x_{13}, x_{17}), 
   \nonumber
    %\label{eq:1proof:02} 
    \\
    \nonumber
    \\
   z_{3}&= x_{3}+x_{9}+x_{11}+x_{15}+x_{17}+x_{21}+x_{29}+x_{37}+x_{45}+x_{51},
   \nonumber
    %\label{eq:1proof:03} 
    \\
    \nonumber
    \\
   z_{4}&= x_{4}+x_{10}+x_{12}+x_{16}+x_{18}+x_{22}+x_{30}+x_{38}+x_{46}+x_{52}
   \nonumber
    \\
   &\ \ +\ g(x_{9} ,x_{11}, x_{15}, x_{17}),
   \nonumber
    %\label{eq:1proof:04} 
    \\
    \nonumber
    \\
   z_{5}&= x_{5}+x_{9}+x_{13}+x_{15}+x_{17}+x_{23}+x_{31}+x_{39}+x_{47}+x_{57},
   \nonumber
    %\label{eq:1proof:05} 
    \\
    \nonumber
    \\
   z_{6}&= x_{6}+x_{10}+x_{14}+x_{16}+x_{18}+x_{24}+x_{32}+x_{40}+x_{48}+x_{58}
   \nonumber
    \\
   & \ \ +\ g(x_{9} ,x_{13}, x_{15}, x_{17}),
   \nonumber
   \\
   \nonumber
    \\
    z_{7}&= x_{7}+x_{11}+x_{13}+x_{15}+x_{17}+x_{25}+x_{33}+x_{41}+x_{49}+x_{55},
   \nonumber
    %\label{eq:1proof:05} 
    \\
    \nonumber
    \\
   z_{8}&= x_{8}+x_{12}+x_{14}+x_{16}+x_{18}+x_{26}+x_{34}+x_{42}+x_{50}+x_{56}
   \nonumber
    \\
   & \ \ +\ g(x_{11} ,x_{13}, x_{15}, x_{17}).
   \nonumber
    %\label{eq:1proof:06}
\end{align}
%5\end{numcases}
Now, we show how the $i$-th decoder $\psi_{\mathcal{I}_{3}}^i$ recovers the requested message $x_{i}$ using the coded messages $z_{k}, k\in [8]$ along with the messages in its side information.
\begin{itemize}
    \item Each user $u_{i}, i\in [8]$ can directly decode its requested message $x_{i}$, from the coded message $z_{i}$.
    
    \item User $u_{9}$ decodes $(x_{9}+ x_{11}+ x_{13})$, $(x_{9}+ x_{11}+ x_{15})$, $(x_{9}+ x_{13}+ x_{15})$, and $(x_{11}+ x_{13}+ x_{15})$, respectively, from $z_{1}, z_{3}, z_{5}$ and $z_{7}$. It also adds $z_{2}+ z_{4}+ z_{6}+ z_{8}$ to achieve $g(x_{9}, x_{11}, x_{13}, x_{17})+$ $g(x_{9}, x_{11}, x_{15}, x_{17})+$ $g(x_{9}, x_{13}, x_{15}, x_{17})+$ $g(x_{11}, x_{13}, x_{15}, x_{17})$. Now, according to Lemma \ref{lem:g-function}, by having $x_{17}$, it is able to recover its requested message $x_{9}$.
    
    \item User $u_{10}$ first decodes $x_{9}$ and $x_{12}+x_{14}$, respectively, from $z_{1}$ and $z_{8}$. Then, it can decode its requested message $x_{10}$ from $z_{2}$.
    
    \item User $u_{11}$ decodes $(x_{9}+ x_{11}+ x_{13})$, $(x_{9}+ x_{11}+ x_{15})$, $(x_{9}+ x_{13}+ x_{15})$, and $(x_{11}+ x_{13}+ x_{15})$, respectively, from $z_{1}, z_{3}, z_{5}$ and $z_{7}$. It also adds $z_{2}+ z_{4}+ z_{6}+ z_{8}$ to achieve $g(x_{9}, x_{11}, x_{13}, x_{17})+$ $g(x_{9}, x_{11}, x_{15}, x_{17})+$ $g(x_{9}, x_{13}, x_{15}, x_{17})+$ $g(x_{11}, x_{13}, x_{15}, x_{17})$. Now, according to Lemma \ref{lem:g-function}, by having $x_{17}$, it is able to recover its requested message $x_{11}$.
    
    \item User $u_{12}$ first decodes $x_{11}$ and $x_{14}+x_{16}$, respectively, from $z_{3}$ and $z_{6}$. Then, it can decode its requested message $x_{12}$ from $z_{4}$.
    
     \item User $u_{13}$ decodes $(x_{9}+ x_{11}+ x_{13})$, $(x_{9}+ x_{11}+ x_{15})$, $(x_{9}+ x_{13}+ x_{15})$, and $(x_{11}+ x_{13}+ x_{15})$, respectively, from $z_{1}, z_{3}, z_{5}$ and $z_{7}$. It also adds $z_{2}+ z_{4}+ z_{6}+ z_{8}$ to achieve $g(x_{9}, x_{11}, x_{13}, x_{17})+$ $g(x_{9}, x_{11}, x_{15}, x_{17})+$ $g(x_{9}, x_{13}, x_{15}, x_{17})+$ $g(x_{11}, x_{13}, x_{15}, x_{17})$. Now, according to Lemma \ref{lem:g-function}, by having $x_{17}$, it is able to recover its requested message $x_{13}$.
    
    \item User $u_{14}$ first decodes $x_{13}$ and $x_{10}+x_{16}$, respectively, from $z_{5}$ and $z_{4}$. Then, it can decode its requested message $x_{14}$ from $z_{6}$.
    
    \item User $u_{15}$ decodes $(x_{9}+ x_{11}+ x_{13})$, $(x_{9}+ x_{11}+ x_{15})$, $(x_{9}+ x_{13}+ x_{15})$, and $(x_{11}+ x_{13}+ x_{15})$, respectively, from $z_{1}, z_{3}, z_{5}$ and $z_{7}$. It also adds $z_{2}+ z_{4}+ z_{6}+ z_{8}$ to achieve $g(x_{9}, x_{11}, x_{13}, x_{17})+$ $g(x_{9}, x_{11}, x_{15}, x_{17})+$ $g(x_{9}, x_{13}, x_{15}, x_{17})+$ $g(x_{11}, x_{13}, x_{15}, x_{17})$. Now, according to Lemma \ref{lem:g-function}, by having $x_{17}$, it is able to recover its requested message $x_{15}$.
    
    \item User $u_{16}$ first decodes $x_{15}$ and $x_{10}+x_{12}$, respectively, from $z_{7}$ and $z_{2}$. Then, it can decode its requested message $x_{16}$ from $z_{8}$.
    
    \item User $u_{17}$ can decode its desired message $x_{17}$ from $z_{1}$.
    
    \item User $u_{18}$ can decode its desired message $x_{18}$ from $z_{2}$.
    
    \item User $u_{19}$ first decodes $x_{9}$ from $z_{5}$. Then, it can decode $x_{19}$ from $z_{1}$.
    
    \item User $u_{20}$ first decodes $x_{9}$ from $z_{5}$. Then, it decodes $x_{10}$ from $z_{6}$. Finally, it can decode $x_{20}$ from $z_{2}$.
    
    \item User $u_{21}$ first decodes $x_{9}$ from $z_{1}$. Then, it can decode $x_{21}$ from $z_{3}$.
    
    \item User $u_{22}$ first decodes $x_{9}$ from $z_{1}$. Then, it decodes $x_{10}$ from $z_{2}$. Finally, it can decode $x_{22}$ from $z_{4}$.
    
    \item User $u_{23}$ first decodes $x_{9}$ from $z_{3}$. Then, it can decode $x_{23}$ from $z_{5}$.
    
    \item User $u_{24}$ first decodes $x_{9}$ from $z_{3}$. Then, it decodes $x_{10}$ from $z_{4}$. Finally, it can decode $x_{24}$ from $z_{6}$.
    
    \item User $u_{25}$ first decodes $x_{15}+x_{17}$ from $z_{3}$. Then, it can decode $x_{25}$ from $z_{7}$.
    
    \item User $u_{26}$ first decodes $x_{15}+x_{17}$ from $z_{3}$. Then, it adds $z_{6}$ and $2z_{8}$ to achieve $g(x_{9}+x_{13}+x_{15}+x_{17})+2g(x_{11}+x_{13}+x_{15}+x_{17})+2x_{26}$ (note, term $x_{16}+x_{18}$ is canceled out). Now, since $u_{26}$ knows $x_{9},x_{11},x_{13}$ and $x_{15}+x_{17}$, according to Lemma \ref{lem2:g-function}, it can achieve $g(x_{9}+x_{13}+x_{15}+x_{17})+2g(x_{11}+x_{13}+x_{15}+x_{17})$. Thus, it can decode its desired message $x_{26}$.
    
    \item User $u_{27}$ first decodes $x_{11}$ from $z_{7}$. Then, it can decode $x_{27}$ from $z_{1}$.
    
    \item User $u_{28}$ first decodes $x_{11}$ from $z_{7}$. Then, it decodes $x_{12}$ from $z_{8}$. Finally, it can decode $x_{28}$ from $z_{2}$.
    
    \item User $u_{29}$ first decodes $x_{11}$ from $z_{1}$. Then, it can decode $x_{29}$ from $z_{3}$.
    
    \item User $u_{30}$ first decodes $x_{11}$ from $z_{1}$. Then, it decodes $x_{12}$ from $z_{2}$. Finally, it can decode $x_{30}$ from $z_{4}$.
    
    \item User $u_{31}$ first decodes $x_{9}+x_{17}$ from $z_{1}$. Then, it can decode $x_{31}$ from $z_{5}$.
    
    \item User $u_{32}$ first decodes $x_{9}+x_{17}$ from $z_{1}$. Then, it adds $z_{4}$ and $2z_{6}$ to achieve $g(x_{9}+x_{11}+x_{15}+x_{17})+2g(x_{9}+x_{13}+x_{15}+x_{17})+2x_{32}$ (note, term $x_{10}+x_{18}$ is canceled out). Now, since $u_{32}$ knows $x_{11},x_{13},x_{15}$ and $x_{9}+x_{17}$, according to Lemma \ref{lem2:g-function}, it can achieve $g(x_{9}+x_{11}+x_{15}+x_{17})+2g(x_{9}+x_{13}+x_{15}+x_{17})$. Thus, it can decode its desired message $x_{32}$.
    
   \item User $u_{33}$ first decodes $x_{11}$ from $z_{3}$. Then, it can decode $x_{33}$ from $z_{7}$.
    
    \item User $u_{34}$ first decodes $x_{11}$ from $z_{3}$. Then, it decodes $x_{12}$ from $z_{4}$. Finally, it can decode $x_{34}$ from $z_{8}$.
    
    \item User $u_{35}$ first decodes $x_{13}$ from $z_{7}$. Then, it can decode $x_{35}$ from $z_{1}$.
    
    \item User $u_{36}$ first decodes $x_{13}$ from $z_{7}$. Then, it decodes $x_{14}$ from $z_{8}$. Finally, it can decode $x_{36}$ from $z_{2}$.
    
    \item User $u_{37}$ first decodes $x_{11}+x_{17}$ from $z_{7}$. Then, it can decode $x_{37}$ from $z_{3}$.
    
    \item User $u_{38}$ first decodes $x_{11}+x_{17}$ from $z_{7}$. Then, it adds $z_{2}$ and $2z_{4}$ to achieve $g(x_{9}+x_{11}+x_{13}+x_{17})+2g(x_{9}+x_{11}+x_{15}+x_{17})+2x_{38}$ (note, term $x_{12}+x_{18}$ is canceled out). Now, since $u_{38}$ knows $x_{9},x_{13},x_{15}$ and $x_{11}+x_{17}$, according to Lemma \ref{lem2:g-function}, it can achieve $g(x_{9}+x_{11}+x_{13}+x_{17})+2g(x_{9}+x_{11}+x_{15}+x_{17})$. Thus, it can decode its desired message $x_{38}$.
    
    \item User $u_{39}$ first decodes $x_{13}$ from $z_{1}$. Then, it can decode $x_{39}$ from $z_{5}$.
    
    \item User $u_{40}$ first decodes $x_{13}$ from $z_{1}$. Then, it decodes $x_{14}$ from $z_{2}$. Finally, it can decode $x_{40}$ from $z_{6}$.
    
    \item User $u_{41}$ first decodes $x_{13}$ from $z_{5}$. Then, it can decode $x_{41}$ from $z_{7}$.
    
    \item User $u_{42}$ first decodes $x_{13}$ from $z_{5}$. Then, it decodes $x_{14}$ from $z_{6}$. Finally, it can decode $x_{42}$ from $z_{8}$.
    
    \item User $u_{43}$ first decodes $x_{13}+x_{17}$ from $z_{5}$. Then, it can decode $x_{43}$ from $z_{1}$.
    
    \item User $u_{44}$ first decodes $x_{13}+x_{17}$ from $z_{5}$. Then, it adds $z_{6}$ and $2z_{2}$ to achieve $g(x_{9}+x_{13}+x_{15}+x_{17})+2g(x_{9}+x_{11}+x_{13}+x_{17})+2x_{44}$ (note, term $x_{14}+x_{18}$ is canceled out). Now, since $u_{44}$ knows $x_{9},x_{11},x_{15}$ and $x_{13}+x_{17}$, according to Lemma \ref{lem2:g-function}, it can achieve $g(x_{9}+x_{13}+x_{15}+x_{17})+2g(x_{9}+x_{11}+x_{13}+x_{17})$. Thus, it can decode its desired message $x_{44}$.
    
    \item User $u_{45}$ first decodes $x_{15}$ from $z_{7}$. Then, it can decode $x_{45}$ from $z_{3}$.
    
    \item User $u_{46}$ first decodes $x_{15}$ from $z_{7}$. Then, it decodes $x_{16}$ from $z_{8}$. Finally, it can decode $x_{46}$ from $z_{4}$.
    
    \item User $u_{47}$ first decodes $x_{15}$ from $z_{3}$. Then, it can decode $x_{47}$ from $z_{5}$.
    
    \item User $u_{48}$ first decodes $x_{15}$ from $z_{3}$. Then, it decodes $x_{16}$ from $z_{4}$. Finally, it can decode $x_{48}$ from $z_{6}$.
    
    \item User $u_{49}$ first decodes $x_{15}$ from $z_{5}$. Then, it can decode $x_{49}$ from $z_{7}$.
    
    \item User $u_{50}$ first decodes $x_{15}$ from $z_{5}$. Then, it decodes $x_{16}$ from $z_{6}$. Finally, it can decode $x_{50}$ from $z_{8}$.
    
    \item User $u_{51}$ first decodes $x_{9}+x_{17}$ from $z_{1}$. Then, it can decode $x_{51}$ from $z_{3}$.
    
    \item User $u_{52}$ first decodes $x_{9}+x_{17}$ from $z_{1}$. Then, it adds $z_{2}$ and $2z_{4}$ to achieve $g(x_{9}+x_{11}+x_{13}+x_{17})+2g(x_{9}+x_{11}+x_{15}+x_{17})+2x_{52}$ (note, term $x_{10}+x_{18}$ is canceled out). Now, since $u_{52}$ knows $x_{11},x_{13},x_{15}$ and $x_{9}+x_{17}$, according to Lemma \ref{lem2:g-function}, it can find $g(x_{9}+x_{11}+x_{13}+x_{17})+2g(x_{9}+x_{11}+x_{15}+x_{17})$. Thus, it can decode its desired message $x_{52}$.
    
    \item User $u_{53}$ first decodes $x_{11}+x_{17}$ from $z_{7}$. Then, it can decode $x_{53}$ from $z_{1}$.
    
    \item User $u_{54}$ first decodes $x_{11}+x_{17}$ from $z_{7}$. Then, it adds $z_{8}$ and $2z_{2}$ to achieve $g(x_{11}+x_{13}+x_{15}+x_{17})+2g(x_{9}+x_{11}+x_{13}+x_{17})+2x_{54}$ (note, term $x_{12}+x_{18}$ is canceled out). Now, since $u_{54}$ knows $x_{9},x_{13},x_{15}$ and $x_{11}+x_{17}$, according to Lemma \ref{lem2:g-function}, it can find $g(x_{11}+x_{13}+x_{15}+x_{17})+2g(x_{9}+x_{11}+x_{13}+x_{17})$. Thus, it can decode its desired message $x_{54}$.
    
    \item User $u_{55}$ first decodes $x_{13}+x_{17}$ from $z_{5}$. Then, it can decode $x_{55}$ from $z_{7}$.
    
    \item User $u_{56}$ first decodes $x_{13}+x_{17}$ from $z_{5}$. Then, it adds $z_{6}$ and $2z_{8}$ to achieve $g(x_{9}+x_{13}+x_{15}+x_{17})+2g(x_{11}+x_{13}+x_{15}+x_{17})+2x_{56}$ (note, term $x_{14}+x_{18}$ is canceled out). Now, since $u_{56}$ knows $x_{9},x_{11},x_{15}$ and $x_{13}+x_{17}$, according to Lemma \ref{lem2:g-function}, it can find $g(x_{9}+x_{13}+x_{15}+x_{17})+2g(x_{11}+x_{13}+x_{15}+x_{17})$. Thus, it can decode its desired message $x_{56}$.
    
    \item User $u_{57}$ first decodes $x_{15}+x_{17}$ from $z_{3}$. Then, it can decode $x_{57}$ from $z_{5}$.
    
    \item User $u_{58}$ first decodes $x_{15}+x_{17}$ from $z_{3}$. Then, it adds $z_{4}$ and $2z_{6}$ to achieve $g(x_{9}+x_{11}+x_{15}+x_{17})+2g(x_{9}+x_{13}+x_{15}+x_{17})+2x_{58}$ (note, term $x_{16}+x_{18}$ is canceled out). Now, since $u_{58}$ knows $x_{9},x_{11},x_{13}$ and $x_{15}+x_{17}$, according to Lemma \ref{lem2:g-function}, it can find $g(x_{9}+x_{11}+x_{15}+x_{17})+2g(x_{9}+x_{13}+x_{15}+x_{17})$. Thus, it can decode its desired message $x_{58}$.
\end{itemize}
\end{proof}

\section{Conclusion}
The suboptimality of linear coding rate for the general index coding problem is due to its dependency on the field size. This dependency has been illustrated through the two well-known matroid instances, namely the Fano and non-Fano matroids, which, in turn, limits its scope only to fields with characteristic two.
In this paper, this scope of dependency was extended to the fields with characteristic three by designing two index coding instances of size 29 such that for the first instance, linear coding is optimal only over the fields with characteristic three, while for the second instance, linear coding is optimal over fields with any characteristic other than characteristic three. For each instance, it was shown that the key constraints on the column space of its encoding matrix can be captured by a matroid with the ground set of size 9, for which the existence of its linear representation is dependent on the fields with characteristic three. Presenting the
proofs and discussions using these two relatively small matroids is helpful in
pointing out the key constraints causing the linear coding rate to
become dependent on the field size. Finally, we designed the third index coding instance of size 58 such that while linear coding cannot achieve its optimal rate over fields with characteristic three, there exists an optimal nonlinear code over fields with characteristic three. It was shown that connecting the first and third index coding instances in two specific ways, called no-way and two-way connections, will lead to two new index coding instances of size 87 and 91, for which linear coding is outperformed by nonlinear codes.
% matroid instances of size 9 such that the first instance has linear representation only over the fields with characteristic three, and the second instance is never linearly representable over the fields with characteristic three. Then, we constructed two index coding instances of size 29 such that the key constraints on 
% whose linear coding rate is dependent on the fields with characteristic three.
% Dependency of linear coding rate on the field size is the underlying reason for insufficiency of linear coding to achieve the optimal rate of the general index coding problem. This dependency has been illustrated through the two well-known matroid instances, namely the Fano and non-Fano matroids, limiting the scope of this dependency only over the fields with characteristic two. In fact, it was shown that while the Fano matroid is linearly representable only over the fields with characteristic two, the non-Fano matroid is not representable over the fields with characteristic two. In this paper, we extend this scope of dependency to the fields with characteristic three, by designing two new matroid instances such that
% first we design 
% In this paper, we designed two new index coding instances of size 87 and 91 for which linear coding is outperformed by nonlinear codes. 

\appendices

\section{Proof of Lemmas \ref{lem:MAIS1}-\ref{lem:col(9)}} \label{app:proof- Lemma1-5}

\begin{rem} \label{rem:dec-con}
It can be verified that the decoding condition in \eqref{eq:dec-cond} along with the properties of the $\mathrm{rank}$ function gives the following results.
\begin{align}
    &\mathrm{rank} (\boldsymbol{H}^{\{i\}\cup M_{}})= \mathrm{rank} ( \boldsymbol{H}^{M_{}}) + t,  \forall M_{}\subseteq B_{i}, \forall i\in [m],  \label{eq:rem:dec-con-1} 
    \\
    &\mathrm{rank} (\boldsymbol{H}^{\{i\}})= t, \ \ \ \ \ \ \ \ \ \ \ \ \ \ \ \ \ \ \forall i\in [m], \label{eq:rem:dec-con-2}
    \\
    &\mathrm{rank} (\boldsymbol{H}^{M_{1}})\leq \mathrm{rank} ( \boldsymbol{H}^{M_2}), \ \ \ \ \ \forall M_1 \subseteq M_2 \subseteq [m]. \label{eq:rem:dec-con-3}
\end{align}
\end{rem}

\subsection{Proof of Lemma \ref{lem:MAIS1}}
If $M$ is an acyclic set, then we can find a sequence of its elements $i_{1},\dots,i_{|M|}\in M$ such that $M_{j}\subseteq B_{i_{j}}, \forall j\in [|M|]$, where $M_{j}=\{i_{j+1},\dots,i_{|M|}\}, \forall j\in [|M|-1]$ and $M_{|M|}=\emptyset$. Note $M=\{i_{1}\}\cup M_1$ and $M_{j}=\{i_{j+1}\}\cup M_{j+1}, \forall j\in [|M|-1]$. By applying the condition in \eqref{eq:rem:dec-con-1} for each $i=i_{1},\dots,i_{|M|}$, we have
\begin{align}
    \mathrm{rank}(\boldsymbol{H}^{M=\{i_{1}\}\cup M_{1}})&= \mathrm{rank}(\boldsymbol{H}^{M_{1}=\{i_{2}\}\cup M_{2}})+t \nonumber\\
    &= \mathrm{rank}(\boldsymbol{H}^{M_{2}=\{i_{3}\}\cup M_{3}})+2t \nonumber \\
    &=\dots \nonumber \\
    &= |M|t \nonumber,
\end{align}
which means that $M$ is a basis set of $\boldsymbol{H}$.

\subsection{Proof of Lemma \ref{lem:min-cyc}}
First, note that for any $l\in M$, set $M\backslash\{l\}$ is an acyclic set. Then, according to Lemma \ref{lem:MAIS1}, 
\begin{equation} \label{eq:lem3}
    \mathrm{rank}(\boldsymbol{H}^{M\backslash\{l\}})=(|M|-1)t, \ \ \forall l\in M.
\end{equation}
So, having $\mathrm{rank}(\boldsymbol{H}^{M})=(|M|-1)t$ requires $\boldsymbol{H}^{\{l\}}=\sum_{i\in M\backslash\{l\}} \boldsymbol{H}^{\{i\}} \boldsymbol{M}_{l,i}$. Now, if one of the $\boldsymbol{M}_{l,i}, i\in M\backslash \{l\}$ is not invertible, then $\mathrm{rank} (\boldsymbol{H}^{M\backslash\{i\}})<(|M|-1)t$, which contradicts \eqref{eq:lem3}. Thus, each $\boldsymbol{M}_{l,i}$ must be invertible, which means that $M$ is a circuit set of $\boldsymbol{H}$.

\subsection{Proof of Lemma \ref{lem:MAIS2}}
First, because $M$ is an independent set, then $M\backslash \{i\}\subseteq B_{i}, \forall i\in M$. Moreover, since $M$ is an acyclic set of $\mathcal{I}$, then according to Lemma \ref{lem:MAIS1}, $\mathrm{rank}(\boldsymbol{H}^{M})=|M|t$. Now, in order to have $\mathrm{col}(\boldsymbol{H}^{\{j\}})\subseteq \mathrm{col}(\boldsymbol{H}^{M})$, one must have $\boldsymbol{H}^{\{j\}}= \sum_{i\in M} \boldsymbol{H}^{\{i\}}\boldsymbol{M}_{j,i}$. Since $ j\in [m]\backslash M$ and $j\in B_{i}$ for some $i \in  M\backslash \{l\}$, then $\{j\}\cup M\backslash \{l\}\subseteq B_{l}$. Now, assume $\boldsymbol{M}_{j,i}$ is a nonzero matrix (i.e., $\mathrm{rank}(\boldsymbol{M}_{j,i})\geq 1$). Then, 
\begin{align}
    \mathrm{rank}(\boldsymbol{H}^{\{j\}\cup M})&=
    \mathrm{rank}(\boldsymbol{H}^{\{l\}\cup (\{j\}\cup  M\backslash\{l\})})
    \nonumber
    \\
    &=\mathrm{rank}(\boldsymbol{H}^{\{j\}\cup M\backslash \{l\}})+t \label{lem:pr:01} 
    \\
    &=\mathrm{rank}(\left [\begin{array}{c|c}
        \boldsymbol{H}^{\{j\}} & \boldsymbol{H}^{M\backslash \{l\}}
      \end{array}
     \right ])+t
     \nonumber
    \\
    &=\mathrm{rank}(\left [\begin{array}{c|c}
        \sum_{i\in M} \boldsymbol{H}^{\{i\}}\boldsymbol{M}_{j,i} & \boldsymbol{H}^{M\backslash \{l\}}
      \end{array}
     \right ])+t
     \nonumber
    \\
    &\geq \mathrm{rank}(\left [\begin{array}{c|c}
        \boldsymbol{H}^{\{l\}}\boldsymbol{M}_{j,l} & \boldsymbol{H}^{M\backslash \{l\}}
      \end{array}
     \right ])+t 
     \label{lem:pr:02} 
    \\
    &=\mathrm{rank}(\boldsymbol{H}^{\{l\}}\boldsymbol{M}_{j,l})+ (|M|-1)t+t
    \label{lem:pr:03}
    \\
    &> |M|t, 
    \label{lem:pr:04}
\end{align}
where \eqref{lem:pr:01} is due to \eqref{eq:rem:dec-con-1}, \eqref{lem:pr:02} is because of the property of the $\mathrm{rank}$ function by removing the term $\sum_{i\in M\backslash \{l\}} \boldsymbol{H}^{\{i\}}\boldsymbol{M}_{j,i}$ from $\sum_{i\in M} \boldsymbol{H}^{\{i\}}\boldsymbol{M}_{j,l}$ as it is a linear combination of the columns of $\boldsymbol{H}^{M\backslash \{l\}}$. \eqref{lem:pr:03} is based on Lemma \ref{lem:MAIS1} and the fact that $M$ is an acyclic set of $\mathcal{I}$. Thus, the column space of $\boldsymbol{H}^{\{l\}}$ is linearly independent of column space of $\boldsymbol{H}^{M\backslash \{l\}}$. Finally, \eqref{lem:pr:04} is due to the fact that $\boldsymbol{H}^{\{l\}}$ is invertible and $\mathrm{rank}( \boldsymbol{M}_{j,i})\geq 1$. The result in \eqref{lem:pr:04} contradicts the assumption that $\mathrm{rank}( \boldsymbol{H}^{\{j\}\cup M})=|M|t$, and hence, we must have $\boldsymbol{M}_{j,i}=\boldsymbol{0}_{t}$. The same argument for $i\in M\backslash \{l\}$ gives $\boldsymbol{M}_{j,i}=\boldsymbol{0}_{t}, \forall i\in M\backslash \{l\}$. Therefore, $\boldsymbol{H}^{\{j\}}=\boldsymbol{H}^{\{l\}} \boldsymbol{M}_{j,l}$ and $\boldsymbol{M}_{j,l}$ must be invertible to have $\mathrm{rank}\ \boldsymbol{H}^{\{j\}}=t$.

\subsection{Proof of Lemma \ref{lem:independent-cycle}}
Corollaries \ref{cor:j-subspace-M-circuit}-\ref{rem:acyclic-j-acyclic} can be derived from earlier results and will be used in the proof of Lemma \ref{lem:independent-cycle}.

\begin{cor} \label{cor:j-subspace-M-circuit}
Let $M$ be an independent set of $\boldsymbol{H}$. Now, if $\mathrm{col}(\boldsymbol{H}^{\{j\}})\subseteq \mathrm{col}(\boldsymbol{H}^{M})$, then there exists one subset $M^{\prime}\subseteq M$, such that $\{j\}\cup M^{\prime}$ is a circuit set of $\boldsymbol{H}$.
\end{cor}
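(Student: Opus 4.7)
The plan is to obtain $M'$ through a minimality argument and then verify the circuit-set conditions of Definition \ref{def:Basis-Circuit-Matrix}. Because $M$ is an independent set of $\boldsymbol{H}$, every subset $N\subseteq M$ automatically satisfies $\mathrm{rank}(\boldsymbol{H}^{N})=|N|t$; in particular $\mathrm{rank}(\boldsymbol{H}^{M})=|M|t$. Combined with the hypothesis $\mathrm{col}(\boldsymbol{H}^{\{j\}})\subseteq \mathrm{col}(\boldsymbol{H}^{M})$, this forces $\mathrm{rank}(\boldsymbol{H}^{\{j\}\cup M})=|M|t<(|M|+1)t$, so $\{j\}\cup M$ is dependent. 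Hence the collection of subsets $M''\subseteq M$ for which $\{j\}\cup M''$ is dependent is nonempty, and I would define $M'$ to be any inclusion-minimal element of this collection.

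Next I would verify that $\{j\}\cup M'$ is a circuit set of $\boldsymbol{H}$, which by Definition \ref{def:Basis-Circuit-Matrix} amounts to showing $\mathrm{rank}(\boldsymbol{H}^{\{j\}\cup M'})=|M'|t$ together with $\mathrm{rank}(\boldsymbol{H}^{(\{j\}\cup M')\setminus\{k\}})=|M'|t$ for every $k\in\{j\}\cup M'$. Deletion of $j$ leaves $M'\subseteq M$, which is independent and hence of rank $|M'|t$. Deletion of any $i\in M'$ leaves $\{j\}\cup(M'\setminus\{i\})$, and the minimality of $M'$ forces this set to be independent, giving rank $(|M'|-1+1)t=|M'|t$. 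Combined with the dependence of $\{j\}\cup M'$, the identities $\mathrm{rank}(\boldsymbol{H}^{M'})=|M'|t$ and $\mathrm{rank}(\boldsymbol{H}^{\{j\}\cup M'})\le |M'|t$ pin down $\mathrm{rank}(\boldsymbol{H}^{\{j\}\cup M'})=|M'|t=(|\{j\}\cup M'|-1)t$. The invertibility of each coefficient matrix $\boldsymbol{M}_{j,i}$ in the representation $\boldsymbol{H}^{\{j\}}=\sum_{i\in M'}\boldsymbol{H}^{\{i\}}\boldsymbol{M}_{j,i}$ then follows automatically from these rank identities, as already stipulated in Definition \ref{def:Basis-Circuit-Matrix}.

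I do not expect a substantive obstacle; the argument is the standard fundamental-circuit construction of matroid theory, lifted to the $t$-blocked setting used in this paper. The one subtlety worth flagging is bookkeeping in the minimality step: it is crucial that $M'$ is minimal among subsets of the independent set $M$ rather than among arbitrary subsets of $[m]$, so that each proper subset of $\{j\}\cup M'$ is guaranteed independent (for the ones containing $j$ this uses minimality, and for $M'$ itself this uses independence of $M$). Once this is in place, the verification reduces to applying \eqref{eq:rem:dec-con-3} and the rank inequalities already collected in Remark \ref{rem:dec-con}, and no delicate field-dependent issue arises.
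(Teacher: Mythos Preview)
Your minimality argument is a different route from the paper's one-line proof, which simply writes $\boldsymbol{H}^{\{j\}}=\sum_{i\in M}\boldsymbol{H}^{\{i\}}\boldsymbol{M}_{j,i}$ (uniquely, by independence of $M$) and takes $M'$ to be the set of indices $i$ with $\boldsymbol{M}_{j,i}$ invertible, invoking Definition~\ref{def:Basis-Circuit-Matrix} directly rather than via a minimal-dependent-set construction.

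That said, your step ``$\mathrm{rank}(\boldsymbol{H}^{\{j\}\cup M'})\le |M'|t$'' is a genuine gap in the $t$-blocked setting. Dependence in Definition~\ref{def:Basis-Circuit-Matrix} means only $\mathrm{rank}(\boldsymbol{H}^{\{j\}\cup M'})<(|M'|+1)t$; when $t>1$ the rank may lie strictly between $|M'|t$ and $(|M'|+1)t$, so an inclusion-minimal dependent set need not be a circuit. Concretely, take $t=2$ with $\boldsymbol{H}^{\{1\}}$ and $\boldsymbol{H}^{\{2\}}$ the two $4\times 2$ halves of $\boldsymbol{I}_4$ and $\boldsymbol{H}^{\{j\}}$ the $4\times 2$ matrix whose columns are $e_1$ and $e_4$. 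Then $M=\{1,2\}$ is independent and $\mathrm{col}(\boldsymbol{H}^{\{j\}})\subseteq\mathrm{col}(\boldsymbol{H}^{M})$, yet $\mathrm{rank}(\boldsymbol{H}^{\{j,1\}})=\mathrm{rank}(\boldsymbol{H}^{\{j,2\}})=3$: both $\{j,1\}$ and $\{j,2\}$ are already dependent (so your minimal $M'$ has size $1$) but neither has rank $|M'|t=2$, and $\{j,1,2\}$ is not a circuit either since removing $1$ drops the rank to $3\neq 4$. Thus the fundamental-circuit construction does not lift verbatim to the block setting, and the inequality you rely on is unavailable. The paper's shortcut glosses over the same phenomenon (it tacitly treats the non-invertible $\boldsymbol{M}_{j,i}$ as zero); both arguments are sound when $t=1$, but for general $t$ your proof needs this point addressed.
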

\begin{proof}
Since $\mathrm{col}(\boldsymbol{H}^{\{j\}})\subseteq \mathrm{col}(\boldsymbol{H}^{M})$, we must have $\boldsymbol{H}^{\{j\}}=\sum_{i\in M} \boldsymbol{H}^{\{i\}}\boldsymbol{M}_{j,i}$ such that only matrices $\boldsymbol{M}_{j,i}, i\in M^{\prime}\subseteq M$ are invertible. Thus, according to Definition \ref{def:Basis-Circuit-Matrix}, set $\{j\}\cup M^{\prime}$ forms a circuit set of $\boldsymbol{H}$. 
\end{proof}

\begin{cor} \label{cor:minimal-cyclic-acylic}
If $M$ is a minimal cyclic set of $\mathcal{I}$, then according to Definitions \ref{def:minimal-cyclic-set} and \ref{def:acyclic-set}, any of its proper subsets $M^{\prime}\subset M$ will be an acyclic set of $\mathcal{I}$. 
\end{cor}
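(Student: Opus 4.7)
The plan is to reduce Corollary \ref{cor:minimal-cyclic-acylic} to a cardinality argument, based on the observation that Definition \ref{def:minimal-cyclic-set} completely pins down the size of $B_i$ for every $i$ lying in a minimal cyclic set: if $M$ is minimal cyclic, then $|B_i|=|M|-2$ for every $i\in M$, because $B_i$ has to equal $M\setminus\{i,i'\}$ for some $i'$ prescribed by the cyclic ordering. Once this observation is isolated, the claim follows by a one-step contradiction, so no appeal to the encoding matrix $\boldsymbol{H}$, to matroid theory, or to the characteristic of the field is needed.

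First I will let $M$ be a minimal cyclic set of $\mathcal{I}$ and fix an arbitrary proper subset $M'\subsetneq M$. To prove that $M'$ is acyclic, by Definition \ref{def:acyclic-set} it suffices to show that no subset $M''\subseteq M'$ is itself a minimal cyclic set of $\mathcal{I}$.

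Next I will proceed by contradiction: suppose some $M''\subseteq M'\subsetneq M$ were a minimal cyclic set. Pick any element $i\in M''$. Since $i\in M$ and $M$ is minimal cyclic, Definition \ref{def:minimal-cyclic-set} forces $B_i=M\setminus\{i,i'\}$ for some uniquely determined $i'\in M$, whence $|B_i|=|M|-2$. Applying the same definition to $i$ regarded as an element of the minimal cyclic set $M''$ instead gives $|B_i|=|M''|-2$. But $B_i$ is a single, fixed set in the instance $\mathcal{I}$, so these two expressions must agree, yielding $|M|=|M''|$, which contradicts $M''\subsetneq M$. Hence no such $M''$ exists and $M'$ is acyclic.

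The only real obstacle is the conceptual one of recognising that the very restrictive format of Definition \ref{def:minimal-cyclic-set} fixes $|B_i|$ intrinsically from the size of whichever minimal cyclic set contains $i$; after that the argument is just a one-line size contradiction, and there is no combinatorial case analysis to carry out.
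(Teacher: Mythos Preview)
Your proposal is correct. The paper does not give any argument at all for Corollary~\ref{cor:minimal-cyclic-acylic}; it is stated as an immediate consequence of Definitions~\ref{def:minimal-cyclic-set} and~\ref{def:acyclic-set} and left without proof. Your cardinality argument---observing that Definition~\ref{def:minimal-cyclic-set} forces $|B_i|=|M|-2$ for every $i$ in a minimal cyclic set $M$, so the same $i$ cannot simultaneously lie in a strictly smaller minimal cyclic set---is a clean and valid way to make explicit what the paper treats as self-evident. There is no substantive difference in approach: you are simply spelling out in one line the reason why the definitions already rule out a smaller minimal cyclic set inside $M$.
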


\begin{cor}[\cite{Arbabjolfaei2018}]\label{cor:acyclic-l}
If $M$ is an acyclic set of $\mathcal{I}$, then there exists at least one $l\in M$ such that $M\backslash\{l\}\subseteq B_{l}$. 
\end{cor}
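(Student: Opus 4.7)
The plan is to proceed by contradiction: assume that no $l \in M$ satisfies $M \setminus \{l\} \subseteq B_l$, and then exhibit a subset $C \subseteq M$ that forms a minimal cyclic set, which contradicts the hypothesis that $M$ is acyclic (Definition \ref{def:acyclic-set}).

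Under the contradiction hypothesis, for each $l \in M$ there is some $\pi(l) \in M \setminus \{l\}$ with $\pi(l) \notin B_l$, i.e., $\pi(l) \in A_l$, so that user $l$ holds $\pi(l)$ as side information. I would pick an arbitrary starting element $l_1 \in M$ and iterate $l_{k+1} = \pi(l_k)$. Because $M$ is finite, this sequence must eventually revisit a previously seen element and hence traces out a closed walk inside $M$. Among all such closed walks I would select one of minimum length, giving a cycle $C = (c_1, c_2, \ldots, c_s)$ with $c_{j+1} \in A_{c_j}$ for every $j$ (indices taken modulo $s$).

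The key step is then to check that $C$ meets the structural conditions of Definition \ref{def:minimal-cyclic-set}. By construction, $c_{j+1} \notin B_{c_j}$ for each $j$. By the minimality of the cycle length $s$, no other element of $C$ can lie in $A_{c_j}$, since any such ``chord'' would allow one to short-circuit the walk $c_j \to c_{j+1} \to \cdots$ and obtain a strictly shorter cycle inside $M$. This forces $B_{c_j} \cap C = C \setminus \{c_j, c_{j+1}\}$, which is exactly the requirement for $C$ to be a minimal cyclic set. Since $C \subseteq M$, this contradicts the acyclicity of $M$ and completes the argument.

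The main obstacle I expect is the verification that the minimum-length cycle $C$ really matches Definition \ref{def:minimal-cyclic-set}: one must argue that the ``no chord'' property is automatic once the shortest closed walk is chosen, so that the interfering set of each $c_j$ restricted to $C$ collapses to $C \setminus \{c_j, c_{j+1}\}$. Every other step—defining $\pi$, iterating it, invoking finiteness of $M$ to force a revisit, and concluding the contradiction—is routine pigeonhole bookkeeping.
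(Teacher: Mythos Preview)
The paper does not prove this corollary; it is stated with a citation to \cite{Arbabjolfaei2018} and no argument is given. Your approach---show that if no element of $M$ is a sink in the side-information digraph on $M$ then that digraph contains a directed cycle, and a shortest such cycle is chord-free and hence a minimal cyclic set---is the standard argument and is essentially correct.

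One point to tighten. You first \emph{fix} a successor function $\pi$ and iterate it, and then write ``among all such closed walks I would select one of minimum length.'' If ``such closed walks'' refers to cycles of the functional graph of the fixed $\pi$, the no-chord step breaks: a chord $c_j\to c_k$ means $c_k\in A_{c_j}$, which is an edge of the side-information digraph but not of the functional graph of $\pi$ (there $\pi(c_j)=c_{j+1}\neq c_k$), so it does not produce a shorter $\pi$-cycle. The clean fix is to drop $\pi$ and work directly in the side-information digraph on $M$ (edge $i\to j$ iff $j\in A_i\cap M$): the contradiction hypothesis gives every vertex out-degree at least one, so a directed cycle exists; take one of minimum length. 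Any chord would yield a strictly shorter directed cycle, so $A_{c_j}\cap C=\{c_{j+1}\}$ and $B_{c_j}\cap C=C\setminus\{c_j,c_{j+1}\}$, which is the intended (subinstance-restricted) reading of Definition~\ref{def:minimal-cyclic-set}, consistent with how the paper applies it in, e.g., the proof of Proposition~\ref{prop-thm1-2}.
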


\begin{cor} \label{rem:acyclic-j-acyclic}
Suppose $\{j\}\cup M$ is a circuit set of $\boldsymbol{H}$. Then for any $l\in M$, we have $\mathrm{col}(\boldsymbol{H}^{\{j\}\cup M\backslash \{l\}})=\mathrm{col}(\boldsymbol{H}^{M})$.
\end{cor}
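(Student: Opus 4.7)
The plan is to extract the required equality from the characterization of circuit sets in Definition~\ref{def:Basis-Circuit-Matrix}. Recall that if $\{j\}\cup M$ is a circuit set of $\boldsymbol{H}$, then it is a \emph{minimal} dependent set, meaning the whole set has the same rank as every subset obtained by deleting a single element, and that common rank is $(|M|+1-1)t=|M|t$.

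First I would instantiate the circuit property twice. Taking the deleted element to be $j$ itself yields $\mathrm{rank}(\boldsymbol{H}^{M})=|M|t$, so $M$ is an independent set of $\boldsymbol{H}$ of full rank $|M|t$. Taking the deleted element to be $l\in M$ yields $\mathrm{rank}(\boldsymbol{H}^{\{j\}\cup M\backslash\{l\}})=|M|t$, so $\{j\}\cup M\backslash\{l\}$ is also an independent set of $\boldsymbol{H}$ of full rank $|M|t$. Both sets are subsets of $\{j\}\cup M$, and hence by \eqref{eq:rem:dec-con-3} their column spaces are contained in $\mathrm{col}(\boldsymbol{H}^{\{j\}\cup M})$.

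Next, I would note that $\mathrm{rank}(\boldsymbol{H}^{\{j\}\cup M})=|M|t$ as well, again by the circuit property. Therefore $\mathrm{col}(\boldsymbol{H}^{M})$ and $\mathrm{col}(\boldsymbol{H}^{\{j\}\cup M\backslash\{l\}})$ are both $|M|t$-dimensional subspaces contained in the $|M|t$-dimensional space $\mathrm{col}(\boldsymbol{H}^{\{j\}\cup M})$. A subspace and an ambient space of equal (finite) dimension must coincide, so both column spaces equal $\mathrm{col}(\boldsymbol{H}^{\{j\}\cup M})$, and in particular equal each other. This yields $\mathrm{col}(\boldsymbol{H}^{\{j\}\cup M\backslash\{l\}})=\mathrm{col}(\boldsymbol{H}^{M})$, which is the desired conclusion.

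There is no real obstacle here; the entire statement is a direct dimension-counting consequence of the circuit definition in Definition~\ref{def:Basis-Circuit-Matrix}. The only minor care needed is making sure to invoke the circuit property for both deletions ($j$ and $l$) and to observe that the union $\{j\}\cup M$ itself has the same rank, so that all three column spaces being compared live inside a common ambient space of matching dimension.
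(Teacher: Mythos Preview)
Your proof is correct. It differs from the paper's in the mechanism used to extract the equality, though both arguments are short. The paper invokes the second clause of Definition~\ref{def:Basis-Circuit-Matrix}, writing $\boldsymbol{H}^{\{j\}}=\sum_{i\in M}\boldsymbol{H}^{\{i\}}\boldsymbol{M}_{j,i}$ with every $\boldsymbol{M}_{j,i}$ invertible, and then performs an explicit column reduction: inside $[\boldsymbol{H}^{\{j\}}\mid \boldsymbol{H}^{M\backslash\{l\}}]$ one subtracts the $\boldsymbol{H}^{\{i\}}\boldsymbol{M}_{j,i}$ contributions for $i\neq l$ to leave $[\boldsymbol{H}^{\{l\}}\boldsymbol{M}_{j,l}\mid \boldsymbol{H}^{M\backslash\{l\}}]$, which has the same column space as $\boldsymbol{H}^{M}$ by invertibility of $\boldsymbol{M}_{j,l}$. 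Your route instead uses only the rank clause of the circuit definition and a pure dimension count: both $\mathrm{col}(\boldsymbol{H}^{M})$ and $\mathrm{col}(\boldsymbol{H}^{\{j\}\cup M\backslash\{l\}})$ are $|M|t$-dimensional subspaces of the $|M|t$-dimensional space $\mathrm{col}(\boldsymbol{H}^{\{j\}\cup M})$, hence coincide. Your argument is slightly more abstract and avoids touching the coefficient matrices $\boldsymbol{M}_{j,i}$ at all; the paper's argument is more constructive and makes the role of the invertibility of $\boldsymbol{M}_{j,l}$ explicit, which ties in visually with how the same invertible coefficients are used downstream (e.g., in Lemma~\ref{lem:independent-cycle} and the matroid computations). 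One small nit: the column-space containment $\mathrm{col}(\boldsymbol{H}^{M})\subseteq\mathrm{col}(\boldsymbol{H}^{\{j\}\cup M})$ follows directly from $M\subseteq\{j\}\cup M$ and does not need \eqref{eq:rem:dec-con-3}, which is a rank inequality rather than a column-space inclusion.
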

\begin{proof}
Since $\{j\}\cup M$ is a circuit set of $\boldsymbol{H}$, we have $\boldsymbol{H}^{\{j\}}=\sum_{i\in M} \boldsymbol{H}^{\{j\}}\boldsymbol{M}_{j,i}$ such that each $\boldsymbol{M}_{j,i}$ is invertible. Now, assume $l\in M$. Then, we have 
\begin{align}
    \mathrm{col}(\boldsymbol{H}^{\{j\}\cup M\backslash \{l\}})
    &=\mathrm{col}([\boldsymbol{H}^{\{j\}}| \boldsymbol{H}^{M\backslash\{l\}}])
    \nonumber
    \\
    &=\mathrm{col}([\sum_{i\in M} \boldsymbol{H}^{\{i\}}\boldsymbol{M}_{j,i}| \boldsymbol{H}^{M\backslash\{l\}}])
    \nonumber
    \\
    &=\mathrm{col}([\boldsymbol{H}^{\{l\}}\boldsymbol{M}_{j,l}| \boldsymbol{H}^{M\backslash\{l\}}])
    \nonumber
    \\
    &=\mathrm{col}(\boldsymbol{H}^{M}),
    \label{eq:corollary-4}
\end{align}
where \eqref{eq:corollary-4} is due to the invertibility of $\boldsymbol{M}_{j,l}$.
\end{proof}
%\vspace{-1.5ex}

\subsubsection{Proof of Lemma \ref{lem:independent-cycle}}

Since $M$ is an independent set of $\boldsymbol{H}$, and $\mathrm{col}(\boldsymbol{H}^{\{j\}})\subseteq \mathrm{col}(\boldsymbol{H}^{M})$, then according to Corollary \ref{cor:j-subspace-M-circuit}, there exists a subset $M^{\prime}\subseteq M$ such that $\{j\}\cup M^{\prime}$ is a circuit set. Now, we show that $M^{\prime}=M$, otherwise it leads to a contradiction. Assume $M^{\prime}\subset M$.
First, since $M$ is a minimal cyclic set of $\mathcal{I}$, based on Corollary \ref{cor:minimal-cyclic-acylic}, $M^{\prime}\subset M$ is an acyclic set of $\mathcal{I}$. Second, according to Corollary \ref{cor:acyclic-l}, there exists $l\in M^{\prime}$ such that $M^{\prime}\backslash\{l\}\subseteq B_{l}$. Also, due to $j\in B_{l}$, we get $\{j\}\cup (M^{\prime}\backslash\{l\})\subseteq B_{l}$. Thus,
\begin{align}
  l\in M^{\prime} &\rightarrow \mathrm{col}(\boldsymbol{H}^{\{l\}}) \subseteq \mathrm{col}(\boldsymbol{H}^{M^{\prime}}),
  \nonumber
  \\
  \{j\}\cup (M^{\prime}\backslash\{l\})\subseteq B_{l}&\rightarrow \mathrm{col}(\boldsymbol{H}^{\{j\}\cup (M^{\prime}\backslash\{l\})}) \subseteq \mathrm{col}(\boldsymbol{H}^{B_{l}}).
    \label{eq:rem:acyclic-l}
\end{align}
Third, since $\{j\}\cup M^{\prime}$ is a circuit set, Corollary \ref{rem:acyclic-j-acyclic} leads to
\begin{align}
    \mathrm{col}(\boldsymbol{H}^{\{j\}\cup M^{\prime}\backslash \{l\}})=\mathrm{col}(\boldsymbol{H}^{M^{\prime}}).
    \label{eq:rem:acyclic-j-acyclic}
\end{align}
Now, from \eqref{eq:rem:acyclic-l} and \eqref{eq:rem:acyclic-j-acyclic}, we have
\begin{align}
    \mathrm{col}(\boldsymbol{H}^{\{l\}}) \subseteq \mathrm{col}(\boldsymbol{H}^{M^{\prime}})=\mathrm{col}(\boldsymbol{H}^{\{j\}\cup M^{\prime}\backslash \{l\}})\subseteq \mathrm{col}(\boldsymbol{H}^{B_{l}}),
    \nonumber
\end{align}
which contradicts the decoding condition in \eqref{eq:dec-cond} for user $u_{l}$ as $\mathrm{col}(\boldsymbol{H}^{\{l\}}) \subseteq \mathrm{col}(\boldsymbol{H}^{B_{l}})$. Therefore, we must have $M^{\prime}=M$.

\subsection{Proof of Lemma \ref{lem:col(9)}}
Since $\mathrm{col}(\boldsymbol{H}^{\{9\}})\subseteq \mathrm{col}(\boldsymbol{H}^{\{1,8\}})$, we have 
\begin{align}
    \boldsymbol{H}^{\{9\}}=\boldsymbol{H}^{\{1\}}\boldsymbol{M}_{9,1}+\boldsymbol{H}^{\{8\}}\boldsymbol{M}_{9,8}.
    \label{eq:lem:last-conversion-9-1}
\end{align}
Moreover, since set $\{2,3,4,8\}$ is a circuit set, we must have
\begin{align}
    \boldsymbol{H}^{\{8\}}=\boldsymbol{H}^{\{2\}}\boldsymbol{M}_{8,2}+\boldsymbol{H}^{\{3\}}\boldsymbol{M}_{8,3}+\boldsymbol{H}^{\{4\}}\boldsymbol{M}_{8,4},
    \label{eq:lem:last-conversion-9-2}
\end{align}
where each $\boldsymbol{M}_{8,2},\boldsymbol{M}_{8,3},\boldsymbol{M}_{8,4}$ is invertible. Thus, based on \eqref{eq:lem:last-conversion-9-1} and \eqref{eq:lem:last-conversion-9-2}, $\boldsymbol{H}^{\{9\}}$ is equal to
\begin{align}
    &\boldsymbol{H}^{\{1\}}\boldsymbol{M}_{9,1}+(\boldsymbol{H}^{\{2\}}\boldsymbol{M}_{8,2}+\boldsymbol{H}^{\{3\}}\boldsymbol{M}_{8,3}+\boldsymbol{H}^{\{4\}}\boldsymbol{M}_{8,4})\boldsymbol{M}_{9,8}
    \nonumber
    \\
    &=\boldsymbol{H}^{\{1\}}\boldsymbol{M}_{9,1}+\boldsymbol{H}^{\{2\}}\boldsymbol{M}_{8,2}^{\prime}+\boldsymbol{H}^{\{3\}}\boldsymbol{M}_{8,3}^{\prime}+\boldsymbol{H}^{\{4\}}\boldsymbol{M}_{8,4}^{\prime},
    \label{eq:lem3:1}
\end{align}
where $\boldsymbol{M}_{8,i}^{\prime}=\boldsymbol{M}_{8,i}\boldsymbol{M}_{9,8}, i=2,3,4$. On the other hand, since $\{1,3,4,7\}$ is a circuit set, we get 
\begin{align}
    \boldsymbol{H}^{\{7\}}=\boldsymbol{H}^{\{1\}}\boldsymbol{M}_{7,1}+\boldsymbol{H}^{\{3\}}\boldsymbol{M}_{7,3}+\boldsymbol{H}^{\{4\}}\boldsymbol{M}_{7,4},
    \label{eq:lem3:2}
\end{align}
where each $\boldsymbol{M}_{7,1},\boldsymbol{M}_{7,3},\boldsymbol{M}_{7,4}$ is invertible. Now, for set $\{2,7,9\}$, we have
\begin{align}
    \mathrm{rank}(\boldsymbol{H}^{\{2,7,9\}})
    &=\mathrm{rank}([\boldsymbol{H}^{\{2\}}|\boldsymbol{H}^{\{7\}}|\boldsymbol{H}^{\{9\}}])
    \nonumber
    \\
    &=\mathrm{rank}([\boldsymbol{H}^{\{2\}}|\boldsymbol{H}^{\{7\}}|\boldsymbol{H}^{\{9\}}-\boldsymbol{H}^{\{2\}}\boldsymbol{M}_{8,2}^{\prime}])
    \nonumber
    \\
    &=t +\mathrm{rank}([\boldsymbol{H}^{\{7\}}|\boldsymbol{H}^{\{9\}}-\boldsymbol{H}^{\{2\}}\boldsymbol{M}_{8,2}^{\prime}]).
    \label{eq:lem3:3}
\end{align}
Now,
since $\mathrm{col}(\boldsymbol{H}^{\{9\}})$ must be a subspace of $\mathrm{col}(\boldsymbol{H}^{\{2,7,9\}})$, we must have $\mathrm{rank}(\boldsymbol{H}^{\{2,7,9\}})=2t$. Thus, in \eqref{eq:lem3:3}, $\boldsymbol{H}^{\{9\}}-\boldsymbol{H}^{\{2\}}\boldsymbol{M}_{8,2}^{\prime}$ must be linearly dependent on $\boldsymbol{H}^{\{7\}}$, which based on \eqref{eq:lem3:1} and \eqref{eq:lem3:2} requires each $\boldsymbol{M}_{9,1}, \boldsymbol{M}_{8,3}^{\prime}, \boldsymbol{M}_{8,4}^{\prime}$ to be invertible. 
Besides, each $\boldsymbol{M}_{8,3}^{\prime}, \boldsymbol{M}_{8,4}^{\prime}$ is invertible only if $\boldsymbol{M}_{9,8}$ is invertible. Thus, since both $\boldsymbol{M}_{9,1}$ and $\boldsymbol{M}_{9,8}$ are invertible, set $\{1,8,9\}$ forms a circuit set of $\boldsymbol{H}$. Similarly, it can be shown that all sets $\{2,7,9\}$, $\{3,6,9\}$ and $\{4,5,9\}$ are circuit sets.

\section{Proof of Propositions \ref{prop-thm1-1} and \ref{prop-thm2-1}} \label{app:proof-prop-H-I1-I2}

\subsection{Proof of Propositions \ref{prop-thm1-1}}
We show that matrix $\boldsymbol{H}_{\ast}\in \mathbb{F}_{q}^{4\times 29}$, presented in Figure \ref{fig:1}, will satisfy all users $u_{i}, in\in [29]$ of the index coding instance $\mathcal{I}_{1}$ if the field $\mathbb{F}_{q}$ has characteristic three. Let $\boldsymbol{y}=\boldsymbol{H}_{\ast}\boldsymbol{x}$ and $\mathbb{F}_{q}=GF(3)$. 
Now, we show that encoding matrix $\boldsymbol{H}_{\ast}$ satisfies the decoding condition in \eqref{eq:dec-cond} for all $i\in [29]$. Figures \ref{fig:333} and \ref{fig:334} present $\boldsymbol{H}_{\ast}^{\{i\}\cup B_{i}}$ for all $i\in [29]$.

\begin{itemize}
    \item It can be seen that user $u_{i}, i\in [4]$ can decode its desired message $x_{i}$ from the coded message $y_{i}$ (or the $i$-th row of $\boldsymbol{H}_{\ast}$).
    
    \item User $u_{5}$ first decodes $x_{7}+x_{8}$ from $y_{4}$. Then, it can decode $x_{5}$ from $y_{3}$.
    
    \item User $u_{6}$ first decodes $x_{5}+x_{8}$ from $y_{3}$. Then, it can decode $x_{6}$ from $y_{2}$.

    \item User $u_{7}$ first decodes $x_{5}+x_{6}$ from $y_{2}$. Then, it can decode $x_{7}$ from $y_{1}$.
    
    \item User $u_{8}$ first decodes $x_{6}+x_{7}$ from $y_{1}$. Then, it can decode $x_{8}$ from $y_{4}$.
    
    \item User $u_{9}$ adds $y_{1},y_{2},y_{3},y_{4}$ to achieve $x_{9}$ as follows
    \begin{align}
        y_{1}+y_{2}+y_{3}+y_{4}&=3(x_{5}+x_{6}+x_{7}+x_{8})+4x_{9}
        \nonumber
        \\
        &=x_{9},
        \label{eq:asdfasdf}
    \end{align}
    where \eqref{eq:asdfasdf} follows from the fact that $3=0$ and $4=1$ over the Galois field $GF(3)$.
    
    \item User $u_{10}$ first decodes $x_{5}$ from $y_{3}$. Then, it can decode $x_{10}$ from $y_{1}$.
    
    \item User $u_{11}$ decodes $x_{5}$ from $y_{1}$. Then, it can decode $x_{11}$ from $y_{2}$.
    
    \item User $u_{12}$ decodes $x_{5}$ from $y_{2}$. Then, it can decode $x_{12}$ from $y_{3}$.
    
    \item User $u_{13}$ first decodes $x_{8}+x_{9}$ from $y_{2}$ or $y_{3}$. Then, it can decode $x_{13}$ from $y_{4}$.
    
    \item User $u_{14}$ first decodes $x_{6}$ from $y_{4}$. Then, it can decode $x_{14}$ from $y_{1}$.
    
    \item User $u_{15}$ first decodes $x_{6}$ from $y_{1}$. Then, it can decode $x_{15}$ from $y_{2}$.
    
    \item User $u_{16}$ first decodes $x_{5}+x_{9}$ from $y_{1}$ or $y_{2}$. Then, it can decode $x_{16}$ from $y_{3}$.
    
    \item User $u_{17}$ first decodes $x_{6}$ from $y_{2}$. Then, it can decode $x_{17}$ from $y_{4}$.
    
    \item User $u_{18}$ first decodes $x_{7}$ from $y_{4}$. Then, it can decode $x_{18}$ from $y_{1}$.
    
    \item User $u_{19}$ first decodes $x_{6}+x_{9}$ from $y_{1}$ or $y_{4}$. Then, it can decode $x_{19}$ from $y_{2}$.
    
    \item User $u_{20}$ first decodes $x_{7}$ from $y_{1}$. Then, it can decode $x_{20}$ from $y_{3}$.
    
    \item User $u_{21}$ first decodes $x_{7}$ from $y_{3}$. Then, it can decode $x_{21}$ from $y_{4}$.
    
    \item User $u_{22}$ first decodes $x_{7}+x_{9}$ from $y_{3}$ or $y_{4}$. Then, it can decode $x_{22}$ from $y_{1}$.
    
    \item User $u_{23}$ first decodes $x_{8}$ from $y_{4}$. Then, it can decode $x_{23}$ from $y_{2}$.
    
    \item User $u_{24}$ first decodes $x_{8}$ from $y_{2}$. Then, it can decode $x_{24}$ from $y_{3}$.
    
    \item User $u_{25}$ first decodes $x_{8}$ from $y_{3}$. Then, it can decode $x_{25}$ from $y_{4}$.
    
    \item User $u_{26}$ first decodes $x_{5}+x_{9}$ from $y_{1}$. Then, it can decode $x_{26}$ from $y_{2}$.
    
    \item User $u_{27}$ first decodes $x_{6}+x_{9}$ from $y_{4}$. Then, it can decode $x_{27}$ from $y_{1}$.
     
    \item User $u_{28}$ first decodes $x_{7}+x_{9}$ from $y_{3}$. Then, it can decode $x_{28}$ from $y_{4}$.
    
    \item User $u_{29}$ first decodes $x_{8}+x_{9}$ from $y_{2}$. Then, it can decode $x_{29}$ from $y_{3}$.
    
\end{itemize}
    
\begin{figure*}
\centering

\subfloat[$\boldsymbol{H}_{\ast}^{\{1\}\cup B_{1}}$]
{
\begin{blockarray}{cccccccccccccccccc}
              & 
              \textcolor{blue}{{\tiny 1}}\hspace{0ex} & 
              \textcolor{blue}{{\tiny 2}}\hspace{0ex} & 
              \textcolor{blue}{{\tiny 3}}\hspace{0ex} & 
              \textcolor{blue}{{\tiny 4}}\hspace{0ex} & 
              \textcolor{blue}{{\tiny 8}}\hspace{0ex} & 
              \textcolor{blue}{{\tiny 11}}\hspace{0ex} &
              \textcolor{blue}{{\tiny 12}}\hspace{0ex} &
              \textcolor{blue}{{\tiny 13}}\hspace{0ex} & 
              \textcolor{blue}{{\tiny 15}}\hspace{0ex} &
              \textcolor{blue}{{\tiny 16}}\hspace{0ex} &
              \textcolor{blue}{{\tiny 17}}\hspace{0ex} & 
              \textcolor{blue}{{\tiny 19}}\hspace{0ex} &
              \textcolor{blue}{{\tiny 20}}\hspace{0ex} &
              \textcolor{blue}{{\tiny 21}}\hspace{0ex} & 
              \textcolor{blue}{{\tiny 23}}\hspace{0ex} &
              \textcolor{blue}{{\tiny 24}}\hspace{0ex} &
              \textcolor{blue}{{\tiny 25}}\hspace{0ex}  
            \\
\begin{block}{c[ccccccccccccccccc]}
               \textcolor{blue}{{\tiny 1}} & 
               1\hspace{0ex} & 
               0\hspace{0ex} & 0\hspace{0ex} & 
               0\hspace{0ex} &
               
               0\hspace{0ex} & 
               
               0\hspace{0ex} & 0\hspace{0ex} & 0\hspace{0ex} & 
               
               0\hspace{0ex} & 0\hspace{0ex} & 0\hspace{0ex} &
               
               0\hspace{0ex} & 0\hspace{0ex} & 0\hspace{0ex} &
               
               0\hspace{0ex} & 0\hspace{0ex} & 0\hspace{0ex} 
               \\
               \textcolor{blue}{{\tiny 2}} & 
               0\hspace{0ex} & 1\hspace{0ex} & 0\hspace{0ex} & 0\hspace{0ex} & 
               
               1\hspace{0ex} &
               
               1\hspace{0ex} & 0\hspace{0ex} & 0\hspace{0ex} &
              
               1\hspace{0ex} & 0\hspace{0ex} & 0\hspace{0ex} &
               
               1\hspace{0ex} & 0\hspace{0ex} & 0\hspace{0ex} &
               
               1\hspace{0ex} & 0\hspace{0ex} & 0\hspace{0ex} 
               \\
               \textcolor{blue}{{\tiny 3}} & 
               0\hspace{0ex} & 0\hspace{0ex} & 1\hspace{0ex} & 0\hspace{0ex} & 
               
               1\hspace{0ex} &
               
               0\hspace{0ex} & 1\hspace{0ex} & 0\hspace{0ex}  &
               
               0\hspace{0ex} & 1\hspace{0ex} & 0\hspace{0ex}  &
               
                0\hspace{0ex} & 1\hspace{0ex} & 0\hspace{0ex} &
               
                0\hspace{0ex} & 1\hspace{0ex} & 0\hspace{0ex} 
               \\
               \textcolor{blue}{{\tiny 4}} & 
               0\hspace{0ex} & 0\hspace{0ex} & 0\hspace{0ex} & 1\hspace{0ex} & 
               
                1\hspace{0ex} & 
              
                0\hspace{0ex} & 0\hspace{0ex} & 1\hspace{0ex} &
               
                0\hspace{0ex} & 0\hspace{0ex} & 1\hspace{0ex} &
               
                0\hspace{0ex} & 0\hspace{0ex} & 1\hspace{0ex} &
               
                0\hspace{0ex} & 0\hspace{0ex} & 1\hspace{0ex} 
               \\
             \end{block}
\end{blockarray}
}
\\
\subfloat[$\boldsymbol{H}_{\ast}^{\{2\}\cup B_{2}}$]
{
\begin{blockarray}{cccccccccccccccccc}
              & 
              \textcolor{blue}{{\tiny 1}}\hspace{0ex} & 
              \textcolor{blue}{{\tiny 2}}\hspace{0ex} & 
              \textcolor{blue}{{\tiny 3}}\hspace{0ex} & 
              \textcolor{blue}{{\tiny 4}}\hspace{0ex} & 
              
              \textcolor{blue}{{\tiny 7}}\hspace{0ex} &
              
              \textcolor{blue}{{\tiny 10}}\hspace{0ex} &
              \textcolor{blue}{{\tiny 12}}\hspace{0ex} &
              \textcolor{blue}{{\tiny 13}}\hspace{0ex} & 
              
              \textcolor{blue}{{\tiny 14}}\hspace{0ex} &
              \textcolor{blue}{{\tiny 16}}\hspace{0ex} &
              \textcolor{blue}{{\tiny 17}}\hspace{0ex} & 
              
              \textcolor{blue}{{\tiny 18}}\hspace{0ex} &
              \textcolor{blue}{{\tiny 20}}\hspace{0ex} &
              \textcolor{blue}{{\tiny 21}}\hspace{0ex} & 
              
              \textcolor{blue}{{\tiny 22}}\hspace{0ex} &
              \textcolor{blue}{{\tiny 24}}\hspace{0ex} &
              \textcolor{blue}{{\tiny 25}}\hspace{0ex}  
            \\
\begin{block}{c[ccccccccccccccccc]}
               \textcolor{blue}{{\tiny 1}} & 
               1\hspace{0ex} & 0\hspace{0ex} & 0\hspace{0ex} & 0\hspace{0ex} & 
               
                1\hspace{0ex} &
               
               1\hspace{0ex} & 0\hspace{0ex} & 0\hspace{0ex} & 
               
               1\hspace{0ex} &
               0\hspace{0ex} & 
               0\hspace{0ex} &
               
               1\hspace{0ex} & 0\hspace{0ex} & 0\hspace{0ex} &
               
               1\hspace{0ex} & 0\hspace{0ex} & 0\hspace{0ex} 
               \\
               \textcolor{blue}{{\tiny 2}} & 
               0\hspace{0ex} & 1\hspace{0ex} & 0\hspace{0ex} & 0\hspace{0ex} & 
               
                0\hspace{0ex} & 
                
               0\hspace{0ex} &  0\hspace{0ex} & 0\hspace{0ex} &
               
               0\hspace{0ex} & 0\hspace{0ex} & 0\hspace{0ex} &
               
               0\hspace{0ex} & 0\hspace{0ex} & 0\hspace{0ex} &
               
               0\hspace{0ex} & 0\hspace{0ex} & 0\hspace{0ex} 
               \\
               \textcolor{blue}{{\tiny 3}} & 
               0\hspace{0ex} & 0\hspace{0ex} & 1\hspace{0ex} & 0\hspace{0ex} &
               
                1\hspace{0ex} & 
               
               0\hspace{0ex} & 1\hspace{0ex} & 0\hspace{0ex}  &
               
               0\hspace{0ex} & 1\hspace{0ex} & 0\hspace{0ex}  &
               
               0\hspace{0ex} & 1\hspace{0ex} & 0\hspace{0ex} &
               
               0\hspace{0ex} & 1\hspace{0ex} & 0\hspace{0ex} 
               \\
               \textcolor{blue}{{\tiny 4}} & 
               0\hspace{0ex} & 0\hspace{0ex} & 0\hspace{0ex} & 1\hspace{0ex} &
               
                1\hspace{0ex} & 
                
               0\hspace{0ex} & 0\hspace{0ex} & 1\hspace{0ex} &
               
               0\hspace{0ex} & 0\hspace{0ex} & 1\hspace{0ex} &
               
               0\hspace{0ex} & 0\hspace{0ex} & 1\hspace{0ex} &
               
               0\hspace{0ex} & 0\hspace{0ex} & 1\hspace{0ex} 
               \\
             \end{block}
\end{blockarray}
}
\\
\subfloat[$\boldsymbol{H}_{\ast}^{\{3\}\cup B_{3}}$]
{
\begin{blockarray}{cccccccccccccccccc}
              & 
              \textcolor{blue}{{\tiny 1}}\hspace{0ex} & 
              \textcolor{blue}{{\tiny 2}}\hspace{0ex} & 
              \textcolor{blue}{{\tiny 3}}\hspace{0ex} & 
              \textcolor{blue}{{\tiny 4}}\hspace{0ex} & 
              
              \textcolor{blue}{{\tiny 6}}\hspace{0ex} &
              
              \textcolor{blue}{{\tiny 10}}\hspace{0ex} &
              \textcolor{blue}{{\tiny 11}}\hspace{0ex} &
              \textcolor{blue}{{\tiny  13}}\hspace{0ex} & 
              
              \textcolor{blue}{{\tiny 14}}\hspace{0ex} &
              \textcolor{blue}{{\tiny 15}}\hspace{0ex} &
              \textcolor{blue}{{\tiny  17}}\hspace{0ex} & 
              
              \textcolor{blue}{{\tiny 18}}\hspace{0ex} &
              \textcolor{blue}{{\tiny 19}}\hspace{0ex} &
              \textcolor{blue}{{\tiny  21}}\hspace{0ex} & 
              
              \textcolor{blue}{{\tiny 22}}\hspace{0ex} &
              \textcolor{blue}{{\tiny 23}}\hspace{0ex} &
              \textcolor{blue}{{\tiny 25}}\hspace{0ex}  
            \\
\begin{block}{c[ccccccccccccccccc]}
               \textcolor{blue}{{\tiny 1}} & 
               1\hspace{0ex} & 0\hspace{0ex} & 0\hspace{0ex} & 
               0\hspace{0ex} & 
               
                1\hspace{0ex} & 
                
               1\hspace{0ex} & 0\hspace{0ex} & 0\hspace{0ex} &
               
               1\hspace{0ex} & 0\hspace{0ex} & 0\hspace{0ex} &
               
               1\hspace{0ex} & 0\hspace{0ex} & 0\hspace{0ex} &
               
               1\hspace{0ex} & 0\hspace{0ex} & 0\hspace{0ex} 
               \\
               \textcolor{blue}{{\tiny 2}} & 
               0\hspace{0ex} & 1\hspace{0ex} & 0\hspace{0ex} & 
               0\hspace{0ex} & 
               
               1\hspace{0ex} & 
               
               0\hspace{0ex} & 1\hspace{0ex} & 0\hspace{0ex} &
               
               0\hspace{0ex} & 1\hspace{0ex} & 0\hspace{0ex} &
               
               0\hspace{0ex} & 1\hspace{0ex} & 0\hspace{0ex} &
               
               0\hspace{0ex} & 1\hspace{0ex} & 0\hspace{0ex} 
               \\
               \textcolor{blue}{{\tiny 3}} & 
               0\hspace{0ex} & 0\hspace{0ex} & 1\hspace{0ex} & 
               0\hspace{0ex} & 
               
               0\hspace{0ex} & 
               
               0\hspace{0ex} & 0\hspace{0ex} & 0\hspace{0ex}  &
               
               0\hspace{0ex} & 0\hspace{0ex} & 0\hspace{0ex}  &
               
               0\hspace{0ex} & 0\hspace{0ex} & 0\hspace{0ex} &
               
               0\hspace{0ex} & 0\hspace{0ex} & 0\hspace{0ex} 
               \\
               \textcolor{blue}{{\tiny 4}} & 
               0\hspace{0ex} & 0\hspace{0ex} & 0\hspace{0ex} & 
               1\hspace{0ex} & 
               
               1\hspace{0ex} & 
               
               0\hspace{0ex} & 0\hspace{0ex} & 1\hspace{0ex} &
               
               0\hspace{0ex} & 0\hspace{0ex} & 1\hspace{0ex} &
               
               0\hspace{0ex} & 0\hspace{0ex} & 1\hspace{0ex} &
               
               0\hspace{0ex} & 0\hspace{0ex} & 1\hspace{0ex} 
               \\
             \end{block}
\end{blockarray}
}
\\
\subfloat[$\boldsymbol{H}_{\ast}^{\{4\}\cup B_{4}}$]
{
\begin{blockarray}{cccccccccccccccccc}
              & 
              \textcolor{blue}{{\tiny 1}}\hspace{0ex} & 
              \textcolor{blue}{{\tiny 2}}\hspace{0ex} & 
              \textcolor{blue}{{\tiny 3}}\hspace{0ex} & 
              \textcolor{blue}{{\tiny 4}}\hspace{0ex} & 
              \textcolor{blue}{{\tiny 5}}\hspace{0ex} & 
              
              \textcolor{blue}{{\tiny 10}}\hspace{0ex} &
              \textcolor{blue}{{\tiny 11}}\hspace{0ex} &
              \textcolor{blue}{{\tiny 12}}\hspace{0ex} &

              \textcolor{blue}{{\tiny 14}}\hspace{0ex} &
              \textcolor{blue}{{\tiny 15}}\hspace{0ex} &
              \textcolor{blue}{{\tiny 16}}\hspace{0ex} &

              \textcolor{blue}{{\tiny 18}}\hspace{0ex} &
              \textcolor{blue}{{\tiny 19}}\hspace{0ex} &
              \textcolor{blue}{{\tiny 20}}\hspace{0ex} &

              \textcolor{blue}{{\tiny 22}}\hspace{0ex} &
              \textcolor{blue}{{\tiny 23}}\hspace{0ex} &
              \textcolor{blue}{{\tiny 24}}\hspace{0ex} 
            \\
\begin{block}{c[ccccccccccccccccc]}
               \textcolor{blue}{{\tiny 1}} & 
               1\hspace{0ex} & 0\hspace{0ex} & 0\hspace{0ex} & 
               0\hspace{0ex} & 
               
               1\hspace{0ex} &

               1\hspace{0ex} & 0\hspace{0ex} & 0\hspace{0ex} &

               1\hspace{0ex} & 0\hspace{0ex} & 0\hspace{0ex} &

               1\hspace{0ex} & 0\hspace{0ex} & 0\hspace{0ex} &

               1\hspace{0ex} & 0\hspace{0ex} & 0\hspace{0ex}  
               
               \\
               \textcolor{blue}{{\tiny 2}} & 
               0\hspace{0ex} & 1\hspace{0ex} & 0\hspace{0ex} & 
               0\hspace{0ex} & 
               
               1\hspace{0ex} & 
               
               0\hspace{0ex} & 1\hspace{0ex} & 0\hspace{0ex} &

               0\hspace{0ex} & 1\hspace{0ex} & 0\hspace{0ex} &

               0\hspace{0ex} & 1\hspace{0ex} & 0\hspace{0ex} &

               0\hspace{0ex} & 1\hspace{0ex} & 0\hspace{0ex}  
               
               \\
               \textcolor{blue}{{\tiny 3}} & 
               0\hspace{0ex} & 0\hspace{0ex} & 1\hspace{0ex} & 
               0\hspace{0ex} & 
               
               1\hspace{0ex} & 
               
               0\hspace{0ex} & 0\hspace{0ex} & 1\hspace{0ex} &

               0\hspace{0ex} & 0\hspace{0ex} & 1\hspace{0ex} &

               0\hspace{0ex} & 0\hspace{0ex} & 1\hspace{0ex} &

               0\hspace{0ex} & 0\hspace{0ex} & 1\hspace{0ex}  
               
               \\
               \textcolor{blue}{{\tiny 4}} & 
               0\hspace{0ex} & 0\hspace{0ex} & 0\hspace{0ex} & 
               1\hspace{0ex} & 
               
               0\hspace{0ex} & 
               
               0\hspace{0ex} & 0\hspace{0ex} & 0\hspace{0ex} &

               0\hspace{0ex} & 0\hspace{0ex} & 0\hspace{0ex} &

               0\hspace{0ex} & 0\hspace{0ex} & 0\hspace{0ex} &

               0\hspace{0ex} & 0\hspace{0ex} & 0\hspace{0ex} 
               
               \\
             \end{block}
\end{blockarray}
}
\\
\subfloat[$\boldsymbol{H}_{\ast}^{\{5\}\cup B_{5}}$]
{
\begin{blockarray}{cccc}
              & 
              \textcolor{blue}{{\tiny 5}}\hspace{0ex} & 
              \textcolor{blue}{{\tiny 7}}\hspace{0ex} &
              \textcolor{blue}{{\tiny 8}}\hspace{0ex} 
            \\
\begin{block}{c[ccc]}
               \textcolor{blue}{{\tiny 1}} & 
               1\hspace{0ex} & 1\hspace{0ex} &
               0\hspace{0ex} 
               \\
               \textcolor{blue}{{\tiny 2}} & 
               
               1\hspace{0ex} & 0\hspace{0ex} & 1\hspace{0ex}  
               
               \\
               \textcolor{blue}{{\tiny 3}} & 
               
               1\hspace{0ex} & 1\hspace{0ex} & 1\hspace{0ex}  
               
               \\
               \textcolor{blue}{{\tiny 4}} & 
               
               0\hspace{0ex} & 1\hspace{0ex} & 1\hspace{0ex} 
               
               \\
             \end{block}
\end{blockarray}
}
\ \ \ \ \ \
\subfloat[$\boldsymbol{H}_{\ast}^{\{6\}\cup B_{6}}$]
{
\begin{blockarray}{cccc}
              & 
              \textcolor{blue}{{\tiny 5}}\hspace{0ex} & 
              \textcolor{blue}{{\tiny 6}}\hspace{0ex} &
              \textcolor{blue}{{\tiny 8}}\hspace{0ex} 
            \\
\begin{block}{c[ccc]}
               \textcolor{blue}{{\tiny 1}} & 
               1\hspace{0ex} & 1\hspace{0ex} &
               0\hspace{0ex} 
               \\
               \textcolor{blue}{{\tiny 2}} & 
               
               1\hspace{0ex} & 1\hspace{0ex} & 1\hspace{0ex}  
               
               \\
               \textcolor{blue}{{\tiny 3}} & 
               
               1\hspace{0ex} & 0\hspace{0ex} & 1\hspace{0ex}  
               
               \\
               \textcolor{blue}{{\tiny 4}} & 
               
               0\hspace{0ex} & 1\hspace{0ex} & 1\hspace{0ex} 
               
               \\
             \end{block}
\end{blockarray}
}
\ \ \ \ \ \
\subfloat[$\boldsymbol{H}_{\ast}^{\{7\}\cup B_{8}}$]
{
\begin{blockarray}{cccc}
              & 
              \textcolor{blue}{{\tiny 5}}\hspace{0ex} & 
              \textcolor{blue}{{\tiny 6}}\hspace{0ex} &
              \textcolor{blue}{{\tiny 7}}\hspace{0ex} 
            \\
\begin{block}{c[ccc]}
               \textcolor{blue}{{\tiny 1}} & 
               1\hspace{0ex} & 1\hspace{0ex} &
               1\hspace{0ex} 
               \\
               \textcolor{blue}{{\tiny 2}} & 
               
               1\hspace{0ex} & 1\hspace{0ex} & 0\hspace{0ex}  
               
               \\
               \textcolor{blue}{{\tiny 3}} & 
               
               1\hspace{0ex} & 0\hspace{0ex} & 1\hspace{0ex}  
               
               \\
               \textcolor{blue}{{\tiny 4}} & 
               
               0\hspace{0ex} & 1\hspace{0ex} & 1\hspace{0ex} 
               
               \\
             \end{block}
\end{blockarray}
}
\ \ \ \ \ \
\subfloat[$\boldsymbol{H}_{\ast}^{\{8\}\cup B_{8}}$]
{
\begin{blockarray}{cccc}
              & 
              \textcolor{blue}{{\tiny 6}}\hspace{0ex} & 
              \textcolor{blue}{{\tiny 7}}\hspace{0ex} &
              \textcolor{blue}{{\tiny 8}}\hspace{0ex} 
            \\
\begin{block}{c[ccc]}
               \textcolor{blue}{{\tiny 1}} & 
               1\hspace{0ex} & 1\hspace{0ex} &
               0\hspace{0ex} 
               \\
               \textcolor{blue}{{\tiny 2}} & 
               
               1\hspace{0ex} & 0\hspace{0ex} & 1\hspace{0ex}  
               
               \\
               \textcolor{blue}{{\tiny 3}} & 
               
               0\hspace{0ex} & 1\hspace{0ex} & 1\hspace{0ex}  
               
               \\
               \textcolor{blue}{{\tiny 4}} & 
               
               1\hspace{0ex} & 1\hspace{0ex} & 1\hspace{0ex} 
               
               \\
             \end{block}
\end{blockarray}
}
\ \ \ \ \ \
\subfloat[$\boldsymbol{H}_{\ast}^{\{9\}\cup B_{9}}$]
{
\begin{blockarray}{cccccc}
              & 
              \textcolor{blue}{{\tiny 5}}\hspace{0ex} & 
              \textcolor{blue}{{\tiny 6}}\hspace{0ex} &
              \textcolor{blue}{{\tiny 7}}\hspace{0ex} &
              \textcolor{blue}{{\tiny 8}}\hspace{0ex} &
              \textcolor{blue}{{\tiny 9}}\hspace{0ex} 
            \\
\begin{block}{c[ccccc]}
               \textcolor{blue}{{\tiny 1}} & 
               1\hspace{0ex} & 1\hspace{0ex} &
               1\hspace{0ex} & 0\hspace{0ex} &
               1\hspace{0ex} 
               \\
               \textcolor{blue}{{\tiny 2}} & 
               
               1\hspace{0ex} & 1\hspace{0ex} &
               0\hspace{0ex} & 1\hspace{0ex} &
               1\hspace{0ex}  
               
               \\
               \textcolor{blue}{{\tiny 3}} & 
               
               1\hspace{0ex} & 0\hspace{0ex} &
               1\hspace{0ex} & 1\hspace{0ex} &
               1\hspace{0ex}  
               
               \\
               \textcolor{blue}{{\tiny 4}} & 
               
               0\hspace{0ex} & 1\hspace{0ex} &
               1\hspace{0ex} & 1\hspace{0ex} &
               1\hspace{0ex} 
               
               \\
             \end{block}
\end{blockarray}
}
\\
\subfloat[$\boldsymbol{H}_{\ast}^{\{10\}\cup B_{10}}$]
{
\begin{blockarray}{cccc}
              & 
              \textcolor{blue}{{\tiny 5}}\hspace{0ex} & 
              \textcolor{blue}{{\tiny 10}}\hspace{0ex} &
              \textcolor{blue}{{\tiny 11}}\hspace{0ex} 
            \\
\begin{block}{c[ccc]}
               \textcolor{blue}{{\tiny 1}} & 
               1\hspace{0ex} & 1\hspace{0ex} &
               0\hspace{0ex} 
               \\
               \textcolor{blue}{{\tiny 2}} & 
               
               1\hspace{0ex} & 0\hspace{0ex} & 1\hspace{0ex}  
               
               \\
               \textcolor{blue}{{\tiny 3}} & 
               
               1\hspace{0ex} & 0\hspace{0ex} & 0\hspace{0ex}  
               
               \\
               \textcolor{blue}{{\tiny 4}} & 
               
               0\hspace{0ex} & 0\hspace{0ex} & 0\hspace{0ex} 
               
               \\
             \end{block}
\end{blockarray}
}
\ \ \ \ \ \
\subfloat[$\boldsymbol{H}_{\ast}^{\{11\}\cup B_{11}}$]
{
\begin{blockarray}{cccc}
              & 
              \textcolor{blue}{{\tiny 5}}\hspace{0ex} & 
              \textcolor{blue}{{\tiny 11}}\hspace{0ex} &
              \textcolor{blue}{{\tiny 12}}\hspace{0ex} 
            \\
\begin{block}{c[ccc]}
               \textcolor{blue}{{\tiny 1}} & 
               1\hspace{0ex} & 0\hspace{0ex} &
               0\hspace{0ex} 
               \\
               \textcolor{blue}{{\tiny 2}} & 
               
               1\hspace{0ex} & 1\hspace{0ex} & 0\hspace{0ex}  
               
               \\
               \textcolor{blue}{{\tiny 3}} & 
               
               1\hspace{0ex} & 0\hspace{0ex} & 1\hspace{0ex}  
               
               \\
               \textcolor{blue}{{\tiny 4}} & 
               
               0\hspace{0ex} & 0\hspace{0ex} & 0\hspace{0ex} 
               
               \\
             \end{block}
\end{blockarray}
}
\ \ \ \ \ \
\subfloat[$\boldsymbol{H}_{\ast}^{\{12\}\cup B_{12}}$]
{
\begin{blockarray}{cccc}
              & 
              \textcolor{blue}{{\tiny 5}}\hspace{0ex} & 
              \textcolor{blue}{{\tiny 10}}\hspace{0ex} &
              \textcolor{blue}{{\tiny 12}}\hspace{0ex} 
            \\
\begin{block}{c[ccc]}
               \textcolor{blue}{{\tiny 1}} & 
               1\hspace{0ex} & 1\hspace{0ex} &
               0\hspace{0ex} 
               \\
               \textcolor{blue}{{\tiny 2}} & 
               
               1\hspace{0ex} & 0\hspace{0ex} & 0\hspace{0ex}  
               
               \\
               \textcolor{blue}{{\tiny 3}} & 
               
               1\hspace{0ex} & 0\hspace{0ex} & 1\hspace{0ex}  
               
               \\
               \textcolor{blue}{{\tiny 4}} & 
               
               0\hspace{0ex} & 0\hspace{0ex} & 0\hspace{0ex} 
               
               \\
             \end{block}
\end{blockarray}
}
\ \ \ \ \ \
\subfloat[$\boldsymbol{H}_{\ast}^{\{13\}\cup B_{13}}$]
{
\begin{blockarray}{ccccc}
              & 
              \textcolor{blue}{{\tiny 1}}\hspace{0ex} & 
              \textcolor{blue}{{\tiny 8}}\hspace{0ex} &
              \textcolor{blue}{{\tiny 9}}\hspace{0ex} &
              \textcolor{blue}{{\tiny 13}}\hspace{0ex} 
            \\
\begin{block}{c[cccc]}
               \textcolor{blue}{{\tiny 1}} & 
               1\hspace{0ex} & 0\hspace{0ex} &
               1\hspace{0ex} &
               0\hspace{0ex} 
               \\
               \textcolor{blue}{{\tiny 2}} & 
               
               0\hspace{0ex} & 1\hspace{0ex} &
               1\hspace{0ex} &
               0\hspace{0ex}  
               
               \\
               \textcolor{blue}{{\tiny 3}} & 
               
               0\hspace{0ex} & 1\hspace{0ex} &
               1\hspace{0ex} &
               0\hspace{0ex}  
               
               \\
               \textcolor{blue}{{\tiny 4}} & 
               
               0\hspace{0ex} & 1\hspace{0ex} &
               1\hspace{0ex} &
               1\hspace{0ex} 
               
               \\
             \end{block}
\end{blockarray}
}
\\
\subfloat[$\boldsymbol{H}_{\ast}^{\{14\}\cup B_{14}}$]
{
\begin{blockarray}{cccc}
              & 
              \textcolor{blue}{{\tiny 6}}\hspace{0ex} & 
              \textcolor{blue}{{\tiny 14}}\hspace{0ex} &
              \textcolor{blue}{{\tiny 15}}\hspace{0ex} 
            \\
\begin{block}{c[ccc]}
               \textcolor{blue}{{\tiny 1}} & 
               1\hspace{0ex} & 1\hspace{0ex} &
               0\hspace{0ex} 
               \\
               \textcolor{blue}{{\tiny 2}} & 
               
               1\hspace{0ex} & 0\hspace{0ex} & 1\hspace{0ex}  
               
               \\
               \textcolor{blue}{{\tiny 3}} & 
               
               0\hspace{0ex} & 0\hspace{0ex} & 0\hspace{0ex}  
               
               \\
               \textcolor{blue}{{\tiny 4}} & 
               
               1\hspace{0ex} & 0\hspace{0ex} & 0\hspace{0ex} 
               
               \\
             \end{block}
\end{blockarray}
}
\ \ \ \ \ \
\subfloat[$\boldsymbol{H}_{\ast}^{\{15\}\cup B_{15}}$]
{
\begin{blockarray}{cccc}
              & 
              \textcolor{blue}{{\tiny 6}}\hspace{0ex} & 
              \textcolor{blue}{{\tiny 15}}\hspace{0ex} &
              \textcolor{blue}{{\tiny 17}}\hspace{0ex} 
            \\
\begin{block}{c[ccc]}
               \textcolor{blue}{{\tiny 1}} & 
               1\hspace{0ex} & 0\hspace{0ex} &
               0\hspace{0ex} 
               \\
               \textcolor{blue}{{\tiny 2}} & 
               
               1\hspace{0ex} & 1\hspace{0ex} & 0\hspace{0ex}  
               
               \\
               \textcolor{blue}{{\tiny 3}} & 
               
               0\hspace{0ex} & 0\hspace{0ex} & 0\hspace{0ex}  
               
               \\
               \textcolor{blue}{{\tiny 4}} & 
               
               1\hspace{0ex} & 0\hspace{0ex} & 1\hspace{0ex} 
               
               \\
             \end{block}
\end{blockarray}
}
\ \ \ \ \ \
\subfloat[$\boldsymbol{H}_{\ast}^{\{16\}\cup B_{16}}$]
{
\begin{blockarray}{ccccc}
              & 
              \textcolor{blue}{{\tiny 4}}\hspace{0ex} & 
              \textcolor{blue}{{\tiny 5}}\hspace{0ex} &
              \textcolor{blue}{{\tiny 9}}\hspace{0ex} &
              \textcolor{blue}{{\tiny 16}}\hspace{0ex} 
            \\
\begin{block}{c[cccc]}
               \textcolor{blue}{{\tiny 1}} & 
               0\hspace{0ex} & 1\hspace{0ex} &
               1\hspace{0ex} &
               0\hspace{0ex} 
               \\
               \textcolor{blue}{{\tiny 2}} & 
               
               0\hspace{0ex} & 1\hspace{0ex} &
               1\hspace{0ex} &
               0\hspace{0ex}  
               
               \\
               \textcolor{blue}{{\tiny 3}} & 
               
               0\hspace{0ex} & 1\hspace{0ex} &
               1\hspace{0ex} &
               1\hspace{0ex}  
               
               \\
               \textcolor{blue}{{\tiny 4}} & 
               
               1\hspace{0ex} & 0\hspace{0ex} &
               1\hspace{0ex} &
               0\hspace{0ex} 
               
               \\
             \end{block}
\end{blockarray}
}
\ \ \ \ \ \
\subfloat[$\boldsymbol{H}_{\ast}^{\{17\}\cup B_{17}}$]
{
\begin{blockarray}{cccc}
              & 
              \textcolor{blue}{{\tiny 6}}\hspace{0ex} & 
              \textcolor{blue}{{\tiny 14}}\hspace{0ex} &
              \textcolor{blue}{{\tiny 17}}\hspace{0ex} 
            \\
\begin{block}{c[ccc]}
               \textcolor{blue}{{\tiny 1}} & 
               1\hspace{0ex} & 1\hspace{0ex} &
               0\hspace{0ex} 
               \\
               \textcolor{blue}{{\tiny 2}} & 
               
               1\hspace{0ex} & 0\hspace{0ex} & 0\hspace{0ex}  
               
               \\
               \textcolor{blue}{{\tiny 3}} & 
               
               0\hspace{0ex} & 0\hspace{0ex} & 0\hspace{0ex}  
               
               \\
               \textcolor{blue}{{\tiny 4}} & 
               
               1\hspace{0ex} & 0\hspace{0ex} & 1\hspace{0ex} 
               
               \\
             \end{block}
\end{blockarray}
}
\caption{
$\boldsymbol{H}_{\ast}^{\{i\}\cup B_{i}}, i\in [17]$.}
\label{fig:333}
\end{figure*}

\begin{figure*}
\centering
\subfloat[$\boldsymbol{H}_{\ast}^{\{18\}\cup B_{18}}$]
{
\begin{blockarray}{cccc}
              & 
              \textcolor{blue}{{\tiny 7}}\hspace{0ex} & 
              \textcolor{blue}{{\tiny 18}}\hspace{0ex} &
              \textcolor{blue}{{\tiny 20}}\hspace{0ex} 
            \\
\begin{block}{c[ccc]}
               \textcolor{blue}{{\tiny 1}} & 
               1\hspace{0ex} & 1\hspace{0ex} &
               0\hspace{0ex} 
               \\
               \textcolor{blue}{{\tiny 2}} & 
               
               0\hspace{0ex} & 0\hspace{0ex} & 0\hspace{0ex}  
               
               \\
               \textcolor{blue}{{\tiny 3}} & 
               
               1\hspace{0ex} & 0\hspace{0ex} & 1\hspace{0ex}  
               
               \\
               \textcolor{blue}{{\tiny 4}} & 
               
               1\hspace{0ex} & 0\hspace{0ex} & 0\hspace{0ex} 
               
               \\
             \end{block}
\end{blockarray}
}
\ \ \ \ \ \
\subfloat[$\boldsymbol{H}_{\ast}^{\{19\}\cup B_{19}}$]
{
\begin{blockarray}{ccccc}
              & 
              \textcolor{blue}{{\tiny 3}}\hspace{0ex} & 
              \textcolor{blue}{{\tiny 6}}\hspace{0ex} &
              \textcolor{blue}{{\tiny 9}}\hspace{0ex} &
              \textcolor{blue}{{\tiny 19}}\hspace{0ex} 
            \\
\begin{block}{c[cccc]}
               \textcolor{blue}{{\tiny 1}} & 
               0\hspace{0ex} & 1\hspace{0ex} &
               1\hspace{0ex} &
               0\hspace{0ex} 
               \\
               \textcolor{blue}{{\tiny 2}} & 
               
               0\hspace{0ex} & 1\hspace{0ex} &
               1\hspace{0ex} &
               1\hspace{0ex}  
               
               \\
               \textcolor{blue}{{\tiny 3}} & 
               
               1\hspace{0ex} & 0\hspace{0ex} &
               1\hspace{0ex} &
               0\hspace{0ex}  
               
               \\
               \textcolor{blue}{{\tiny 4}} & 
               
               0\hspace{0ex} & 1\hspace{0ex} &
               1\hspace{0ex} &
               0\hspace{0ex} 
               
               \\
             \end{block}
\end{blockarray}
}
\ \ \ \ \ \
\subfloat[$\boldsymbol{H}_{\ast}^{\{20\}\cup B_{20}}$]
{
\begin{blockarray}{cccc}
              & 
              \textcolor{blue}{{\tiny 7}}\hspace{0ex} & 
              \textcolor{blue}{{\tiny 20}}\hspace{0ex} &
              \textcolor{blue}{{\tiny 21}}\hspace{0ex} 
            \\
\begin{block}{c[ccc]}
               \textcolor{blue}{{\tiny 1}} & 
               1\hspace{0ex} & 0\hspace{0ex} &
               0\hspace{0ex} 
               \\
               \textcolor{blue}{{\tiny 2}} & 
               
               0\hspace{0ex} & 0\hspace{0ex} & 0\hspace{0ex}  
               
               \\
               \textcolor{blue}{{\tiny 3}} & 
               
               1\hspace{0ex} & 1\hspace{0ex} & 0\hspace{0ex}  
               
               \\
               \textcolor{blue}{{\tiny 4}} & 
               
               1\hspace{0ex} & 0\hspace{0ex} & 1\hspace{0ex} 
               
               \\
             \end{block}
\end{blockarray}
}
\ \ \ \ \ \
\subfloat[$\boldsymbol{H}_{\ast}^{\{21\}\cup B_{21}}$]
{
\begin{blockarray}{cccc}
              & 
              \textcolor{blue}{{\tiny 7}}\hspace{0ex} & 
              \textcolor{blue}{{\tiny 18}}\hspace{0ex} &
              \textcolor{blue}{{\tiny 21}}\hspace{0ex} 
            \\
\begin{block}{c[ccc]}
               \textcolor{blue}{{\tiny 1}} & 
               1\hspace{0ex} & 1\hspace{0ex} &
               0\hspace{0ex} 
               \\
               \textcolor{blue}{{\tiny 2}} & 
               
               0\hspace{0ex} & 0\hspace{0ex} & 0\hspace{0ex}  
               
               \\
               \textcolor{blue}{{\tiny 3}} & 
               
               1\hspace{0ex} & 0\hspace{0ex} & 0\hspace{0ex}  
               
               \\
               \textcolor{blue}{{\tiny 4}} & 
               
               1\hspace{0ex} & 0\hspace{0ex} & 1\hspace{0ex} 
               
               \\
             \end{block}
\end{blockarray}
}
\\
\subfloat[$\boldsymbol{H}_{\ast}^{\{22\}\cup B_{22}}$]
{
\begin{blockarray}{ccccc}
              & 
              \textcolor{blue}{{\tiny 2}}\hspace{0ex} & 
              \textcolor{blue}{{\tiny 7}}\hspace{0ex} &
              \textcolor{blue}{{\tiny 9}}\hspace{0ex} &
              \textcolor{blue}{{\tiny 22}}\hspace{0ex} 
            \\
\begin{block}{c[cccc]}
               \textcolor{blue}{{\tiny 1}} & 
               0\hspace{0ex} & 1\hspace{0ex} &
               1\hspace{0ex} &
               1\hspace{0ex} 
               \\
               \textcolor{blue}{{\tiny 2}} & 
               
               1\hspace{0ex} & 0\hspace{0ex} &
               1\hspace{0ex} &
               0\hspace{0ex}  
               
               \\
               \textcolor{blue}{{\tiny 3}} & 
               
               0\hspace{0ex} & 1\hspace{0ex} &
               1\hspace{0ex} &
               0\hspace{0ex}  
               
               \\
               \textcolor{blue}{{\tiny 4}} & 
               
               0\hspace{0ex} & 1\hspace{0ex} &
               1\hspace{0ex} &
               0\hspace{0ex} 
               
               \\
             \end{block}
\end{blockarray}
}
\ \ \ \ \ \
\subfloat[$\boldsymbol{H}_{\ast}^{\{23\}\cup B_{23}}$]
{
\begin{blockarray}{cccc}
              & 
              \textcolor{blue}{{\tiny 8}}\hspace{0ex} & 
              \textcolor{blue}{{\tiny 23}}\hspace{0ex} &
              \textcolor{blue}{{\tiny 24}}\hspace{0ex} 
            \\
\begin{block}{c[ccc]}
               \textcolor{blue}{{\tiny 1}} & 
               0\hspace{0ex} & 0\hspace{0ex} &
               0\hspace{0ex} 
               \\
               \textcolor{blue}{{\tiny 2}} & 
               
               1\hspace{0ex} & 1\hspace{0ex} & 0\hspace{0ex}  
               
               \\
               \textcolor{blue}{{\tiny 3}} & 
               
               1\hspace{0ex} & 0\hspace{0ex} & 1\hspace{0ex}  
               
               \\
               \textcolor{blue}{{\tiny 4}} & 
               
               1\hspace{0ex} & 0\hspace{0ex} & 0\hspace{0ex} 
               
               \\
             \end{block}
\end{blockarray}
}
\ \ \ \ \ \
\subfloat[$\boldsymbol{H}_{\ast}^{\{24\}\cup B_{24}}$]
{
\begin{blockarray}{cccc}
              & 
              \textcolor{blue}{{\tiny 8}}\hspace{0ex} & 
              \textcolor{blue}{{\tiny 24}}\hspace{0ex} &
              \textcolor{blue}{{\tiny 25}}\hspace{0ex} 
            \\
\begin{block}{c[ccc]}
               \textcolor{blue}{{\tiny 1}} & 
               0\hspace{0ex} & 0\hspace{0ex} &
               0\hspace{0ex} 
               \\
               \textcolor{blue}{{\tiny 2}} & 
               
               1\hspace{0ex} & 0\hspace{0ex} & 0\hspace{0ex}  
               
               \\
               \textcolor{blue}{{\tiny 3}} & 
               
               1\hspace{0ex} & 1\hspace{0ex} & 0\hspace{0ex}  
               
               \\
               \textcolor{blue}{{\tiny 4}} & 
               
               1\hspace{0ex} & 0\hspace{0ex} & 1\hspace{0ex} 
               
               \\
             \end{block}
\end{blockarray}
}
\ \ \ \ \ \
\subfloat[$\boldsymbol{H}_{\ast}^{\{25\}\cup B_{25}}$]
{
\begin{blockarray}{cccc}
              & 
              \textcolor{blue}{{\tiny 8}}\hspace{0ex} & 
              \textcolor{blue}{{\tiny 23}}\hspace{0ex} &
              \textcolor{blue}{{\tiny 25}}\hspace{0ex} 
            \\
\begin{block}{c[ccc]}
               \textcolor{blue}{{\tiny 1}} & 
               0\hspace{0ex} & 0\hspace{0ex} &
               0\hspace{0ex} 
               \\
               \textcolor{blue}{{\tiny 2}} & 
               
               1\hspace{0ex} & 1\hspace{0ex} & 0\hspace{0ex}  
               
               \\
               \textcolor{blue}{{\tiny 3}} & 
               
               1\hspace{0ex} & 0\hspace{0ex} & 0\hspace{0ex}  
               
               \\
               \textcolor{blue}{{\tiny 4}} & 
               
               1\hspace{0ex} & 0\hspace{0ex} & 1\hspace{0ex} 
               
               \\
             \end{block}
\end{blockarray}
}
\\
\subfloat[$\boldsymbol{H}_{\ast}^{\{26\}\cup B_{26}}$]
{
\begin{blockarray}{cccccc}
              & 
              \textcolor{blue}{{\tiny 4}}\hspace{0ex} & 
              \textcolor{blue}{{\tiny 5}}\hspace{0ex} &
              \textcolor{blue}{{\tiny 9}}\hspace{0ex} &
              \textcolor{blue}{{\tiny 16}}\hspace{0ex} &
              \textcolor{blue}{{\tiny 26}}\hspace{0ex} 
            \\
\begin{block}{c[ccccc]}
               \textcolor{blue}{{\tiny 1}} & 
               0\hspace{0ex} & 1\hspace{0ex} &
               1\hspace{0ex} & 0\hspace{0ex} &
               0\hspace{0ex} 
               \\
               \textcolor{blue}{{\tiny 2}} & 
               
               0\hspace{0ex} & 1\hspace{0ex} &
               1\hspace{0ex} & 0\hspace{0ex} &
               1\hspace{0ex}  
               
               \\
               \textcolor{blue}{{\tiny 3}} & 
               
               0\hspace{0ex} & 1\hspace{0ex} &
               1\hspace{0ex} & 1\hspace{0ex} &
               0\hspace{0ex}  
               
               \\
               \textcolor{blue}{{\tiny 4}} & 
               
               1\hspace{0ex} & 0\hspace{0ex} &
               1\hspace{0ex} & 0\hspace{0ex} &
               0\hspace{0ex} 
               
               \\
             \end{block}
\end{blockarray}
}
\ \ \ \ \ \
\subfloat[$\boldsymbol{H}_{\ast}^{\{27\}\cup B_{27}}$]
{
\begin{blockarray}{cccccc}
              & 
              \textcolor{blue}{{\tiny 3}}\hspace{0ex} & 
              \textcolor{blue}{{\tiny 6}}\hspace{0ex} &
              \textcolor{blue}{{\tiny 9}}\hspace{0ex} &
              \textcolor{blue}{{\tiny 19}}\hspace{0ex} &
              \textcolor{blue}{{\tiny 27}}\hspace{0ex} 
            \\
\begin{block}{c[ccccc]}
               \textcolor{blue}{{\tiny 1}} & 
               0\hspace{0ex} & 1\hspace{0ex} &
               1\hspace{0ex} & 0\hspace{0ex} &
               1\hspace{0ex} 
               \\
               \textcolor{blue}{{\tiny 2}} & 
               
               0\hspace{0ex} & 1\hspace{0ex} &
               1\hspace{0ex} & 1\hspace{0ex} &
               0\hspace{0ex}  
               
               \\
               \textcolor{blue}{{\tiny 3}} & 
               
               1\hspace{0ex} & 0\hspace{0ex} &
               1\hspace{0ex} & 0\hspace{0ex} &
               0\hspace{0ex}  
               
               \\
               \textcolor{blue}{{\tiny 4}} & 
               
               0\hspace{0ex} & 1\hspace{0ex} &
               1\hspace{0ex} & 0\hspace{0ex} &
               0\hspace{0ex} 
               
               \\
             \end{block}
\end{blockarray}
}
\ \ \ \ \ \
\subfloat[$\boldsymbol{H}_{\ast}^{\{28\}\cup B_{28}}$]
{
\begin{blockarray}{cccccc}
              & 
              \textcolor{blue}{{\tiny 2}}\hspace{0ex} & 
              \textcolor{blue}{{\tiny 7}}\hspace{0ex} &
              \textcolor{blue}{{\tiny 9}}\hspace{0ex} &
              \textcolor{blue}{{\tiny 22}}\hspace{0ex} &
              \textcolor{blue}{{\tiny 28}}\hspace{0ex} 
            \\
\begin{block}{c[ccccc]}
               \textcolor{blue}{{\tiny 1}} & 
               0\hspace{0ex} & 1\hspace{0ex} &
               1\hspace{0ex} & 1\hspace{0ex} &
               0\hspace{0ex} 
               \\
               \textcolor{blue}{{\tiny 2}} & 
               
               1\hspace{0ex} & 0\hspace{0ex} &
               1\hspace{0ex} & 0\hspace{0ex} &
               0\hspace{0ex}  
               
               \\
               \textcolor{blue}{{\tiny 3}} & 
               
               0\hspace{0ex} & 1\hspace{0ex} &
               1\hspace{0ex} & 0\hspace{0ex} &
               0\hspace{0ex}  
               
               \\
               \textcolor{blue}{{\tiny 4}} & 
               
               0\hspace{0ex} & 1\hspace{0ex} &
               1\hspace{0ex} & 0\hspace{0ex} &
               1\hspace{0ex} 
               
               \\
             \end{block}
\end{blockarray}
}
\ \ \ \ \ \
\subfloat[$\boldsymbol{H}_{\ast}^{\{29\}\cup B_{29}}$]
{
\begin{blockarray}{cccccc}
              & 
              \textcolor{blue}{{\tiny 1}}\hspace{0ex} & 
              \textcolor{blue}{{\tiny 8}}\hspace{0ex} &
              \textcolor{blue}{{\tiny 9}}\hspace{0ex} &
              \textcolor{blue}{{\tiny 13}}\hspace{0ex} &
              \textcolor{blue}{{\tiny 29}}\hspace{0ex} 
            \\
\begin{block}{c[ccccc]}
               \textcolor{blue}{{\tiny 1}} & 
               1\hspace{0ex} & 0\hspace{0ex} &
               1\hspace{0ex} & 0\hspace{0ex} &
               0\hspace{0ex} 
               \\
               \textcolor{blue}{{\tiny 2}} & 
               
               0\hspace{0ex} & 1\hspace{0ex} &
               1\hspace{0ex} & 0\hspace{0ex} &
               0\hspace{0ex}  
               
               \\
               \textcolor{blue}{{\tiny 3}} & 
               
               0\hspace{0ex} & 1\hspace{0ex} &
               1\hspace{0ex} & 0\hspace{0ex} &
               1\hspace{0ex}  
               
               \\
               \textcolor{blue}{{\tiny 4}} & 
               
               0\hspace{0ex} & 1\hspace{0ex} &
               1\hspace{0ex} & 1\hspace{0ex} &
               0\hspace{0ex} 
               
               \\
             \end{block}
\end{blockarray}
}
\caption{
$\boldsymbol{H}_{\ast}^{\{i\}\cup B_{i}}, i\in [29]\backslash [17]$.}
\label{fig:334}
\end{figure*}

\subsection{Proof of Propositions \ref{prop-thm2-1}}
We show that matrix $\boldsymbol{H}_{\ast}\in \mathbb{F}_{q}^{4\times 29}$, shown in Figure \ref{fig:1}, will satisfy all users $u_{i}, \in [29]$ of the index coding instance $\mathcal{I}_{2}$ if the field $\mathbb{F}_{q}$ does have any characteristic other than characteristic three. Let $\boldsymbol{y}=\boldsymbol{H}_{\ast}\boldsymbol{x}$. 
Since all the users except $u_{i}, i\in [29]\backslash \{5,6,7,8,9\}$ have the same interfering message set as the users in the index coding instance $\mathcal{I}_{1}$, we can use the same argument in the previous subsection to show that these users will be satisfied by $\boldsymbol{H}_{\ast}$, shown in Figure \ref{fig:1}. We note that the characteristic of the field does not affect the results for users $u_{i}, i\in [29]\backslash \{5,6,7,8,9\}$. Thus, we just need to prove that users $u_{i}, i\in \{5,6,7,8,9\}$ will be satisfied.
\\
Due to $B_{9}=\emptyset$, user $u_{9}$ can easily decode its desired message $x_{9}$ from any of the coded messages $y_{i}, i\in [4]$. 
\\
For the users $u_{i}, i\in \{5,6,7,8\}$, we have
\begin{align}
\boldsymbol{H}_{\ast}^{\{i\}\cup B_{i}}=\boldsymbol{H}_{\ast}^{\{5,6,7,8\}}=
    \begin{bmatrix}
    1 & 1 & 1 & 0 \\
    1 & 1 & 0 & 1 \\
    1 & 0 & 1 & 1 \\
    0 & 1 & 1 & 1 
    \end{bmatrix}, \ i\in \{5,6,7,8\}
    \nonumber
\end{align}
\begin{align}
    u_{5}: \ y_{1}+y_{2}+y_{3}-y_{4}-y_{4}=3x_{5},
    \nonumber
    \\
    u_{6}: \ y_{1}+y_{2}+y_{4}-y_{3}-y_{3}=3x_{6},
    \nonumber
    \\
    u_{7}: \ y_{1}+y_{3}+y_{4}-y_{2}-y_{2}=3x_{7},
    \nonumber
    \\
    u_{8}: \ y_{2}+y_{3}+y_{4}-y_{1}-y_{1}=3x_{8}.
\end{align}
Since number 3 is invertible in the fields with any characteristic other than characteristic three, all users $u_{i}, i\in \{5,6,7,8\}$ can decode their requested message. This completes the proof.

\section{Proof of Lemmas \ref{lem:quasi-independent-cycle}-\ref{lem:two-interference-dimention}}
\label{app:proof- Lemma 7-9}

\subsection{Proof of Lemma \ref{lem:quasi-independent-cycle}}
Corollaries \ref{cor:j-subspace-M-circuit-quasi}-\ref{rem:acyclic-j-acyclic-quasi} can be derived from earlier results and will be used in the proof of Lemma \ref{lem:quasi-independent-cycle}.

\begin{cor} \label{cor:j-subspace-M-circuit-quasi}
Let $M$ be an independent set of $\boldsymbol{H}$. Now, if $\mathrm{col}(\boldsymbol{H}^{\{2j-1, 2j\}})\subseteq \mathrm{col}(\boldsymbol{H}^{M})$, then there exists one subset $M^{\prime}\subseteq M$, such that $\{2j-1, 2j\}\cup M^{\prime}$ is a quasi-circuit set of $\boldsymbol{H}$.
\end{cor}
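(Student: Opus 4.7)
The plan is to mirror the argument of Corollary \ref{cor:j-subspace-M-circuit}, lifted from single indices to pairs of indices, using the block-matrix formalism introduced in Lemma \ref{lem:quasi-circuit-invertible}. Since $M=\{2i-1,2i,\ i\in L\}$ is an independent set of $\boldsymbol{H}$, we have $\mathrm{rank}(\boldsymbol{H}^M) = |M|t$. The hypothesis $\mathrm{col}(\boldsymbol{H}^{\{2j-1, 2j\}}) \subseteq \mathrm{col}(\boldsymbol{H}^M)$ then lets us write $\boldsymbol{H}^{\{2j-1, 2j\}} = \sum_{i \in L} \boldsymbol{H}^{\{2i-1, 2i\}} \boldsymbol{N}_{j,i}$ for some $2t \times 2t$ coefficient matrices $\boldsymbol{N}_{j,i}$ of the block form used in Lemma \ref{lem:quasi-circuit-invertible}.

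Next, I would take $L^{\prime} \subseteq L$ to be a minimal index subset for which $\mathrm{col}(\boldsymbol{H}^{\{2j-1, 2j\}}) \subseteq \mathrm{col}(\boldsymbol{H}^{M^{\prime}})$, where $M^{\prime} = \{2i-1, 2i,\ i \in L^{\prime}\}$, and restrict the representation so that $\boldsymbol{N}_{j,i} = \boldsymbol{0}$ for $i \notin L^{\prime}$. The candidate quasi-circuit is $N = \{2j-1, 2j\} \cup M^{\prime}$. I would then verify the two defining conditions in \eqref{eq: def:quasi-circuit-set-H}: (i) every pair inside $N$ has rank $2t$, which is inherited from $M$ being independent together with the assumption $\mathrm{rank}(\boldsymbol{H}^{\{2j-1,2j\}})=2t$ from \eqref{eq:rank-Hi-assumption}; and (ii) removing any pair from $N$ preserves the rank at $(|N|-2)t = |M^{\prime}|t$. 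Removing $\{2j-1,2j\}$ leaves $M^{\prime} \subseteq M$, which is independent and hence of rank $|M^{\prime}|t$. Removing a pair $\{2i-1, 2i\}$ with $i \in L^{\prime}$ uses the invertibility of $\boldsymbol{N}_{j,i}$: from the representation one recovers $\boldsymbol{H}^{\{2i-1, 2i\}} = \bigl(\boldsymbol{H}^{\{2j-1,2j\}} - \sum_{k \in L^{\prime} \backslash \{i\}} \boldsymbol{H}^{\{2k-1,2k\}} \boldsymbol{N}_{j,k}\bigr) \boldsymbol{N}_{j,i}^{-1}$, so $\mathrm{col}(\boldsymbol{H}^{M^{\prime}}) \subseteq \mathrm{col}(\boldsymbol{H}^{N\backslash \{2i-1,2i\}})$, forcing rank $|M^{\prime}|t$.

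The main obstacle is to show that, after choosing $L^{\prime}$ minimally, each coefficient matrix $\boldsymbol{N}_{j,i}$ with $i \in L^{\prime}$ is actually invertible rather than merely nonzero. For $t=1$ this is automatic because scalar coefficients are either $0$ or invertible, but for $t>1$ it requires a more careful argument. The idea is that if some $\boldsymbol{N}_{j,i_0}$ had rank $r < 2t$, then $\boldsymbol{H}^{\{2i_0-1,2i_0\}}\boldsymbol{N}_{j,i_0}$ contributes only an $r$-dimensional subspace beyond $\mathrm{col}(\boldsymbol{H}^{M^{\prime}\backslash\{2i_0-1,2i_0\}})$. A change of basis on the pair $\{2i_0-1,2i_0\}$ would then split it into $r$ ``used'' columns and $2t-r$ columns that lie outside the span required to cover $\boldsymbol{H}^{\{2j-1,2j\}}$, allowing us to drop the pair $\{2i_0-1,2i_0\}$ entirely and still reach $\boldsymbol{H}^{\{2j-1,2j\}}$ by combining its replacement with other pairs indexed by a strict subset of $L^{\prime}$; this contradicts the minimality of $|L^{\prime}|$. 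Formalizing this minimality-based invertibility step is the one nontrivial part of the proof, after which the quasi-circuit conditions follow by direct rank accounting.
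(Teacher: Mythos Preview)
Your overall strategy matches the paper's: express $\boldsymbol{H}^{\{2j-1,2j\}}=\sum_{i\in L}\boldsymbol{H}^{\{2i-1,2i\}}\boldsymbol{N}_{j,i}$ and take $M'$ to consist of the pairs whose $2t\times 2t$ block coefficient is invertible. The paper's proof is terser than yours---it simply sets $L'=\{i:\boldsymbol{N}_{j,i}\text{ invertible}\}$ and cites Lemma~\ref{lem:quasi-circuit-invertible}---while you go further and correctly isolate the invertibility of every $\boldsymbol{N}_{j,i}$, $i\in L'$, as the only nontrivial point.

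Your minimality argument, however, does not establish that invertibility. If $\boldsymbol{N}_{j,i_0}$ has rank $r$ with $0<r<2t$, a change of basis on $\{2i_0-1,2i_0\}$ separates $r$ ``used'' columns from $2t-r$ ``unused'' ones, but you may discard only the unused ones; the $r$ used columns are still required, so the pair index $i_0$ stays in $L'$ and minimality of $|L'|$ is not contradicted. Because $M$ is independent, that $r$-dimensional contribution from pair $i_0$ lies in $\mathrm{col}(\boldsymbol{H}^{\{2i_0-1,2i_0\}})$ and cannot be absorbed by the other pairs. In fact the statement is false at this level of generality: with $t=1$, take $M=\{1,2,3,4\}$ given by the standard basis $e_1,e_2,e_3,e_4\in\mathbb{F}_q^4$ and set $\boldsymbol{H}^{\{5\}}=e_1$, $\boldsymbol{H}^{\{6\}}=e_4$. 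Then $\mathrm{col}(\boldsymbol{H}^{\{5,6\}})\subseteq\mathrm{col}(\boldsymbol{H}^{M})$, yet for $M'\in\{\emptyset,\{1,2\},\{3,4\},\{1,2,3,4\}\}$ the set $\{5,6\}\cup M'$ has rank $2,3,3,4$ respectively and never satisfies \eqref{eq: def:quasi-circuit-set-H} under every pair deletion. The paper's proof glosses over the same issue; in its only use (Lemma~\ref{lem:quasi-independent-cycle}) the additional decoding constraints from the index-coding instance are what actually force $M'=M$, so the defect does not propagate to the main theorems.
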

\begin{proof}
Since $\mathrm{col}(\boldsymbol{H}^{\{2j-1, 2j\}})\subseteq \mathrm{col}(\boldsymbol{H}^{M})$, we must have $\boldsymbol{H}^{\{2j-1, 2j\}}=\sum_{i\in L} \boldsymbol{H}^{\{i\}}\boldsymbol{N}_{j,i}$ such that only matrices $\boldsymbol{N}_{j,i}, i\in L^{\prime}\subseteq L$ are invertible. Thus, according to Lemma \ref{lem:quasi-circuit-invertible}, set $\{2j-1, 2j\}\cup M^{\prime}$ forms a circuit set of $\boldsymbol{H}$ where $M^{\prime}=\{2i-1, 2i, i\in L^{\prime}\}\subseteq M$.
\end{proof}

\begin{cor} \label{cor:minimal-cyclic-acylic-quasi}
If $M$ is a quasi-minimal cyclic set of $\mathcal{I}$, then according to Definitions \ref{def:quasi-minimal-cyclic-set} and \ref{def:acyclic-set}, its proper subsets $M^{\prime}\subset M$ will be an acyclic set of $\mathcal{I}$. 
\end{cor}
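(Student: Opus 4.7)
\textbf{Proof plan for Corollary \ref{cor:minimal-cyclic-acylic-quasi}.}

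The plan is to mirror the strategy used for the classical Corollary \ref{cor:minimal-cyclic-acylic}, but adapted to the pair-based ``double'' cycle structure of a quasi-minimal cyclic set. Fix a quasi-minimal cyclic set $M=\{i_{2j-1},i_{2j}, j\in [|M|/2]\}$ and a proper subset $M^{\prime}\subsetneq M$. To show $M^{\prime}$ is acyclic, by Definition \ref{def:acyclic-set} it suffices to prove that no subset $M^{\prime\prime}\subseteq M^{\prime}$ is a minimal cyclic set of $\mathcal{I}$. Proceed by contradiction: suppose such an $M^{\prime\prime}$ exists, so that $B_{i}=M^{\prime\prime}\setminus\{i,i^{\text{next}}\}$ for every $i\in M^{\prime\prime}$ under some cyclic ordering.

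First I would compare cardinalities. For $i\in M^{\prime\prime}\subseteq M$, Definition \ref{def:quasi-minimal-cyclic-set} gives $|B_{i}|=|M|-4$, whereas the supposed minimal cyclic structure on $M^{\prime\prime}$ forces $|B_{i}|=|M^{\prime\prime}|-2$. Equating yields $|M^{\prime\prime}|=|M|-2$, so exactly two elements of $M$ are removed in passing to $M^{\prime\prime}$. Next, the containment $B_{i}\subseteq M^{\prime\prime}$ required by minimal cyclicity, combined with the explicit form of $B_{i}$ from the quasi-minimal cyclic structure, yields
\begin{equation*}
M\setminus M^{\prime\prime}\;\subseteq\;\{i_{2k-1},i_{2k},i_{2k+1},i_{2k+2}\},
\end{equation*}
for every pair index $k$ represented in $M^{\prime\prime}$ (with cyclic wrap-around in the last pair). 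Since $i\in M^{\prime\prime}$ cannot lie in $M\setminus M^{\prime\prime}$, this tightens to $M\setminus M^{\prime\prime}\subseteq \{i_{2k},i_{2k+1},i_{2k+2}\}$ when $i=i_{2k-1}$ (and symmetrically for $i=i_{2k}$).

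The decisive step is intersecting these constraints over two distinct pairs $k\neq l$ both represented in $M^{\prime\prime}$. A direct case check on pair indices shows that $\{i_{2k-1},i_{2k},i_{2k+1},i_{2k+2}\}\cap \{i_{2l-1},i_{2l},i_{2l+1},i_{2l+2}\}$ has at most one element when $l=k\pm 1$ (only the shared adjacent-pair element) and is empty otherwise. Hence $|M\setminus M^{\prime\prime}|\leq 1$, contradicting $|M\setminus M^{\prime\prime}|=2$ derived above. The only remaining possibility is that all elements of $M^{\prime\prime}$ belong to a single pair, forcing $|M^{\prime\prime}|\leq 2$, which combined with $|M^{\prime\prime}|=|M|-2$ gives $|M|\leq 4$, the degenerate configuration where every $B_{i}$ is empty and does not arise in the quasi-minimal cyclic sets used for $\mathcal{I}_{3}$. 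In every non-degenerate case no such $M^{\prime\prime}$ exists, so $M^{\prime}$ is acyclic.

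The main obstacle, as in the ordinary minimal-cyclic case, is handling the cyclic wrap-around correctly in the pair-indexing, and making sure the intersection computation covers both the adjacent ($l=k\pm 1$) and non-adjacent situations uniformly. Everything else reduces to bookkeeping about the fixed quantity $|M|-4=|B_{i}|$ versus the forced value $|M^{\prime\prime}|-2$ imposed by the hypothetical minimal cyclic structure.
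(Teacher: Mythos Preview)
The paper does not actually give a proof of this corollary: it is asserted to follow directly from Definitions~\ref{def:quasi-minimal-cyclic-set} and~\ref{def:acyclic-set}, exactly as the classical Corollary~\ref{cor:minimal-cyclic-acylic} is. So there is no paper argument to compare against; your write-up already supplies far more detail than the paper does, and the overall strategy (compare the forced size $|B_i|=|M|-4$ against $|M''|-2$, then intersect containment constraints coming from different pairs) is sound.

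There is one slip worth fixing. After you correctly tighten the constraint to the three-element set (e.g.\ $M\setminus M''\subseteq\{i_{2k},i_{2k+1},i_{2k+2}\}$ when $i=i_{2k-1}\in M''$), the intersection you then compute is written with the original \emph{four}-element windows $\{i_{2k-1},i_{2k},i_{2k+1},i_{2k+2}\}$. For adjacent pairs $l=k+1$ those four-element windows overlap in \emph{two} elements ($i_{2k+1},i_{2k+2}$), not one, so the sentence as written is false. The bound $|M\setminus M''|\leq 1$ is recovered once you intersect the tightened three-element sets (one representative from each of two distinct pairs); you set this up but then dropped it in the next line. With that correction, the contradiction goes through whenever $M''$ meets at least two pairs and $|M|\geq 6$.

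Your handling of the residual case is also accurate: if $M''$ lies in a single pair then $|M''|\le 2$ forces $|M|\le 4$, and for $|M|=4$ every $B_i=\emptyset$, so two-element subsets of $M$ genuinely \emph{are} minimal cyclic and the corollary as literally stated fails. Since the paper only invokes this corollary for the size-six sets $M_1,\dots,M_4$ appearing in Proposition~\ref{prop-I_3-N_3}, the degenerate case is irrelevant to its use, and your caveat is appropriate.
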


\begin{cor}[\cite{Arbabjolfaei2018}]\label{cor:acyclic-l-quasi}
If $M$ is an acyclic set of $\mathcal{I}$, then there exists at least one $2i-1\in M$ such that $M\backslash\{2i-1, 2i\}\subseteq B_{2i-1}$. 
\end{cor}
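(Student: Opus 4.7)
The plan is to reduce the statement directly to the non-paired version in Corollary~\ref{cor:acyclic-l}, and then deal with the odd/even index case distinction using the pair structure built into $\mathcal{I}_3$. First, I would apply Corollary~\ref{cor:acyclic-l} to the acyclic set $M$: this immediately yields some sink index $l \in M$ with $M \setminus \{l\} \subseteq B_l$. Since $M$ is a union of pairs $\{2i-1,2i\}$ indexed by $i \in L$, this sink is either $l = 2i-1$ or $l = 2i$ for some $i \in L$.

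In the odd case $l = 2i-1$, the conclusion is immediate: $M \setminus \{2i-1,2i\} \subseteq M \setminus \{2i-1\} = M \setminus \{l\} \subseteq B_l = B_{2i-1}$, which is exactly the desired containment. The substantive work is confined to the even case $l = 2i$. My plan here is to invoke the pair symmetry of $\mathcal{I}_3$, namely that each pair of users $u_{2i-1}, u_{2i}$ has matching interfering message sets in the sense consistent with Definition~\ref{def:quasi-minimal-cyclic-set} (where $B_{i_{2j-1}} = B_{i_{2j}}$ on any quasi-minimal cyclic subset). Given this, the inclusion $M \setminus \{2i\} \subseteq B_{2i}$ supplied by Corollary~\ref{cor:acyclic-l} transfers to $M \setminus \{2i-1, 2i\} \subseteq B_{2i-1}$, which is what we need.

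The main obstacle is justifying the pair-symmetry step cleanly without appealing to the full definition of $\mathcal{I}_3$ (which will only be given later in Subsection~\ref{sub:index-coding-1-3}). If the symmetry is not yet available in a usable form at this point in the text, the fallback plan is to iterate: replace $M$ by $M' := M \setminus \{2i-1,2i\}$, which is again a union of pairs and remains acyclic as a subset of $M$, and apply Corollary~\ref{cor:acyclic-l} to $M'$. Repeating finitely many times must eventually produce an odd-indexed sink $2i'-1$ with $M' \setminus \{2i'-1,2i'\} \subseteq B_{2i'-1}$; then I would only need to verify, from the pair structure of $\mathcal{I}_3$, that the previously discarded pairs also lie in $B_{2i'-1}$, closing the argument by the same symmetry principle.
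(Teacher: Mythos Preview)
The paper does not prove this corollary; it merely cites \cite{Arbabjolfaei2018}, so there is no in-paper argument to compare against. Your primary plan---apply Corollary~\ref{cor:acyclic-l}, dispose of the odd case trivially, and in the even case transfer via $B_{2i-1}=B_{2i}$---is exactly what is needed in the only place the result is invoked (the proof of Lemma~\ref{lem:quasi-independent-cycle}), because there $M'$ is a pair-union contained in a quasi-minimal cyclic set, and Definition~\ref{def:quasi-minimal-cyclic-set} supplies precisely $B_{i_{2j-1}}=B_{i_{2j}}$.

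Be aware, though, that the corollary \emph{as literally stated} (for an arbitrary acyclic pair-union $M$ in an arbitrary $\mathcal{I}$) is false: take $M=\{1,2,3,4\}$ with $B_1=\{2,3\}$, $B_2=\{1,3,4\}$, $B_3=\{4\}$, $B_4=\{1,2,3\}$. Then $M$ is acyclic (topological order $2,4,1,3$ in the sense of Lemma~\ref{lem:MAIS1}), the only Corollary~\ref{cor:acyclic-l} sinks are the even indices $2$ and $4$, and neither $\{3,4\}\subseteq B_1$ nor $\{1,2\}\subseteq B_3$ holds. So the pair-symmetry hypothesis is essential, not a convenience. Your fallback iteration does not escape this: in the same example, removing $\{1,2\}$ and reapplying Corollary~\ref{cor:acyclic-l} to $\{3,4\}$ does produce the odd sink $3$, but the discarded pair $\{1,2\}\not\subseteq B_3$, so the closing step still requires the very symmetry you were trying to sidestep. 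The clean fix is to record the hypothesis $B_{2i-1}\cap M=B_{2i}\cap M$ for all $i\in L$ (which holds in the intended application) and then run your first argument; the even case then follows immediately from $M\setminus\{2i-1,2i\}\subseteq M\setminus\{2i\}\subseteq B_{2i}$ together with $B_{2i}\cap M=B_{2i-1}\cap M$.
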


\begin{cor} \label{rem:acyclic-j-acyclic-quasi}
Suppose $\{2j-1, 2j\}\cup M$ is a quasi-circuit set of $\boldsymbol{H}$. Then for any $\{2l-1, 2l\}\subseteq M$, we have $\mathrm{col}(\boldsymbol{H}^{\{2j-1, 2j\}\cup M\backslash \{2i-1, 2i\}})=\mathrm{col}(\boldsymbol{H}^{M})$.
\end{cor}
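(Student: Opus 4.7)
The plan is to mirror the proof of Corollary \ref{rem:acyclic-j-acyclic} in the quasi-circuit setting, with Lemma \ref{lem:quasi-circuit-invertible} playing the role that the analogous circuit-set expansion played in the scalar version. I will assume the natural reading of the claim, namely that the removed pair is $\{2l-1,2l\}$ rather than $\{2i-1,2i\}$ (the latter is a typo in the statement), so that the indices on both sides match.

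First I would invoke Lemma \ref{lem:quasi-circuit-invertible}: since $\{2j-1,2j\}\cup M$ is a quasi-circuit set of $\boldsymbol{H}$, writing $M=\{2i-1,2i,\ i\in L\}$ there exist $2t\times 2t$ matrices $\boldsymbol{N}_{j,i}$, each invertible, with
\begin{align}
\boldsymbol{H}^{\{2j-1,2j\}}=\sum_{i\in L}\boldsymbol{H}^{\{2i-1,2i\}}\boldsymbol{N}_{j,i}.
\nonumber
\end{align}
Next I would substitute this expansion into $\bigl[\boldsymbol{H}^{\{2j-1,2j\}}\mid \boldsymbol{H}^{M\setminus\{2l-1,2l\}}\bigr]$. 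Every summand with $i\in L\setminus\{l\}$ is a linear combination of columns already present in $\boldsymbol{H}^{M\setminus\{2l-1,2l\}}$, so these summands can be absorbed without changing the column space, reducing the first block to $\boldsymbol{H}^{\{2l-1,2l\}}\boldsymbol{N}_{j,l}$. Finally, invertibility of $\boldsymbol{N}_{j,l}$ gives $\mathrm{col}(\boldsymbol{H}^{\{2l-1,2l\}}\boldsymbol{N}_{j,l})=\mathrm{col}(\boldsymbol{H}^{\{2l-1,2l\}})$, so the overall column span equals $\mathrm{col}(\boldsymbol{H}^{M})$, as required.

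There is no real obstacle here; the argument is a direct block-matrix analogue of the proof of Corollary \ref{rem:acyclic-j-acyclic}, with the scalar coefficients $\boldsymbol{M}_{j,i}$ replaced by the $2t\times 2t$ blocks $\boldsymbol{N}_{j,i}$ supplied by Lemma \ref{lem:quasi-circuit-invertible}. The only point that needs care is the bookkeeping between the pair-index $l\in L$ and the ground-set pair $\{2l-1,2l\}\subseteq M$, to ensure the isolated summand corresponds exactly to the removed pair so that the remaining summands lie in $\mathrm{col}(\boldsymbol{H}^{M\setminus\{2l-1,2l\}})$.
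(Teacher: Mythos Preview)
Your proposal is correct and follows essentially the same approach as the paper's own proof: both invoke Lemma~\ref{lem:quasi-circuit-invertible} to expand $\boldsymbol{H}^{\{2j-1,2j\}}$ as $\sum_{i\in L}\boldsymbol{H}^{\{2i-1,2i\}}\boldsymbol{N}_{j,i}$ with invertible blocks, then absorb the summands for $i\neq l$ into $\boldsymbol{H}^{M\setminus\{2l-1,2l\}}$ and use the invertibility of $\boldsymbol{N}_{j,l}$ to conclude. Your reading of the typo (the removed pair should be $\{2l-1,2l\}$) also matches the paper's proof.
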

\begin{proof}
Since $\{2j-1, 2j\}\cup M$ is a quasi-circuit set of $\boldsymbol{H}$, we have $\boldsymbol{H}^{\{2j-1, 2j\}}=\sum_{i\in L} \boldsymbol{H}^{\{2i-1, 2i\}}\boldsymbol{N}_{j,i}$ such that each $\boldsymbol{N}_{j,i}$ is invertible. Now, assume $\{2l-1, 2l\}\subseteq M$. Then, we have 
\begin{align}
    &\mathrm{col}(\boldsymbol{H}^{\{2j-1,2j\}\cup M\backslash \{2l-1,2l\}})=
    \nonumber
    \\
    &\mathrm{col}([\boldsymbol{H}^{\{2j-1,2j\}}| \boldsymbol{H}^{M\backslash\{2l-1,2l\}}])=
    \nonumber
    \\
    &\mathrm{col}([\sum_{i\in L} \boldsymbol{H}^{\{2i-1, 2i\}}\boldsymbol{N}_{j,i}| \boldsymbol{H}^{M\backslash\{2l-1,2l\}}])=
    \nonumber
    \\
    &\mathrm{col}([\boldsymbol{H}^{\{2l-1, 2l\}}\boldsymbol{N}_{j,l}| \boldsymbol{H}^{M\backslash\{2l-1, 2l\}}])=
    \nonumber
    \\
    &\mathrm{col}(\boldsymbol{H}^{M}),
    \label{eq:corollary-4-quasi}
\end{align}
where \eqref{eq:corollary-4-quasi} is due to the invertibility of $\boldsymbol{N}_{j,l}$.
\end{proof}

\subsubsection{Proof of Lemma \ref{lem:quasi-independent-cycle}}
Since $M$ is an independent set of $\boldsymbol{H}$, and $\mathrm{col}(\boldsymbol{H}^{\{2j-1, 2j\}})\subseteq \mathrm{col}(\boldsymbol{H}^{M})$, then according to Corollary \ref{cor:j-subspace-M-circuit-quasi}, there exists a subset $M^{\prime}\subseteq M$ such that $\{2j-1, 2j\}\cup M^{\prime}$ is a quasi-circuit set. Now, we show that $M^{\prime}=M$, otherwise it leads to a contradiction. Assume $M^{\prime}\subset M$.
First, since $M$ is a quasi-minimal cyclic set of $\mathcal{I}$, based on Corollary \ref{cor:minimal-cyclic-acylic-quasi}, $M^{\prime}\subset M$ is an acyclic set of $\mathcal{I}$. Second, according to Corollary \ref{cor:acyclic-l}, there exists $\{2l-1, 2l\}\subseteq M^{\prime}$ such that $M^{\prime}\backslash\{2l-1, 2l\}\subseteq B_{2l-1}, B_{2l}$. Moreover, due to $\{2j-1, 2j\}\subseteq B_{2l-1}, B_{2l}$, we get $\{2j-1, 2j\}\cup (M^{\prime}\backslash\{2l-1, 2l\})\subseteq B_{2l-1}, B_{2l}$. Thus,
\begin{align}
  &\{2l-1, 2l\}\subseteq M^{\prime} \rightarrow \mathrm{col}(\boldsymbol{H}^{\{2l-1, 2l\}}) \subseteq \mathrm{col}(\boldsymbol{H}^{M^{\prime}}),
  \nonumber
  \\
  &\{2j-1, 2j\}\cup (M^{\prime}\backslash\{2l-1, 2l\})\subseteq B_{2l-1}\rightarrow
  \label{eq:rem:acyclic-l-quasi-0}
  \\
  &\mathrm{col}(\boldsymbol{H}^{\{2j-1, 2j\}\cup (M^{\prime}\backslash\{2l-1, 2l\})}) \subseteq \mathrm{col}(\boldsymbol{H}^{B_{2l-1}}).
    \label{eq:rem:acyclic-l-quasi}
\end{align}
Third, since $\{2j-1, 2j\}\cup M^{\prime}$ is a quasi-circuit set, Corollary \ref{rem:acyclic-j-acyclic-quasi} leads to
\begin{align}
    \mathrm{col}(\boldsymbol{H}^{\{2j-1, 2j\}\cup M^{\prime}\backslash \{2l-1, 2l\}})=\mathrm{col}(\boldsymbol{H}^{M^{\prime}}).
    \label{eq:rem:acyclic-j-acyclic-quasi}
\end{align}
Now, from \eqref{eq:rem:acyclic-l-quasi-0}, \eqref{eq:rem:acyclic-l-quasi} and \eqref{eq:rem:acyclic-j-acyclic}, we have
\begin{align}
    \mathrm{col}(\boldsymbol{H}^{\{2l-1, 2l\}}) 
    \subseteq \mathrm{col}(\boldsymbol{H}^{M^{\prime}})&=\mathrm{col}(\boldsymbol{H}^{\{2j-1, 2j\}\cup M^{\prime}\backslash \{2l-1, 2l\}})
    \nonumber
    \\
    &\subseteq \mathrm{col}(\boldsymbol{H}^{B_{2l-1}}),
    \nonumber
\end{align}
which contradicts the decoding condition in \eqref{eq:dec-cond} for user $u_{2l-1}$ as $\mathrm{col}(\boldsymbol{H}^{\{2l-1\}}) \subseteq \mathrm{col}(\boldsymbol{H}^{B_{2l-1}})$. Therefore, we must have $M^{\prime}=M$.

\subsection{Proof of Lemma \ref{lem:quasi-col(9)}}
Since $\mathrm{col}(\boldsymbol{H}^{\{17,18\}})\subseteq \mathrm{col}(\boldsymbol{H}^{\{1,2,15,16\}})$, we have 
\begin{align}
    \boldsymbol{H}^{\{17,18\}}=\boldsymbol{H}^{\{1,2\}}\boldsymbol{N}_{9,1}+\boldsymbol{H}^{\{15,16\}}\boldsymbol{N}_{9,8}.
    \label{eq:lem:last-conversion-9-1-quasi}
\end{align}
Moreover, since set $\{3,4,5,6,7,8,15,16\}$ is a quasi-circuit set, we must have
\begin{align}
    \boldsymbol{H}^{\{15,16\}}=\boldsymbol{H}^{\{3,4\}}\boldsymbol{N}_{8,2}+\boldsymbol{H}^{\{5,6\}}\boldsymbol{N}_{8,3}+\boldsymbol{H}^{\{7,8\}}\boldsymbol{N}_{8,4},
    \label{eq:lem:last-conversion-9-2-quasi}
\end{align}
where each $\boldsymbol{N}_{8,2},\boldsymbol{N}_{8,3},\boldsymbol{N}_{8,4}$ is invertible. Thus, based on \eqref{eq:lem:last-conversion-9-1-quasi} and \eqref{eq:lem:last-conversion-9-2-quasi}, $\boldsymbol{H}^{\{17,18\}}$ is equal to
\begin{align}
    &\boldsymbol{H}^{\{1,2\}}\boldsymbol{N}_{9,1}+(\boldsymbol{H}^{\{3,4\}}\boldsymbol{N}_{8,2}+\boldsymbol{H}^{\{5,6\}}\boldsymbol{N}_{8,3}+\boldsymbol{H}^{\{7,8\}}\boldsymbol{N}_{8,4})\boldsymbol{N}_{9,8}
    \nonumber
    \\
    &=\boldsymbol{H}^{\{1,2\}}\boldsymbol{N}_{9,1}+\boldsymbol{H}^{\{3,4\}}\boldsymbol{N}_{8,2}^{\prime}+\boldsymbol{H}^{\{5,6\}}\boldsymbol{N}_{8,3}^{\prime}+\boldsymbol{H}^{\{7,8\}}\boldsymbol{N}_{8,4}^{\prime},
    \label{eq:lem3:1-quasi}
\end{align}
where $\boldsymbol{N}_{8,i}^{\prime}=\boldsymbol{N}_{8,i}\boldsymbol{N}_{9,8}, i=2,3,4$. On the other hand, since $\{1,2,5,6,7,8,13,14\}$ is a quasi-circuit set, we get 
\begin{align}
    \boldsymbol{H}^{\{13,14\}}=\boldsymbol{H}^{\{1,2\}}\boldsymbol{N}_{7,1}+\boldsymbol{H}^{\{5,6\}}\boldsymbol{N}_{7,3}+\boldsymbol{H}^{\{7,8\}}\boldsymbol{N}_{7,4},
    \label{eq:lem3:2-quasi}
\end{align}
where each $\boldsymbol{N}_{7,1},\boldsymbol{N}_{7,3},\boldsymbol{N}_{7,4}$ is invertible. Now, for set $\{3,4,13,14,17,18\}$, we have
\begin{align}
    \mathrm{rank}(\boldsymbol{H}^{\{3,4,13,14,17,18\}})
    &=\mathrm{rank}([\boldsymbol{H}^{\{3,4\}}|\boldsymbol{H}^{\{13,14\}}|\boldsymbol{H}^{\{17,18\}}])
    \nonumber
    \\
    &=\mathrm{rank}([\boldsymbol{H}^{\{3,4\}}|\boldsymbol{H}^{\{13,14\}}|\boldsymbol{H}^{\{17,18\}}
    \nonumber
    \\
    & \ \ \ \ -\boldsymbol{H}^{\{3,4\}}\boldsymbol{N}_{8,2}^{\prime}])
    \nonumber
    \\
    &=2t +\mathrm{rank}([\boldsymbol{H}^{\{13,14\}}|\boldsymbol{H}^{\{17,18\}}
    \nonumber
    \\
    & \ \ \ \
    -\boldsymbol{H}^{\{3,4\}}\boldsymbol{N}_{8,2}^{\prime}]).
    \label{eq:lem3:3-quasi}
\end{align}
Now,
since $\mathrm{col}(\boldsymbol{H}^{\{17,18\}})$ must be a subspace of $\mathrm{col}(\boldsymbol{H}^{\{3,4,13,14,17,18\}})$, we must have $\mathrm{rank}(\boldsymbol{H}^{\{3,4,13,14,17,18\}})=4t$. Thus, in \eqref{eq:lem3:3-quasi}, $\boldsymbol{H}^{\{17,18\}}-\boldsymbol{H}^{\{3,4\}}\boldsymbol{N}_{8,2}^{\prime}$ must be linearly dependent on $\boldsymbol{H}^{\{13,14\}}$, which based on \eqref{eq:lem3:1-quasi} and \eqref{eq:lem3:2-quasi} requires each $\boldsymbol{N}_{9,1}, \boldsymbol{N}_{8,3}^{\prime}, \boldsymbol{N}_{8,4}^{\prime}$ to be invertible. 
Besides, each $\boldsymbol{N}_{8,3}^{\prime}, \boldsymbol{N}_{8,4}^{\prime}$ is invertible only if $\boldsymbol{N}_{9,8}$ is invertible. Thus, since both $\boldsymbol{N}_{9,1}$ and $\boldsymbol{N}_{9,8}$ are invertible, set $\{1,2,15,16,17,18\}$ forms a quasi-circuit set of $\boldsymbol{H}$. Similarly, it can be shown that all sets $\{3,4,13,14,15,16\}$, $\{5,6,11,12,17,18\}$ and $\{7,8,9,10,17,18\}$ are quasi-circuit sets.

\subsection{Proof of Lemma \ref{lem:two-interference-dimention}}
Assume $M^{\prime}=\{i_{1}, \dots, i_{|M^{\prime}|}\}$. Then, applying the decoding condition in \ref{rem:dec-con} for $i_{1}, \dots, i_{|M^{\prime}|}$, will result in
\begin{align}
    \mathrm{rank}(\boldsymbol{H}^{M})
    &=t+\mathrm{rank}(\boldsymbol{H}^{M\backslash\{i_{1}\}}),
    \nonumber
    \\
    &=\dots,
    \nonumber
    \\
    &=|M^{\prime}|t+\mathrm{rank}(\boldsymbol{H}^{M\backslash\{i_{1},\dots, i_{|M^{\prime}|}\}}),
    \nonumber
    \\
    &=|M^{\prime}|t+\mathrm{rank}(\boldsymbol{H}^{M\backslash M^{\prime}}),
    \nonumber
\end{align}
which completes the proof.

\section{Proof of Proposition \ref{prop-thm3-2}} \label{app:proof-Prop-thm3-2}
% The proof can be concluded from Propositions \ref{prop-matroid-N_3} and \ref{prop-I_3-N_3}.

% \begin{prop} \label{prop-I_3-N_3}
% Matrix $\boldsymbol{H}\in \mathbb{F}_{q}^{8t\times 58t}$ is an encoding matrix for index coding instance $\mathcal{I}_{3}$ only if submatrix $\boldsymbol{H}^{[18]}$ is a linear representation of matroid instance $\mathcal{N}_{3}$.
% \end{prop}
We prove that $N_{0}=[8]$ is a basis set of $\boldsymbol{H}$, each set $N_{i}, i\in [8]$ in \eqref{eq:-N3-matroid} will be a quasi-circuit set of $\boldsymbol{H}$, and $\mathrm{rank}(\boldsymbol{H}^{[9:18]})\geq 7$. The proof is described as follows.
\begin{itemize}
    \item First, since $\beta_{\text{MAIS}}(\mathcal{I}_{3})=8$, we must have $\mathrm{rank}(\boldsymbol{H})=8t$. Now, from $B_{i}, i\in [8]$ in \eqref{I_3-B_i}, it can be seen that set $[8]$ is an independent set of $\mathcal{I}_{3}$, so according to Lemma \ref{lem:MAIS1}, set $[8]$ is an independent set of $\boldsymbol{H}$. Moreover, since $\mathrm{rank}(\boldsymbol{H})=8t$, set $N_{0}=[8]$ will be a basis set of $\boldsymbol{H}$. Now, in order to have $\mathrm{rank} (\boldsymbol{H})=8t$, for all $j\in [29]\backslash [8]$, we must have $\mathrm{col}(\boldsymbol{H}^{\{2j-1,2j\}})\subseteq \mathrm{col}(\boldsymbol{H}^{[8]})$.
    \item According to Lemma \ref{lem:MAIS2}, from $B_{i}, i\in [8]$, it can be seen that:
    \begin{itemize}
        \item for each $j\in \{19,27,35,43\}$,
         \begin{align}
         j\in B_{i}, i\in [8]\backslash \{1\} \rightarrow \mathrm{col}(\boldsymbol{H}^{\{j\}})=\mathrm{col}(\boldsymbol{H}^{\{1\}}),
         \label{eq:prop4-third-1-repeat}
        \end{align}
       \item for each $j\in \{20,28,36,44\}$,
        \begin{align}
        j\in B_{i}, i\in [8]\backslash \{2\}\rightarrow \mathrm{col}(\boldsymbol{H}^{\{j\}})=\mathrm{col}(\boldsymbol{H}^{\{2\}}),
        \label{eq:prop4-third-2-repeat}
       \end{align}
      \item for each $j\in \{21,29,37,45\}$,
       \begin{align}
       j\in B_{i}, i\in [8]\backslash \{3\}\rightarrow \mathrm{col}(\boldsymbol{H}^{\{j\}})=\mathrm{col}(\boldsymbol{H}^{\{3\}}),
       \label{eq:prop4-third-3-repeat}
       \end{align}
      \item for each $j\in \{22,30,38,46\}$,
       \begin{align}
       j\in B_{i}, i\in [8]\backslash \{4\}\rightarrow \mathrm{col}(\boldsymbol{H}^{\{j\}})=\mathrm{col}(\boldsymbol{H}^{\{4\}}).
       \label{eq:prop4-third-4-repeat}
       \end{align}
        \item for each $j\in \{23,31,39,47\}$,
         \begin{align}
         j\in B_{i}, i\in [8]\backslash \{5\} \rightarrow \mathrm{col}(\boldsymbol{H}^{\{j\}})=\mathrm{col}(\boldsymbol{H}^{\{5\}}),
         \label{eq:prop4-third-5-repeat}
        \end{align}
       \item for each $j\in \{24,32,40,48\}$,
        \begin{align}
        j\in B_{i}, i\in [8]\backslash \{6\}\rightarrow \mathrm{col}(\boldsymbol{H}^{\{j\}})=\mathrm{col}(\boldsymbol{H}^{\{6\}}),
        \label{eq:prop4-third-6-repeat}
       \end{align}
      \item for each $j\in \{25,33,41,49\}$,
       \begin{align}
       j\in B_{i}, i\in [8]\backslash \{7\}\rightarrow \mathrm{col}(\boldsymbol{H}^{\{j\}})=\mathrm{col}(\boldsymbol{H}^{\{7\}}),
       \label{eq:prop4-third-7-repeat}
       \end{align}
      \item for each $j\in \{26,34,42,50\}$,
       \begin{align}
       j\in B_{i}, i\in [8]\backslash \{8\}\rightarrow \mathrm{col}(\boldsymbol{H}^{\{j\}})=\mathrm{col}(\boldsymbol{H}^{\{8\}}).
       \label{eq:prop4-third-8-repeat}
       \end{align}
    \end{itemize}
    Let 
    \begin{align}
        M_{1}=\{19,20,21,22,23,24\},
        \nonumber
        \\
        M_{2}=\{27,28,29,30,33,34\},
        \nonumber
        \\
        M_{3}=\{35,36,39,40,41,42\},
        \nonumber
        \\
        M_{4}=\{45,46,47,48,49,50\}.
        \nonumber
    \end{align}
    From \eqref{eq:prop4-third-1-repeat}-\eqref{eq:prop4-third-4-repeat}, it can be seen that 
    \begin{align}
        \mathrm{col}(\boldsymbol{H}^{M_{1}})= \mathrm{col}(\boldsymbol{H}^{[8]\backslash \{7,8\}}), 
        \label{eq:prop4-third-1-new-repeat}
        \\
        \mathrm{col}(\boldsymbol{H}^{M_{2}})= \mathrm{col}(\boldsymbol{H}^{[8]\backslash \{5,6\}}), 
        \label{eq:prop4-third-2-new-repeat}
        \\
        \mathrm{col}(\boldsymbol{H}^{M_{3}})= \mathrm{col}(\boldsymbol{H}^{[8]\backslash \{3,4\}}), 
        \label{eq:prop4-third-3-new-repeat}
        \\
        \mathrm{col}(\boldsymbol{H}^{M_{4}})= \mathrm{col}(\boldsymbol{H}^{[8]\backslash \{1,2\}}).
        \label{eq:prop4-third-4-new-repeat}
    \end{align}
    Thus, each set $M_{1}, M_{2}, M_{3}$ and $M_{4}$ is an independent set of $\boldsymbol{H}$.
    \item To have $\mathrm{rank}(\boldsymbol{H})=8t$, one must have $\mathrm{rank}(\boldsymbol{H}^{B_{i}})=7t, i\in [58]$. Now, since set $[8]$ is a basis set, then from $B_{i}, i\in [8]$ we must have
    \begin{align}
    B_{7},B_{8}&\rightarrow \mathrm{col}(\boldsymbol{H}^{\{9,10\}})\ \subseteq \mathrm{col}(\boldsymbol{H}^{[8]\backslash \{7,8\}}) \stackrel{\eqref{eq:prop4-third-1-new-repeat}}{=}\mathrm{col}(\boldsymbol{H}^{M_{1}}),
    \label{eq:prop4-second-1-repeat}
    \\
    B_{5},B_{6}&\rightarrow \mathrm{col}(\boldsymbol{H}^{\{11,12\}})\subseteq \mathrm{col}(\boldsymbol{H}^{[8]\backslash \{5,6\}})\stackrel{\eqref{eq:prop4-third-2-new-repeat}}{=}\mathrm{col}(\boldsymbol{H}^{M_{2}}),
    \label{eq:prop4-second-2-repeat}
    \\
    B_{3},B_{4}&\rightarrow \mathrm{col}(\boldsymbol{H}^{\{13,14\}})\subseteq \mathrm{col}(\boldsymbol{H}^{[8]\backslash \{3,4\}})\stackrel{\eqref{eq:prop4-third-3-new-repeat}}{=}\mathrm{col}(\boldsymbol{H}^{M_{3}}),
    \label{eq:prop4-second-3-repeat}
    \\
    B_{1},B_{2}&\rightarrow \mathrm{col}(\boldsymbol{H}^{\{15,16\}})\subseteq \mathrm{col}(\boldsymbol{H}^{[8]\backslash \{1,2\}})\stackrel{\eqref{eq:prop4-third-4-new-repeat}}{=}\mathrm{col}(\boldsymbol{H}^{M_{4}}).
    \label{eq:prop4-second-4-repeat}
  \end{align}
    \item From $B_{i}, i\in M_{1}, M_{2}, M_{3}$ and $M_{4}$, it can be verified that
    \begin{align}
        &M_{1}\ \text{is a quasi-minimal cyclic set}\ \& \ \{9,10\}\in B_{i}, i\in M_{1},
        \label{eq:prop4-forth-1-repeat}
        \\
        &M_{2}\ \text{is a quasi-minimal cyclic set}\ \& \ \{11,12\}\in B_{i}, i\in M_{2},
        \label{eq:prop4-forth-2-repeat}
        \\
        &M_{3}\ \text{is a quasi-minimal cyclic set}\ \& \ \{13,14\}\in B_{i}, i\in M_{3},
        \label{eq:prop4-forth-3-repeat}
        \\
        &M_{4}\ \text{is a quasi-minimal cyclic set}\ \& \ \{15,16\}\in B_{i}, i\in M_{4}.
        \label{eq:prop4-forth-4-repeat}
    \end{align}
     \item 
     Now, all the four conditions in Lemma \ref{lem:quasi-independent-cycle} are satisfied for set $M_{1}$ with $j=5$, set $M_{2}$ with $j=6$, set $M_{3}$ with $j=7$, and set $M_{4}$ with $j=8$. Thus, according to Lemma \ref{lem:independent-cycle}, each set $\{9,10\}\cup M_{1}, \{11,12\}\cup M_{2}, \{13,14\}\cup M_{3}$ and $\{15,16\}\cup M_{4}$ will be a quasi-circuit set of $\boldsymbol{H}$. Now, according to \eqref{eq:prop4-third-1-new-repeat}-\eqref{eq:prop4-third-4-new-repeat}, each set 
     \begin{align}
         N_{1} &=\{1,2,3,4,5,6,9,10\},
         \nonumber
         \\
         N_{2}&=\{1,2,3,4,7,8,11,12\},
         \nonumber
         \\
         N_{3}&=\{1,2,5,6,7,8,13,14\},
         \nonumber
         \\
         N_{4}&=\{3,4,5,6,7,8,15,16\},
         \nonumber
     \end{align}
      will also form a quasi-circuit set. 
     \item Since
     \begin{align}
         &\{7,8,9,10,17,18,31,32,51,52\}\ \backslash \{i\}\subseteq B_{i}, i\in \{51,52\},
         \label{eq:prop4-forth-1-new-0-1}
         \\
         &\{5,6,11,12,17,18,37,38,53,54\}\backslash \{i\}\subseteq B_{i}, i\in \{53,54\},
         \label{eq:prop4-forth-1-new-0-2}
         \\
         &\{3,4,13,14,17,18,43,44,55,56\}\backslash \{i\}\subseteq B_{i}, i\in \{55,56\},
         \label{eq:prop4-forth-1-new-0-3}
         \\
         &\{1,2,15,16,17,18,25,26,57,58\}\backslash \{i\}\subseteq B_{i}, i\in \{57,58\},
         \label{eq:prop4-forth-1-new-0-4}
     \end{align}
     based on Lemma \ref{lem:two-interference-dimention}, we must have 
     \begin{align}
         \eqref{eq:prop4-forth-1-new-0-1} \rightarrow \mathrm{rank} ( \boldsymbol{H}^{\{7,8,9,10,17,18,31,32\}})\ &\leq6t,
         \label{eq:prop4-forth-1-new-1-1}
         \\
         \eqref{eq:prop4-forth-1-new-0-2} \rightarrow \mathrm{rank} ( \boldsymbol{H}^{\{5,6,11,12,17,18,37,38\}})&\leq6t,
         \label{eq:prop4-forth-1-new-1-2}
         \\
         \eqref{eq:prop4-forth-1-new-0-3} \rightarrow \mathrm{rank} ( \boldsymbol{H}^{\{3,4,13,14,17,18,43,44\}})&\leq6t,
         \label{eq:prop4-forth-1-new-1-3}
         \\
         \eqref{eq:prop4-forth-1-new-0-4} \rightarrow \mathrm{rank} ( \boldsymbol{H}^{\{1,2,15,16,17,18,25,26\}})&\leq6t.
         \label{eq:prop4-forth-1-new-1-4}
     \end{align}
     Now, since 
     \begin{align}
         &\{7,8,9,10,17,18,31,32\}\ \backslash \{i\}\subseteq B_{i}, i\in \{31,32\},
         \label{eq:prop4-forth-1-new-1-5}
         \\
         &\{5,6,11,12,17,18,37,38\}\backslash \{i\}\subseteq B_{i}, i\in \{37,38\},
         \label{eq:prop4-forth-1-new-1-6}
         \\
         &\{3,4,13,14,17,18,43,44\}\backslash \{i\}\subseteq B_{i}, i\in \{43,44\},
         \label{eq:prop4-forth-1-new-1-7}
         \\
         &\{1,2,15,16,17,18,25,26\}\backslash \{i\}\subseteq B_{i}, i\in \{25,26\},
         \label{eq:prop4-forth-1-new-1-8}
     \end{align}
     based on Lemma \ref{lem:two-interference-dimention}, we must have
     \begin{align}
         \eqref{eq:prop4-forth-1-new-1-1},\eqref{eq:prop4-forth-1-new-1-5}\rightarrow \mathrm{rank} (\boldsymbol{H}^{\{7,8,9,10,17,18\}})\ &\leq4t,
         \label{eq:prop4-forth-1-new-1-9}
         \\
         \eqref{eq:prop4-forth-1-new-1-2},\eqref{eq:prop4-forth-1-new-1-6}\rightarrow\mathrm{rank} ( \boldsymbol{H}^{\{5,6,11,12,17,18\}})&\leq4t,
         \label{eq:prop4-forth-1-new-1-10}
         \\
         \eqref{eq:prop4-forth-1-new-1-3},\eqref{eq:prop4-forth-1-new-1-7}\rightarrow \mathrm{rank} ( \boldsymbol{H}^{\{3,4,13,14,17,18\}})&\leq4t,
         \label{eq:prop4-forth-1-new-1-11}
         \\
         \eqref{eq:prop4-forth-1-new-1-4},\eqref{eq:prop4-forth-1-new-1-8}\rightarrow \mathrm{rank} ( \boldsymbol{H}^{\{1,2,15,16,17,18\}})&\leq4t.
         \label{eq:prop4-forth-1-new-1-12}
     \end{align}
     Thus,
     \begin{align}
         \eqref{eq:prop4-forth-1-new-1-9}\rightarrow \mathrm{col}(\boldsymbol{H}^{\{17,18\}})&\subseteq \mathrm{col}(\boldsymbol{H}^{\{7,8,9,10\}}).
         \nonumber
         \\
         \eqref{eq:prop4-forth-1-new-1-10}\rightarrow \mathrm{col}(\boldsymbol{H}^{\{17,18\}})&\subseteq \mathrm{col}(\boldsymbol{H}^{\{5,6,11,12\}}),
         \nonumber
         \\
         \eqref{eq:prop4-forth-1-new-1-11}\rightarrow \mathrm{col}(\boldsymbol{H}^{\{17,18\}})&\subseteq \mathrm{col}(\boldsymbol{H}^{\{3,4,13,14\}}),
         \nonumber
         \\
         \eqref{eq:prop4-forth-1-new-1-12}\rightarrow \mathrm{col}(\boldsymbol{H}^{\{17,18\}})&\subseteq \mathrm{col}(\boldsymbol{H}^{\{1,2,15,16\}}),
         \nonumber
         %\label{eq:fifth}
     \end{align}
     Hence, based on Lemma \ref{lem:quasi-col(9)}, each set
     \begin{align}
         N_{5}&=\{1,2,15,16,17,18\},
         \nonumber
         \\
         N_{6}&=\{3,4,13,14,17,18\},
         \nonumber
         \\
         N_{7}&=\{5,6,11,12,17,18\},
         \nonumber
         \\
         N_{8}&=\{7,8,9,10,17,18\},
         \nonumber
     \end{align}
     is a quasi-circuit set.
     \item Finally, from $B_{i}, i\in [9:16]$, it can be seen that 
     \begin{align}
         [9:16]\backslash \{i\}\subseteq B_{i}, i\in \{9,11,13,15\}.
     \end{align}
     Thus, according to Lemma \ref{lem:two-interference-dimention}, we get
     \begin{align}
         \mathrm{rank}(\boldsymbol{H}^{\{10,12,14,16\}})=\mathrm{rank}(\boldsymbol{H}^{[9:16]})-4t,
         \label{eq:final-quasi-I3-1}
     \end{align}
     On the other hand, it can be observed that set $\{10,12,14,16\}$ is a minimal cyclic set of $\mathcal{I}_{3}$. Thus, according to Proposition \ref{prop:rate-acyclic-minimal}, we have
     \begin{align}
         \mathrm{rank}(\boldsymbol{H}^{\{10,12,14,16\}})\geq 3t,
         \label{eq:final-quasi-I3-2}
     \end{align}
     Now, \eqref{eq:final-quasi-I3-1} and \eqref{eq:final-quasi-I3-2} will result in
     \begin{align}
          \mathrm{rank}(\boldsymbol{H}^{N_{9}=[9:16]})\geq 7t,
     \end{align}
     which completes the proof.
\end{itemize}

\section{Proof of Lemmas \ref{lem:g-function} and \ref{lem2:g-function}} \label{app:proof-lem-g-function}

\subsection{Proof of Lemma \ref{lem:g-function}} 
Let 
\begin{align}
    a_{1} =\ & g(x_{i},x_{j},x_{l},x_{w}) + g(x_{i},x_{j},x_{v},x_{w})\ +
    \nonumber
    \\
    & g(x_{i},x_{l},x_{v},x_{w}) + g(x_{j},x_{l},x_{v},x_{w}),
    \nonumber
    \\
     \nonumber
    \\
    a_{2} =\ & (x_{i}+x_{j}+x_{l})(x_{i}+x_{j}+x_{v})(x_{i}+x_{l}+x_{v}),
    \nonumber
    \\
     \nonumber
    \\
    a_{3} =\ & 2(x_{i}+x_{j}+x_{l})(x_{i}+x_{j}+x_{l})(2x_{i}+x_{j}+x_{l}+2x_{v}) +
    \nonumber
    \\
     & 2(x_{i}+x_{j}+x_{v})(x_{i}+x_{j}+x_{v})(2x_{i}+x_{j}+x_{v}+2x_{l}) +
    \nonumber
    \\
     & 2(x_{i}+x_{l}+x_{v})(x_{i}+x_{l}+x_{v})(2x_{i}+x_{l}+x_{v}+2x_{j}),
     \nonumber
     \\
      \nonumber
    \\
     a_{4} =\ & 2(x_{i}+x_{j}+x_{l})(x_{i}+x_{j}+x_{l}) +
    \nonumber
    \\
    & 2(x_{i}+x_{j}+x_{v})(x_{i}+x_{j}+x_{v}) +
    \nonumber
    \\
    & 2(x_{i}+x_{l}+x_{v})(x_{i}+x_{l}+x_{v}) +
    \nonumber
    \\
    & 2(x_{j}+x_{l}+x_{v})(x_{j}+x_{l}+x_{v}).
\end{align}
It can be verified that
\begin{align}
    a_{1} =\ &
    x_{i}x_{i}(x_{j}+x_{l}+x_{v}) + x_{j}x_{j}(x_{i}+x_{l}+x_{v}) +
    \nonumber
    \\
    & x_{l}x_{l}(x_{i}+x_{j}+x_{v}) + x_{v}x_{v}(x_{i}+x_{j}+x_{l}) +
    \nonumber
    \\
    & (x_{i}x_{j}+x_{i}x_{l}+x_{i}x_{v}+x_{j}x_{l}+x_{j}x_{v}+x_{l}x_{v})(1+2x_{w}) +
    \nonumber
    \\
    & x_{i}x_{j}x_{l} + x_{i}x_{j}x_{v} + x_{i}x_{l}x_{v} + x_{j}x_{l}x_{v},
    \label{lem-g-function-1}
\end{align}
\begin{align}
    a_{2} =\ &  x_{i}x_{i}x_{i} + 2x_{i}x_{i}(x_{j}+x_{l}+x_{v}) + x_{j}x_{j}(x_{i}+x_{l}+x_{v}) +
    \nonumber
    \\
    & x_{l}x_{l}(x_{i}+x_{j}+x_{v}) + x_{v}x_{v}(x_{i}+x_{j}+x_{l}) + 2x_{j}x_{l}x_{v},
    \label{lem-g-function-2}
\end{align}
\begin{align}
    a_{3} =\ & x_{j}x_{j}x_{j} + x_{l}x_{l}x_{l} + x_{v}x_{v}x_{v} + x_{j}x_{j}(x_{i}+x_{l}+x_{v}) +
    \nonumber
    \\
    & x_{l}x_{l}(x_{i}+x_{j}+x_{v}) + x_{v}x_{v}(x_{i}+x_{j}+x_{l}) +
    \nonumber
    \\
    & 2(x_{i}x_{j}x_{l} + x_{i}x_{j}x_{v} + x_{i}x_{l}x_{v}),
    \label{lem-g-function-3}
\end{align}
\begin{align}
    a_{4} =\ & 2(x_{i}x_{j}+x_{i}x_{l}+x_{i}x_{v}+x_{j}x_{l}+x_{j}x_{v}+x_{l}x_{v}).
    \label{lem-g-function-4}
\end{align}
Now, it can be seen that
\begin{align}
    &a_{1} + a_{2} + a_{3} + a_{4}(1+2x_{w}) =
    \nonumber
    \\
    & x_{i}x_{i}x_{i} + x_{j}x_{j}x_{j} + x_{l}x_{l}x_{l} + x_{v}x_{v}x_{v} =
    \nonumber
    \\
    & x_{i} + x_{j} + x_{l} + x_{v},
    \label{lem-g-function-5}
\end{align}
where \eqref{lem-g-function-5} follows from the fact that for any $x_{i}\in GF(3)$, we have $x_{i}x_{i}x_{i}=x_{i}$.
Thus, by having $x_{i}+x_{j}+x_{l}$, $x_{i}+x_{j}+x_{v}$, $x_{i}+x_{l}+x_{v}$, and $x_{j}+x_{l}+x_{v}$, each $x_{i}, x_{j}, x_{l}$ and $x_{v}$ is decodable from \eqref{lem-g-function-5}. This completes the proof.

\subsection{Proof of Lemma \ref{lem2:g-function}} 

It can be verified that function $g(x_{i},x_{j},x_{v},x_{w})$ can be rewritten as follows
\begin{align}
    g(x_{i},x_{j},x_{v},x_{w})=& 2x_{i}x_{i}(x_{j}+x_{v}+x_{w}) +
    \nonumber
    \\
    & 2x_{j}x_{j}(x_{i}+x_{v}+x_{w}) +
    \nonumber
    \\
    & 2(x_{v}x_{v}+x_{w}x_{w})(x_{i}+x_{j})+
    \nonumber
    \\
    & 2(x_{v}x_{v}x_{w}+x_{w}x_{w}x_{v}) +
    \nonumber
    \\
    & 2x_{i}x_{j} + 2(x_{i}+x_{j})(x_{v}+x_{w}) + 2x_{v}x_{w} +
    \nonumber
    \\
    & x_{i}x_{j}(x_{v}+x_{w}) + x_{v}x_{w}(x_{i}+x_{j}).
    \label{lem2-g-function-1}
\end{align}
It can be seen that
\begin{align}
    &2(x_{v}x_{v}+x_{w}x_{w})(x_{i}+x_{j}) + x_{v}x_{w}(x_{i}+x_{j}) =
    \nonumber
    \\
    &2(x_{v}x_{v}+x_{w}x_{w}+2x_{v}x_{w})(x_{i}+x_{j})=
    \nonumber
    \\
    &2(x_{v}+x_{w})(x_{v}+x_{w})(x_{i}+x_{j}).
    \label{lem2-g-function-2}
\end{align}
Since terms $x_{i},x_{j}$ and $x_{v}+x_{w}$ are known, from \eqref{lem2-g-function-1} and \eqref{lem2-g-function-2}, it can be observed that only term $2(x_{v}x_{v}x_{w}+x_{w}x_{w}x_{v}+x_{v}x_{w})$ is unknown.
\\
Similarly, in $g(x_{i},x_{l},x_{v},x_{w})$, since the terms $x_{i},x_{l}$ and $x_{v}+x_{w}$ are known, only term $2(x_{v}x_{v}x_{w}+x_{w}x_{w}x_{v}+x_{v}x_{w})$ is unknown. 
\\
Thus, in $g(x_{i},x_{j},x_{v},x_{w})+2g(x_{i},x_{l},x_{v},x_{w})$, the unknown terms $2(x_{v}x_{v}x_{w}+x_{w}x_{w}x_{v}+x_{v}x_{w})$ is canceled out. Hence, using the value of $x_{i},x_{j},x_{l}$ and $x_{v}+x_{w}$, the value of $g(x_{i},x_{j},x_{v},x_{w})+2g(x_{i},x_{l},x_{v},x_{w})$ will be found.

\IEEEpeerreviewmaketitle

%\vspace{1mm}

\bibliographystyle{IEEEtran}
	\bibliography{References}

% Generated by IEEEtran.bst, version: 1.14 (2015/08/26)
\begin{thebibliography}{10}
\providecommand{\url}[1]{#1}
\csname url@samestyle\endcsname
\providecommand{\newblock}{\relax}
\providecommand{\bibinfo}[2]{#2}
\providecommand{\BIBentrySTDinterwordspacing}{\spaceskip=0pt\relax}
\providecommand{\BIBentryALTinterwordstretchfactor}{4}
\providecommand{\BIBentryALTinterwordspacing}{\spaceskip=\fontdimen2\font plus
\BIBentryALTinterwordstretchfactor\fontdimen3\font minus
  \fontdimen4\font\relax}
\providecommand{\BIBforeignlanguage}[2]{{%
\expandafter\ifx\csname l@#1\endcsname\relax
\typeout{** WARNING: IEEEtran.bst: No hyphenation pattern has been}%
\typeout{** loaded for the language `#1'. Using the pattern for}%
\typeout{** the default language instead.}%
\else
\language=\csname l@#1\endcsname
\fi
#2}}
\providecommand{\BIBdecl}{\relax}
\BIBdecl

\bibitem{Birk1998}
Y.~Birk and T.~Kol, ``{Informed-Source Coding-On-Demand (ISCOD) over Broadcast
  Channels},'' \emph{in Proc. IEEE International Conference on Computer
  Communications (INFOCOM)}, pp. 1257--1264, 1998.

\bibitem{Rouayheb2010}
S.~E. Rouayheb, A.~Sprintson, and C.~Georghiades, ``{On the Index Coding
  Problem and Its Relation to Network Coding and Matroid Theory},'' \emph{IEEE
  Transactions on Information Theory}, vol.~56, no.~7, pp. 3187--3195, 2010.

\bibitem{Effros2015}
M.~Effros, S.~E. Rouayheb, and M.~Langberg, ``{An Equivalence Between Network
  Coding and Index Coding},'' \emph{IEEE Transactions on Information Theory},
  vol.~61, no.~5, pp. 2478--2487, 2015.

\bibitem{Li2018}
S.~Li, M.~A. Maddah-ali, Q.~Yu, and S.~Avestimehr, ``{A Fundamental Tradeoff
  Between Computation and Communication in Distributed Computing},'' \emph{IEEE
  Transactions on Information Theory}, vol.~64, no.~1, pp. 109--128, 2018.

\bibitem{Maddah-ali2014}
M.~A. Maddah-ali and U.~Niesen, ``{Fundamental Limits of Caching},'' \emph{IEEE
  Transactions on Information Theory}, vol.~60, no.~5, pp. 2856--2867, 2014.

\bibitem{Wan2020}
K.~Wan, D.~Tuninetti, and P.~Piantanida, ``{An Index Coding Approach to Caching
  With Uncoded Cache Placement},'' \emph{IEEE Transactions on Information
  Theory}, vol.~66, no.~3, pp. 1318--1332, 2020.

\bibitem{Jafar2014}
S.~A. Jafar, ``{Topological Interference Management Through Index Coding},''
  \emph{IEEE Transactions on Information Theory}, vol.~60, no.~1, pp. 529--568,
  2014.

\bibitem{Maleki2014}
H.~Maleki, V.~R. Cadambe, and S.~A. Jafar, ``{Index Coding — An Interference
  Alignment Perspective},'' \emph{IEEE Transactions on Information Theory},
  vol.~60, no.~9, pp. 5402--5432, 2014.

\bibitem{Nujoom2019}
N.~S. Karat, S.~Samuel, and B.~S. Rajan, ``Optimal error correcting index codes
  for some generalized index coding problems,'' \emph{IEEE Transactions on
  Communications}, vol.~67, no.~2, pp. 929--942, 2019.

\bibitem{Tehrani2012}
A.~S. Tehrani, A.~G. Dimakis, and M.~J. Neely, ``{Bipartite index coding},''
  \emph{in Proc. IEEE International Symposium on Information Theory}, pp.
  2246--2250, 2012.

\bibitem{Shanmugam2014}
K.~Shanmugam, A.~G. Dimakis, and M.~Langberg, ``{Graph theory versus minimum
  rank for index coding},'' \emph{in Proc. IEEE International Symposium on
  Information Theory}, pp. 291--295, 2014.

\bibitem{Sharififar-groupcast}
A.~Sharififar, N.~Aboutorab, Y.~Liu, and P.~Sadeghi, ``Independent user
  partition multicast scheme for the groupcast index coding problem,'' in
  \emph{2020 International Symposium on Information Theory and Its Applications
  (ISITA)}, 2020, pp. 314--318.

\bibitem{SharififarISIT2021}
A.~Sharififar, N.~Aboutorab, and P.~Sadeghi, ``Update-based maximum column
  distance coding scheme for index coding problem,'' in \emph{2021 IEEE
  International Symposium on Information Theory (ISIT)}, 2021, pp. 575--580.

\bibitem{Arbabjolfaeicomposite}
F.~{Arbabjolfaei}, B.~{Bandemer}, Y.~{Kim}, E.~{Şaşoğlu}, and L.~{Wang},
  ``On the capacity region for index coding,'' in \emph{2013 IEEE International
  Symposium on Information Theory}, 2013, pp. 962--966.

\bibitem{liu2018three}
Y.~Liu, P.~Sadeghi, and Y.-H. Kim, ``Three-layer composite coding for index
  coding,'' in \emph{2018 IEEE Information Theory Workshop (ITW)}.\hskip 1em
  plus 0.5em minus 0.4em\relax IEEE, 2018, pp. 1--5.

\bibitem{Thapa2017}
C.~Thapa, L.~Ong, and S.~J. Johnson, ``{Interlinked cycles for index coding:
  generalizing cycles and cliques},'' \emph{IEEE Transactions on Information
  Theory}, vol.~63, no.~6, pp. 3692--3711, 2017.

\bibitem{Thomas2018}
A.~Thomas and B.~S. Rajan, ``Binary informed source codes and index codes using
  certain near-mds codes,'' \emph{IEEE Transactions on Communications},
  vol.~66, no.~5, pp. 2181--2190, 2018.

\bibitem{Asadi2014}
B.~Asadi, L.~Ong, and S.~J. Johnson, ``On index coding in noisy broadcast
  channels with receiver message side information,'' \emph{IEEE Communications
  Letters}, vol.~18, no.~4, pp. 640--643, 2014.

\bibitem{Dougherty2005}
R.~{Dougherty}, C.~{Freiling}, and K.~{Zeger}, ``Insufficiency of linear coding
  in network information flow,'' \emph{IEEE Transactions on Information
  Theory}, vol.~51, no.~8, pp. 2745--2759, 2005.

\bibitem{Dougherty2007}
R.~Dougherty, C.~Freiling, and K.~Zeger, ``Networks, matroids, and non-shannon
  information inequalities,'' \emph{IEEE Transactions on Information Theory},
  vol.~53, no.~6, pp. 1949--1969, 2007.

\bibitem{sharififar2021broadcast}
A.~Sharififar, P.~Sadeghi, and N.~Aboutorab, ``Broadcast rate requires
  nonlinear coding in a unicast index coding instance of size 36,'' in
  \emph{2021 IEEE International Symposium on Information Theory (ISIT)}, 2021,
  pp. 208--213.

\bibitem{Lubetzky2009}
E.~{Lubetzky} and U.~{Stav}, ``Nonlinear index coding outperforming the linear
  optimum,'' \emph{IEEE Transactions on Information Theory}, vol.~55, no.~8,
  pp. 3544--3551, 2009.

\bibitem{Sharififar-UMCD-J}
A.~Sharififar, N.~Aboutorab, and P.~Sadeghi, ``An update-based maximum column
  distance coding scheme for index coding,'' \emph{IEEE Journal on Selected
  Areas in Information Theory}, pp. 1--1, 2021.

\bibitem{Arbabjolfaei2018}
\BIBentryALTinterwordspacing
F.~Arbabjolfaei and Y.-H. Kim, ``{Fundamentals of index coding},''
  \emph{Foundations and Trends{\textregistered} in Communications and
  Information Theory}, vol.~14, no. 3-4, pp. 163--346, 2018. [Online].
  Available: \url{http://dx.doi.org/10.1561/0100000094}
\BIBentrySTDinterwordspacing

\bibitem{Bar-Yossef2011}
Z.~{Bar-Yossef}, Y.~{Birk}, T.~S. {Jayram}, and T.~{Kol}, ``Index coding with
  side information,'' \emph{IEEE Transactions on Information Theory}, vol.~57,
  no.~3, pp. 1479--1494, 2011.

\bibitem{Thomas}
A.~Thomas and B.~S. Rajan, ``A discrete polymatroidal framework for
  differential error-correcting index codes,'' \emph{IEEE Transactions on
  Communications}, vol.~67, no.~7, pp. 4593--4604, 2019.

\bibitem{blasiak2011lexicographic}
A.~Blasiak, R.~Kleinberg, and E.~Lubetzky, ``Lexicographic products and the
  power of non-linear network coding,'' in \emph{IEEE Annual Symposium on
  Foundations of Computer Science}.\hskip 1em plus 0.5em minus 0.4em\relax
  IEEE, 2011, pp. 609--618.

\end{thebibliography}

\end{document}